\documentclass[onecolumn,notitlepage,nofootinbib,superscriptaddress,floatfix]{revtex4-2}
 
\usepackage{float}
\makeatletter
\let\newfloat\newfloat@ltx
\makeatother

\usepackage{algorithm}
\usepackage{algorithmic}
\usepackage{array}

\newcommand{\fu}{
Dahlem Center for Complex Quantum Systems,
Freie Universität Berlin, 14195 Berlin, Germany
}

\newcommand{\caltech}{Institute for Quantum Information and Matter, California Institute of Technology, Pasadena, CA 91125, USA}

\newcommand{\gqaivenice}{
Google Quantum AI, Venice, CA 90291, USA
}

\newcommand{\equalcontrib}{
\thanks{These authors contributed equally to this work.}
}

\usepackage[utf8]{inputenc}       
\usepackage[english]{babel}

\usepackage{amsmath,amssymb,amsthm,mathtools}
\usepackage{physics}

\usepackage{graphicx}

\usepackage[shortlabels]{enumitem}

\usepackage[caption=false]{subfig} 
\usepackage{xcolor}
\usepackage{booktabs}
\usepackage{adjustbox}

\usepackage{array}

\usepackage{tikz}
\usetikzlibrary{positioning,decorations.pathreplacing}
\usetikzlibrary{decorations.pathreplacing}
\usepackage{tikz-cd}
\usepackage{pgfplots}
\usepgfplotslibrary{groupplots,fillbetween}
\pgfplotsset{compat=1.18}

\usepackage{csquotes}
\usepackage{IEEEtrantools}
\usepackage{relsize}
\usepackage{placeins}
\usepackage{comment}
\usepackage[normalem]{ulem}
\usepackage{systeme}
\usepackage{pgffor}

\definecolor{lb}{rgb}{.5,.1,-.7}

\usepackage[pdfpagelabels]{hyperref}
\hypersetup{
  colorlinks=true,
  linkcolor  = [rgb]{0.70,0.13,0.13},
  citecolor  = [rgb]{0.13,0.55,0.13},
  urlcolor   = [rgb]{0.25,0.41,0.88}
}

\theoremstyle{plain}
\newtheorem{theorem}{Theorem}[section]
\newtheorem{lemma}[theorem]{Lemma}
\newtheorem{proposition}[theorem]{Proposition}
\newtheorem{corollary}[theorem]{Corollary}

\theoremstyle{definition}
\newtheorem{definition}[theorem]{Definition}

\newtheorem{remark}[theorem]{Remark}

\newcommand{\E}{\mathbb{E}}

\providecommand{\Tr}{\operatorname{Tr}}
\DeclareMathOperator{\Var}{Var}

\DeclareMathOperator{\dist}{dist}
\DeclareMathOperator{\Haar}{Haar}

\begin{document}

\author{Antonio A. Mele}
\equalcontrib
\affiliation{\fu}
\affiliation{\gqaivenice}

\author{Francesco A. Mele}
\equalcontrib
\affiliation{\caltech}
\affiliation{\gqaivenice}

\author{Jarrod R. McClean}
\affiliation{\gqaivenice}

\author{Thomas E. O'Brien}
\affiliation{\gqaivenice}





\title{Towards verifiable quantum advantage with random circuits:\\
Observables that survive concentration}
\begin{abstract}
Demonstrating quantum advantage on current quantum hardware is a central goal of quantum computing, and random quantum circuits underpin many leading proposals. Yet sampling-based demonstrations are often difficult to verify, while observable-based approaches face a different challenge: concentration can suppress differences between circuit instances. Recent experiments have put forward the estimation of out-of-time-order correlators (OTOCs) in random circuits as a promising task for verifiable quantum advantage, yet whether their circuit-to-circuit fluctuations survive concentration as system size grows has remained open. Here we show that they do. For broad classes of local random circuits in any fixed spatial dimension, we prove inverse-polynomial fluctuations of fixed-order OTOCs at system-scale depths. In one-dimensional Haar-random brickwork circuits, we further show that macroscopically many gates contribute to these fluctuations, yet in each layer they remain confined to a sublinear-width region. As a byproduct, we develop a classical algorithm exhibiting the first rigorous improvement over brute-force classical simulation of OTOCs in 1D. Although classical hardness remains open, our results rule out strong concentration as an obstruction to OTOC-based proposals for verifiable quantum advantage.
\end{abstract}
\maketitle
\vspace*{-0.7cm}
\section{Introduction}
Recent years have witnessed intense efforts to demonstrate quantum advantage
using present-day quantum devices~\cite{
boixo2018characterizing,
bouland2019complexity,
Arute_2019}.
Such demonstrations would be especially compelling if the task were
efficiently \emph{verifiable}~\cite{
bouland2019complexity,
mahadev2018classical,
yamakawa2022verifiable,
aaronson2024verifiable,
MartielEtAl2026,BermejoVegaEtAl2018,
HangleiterEtAl2019,LanesEtAl2025}: the output of the quantum computation should be
independently checkable, for example by reproducing the same quantity on
another quantum processor or by comparing it with an independently
accessible physical observable in nature~\cite{king2025simplifiedversionquantumotoc2,BarronEtAl2026}. Ideally, one would also like the advantage
to be \emph{generic}, meaning that it persists for typical circuits drawn
from a simple and experimentally natural family rather than relying on
specially engineered instances.

Random quantum circuit sampling has been the paradigmatic proposal for
demonstrating generic near-term quantum advantage~\cite{HangleiterEtAl2019,MartielEtAl2026}. Sufficiently deep random
circuits are indeed believed to be average-case hard to sample from
classically~\cite{
boixo2018characterizing,
bouland2019complexity,
Arute_2019}; however, the sampling task is not efficiently verifiable, even by another quantum computer: rerunning the same circuit produces another random sample,
which with overwhelming probability differs from the original output and
therefore does not verify it.

A complementary difficulty arises from \emph{concentration}, meaning that the quantity of interest becomes nearly constant across instances. A prominent
example is provided by variational quantum algorithms. Their outputs are
expectation values and are therefore naturally reproducible across devices,
but for sufficiently expressive parametrized circuits cost landscapes
concentrate and gradients can vanish exponentially with system size, leading
to barren plateaus~\cite{
McCleanEtAl2018,
CerezoEtAl2021,
ArrasmithEtAl2022}. Thus, experimental accessibility of an observable alone does not guarantee
that typical large-scale instances retain a nontrivial measurable signal.

A recently proposed task that may combine the average-case nature of random circuits with the verifiability of observable estimation is the estimation of out-of-time-order correlators (OTOCs), recently implemented by Google
Quantum AI~\cite{
abanin2025constructiveinterferenceedgequantum,
king2025simplifiedversionquantumotoc2}.
OTOCs are quantities of fundamental interest in many-body physics, where
they probe operator spreading, information scrambling, and quantum
chaos~\cite{
LarkinOvchinnikov1969,
ShenkerStanford2014,
MaldacenaShenkerStanford2016,
HosurEtAl2016,
Swingle2018,
XuSwingle2024}.

For an \(n\)-qubit circuit \(U\), an input state \(\rho\), and local Pauli
observables \(B\) and \(M\), the \(k\)th-order OTOC is
\begin{align}
\mathrm{OTOC}^{(k)}_{\rho}(U)
\coloneqq
\Tr\!\left[
\rho\left(U^\dagger B U M\right)^{2k}
\right].
\end{align}
We consider ensembles \(\mathcal E_{n,d}\) of \(n\)-qubit random circuits of
depth \(d\). Throughout, \(k=O(1)\) is fixed independently of \(n\), while
\(B\) and \(M\) have constant-size support and are separated by a distance
that grows with the system size; in the one-dimensional geometry considered
below, this distance is \(\Theta(n)\).

The computational task (which is naturally verifiable) is the following: draw a circuit
\(U\sim\mathcal E_{n,d}\) and, given its description, estimate
\(\mathrm{OTOC}^{(k)}_{\rho}(U)\) to additive accuracy
\(\varepsilon=1/\operatorname{poly}(n)\), with high probability over the
choice of \(U\). For every fixed \(k\), this can be done efficiently on a
quantum computer: each experimental repetition requires only \(O(k)\)
applications of \(U\) and \(U^\dagger\), interspersed with the local Pauli
operations \(B\) and \(M\). For example, for \(\mathrm{OTOC}^{(2)}\), with
\(\rho=\ketbra{0^n}{0^n}\) and \(M\ket{0^n}=\ket{0^n}\), one applies
\((U^\dagger B U)M(U^\dagger B U)\) to \(\ket{0^n}\) and measures \(M\).
Repeating this procedure \(O(1/\varepsilon^2)\) times yields an
additive-\(\varepsilon\) estimate
~\cite{king2025simplifiedversionquantumotoc2}. 

At the same time, no efficient classical algorithm is known for estimating \(\mathrm{OTOC}^{(k)}\), for any fixed \(k\).
For the first-order OTOC, Monte-Carlo and tensor-network methods
can perform well on finite-size instances
~\cite{
abanin2025constructiveinterferenceedgequantum,bermejo2026tensornetworks,XuSwingle2020},
but no rigorous asymptotic runtime guarantee is known.
For higher-order OTOCs, the available evidence points to substantially
greater classical difficulty: the Google experiment found that the tested
classical methods could not reproduce its largest
\(\mathrm{OTOC}^{(2)}\) instances at feasible cost, while subsequent work
provided further evidence that tensor-network methods based on belief
propagation do not circumvent this difficulty
~\cite{
abanin2025constructiveinterferenceedgequantum,
bermejo2026tensornetworks}.

However, for an average-case quantum-advantage proposal based on random
circuits, outperforming the best classical algorithms currently known is
only one part of the story. The task must also remain inherently
instance-dependent: if almost all circuits produce essentially the same
value, then the circuit instance becomes irrelevant and the problem reduces
to estimating the ensemble mean.

For the OTOC task considered here, the circuit depth constrains the regime in which such instance dependence can survive.
Before the
Heisenberg light cone of \(B\) reaches the support of \(M\),
\(U^\dagger B U\) and \(M\) commute, and hence
\(\mathrm{OTOC}^{(k)}_{\rho}(U)=1\) for every circuit realization.
At sufficiently large depths, local random circuits approach the relevant
Haar moments through approximate unitary-design convergence~\cite{
brandao2016local,
haferkamp2022random,
harrowmehraban2023approximate,
mittal2023local},
and fixed-order OTOCs become strongly concentrated. Any
nontrivial circuit-to-circuit variation must therefore arise between
these two regimes.

Previous work has provided heuristic, numerical, and experimental evidence that such fluctuations persist in random quantum circuits~\cite{nahum2018operator,mi2021information,abanin2025constructiveinterferenceedgequantum}. Of particular relevance, recent experiments by Google Quantum AI observed circuit-to-circuit fluctuations that decrease only polynomially with system size for both first- and higher-order OTOCs~\cite{abanin2025constructiveinterferenceedgequantum}. Despite this progress, a rigorous proof remained open: it was not known whether these fluctuations remain resolvable as the system size grows, even for \(\mathrm{OTOC}^{(1)}\)~\cite{king2025simplifiedversionquantumotoc2}. This leads to the central question of our work:

\begin{quote}
\emph{Do local random circuits retain resolvable circuit-to-circuit OTOC fluctuations at system-scale depths?}
\end{quote}
More precisely, we ask whether there exist depths \(d=d(n)\) scaling with the system size for which
\begin{align}
\Var_{U\sim\mathcal E_{n,d}}\!\left[\mathrm{OTOC}^{(k)}_{\rho}(U)\right] \geq \frac{1}{\operatorname{poly}(n)}.
\end{align}
Establishing such a variance bound might appear particularly challenging for higher-order OTOCs, as it requires control of the second moment of the circuit, and no known classical methods exist, to our knowledge, to evaluate even the first circuit moment of $\mathrm{OTOC}^{(2)}$.

We answer this question affirmatively without explicitly evaluating these higher-order moments. Instead, we relate changes in the ensemble-averaged OTOC across circuit depth to lower bounds on its circuit-to-circuit variance. For every fixed OTOC order, we prove inverse-polynomial fluctuations at system-scale depths for broad classes of local random circuits, covering local architectures in any fixed spatial dimension and arbitrary initial states. To our knowledge, OTOCs provide the first experimentally accessible observables for which inverse-polynomial circuit-to-circuit fluctuations are rigorously established in broad classes of local random circuits at system-scale depths.

Beyond this general result, we obtain a much sharper picture for
\(\mathrm{OTOC}^{(1)}\) in one-dimensional Haar-random brickwork circuits.
When the two local Pauli operators are placed at opposite ends of the chain,
the circuit-averaged OTOC changes from near \(1\) to near \(0\) around depth
\(d_\star=5n/3\), over a window of width \(\Theta(\sqrt n)\). Throughout
this window, we prove
\(\Var_U[\mathrm{OTOC}^{(1)}_{\rho}(U)]=\Omega(n^{-1/2})\),
substantially strengthening the lower bound from our general theorem.

We then determine how many gates are responsible for these fluctuations and
where they are located. A large variance could, in principle, be caused by
only a few exceptional gates. We show that this is not the case:
macroscopically many gates indeed contribute. At the same time, these gates are not
spread throughout the circuit; in each layer, they are confined to a region
of width \(O(\sqrt n)\). Our analysis makes this localization picture rigorous
at the level of individual gate contributions, complementing the numerical
tensor-network evidence and geometric arguments of
Ref.~\cite{bermejo2026tensornetworks} with an analytical proof for the
Haar-random brickwork ensemble.

Finally, for the infinite-temperature \(\mathrm{OTOC}^{(1)}\) at the critical depth \(d_\star=5n/3\), we exploit this structure for classical simulation. Gates outside the relevant region can be averaged over, so that only a sublinear-width part of the circuit must be treated explicitly. This yields a subexponential-time classical estimator to inverse-polynomial accuracy, providing, to our knowledge, the first provable improvement over brute-force simulation for an OTOC-estimation task.

\begin{figure}[h]
    \centering
    \includegraphics[width=0.85\linewidth]{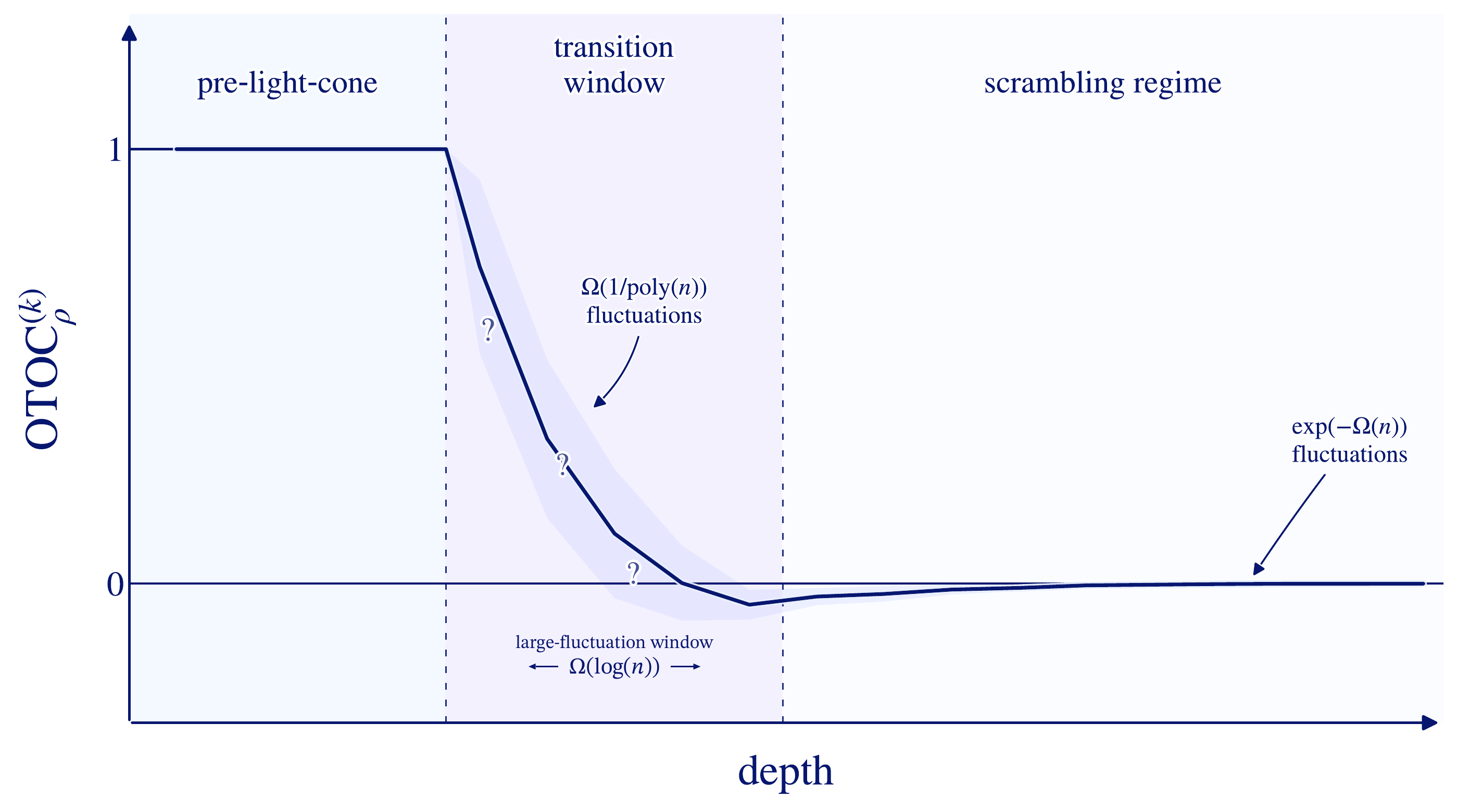}
    \caption{Representative depth dependence of
\(\mathrm{OTOC}^{(k)}_{\rho}\) in a local random circuit.
Before the Heisenberg light cone reaches the probe, the OTOC equals
\(1\) for every circuit realization. At intermediate depths,
circuit-to-circuit fluctuations become visible, before the observable
approaches its deep-circuit behavior.
The data show \(\mathrm{OTOC}^{(2)}_{\rho}\) for
\(\rho=\ketbra{0^n}{0^n}\) in a one-dimensional open-boundary
Haar-random brickwork circuit with \(n=18\) qubits,
\(B=Z_4\), and \(M=Z_{14}\).
For general fixed \(k\), we locate a growing range of $\Omega(\log n)$ consecutive depths with large fluctuations between the light-cone and design regimes,
without determining the full transition width; this interval can be
scanned efficiently in a quantum experiment. For the one-dimensional
endpoint \(\mathrm{OTOC}^{(1)}\), we characterize the transition completely:
it is centered at depth \(5n/3\) and has diffusive width
\(\Theta(\sqrt{n})\).}
    \label{fig:otoc-schematic}
\end{figure}


\section{Fluctuations of \texorpdfstring{\(\mathrm{OTOC}^{(k)}\)}{OTOC(k)} in local random circuits}
We begin by outlining the main ideas behind our result showing the absence of strong concentration for fixed-order OTOCs. 

Throughout this section, the butterfly operator \(B\) and measurement operator \(M\) are local Pauli observables with constant-size support, separated by a distance that scales with the system size in the interaction graph of the circuit architecture. For example, in the one-dimensional setting considered below, their separation is \(\Theta(n)\). We consider local random circuits composed of independent random gates. Such circuits provide standard models of operator spreading and OTOC dynamics~\cite{nahum2018operator,keyserlingk2018operator,harrow2019separation}, and their convergence toward approximate unitary designs is well established across a broad range of architectures and gate ensembles~\cite{brandao2016local,haferkamp2022random,harrowmehraban2023approximate,mittal2023local,yada2026nonhaar}. 


\begin{theorem}[Informal]
\label{thm:informalOTOCk}
For every fixed OTOC order \(k\), local random circuits exhibit inverse-polynomial circuit-to-circuit fluctuations over a nontrivial range of depths.

More precisely, for broad classes of local random circuits in any fixed spatial dimension and for arbitrary initial states, there exists a depth regime between the light-cone and deep-scrambling limits in which
\[
\Var_U\!\left[\mathrm{OTOC}^{(k)}_\rho(U)\right]
\geq
\frac{1}{\operatorname{poly}(n)}.
\]
The result applies, in particular, to Haar-random local circuits and to suitable hardware-motivated local gate ensembles.
\end{theorem}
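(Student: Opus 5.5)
The plan is to avoid computing second moments directly. Instead, I would lower-bound the variance through the change of the \emph{mean} OTOC between consecutive depths, as the introduction suggests. Write a depth-$(d+1)$ circuit as $U_{d+1}=U_d V$ or $V U_d$, where $V$ is one independent layer of random gates. Define
\[
f(d)\coloneqq \E_{U\sim\mathcal E_{n,d}}\bigl[\mathrm{OTOC}^{(k)}_\rho(U)\bigr].
\]
At the light-cone depth $d_0=\Theta(\mathrm{dist}(B,M))$ we have $f(d)=1$ exactly for $d<d_0$. At depth $d_1=\operatorname{poly}(n)$ (more precisely $O(n)$ in fixed dimension, by the cited design-convergence results), $f(d_1)$ is close to its Haar value. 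For a $2k$-fold product of traceless Paulis conjugated by a $4k$-design, that value is $O(2^{-n})$ up to a correction from $\rho$. Since $k$ is fixed, the design order $4k$ needed for this is a constant. So across polynomially many depths the mean drops by a constant.

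By telescoping, some single layer $d\in[d_0,d_1]$ must satisfy $|f(d+1)-f(d)|\ge \Omega(1/\operatorname{poly}(n))$. The sharper claim of $\Omega(\log n)$ consecutive such depths would come from a pigeonhole over blocks of depths, or from a growth bound on $f$. I would then break this layer's drop into individual gates, adding gates one at a time so that a total of $\operatorname{poly}(n)$ single-gate steps occur. This yields one gate $g$, at some position, such that inserting a Haar (or ensemble) random $g$ versus the identity changes the mean by at least $1/\operatorname{poly}(n)$. Equivalently, comparing the ensemble with $g$ random against the ensemble with $g$ replaced by a fixed gate or the identity gives a $1/\operatorname{poly}$ discrepancy.

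Next I convert this mean gap into variance via a conditioning argument. Condition on all gates except $g$. Then $\mathrm{OTOC}^{(k)}$ is a polynomial $P(g)$ of bounded degree $(2k,2k)$ in the entries of $g$. For a gate ensemble containing the identity, or whose support is dense enough to satisfy a local anti-concentration property (Haar, or hardware ensembles that form exact local $2k$-designs on their support), there is an inequality
\[
\Var_g[P(g)] \ge c_k\,\bigl|P(\mathbb 1)-\E_g P(g)\bigr|^2 ,
\]
or a similar bound. One route is Carbery–Wright / Remez-type inequalities for bounded-degree polynomials on the compact group. Another is expanding in irreducible representations of $U(4)$: the nontrivial isotypic components carry the variance and also control the deviation at any point, by a finite-dimensional norm equivalence with constants depending only on $k$. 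Then the law of total variance gives
\[
\Var_U \ge \E_{\text{rest}}\Var_g \ge c_k\bigl(\E_{\text{rest}}|P(\mathbb 1)-\E_g P|\bigr)^2 \ge c_k\cdot \frac{1}{\operatorname{poly}(n)^2}.
\]
This is exactly the claimed inverse-polynomial variance at that depth.

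The main obstacle is making the single-gate comparison compatible with the actual ensemble. The "identity-gate" circuit is not a member of $\mathcal E_{n,d}$, so the telescoping must instead interpolate between depth-$d$ and depth-$(d+1)$ ensembles, each of which is a genuine product measure. The fix I would try first is to telescope over hybrid ensembles in which the first $j$ gates of layer $d+1$ are random and the rest are identity. Then I apply the representation-theoretic norm-equivalence bound at a random gate in a depth-$d$ circuit embedded in that hybrid. The variance of the true depth-$(d+1)$ ensemble can be related to the hybrid variance because removing the remaining gates corresponds to conditioning on them, and those gates are random with $\E_g$ acting as a contraction. The second delicate point is uniformity in $\rho$ and geometry. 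The design-regime estimate must hold for arbitrary $\rho$, which follows since $|\Tr[\rho X]|\le\|X\|_\infty$ together with the $4k$-design bound on the relevant operator-valued moment. The light-cone depth must also be polynomially smaller than the design depth, which holds in any fixed dimension for separated supports.
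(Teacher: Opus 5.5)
\paragraph{Main gap: localizing the increment to a single gate.}
Your skeleton is the paper's: the mean is pinned to $1$ before the light cone and is near the Haar value at polynomial depth, telescoping gives a depth with a $1/\operatorname{poly}(n)$ mean increment, and a local reverse-variance inequality plus the law of total variance turns this into variance. The gap is in how you localize the increment to a single gate.

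Telescope over your hybrids $H_j$ (first $j$ gates of the new layer random, the rest set to $I$) and condition on everything but $g_j$. This yields only $\Var_{H_j}(F)\ge\eta_1|\mu(H_j)-\mu(H_{j-1})|^2$, a bound for the \emph{hybrid} ensemble. The theorem needs the variance of the genuine depth-$(d+1)$ ensemble $H_m$, and your transfer argument does not provide it. In $H_m$, conditioning on $g_{j+1},\ldots,g_m$ fixes them at Haar-typical values, not at $I$, so the hybrid bound says nothing about those conditional ensembles. Contractivity of $\E_g$ gives upper, not lower, bounds on variance.

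The natural repair is to re-add the remaining gates one at a time with a per-gate persistence estimate. That loses a constant factor per gate, i.e.\ $\kappa^{m-j}$ in total. This is exponentially small when $\Theta(n)$ gates of the layer are active, which is what happens if you append the layer on the $\rho,M$ side: for general $\rho$, every gate meeting the support of $U_d^\dagger BU_d$ matters.

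Applying the single-gate bound directly inside $H_m$ does not help either. It controls $\mu(H_m)-\mu(H_m^{(j\to I)})$, where only $g_j$ is set to $I$. An order-one $\mu(H_m)-\mu(H_0)$ need not make any such difference large. For example, $F=\prod_{j=1}^m\phi(g_j)$ with $\phi(I)=1$ and $\E\phi=\tfrac12$ gives $\mu(H_0)=1$ and $\mu(H_m)=2^{-m}$, yet each single-gate difference is only $2^{-m}$.

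\paragraph{The missing idea: append the layer on the butterfly side.}
With $U_{d+1}=L_{d+1}U_d$ one has $U_{d+1}^\dagger BU_{d+1}=U_d^\dagger L_{d+1}^\dagger BL_{d+1}U_d$. Every gate of $L_{d+1}$ not touching $\operatorname{supp}B$ cancels, so only an $O(1)$-size patch $X$ is active. Conditioning on $U_d=V$ then gives exactly $f_V(I)=F_d(V)$ and $\E_Xf_V(X)=\E[F_{d+1}\mid U_d=V]$. You apply reverse variance to the whole patch \emph{jointly}: for $s$ independent Haar factors of dimensions $d_a$ the constant is $\prod_a d_a^{-8k}$, still $n$-independent. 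No hybrid is needed.

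For hardware blocks $(u_1\otimes u_2)G(v_1\otimes v_2)$ with $G$ entangling, the identity is not in the support and the block is not an exact design. Controlling a variance through designs would also need $4k$ moments, not $2k$. What works is a supported element $X_\star$ with $X_\star^\dagger BX_\star=\pm B$, or open support after constant-size blocking.

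\paragraph{The range of depths.}
Your plan here fails. Pigeonholing over blocks, or bounding the growth of $f$, gives one large increment per block, not consecutive depths with large variance. What is needed is forward persistence, which follows from the same one-layer bound. Set $\bar F\coloneqq\E[F_{d+1}\mid U_d]$ and $c\coloneqq\E F_{d+1}$. The law of total variance gives $\Var(F_{d+1})\ge\E[\eta|\bar F-F_d|^2+|\bar F-c|^2]$. Completing the square yields $\Var(F_{d+1})\ge\frac{\eta}{1+\eta}\Var(F_d)$. Hence the bound survives any fixed number of further layers up to a constant factor.
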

A complete proof is given in Sec.~\ref{app:fluct} of the SM; see in particular Theorem~\ref{thm:fixed-k-transition-window-fluctuation-bound}. The proof sketched below is based on a simple slope-to-variance principle. If the ensemble-averaged OTOC changes by an amount \(\Delta\) over a depth interval of width \(W\), then there exists a depth in that interval, together with a fixed number of subsequent depths, where the variance is at least \(\Omega(\Delta^2/W^2)\). Thus, a constant change of the mean over a polynomial-width transition already implies inverse-polynomial circuit-to-circuit fluctuations. 

\begin{proof}[Proof sketch]
Let \(d_{\mathrm{lc}}\) denote the last depth before the Heisenberg light cone of \(B\) reaches the support of \(M\). For every \(d\leq d_{\mathrm{lc}}\), the operators \(U_d^\dagger B U_d\) and \(M\) commute, and hence \(\mathrm{OTOC}^{(k)}_\rho(U_d)=1\) for every circuit instance and every input state.
Now suppose that the ensemble-averaged OTOC changes by \(\Delta\) over a depth interval \([a,b]\) of width \(W=b-a\). Since the total change is \(\Delta\), at least one layer must change the mean by order \(\Delta/W\). Within that layer, the change can be decomposed into the contributions of its local gates, so that at least one constant-size random gate or block produces a non-negligible change in the corresponding conditional mean. The local reverse-variance bound of Appendix~\ref{app:local-reverse-variance} then converts this mean response into conditional variance of order \(\Delta^2/W^2\). Averaging over the remaining circuit randomness and using the law of total variance gives circuit-to-circuit variance of the same order at some depth \(d_\star\in\{a+1,\ldots,b\}\). The forward-persistence lemma of Appendix~\ref{app:fluct} shows that this lower bound persists for a fixed number of subsequent depths (at least $\Omega(\log(n))$ depths).
It remains only to certify that the ensemble mean changes by a non-negligible amount. Before the light cone reaches \(M\), the averaged OTOC is exactly \(1\). At sufficiently large depth, convergence to the relevant unitary-design moment implies that the averaged OTOC approaches its Haar value. Whenever this Haar-averaged value differs from \(1\) by a constant, we have \(\Delta=\Omega(1)\). Since the corresponding design depth is polynomial in \(n\) for the circuit families considered here~\cite{brandao2016local,haferkamp2022random,harrowmehraban2023approximate,yada2026nonhaar,oszmaniec2022epsilon}, the claimed inverse-polynomial variance lower bound follows.
\end{proof}

This existential localization in depth is not an obstruction for a quantum experiment: since the relevant interval contains only polynomially many candidate depths, one can efficiently scan through them and identify a depth exhibiting inverse-polynomial circuit-to-circuit fluctuations.
The role of design convergence is only to certify that the averaged OTOC changes by a constant amount. The argument itself is more general: any control of the mean profile immediately translates into a fluctuation bound through the slope-to-variance relation above. In particular, sharper information about the transition window leads directly to stronger bounds. This is precisely what happens and show below for \(\mathrm{OTOC}^{(1)}\) in one dimension, where we can characterize the transition in much greater detail.

The theorem is not restricted to Haar-random two-qubit gates. It also applies to hardware-motivated ensembles constructed from a fixed entangling gate---for example CNOT, CZ, iSWAP, or generic fSim gates---interleaved with Haar random single-qubit rotations~\cite{brylinski2002universal,bremner2002practical}. The proof requires only sufficient local moment mixing in Appendix~\ref{app:moment-control-design-depth} and the local reverse-variance condition of Appendix~\ref{app:local-reverse-variance}; both are properties of constant-size local blocks and are independent of the total number of qubits. This makes the result directly relevant to the gate architecture used in the Google Quantum AI OTOC experiments, where the two-dimensional random circuits are built from fixed iSWAP-like entangling gates interleaved with randomized single-qubit rotations~\cite{abanin2025constructiveinterferenceedgequantum,king2025simplifiedversionquantumotoc2}. 


\section{Are there actually many gates that significantly influence the OTOC? A one-dimensional analysis}
\label{subsec:main-otoc1-finite-size}

The general result above establishes inverse-polynomial circuit-to-circuit
OTOC fluctuations at suitable depths. By itself, however, it leaves open the
possibility that these fluctuations are generated by only a small number of
exceptional gates. If that were the case, one might hope to simulate the OTOC
efficiently by identifying those gates and averaging over the rest.

We now rule out this possibility in the concrete setting of the first-order OTOC in
a one-dimensional Haar-random brickwork circuit. We show that the
fluctuations have a collective origin, with \(\Theta(n^{3/2})\) gates making
inverse-polynomial contributions. Interestingly, these gates are organized
within a region whose width in each circuit layer is only \(O(\sqrt n)\).
This structure will also lead to the subexponential classical simulation
algorithm described below.

Consider an open chain of \(n\) qubits and a depth-\(d\) brickwork circuit,
as depicted in Fig.~\ref{fig:otoc1-gate-sensitivity-lens}. Each two-qubit
gate is sampled independently from \(\Haar(\mathrm U(4))\), and odd and even
layers alternate between the two nearest-neighbor matchings. We place the
butterfly operator on the first qubit and the probe on the last, taking
\(B=Z_1\) and \(M=Z_n\). We refer to this choice as the
\emph{endpoint geometry}.

Before the Heisenberg light cone reaches the last qubit,
\(\mathrm{OTOC}^{(1)}_\rho(U_d)=1\) for every circuit realization, and at larger
depths the value begins to depend significantly on the particular circuit. We first
identify the depth range in which this happens and then study the size and
origin of the resulting circuit-to-circuit fluctuations.

\begin{figure*}[t]
    \centering
    \makebox[\textwidth][c]{%
        \begin{minipage}[c]{0.5\textwidth}
            \centering
            \subfloat[\label{fig:otoc1-gate-sensitivity-lens-a}]{%
                \includegraphics[width=\linewidth]
                {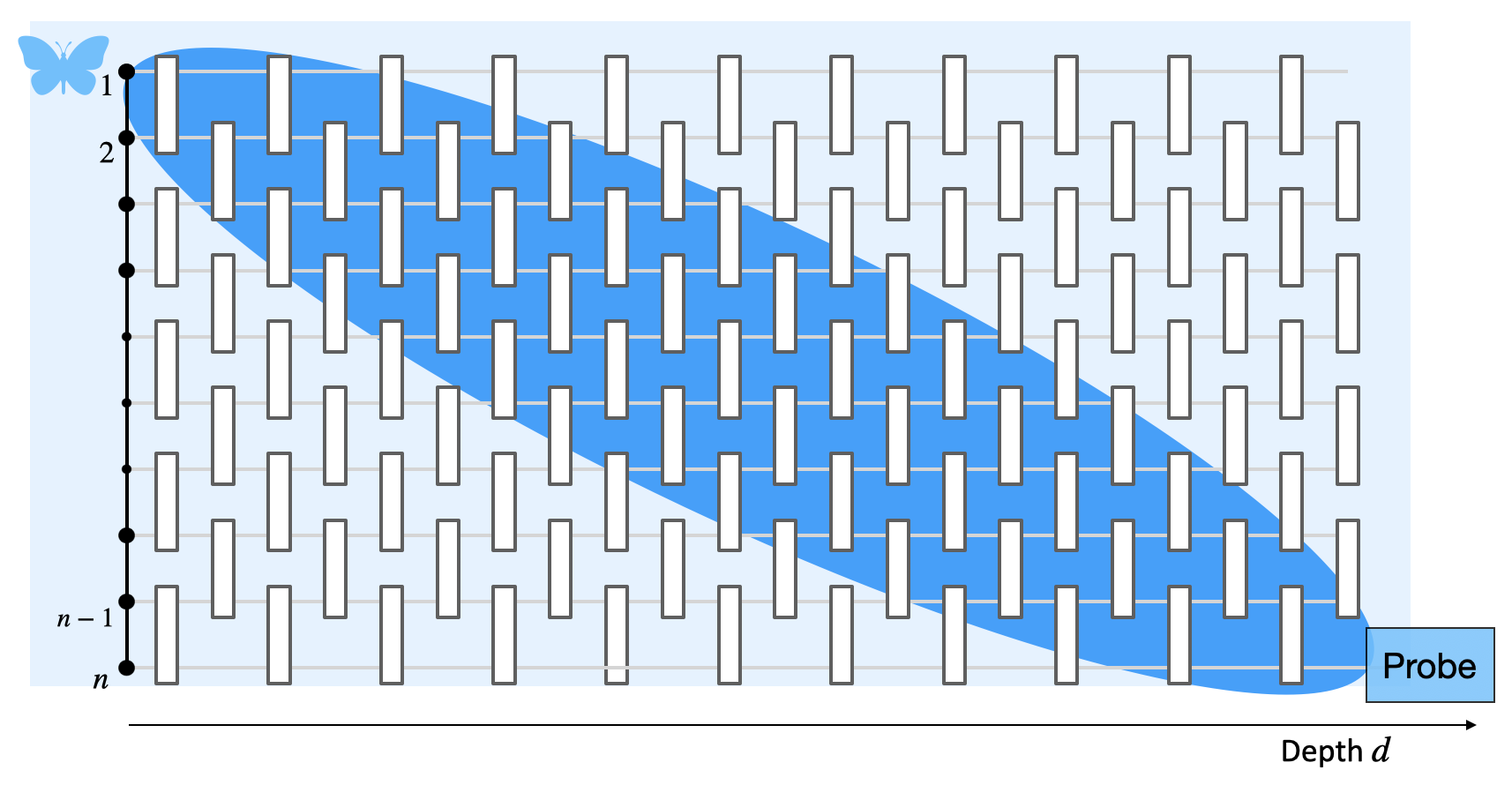}%
            }
        \end{minipage}%
        \hspace{0.0025\textwidth}%
        \begin{minipage}[c]{0.54\textwidth}
            \centering
            \subfloat[\label{fig:otoc1-gate-sensitivity-lens-b}]{%
                \includegraphics[width=\linewidth]
                {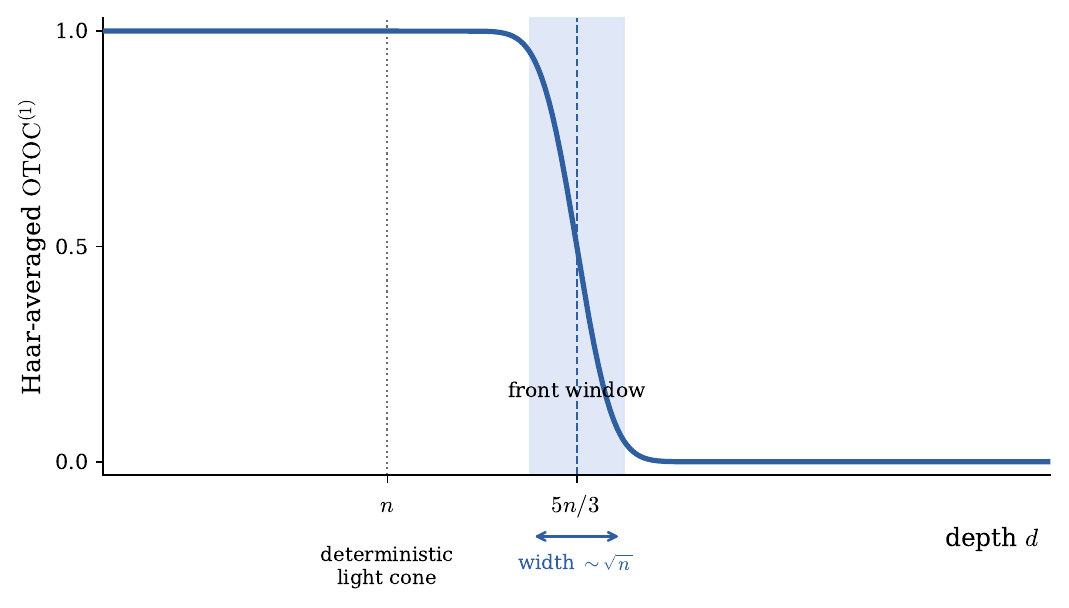}%
            }
        \end{minipage}%
    }

    \caption{
    First-order OTOC in the one-dimensional endpoint geometry.
    \textbf{(a)} Circuit geometry and the eye-shaped region of influential
    gates, with longitudinal extent \(\Theta(n)\) and maximal width
    \(\Theta(\sqrt n)\).
    \textbf{(b)} Circuit-averaged OTOC as a function of depth. The solid curve
    shows the exact circuit average and the dashed curve the Gaussian
    approximation of Eq.~\eqref{eq:main-otoc1-gaussian-front}. The shaded
    region marks the \(O(\sqrt n)\) depth window around \(d_\star=5n/3\).
    }
    \label{fig:otoc1-gate-sensitivity-lens}
\end{figure*}

\subsection{Over which depth interval is the OTOC instance-dependent, and which gates contribute to its fluctuations?}

We first determine the depth interval over which the OTOC changes. In a
one-dimensional Haar-random brickwork circuit, the circuit-averaged
Heisenberg evolution of \(Z_1\) can be mapped to a classical biased random
walk describing the right endpoint of the Pauli strings appearing in the
evolved operator~\cite{nahum2018operator,keyserlingk2018operator}.

Starting from the first qubit, this endpoint propagates toward the last qubit
with average velocity \(3/5\), and therefore typically reaches the probe at
depth \(d_\star=5n/3\). Its position broadens diffusively, leading to
fluctuations of order \(\sqrt n\) in the arrival depth. Consequently, the
circuit-averaged OTOC changes from near \(1\) to near \(0\) over a depth
window of the same width. More precisely, uniformly over all input states
\(\rho\),
\begin{align}
\label{eq:main-otoc1-gaussian-front}
\left|
    \E_{U_d}\mathrm{OTOC}^{(1)}_\rho(U_d)
    -
    \phi\!\left(
        \frac{d-\frac53 n}{\frac43\sqrt d}
    \right)
\right|
&\leq
\frac{5}{\sqrt n},
\end{align}
where
\(\phi(x)\coloneqq(2\pi)^{-1/2}\int_x^\infty e^{-s^2/2}\,ds\) is the cumulative Gaussian distribution.
Thus the relevant front window is
\(\lvert d-5n/3\rvert=O(\sqrt n)\), as shown in
Fig.~\ref{fig:otoc1-gate-sensitivity-lens}(b). The complete random-walk
mapping, finite-size analysis, and proof of
Eq.~\eqref{eq:main-otoc1-gaussian-front} are given in
Sec.~\ref{subsec:endpoint-otoc1-exact-formula} of the SM.

We next ask whether the OTOC remains instance-dependent throughout this
window, and which gates are responsible for its fluctuations. In this
one-dimensional setting, both questions can be answered sharply: the
variance remains inverse-polynomial throughout the front window, and its
fluctuations arise collectively from a macroscopic number of gates.

\begin{theorem}[Macroscopic origin of the OTOC fluctuations (informal)]
\label{thm:main-otoc1-variance-front-window}
Consider a one-dimensional Haar-random brickwork circuit on \(n\) qubits at
a depth satisfying
\(\lvert d-5n/3\rvert\leq c\sqrt n\), for any fixed \(c>0\).
Then, uniformly over all input states \(\rho\),
\begin{align}
    \Var_{U_d}\!\left[
        \mathrm{OTOC}^{(1)}_\rho(U_d)
    \right]
    &=
    \Omega(n^{-1/2}).
\end{align}
Moreover, there are \(\Theta(n^{3/2})\) gates whose individual
contribution to the variance is at least inverse-polynomial.
\end{theorem}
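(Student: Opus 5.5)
The plan is to lower bound the variance through a sum of single-gate contributions, using the Efron--Stein/ANOVA decomposition. Write $f(U)=\mathrm{OTOC}^{(1)}_\rho(U_d)$ as a function of the independent Haar gates $\{G_g\}$. For an ordering of the gates, the Doob martingale increments give
\[
\Var_U[f]=\sum_g \E\bigl[\Var_{G_g}\bigl(\E[f\mid G_1,\dots,G_g]\bigr)\bigr]\;\ge\;\sum_g \E\,\Var_{G_g}\bigl(\E[f\mid G_{\neq g}^{\mathrm{avg}}]\bigr).
\]
In the last expression we keep the gate $G_g$ fixed and average over all other gates. This is the first-order (Hoeffding) term $\Var_{G_g}(\E[f\mid G_g])$, and the sum of these first-order terms lower bounds the total variance. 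So it suffices to show two things. First, $\Theta(n^{3/2})$ gates have $\Var_{G_g}(\E[f\mid G_g])\ge 1/\mathrm{poly}(n)$. Second, the sum of these terms is $\Omega(n^{-1/2})$.

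\textbf{Conditional mean.} To compute $\E[f\mid G_g]$ I reuse the random-walk mapping behind Eq.~\eqref{eq:main-otoc1-gaussian-front}.

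1. Average the gates before $g$ in Heisenberg order. This gives the distribution of the right endpoint of the operator string of $U^\dagger Z_1U$ at the layer of $g$. It is approximately Gaussian, with mean $3t/5$ at layer $t$ and width $\Theta(\sqrt t)$.
2. Average the gates after $g$. This gives the probability that an endpoint at position $x$ fails to reach qubit $n$ by depth $d$.
3. Insert the fixed gate $G_g$ between the two averages.

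Since $\E[f]$ is a second-moment (two-copy) quantity, $\E[f\mid G_g]$ is a quadratic function of $G_g\otimes G_g^*$ acting on the two-site Pauli-weight transfer data. The endpoint advances through the bond of $g$ only if the endpoint currently sits there. On that event, the fixed gate maps a Pauli with support on the left qubit to a right-supported Pauli with a gate-dependent probability $p(G_g)$, whose Haar mean is $3/5$ up to the standard counting. Hence
\[
\E[f\mid G_g]-\E f=\bigl(p(G_g)-\bar p\bigr)\,P_t(x_g)\,\bigl[h_{t+1}(x_g+1)-h_{t+1}(x_g-1)\bigr],
\]
up to lower-order corrections. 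Here $P_t(x_g)=\Theta(t^{-1/2})$ is the probability that the endpoint is at the gate, and the bracket is the discrete gradient of the remaining survival probability. When the remaining time satisfies $d-t=\Theta(n)$ and $|d-5n/3|\le c\sqrt n$, the gradient has size $\Theta((d-t)^{-1/2})$. The bracket is a Gaussian density evaluated at the gate position, and it stays of constant order relative to its peak only if $x_g$ lies within $O(\sqrt n)$ of the ballistic line $x=3t/5$ and the target trajectory. Because $\Var_{\mathrm{Haar}}\,p(G)$ is a positive constant, each of the $\Theta(n)\times\Theta(\sqrt n)$ gates in this eye-shaped region has
\[
\Var_{G_g}\bigl(\E[f\mid G_g]\bigr)=\Theta\!\left(\frac{1}{t(d-t)}\right)=\Theta(n^{-2}).
\]
Gates outside the region have exponentially small contributions in the Gaussian tail. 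This gives the count $\Theta(n^{3/2})$.

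\textbf{Summing.} Summing the per-gate bounds naively gives only $n^{3/2}\cdot n^{-2}=n^{-1/2}$, which is exactly the claimed $\Omega(n^{-1/2})$. This is a heuristic match; the rigorous version needs the endpoints $t$ near $0$ or near $d$ handled separately, where the factor $1/(t(d-t))$ blows up and even helps the bound. The statement must also hold uniformly in $\rho$. The mapping in Sec.~\ref{subsec:endpoint-otoc1-exact-formula} already gives this for the mean, and I expect the same expansion to carry over to the conditional mean, because the probe $Z_n$ enters only through whether the endpoint has reached $n$, with $\rho$-independent weights.

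\textbf{Main obstacle.} The hardest step will be making the one-gate conditional mean exact and local-CLT sharp. I will need:
\begin{itemize}
\item an exact formula for the two-copy endpoint Markov chain with one gate held fixed, including the non-endpoint Pauli weights that the fixed gate scrambles;
\item Berry--Esseen or local-limit bounds with explicit $O(n^{-1/2})$ errors, both for the hitting probabilities and for their discrete gradient.
\end{itemize}
The errors must be smaller than the main $\Theta(n^{-1})$ amplitude. My first attempt would use the exact binomial-type transition law of the biased walk and compute ratios of binomial probabilities directly. This avoids generic CLT error terms, whose $O(1/\sqrt n)$ accuracy would be too weak at the level of the derivative.
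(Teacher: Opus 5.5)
Your strategy is the paper's. You lower-bound $\Var_{U_d}$ by the first-order terms $\Var_{G_z}(\E[f\mid G_z])$; the paper gets the same bound from orthogonality of $Y_z=\E[X-\E X\mid G_z]$, which is equivalent to your martingale step. You then factor each term as past propagation $\times$ local excess $\times$ future propagation through the endpoint chain, and count a $\Theta(n)\times\Theta(\sqrt n)$ set of gates each contributing $\Omega(n^{-2})$.

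The gap is in the step you defer, and your sketch of it is wrong in a way that matters. You keep only the case where the endpoint enters the fixed gate on bond $(\ell,\ell+1)$ from the left qubit, and treat the rest as lower order. It is not lower order. When the endpoint sits on the right qubit $\ell+1$ (weight $\frac15e_{a+1}^{\mathsf T}Q^{t/2-1}e_1$, with $a=\ell/2$), the fixed gate can push it back to $\ell$. This weight is of the same order as the left-input weight $\frac45e_a^{\mathsf T}Q^{t/2-1}e_1$, and the resulting term partially cancels the one you keep, so it must be computed, not dropped.

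The paper handles this exactly. The squared Pauli-transfer matrix of $G$ is doubly stochastic on the $15$ nonidentity Paulis, which gives $3A(G)+12B(G)=12$. Hence the right-output weight for a uniform right-supported input is $B(G)=1-\frac14A(G)$. Both cases are then multiples of the single scalar $A(G)-\frac45$; note the Haar mean is $12/15=4/5$, not $3/5$, which is the butterfly velocity. This gives
\[
\E_{\neq z}f-\E f=-\tfrac{16}{15}\,\mathsf P_{\ell,t}\mathsf F_{\ell,t}\,\bigl(A(G)-\tfrac45\bigr),
\]
with $\mathsf P_{\ell,t}=\frac45e_a^{\mathsf T}Q^{t/2-1}e_1-\frac1{20}e_{a+1}^{\mathsf T}Q^{t/2-1}e_1$ and $\mathsf F_{\ell,t}=e_{n/2}^{\mathsf T}Q^{(d-t)/2-1}(e_{a+1}-e_a)$. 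The generating function of the tridiagonal $Q$ then yields exactly $\mathsf P_{\ell,t}=b_{t-1}(\frac{t-\ell}2)\frac{2\ell}{t+\ell}>0$ for the relevant gates, where $b_M$ is the $\mathrm{Bin}(M,1/5)$ point mass. This rules out full cancellation.

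Three further points need replacing, not just sharpening.

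\emph{Non-endpoint weights.} These are harmless. Conditional on the endpoint, the Haar-averaged string behind the front is exactly uniform: sites to its left are i.i.d.\ uniform on $\{I,X,Y,Z\}$, and the endpoint is uniform on $\{X,Y,Z\}$. A fixed gate strictly behind the endpoint therefore sees a uniform input, which its orthogonal transfer matrix preserves. Every later move of the endpoint is made by a Haar gate that sends any nonidentity input to the uniform distribution. The two-copy coherences created by $G$ are erased by the adjacent Haar gates of the next layer. So nothing $G$ does to the interior feeds back, and the endpoint chain with one modified step is exact.

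\emph{Uniformity in $\rho$.} Your argument fails here. For a fixed state the OTOC is not a function of squared Pauli weights alone, because cross terms $c_Pc_{P'}$ with $P\neq P'$ appear. The correct tool is right-Pauli invariance of the ensemble with one mid-circuit gate held fixed, i.e.\ Lemma~\ref{lem:state-independence-higher-otoc} applied conditionally. This requires the fixed gate to stay out of the input-side layers, which is the paper's assumption $t\le d-2$.

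\emph{Summation.} No local CLT and no special treatment of $t$ near $0$ or $d$ is needed. A Stirling lower bound on $b_M(r)$ gives $\mathsf P_{\ell,t},\mathsf F_{\ell,t}=\Omega(n^{-1/2})$ on the region $\frac n3\le\ell\le\frac{2n}3$, $|t-\frac53\ell-\frac\Delta2|\le\sqrt n$, where $\Delta=d-\frac53n$. Since the first-order bound holds for any subset of gates, you may discard all others. The final summation is therefore rigorous rather than heuristic.
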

The proof is given in the SM; see Theorem~\ref{thm:endpoint-otoc1-variance-front-window} for the variance bound and Proposition~\ref{lem:otoc1-conditional-mean-variance-light-cone} for the \(\Theta(n^{3/2})\) individual gate contributions.

These \(\Theta(n^{3/2})\) influential gates occupy the blue region shown in Fig.~\ref{fig:otoc1-gate-sensitivity-lens}(a), which has the shape of an eye; for this reason, we will refer to it as the \emph{eye-shaped} region. This geometry has a simple
interpretation in terms of operator spreading. For a gate to appreciably
affect the final OTOC, the Heisenberg-evolved butterfly operator must first
reach that gate, and the change produced there must subsequently have enough
time to reach the probe by the final circuit depth. Thus an influential gate
must lie in the overlap between the region reached by the butterfly and the
region from which the probe can still be reached.
In the biased-random-walk picture, both propagation fronts broaden
diffusively around their ballistic trajectories. Their overlap therefore
extends over \(\Theta(n)\) circuit layers, while its maximal spatial width is
\(\Theta(\sqrt n)\). This produces the characteristic eye-shaped region
containing
\(\Theta(n)\times\Theta(\sqrt n)=\Theta(n^{3/2})\) gates. A closely related eye-shaped region was observed numerically through tensor-network simulations in Ref.~\cite{bermejo2026tensornetworks}; here this geometry emerges from our
rigorous gate-influence analysis.

More quantitatively, we identify \(\Theta(n^{3/2})\) gates whose individual
contributions to the variance are each \(\Omega(n^{-2})\). These
contributions are orthogonal and therefore add, yielding the total lower
bound
\(\Omega(n^{-1/2})\). The precise definition of gate influence, the
construction of the eye-shaped region, and the complete proof are deferred
to the SM.

Thus, the OTOC fluctuations arise collectively from a macroscopic region of
the circuit, rather than being controlled by only a few exceptional gates.
We stress, however, that this stronger gate-level characterization is
specific to \(\mathrm{OTOC}^{(1)}\). Our general higher-order OTOC theorem
establishes non-concentration much more broadly, but does not by itself show
that a macroscopic number of gates contribute to the fluctuations. The
gate-level analysis for \(\mathrm{OTOC}^{(1)}\) relies crucially on the fact
that both the ensemble mean and the relevant single-gate conditional means
are analytically tractable. For \(\mathrm{OTOC}^{(2)}\) and higher orders,
comparable analytical control for the conditional means is currently unavailable, even for the
ensemble mean, so the same argument does not directly extend.

\subsection{Application to improved classical simulation}
\label{subsubsec:classical-simulation-main}

Although a macroscopic number of gates contributes to the variance, these
gates occupy a thin spacetime region rather than the entire circuit. This
suggests that the cost of classical simulation may be controlled by the
width of this region, rather than by its total spacetime volume.

For the infinite-temperature \(\mathrm{OTOC}^{(1)}\) at the critical depth, we make this intuition rigorous in Theorem~\ref{thm:endpoint-sim-otoc1-runtime} of the SM. We prove that the
influence of gates outside the eye decays with a Gaussian tail. To achieve
additive error \(\epsilon\) and failure probability \(\delta\), the algorithm
retains a slightly enlarged eye of width
\(O(\sqrt{n\log(n/(\epsilon\delta))})\), conditions on the gates inside this
region, and averages over the Haar randomness outside it. The resulting conditional expectation can be estimated classically at a cost exponential
only in the maximal width of the retained region.

For a circuit drawn from the Haar-random brickwork ensemble, the algorithm
estimates the infinite-temperature OTOC to additive error \(\epsilon\) with
probability at least \(1-\delta\), where the probability is over both the
random circuit and the internal randomness of the algorithm. For
inverse-polynomial accuracy and failure probability, the runtime is
\begin{align}
    \mathrm{time}
    &=
    \operatorname{poly}(n)\,
    2^{O(\sqrt{n\log n})}.
\end{align}
This is subexponential in \(n\). To the best of our knowledge, it is the
first provable asymptotic improvement over brute-force classical simulation
for this average-case OTOC-estimation problem. Previous work has explored
approximate Monte-Carlo and tensor-network approaches
~\cite{
abanin2025constructiveinterferenceedgequantum,
bermejo2026tensornetworks,XuSwingle2020},
but these methods do not provide provable runtime guarantees. Our result applies
specifically to the infinite-temperature first-order OTOC; extending such
guarantees to state-dependent or higher-order OTOCs remains open.


\section{Conclusions and open questions}
\label{sec:discussion}
Our results reveal an intermediate regime of random quantum dynamics in which
information has already propagated across the system, while the observable
still retains a measurable dependence on the particular circuit realization.
Before the butterfly reaches the probe, the OTOC is trivial; at sufficiently
large depths, concentration suppresses circuit-to-circuit variation, leaving
the interesting high-signal regime between these two limits. This is precisely
the regime highlighted experimentally by Google Quantum AI, where significant
instance-to-instance OTOC fluctuations were observed
~\cite{
abanin2025constructiveinterferenceedgequantum,
king2025simplifiedversionquantumotoc2}.
By proving that these fluctuations remain at least inverse-polynomial for
broad classes of local random circuits, our results provide a rigorous
foundation for a central ingredient of this verifiable quantum-advantage
proposal.

Our one-dimensional analysis of \(\mathrm{OTOC}^{(1)}\) gives a more detailed
picture of this regime. Throughout the high-fluctuation window around the
linear depth \(d_\star=5n/3\), macroscopically many gates---\(\Theta(n^{3/2})\)
in total---contribute to the fluctuations, ruling out the possibility that
the large variance is caused by only a few exceptional gates. At the same
time, these gates are confined to a spacetime region of width only
\(O(\sqrt n)\) in each layer. A closely related eye-shaped geometry appears
in the tensor-network analysis of Ref.~\cite{bermejo2026tensornetworks},
where numerical evidence indicates that the corresponding region is largely
incompressible. Our results complement these findings by rigorously
establishing, for Haar-random brickwork circuits, the eye-shaped localization
of the influential gates. This localization also has a direct computational
consequence: for the infinite-temperature \(\mathrm{OTOC}^{(1)}\), the same
structure that supports sizeable circuit-to-circuit fluctuations enables a
subexponential classical simulation.

Several natural questions remain open. First, it is not clear whether our
subexponential algorithm for the infinite-temperature
\(\mathrm{OTOC}^{(1)}\) can be substantially improved. The cached Monte
Carlo method introduced in the Google experiment performs remarkably well
for finite-size first-order OTOCs~\cite{abanin2025constructiveinterferenceedgequantum}, but a rigorous
asymptotic analysis of its accuracy and runtime is still lacking. It would
be particularly interesting to determine whether this or related ideas can
lead to a quasipolynomial, or even polynomial, classical algorithm.

A related basic question is to determine the actual scale of the OTOC
fluctuations. Our results provide lower bounds on the variance, but no
corresponding nontrivial upper bounds. In particular, it remains open whether
the \(\Omega(n^{-1/2})\) scaling obtained for
\(\mathrm{OTOC}^{(1)}\) in one dimension is tight. Matching, or even
nontrivial, upper bounds would give a much sharper description of the
fluctuation regime and could reveal how the contributions of different gates
combine and cancel, potentially exposing further structure useful for
classical simulation.

A second direction is to go beyond single-gate influence. Our analysis
isolates the contribution of individual gates through conditional means, but
one can similarly study irreducible correlations between pairs or larger
sets of gates. Understanding this hierarchy could reveal how much of the
OTOC is captured by low-order gate dependencies and whether truncating it can
lead to improved classical simulation algorithms.

The corresponding questions for \(\mathrm{OTOC}^{(2)}\) and higher orders
appear substantially more challenging. Even the ensemble mean does not
currently admit an analytic treatment comparable to
\(\mathrm{OTOC}^{(1)}\) in the regimes of interest, and no efficient
classical procedure for evaluating it is known. Moreover, the classical
heuristics tested in the Google experiment perform substantially worse at
reproducing instance-to-instance fluctuations of higher-order OTOCs
~\cite{abanin2025constructiveinterferenceedgequantum}. Developing analytic
control of the ensemble mean, single-gate conditional means, and higher-order
gate correlations would shed light both on the origin of these fluctuations
and on whether the localization-based simulation picture extends beyond
\(\mathrm{OTOC}^{(1)}\).

More broadly, the persistence of an instance-dependent signal in deep random
circuits suggests that OTOC-based observables may provide useful objectives
in regimes where more conventional variational cost functions suffer from
concentration~\cite{McCleanEtAl2018}. Whether this can be exploited in
variational algorithms, quantum learning, or other applications remains an
interesting direction for future work.

\begin{acknowledgments}
This work was done while A.A.M. and F.A.M. were Student Researchers at Google Quantum AI. We thank Amira Abbas, Ryan Babbush, David Gosset, Jeongwan Haah,
Kostyantyn Kechedzhi, Robin Kothari, Laura Lewis, Tony Metger,
Thomas Schuster, Vadim Smelyanskiy, Rolando D. Somma, Aaron Szasz,
Guifré Vidal, Benjamin Villalonga, Brayden Ware, and Ronald de Wolf
for helpful discussions and feedback on this work.
\end{acknowledgments}

\emph{AI disclosure.}
GPT-5.5 was used extensively during the research process to explore and refine proof ideas and to perform numerical experiments that guided intuition about what to prove next. The authors wrote the manuscript and take full responsibility for its content.
GPT-6 Astra (Ultra) assisted in developing Lean formalizations of our main theorems. The Lean code is available in the accompanying \href{https://gist.github.com/AntMele/b4b8dc772444c70b73069d44e0047605}{GitHub repository}.

\bibliography{ref.bib}

\clearpage
\onecolumngrid
\begin{center}
\vspace*{\baselineskip}
{\Large\textbf{Supplemental Material}}
\end{center}

\setcounter{tocdepth}{2} 
\tableofcontents
\vspace*{2\baselineskip}

This Supplemental Material is organized as follows. Section~\ref{sec:appendixPRELIM} introduces the local circuit architectures, light cones, and basic OTOC properties. Section~\ref{app:fluct} proves the fluctuation bound for fixed-order OTOCs, combining design convergence with local reverse-variance bounds. Section~\ref{subsec:endpoint-otoc1-setup} studies the first-order OTOC in the one-dimensional endpoint geometry: we derive its ensemble mean and Gaussian approximation, prove the variance lower bound, and characterize the individual gate contributions. Finally, Section~\ref{sec:eye-reduction-classical-simulation} proves the subexponential classical simulation algorithm for the infinite-temperature endpoint \(\mathrm{OTOC}^{(1)}\) in one dimension.

\section{Preliminaries and technical tools}
\label{sec:appendixPRELIM}



We write \([n]\coloneqq \{1,\ldots,n\}\).  The Hilbert space of \(n\) qubits is
\(\mathcal H_n\coloneqq (\mathbb C^2)^{\otimes n}\), and its dimension is \(d_n\coloneqq 2^n\).  For
\(S\subseteq[n]\), we write \(\mathcal H_S\coloneqq \bigotimes_{i\in S}\mathbb C^2\) and
\(S^c\coloneqq [n]\setminus S\).  Operators supported on \(S\) are embedded into the full system by the
convention \(O_S\equiv O_S\otimes I_{S^c}\).  We write \(\mathcal B(\mathcal H)\) for the algebra
of linear operators on \(\mathcal H\), and \(I_S\) for the identity on \(\mathcal H_S\).

For \(O\in\mathcal B(\mathcal H_n)\), \(\operatorname{supp}(O)\) denotes the smallest subset
\(S\subseteq[n]\) such that \(O=O_S\otimes I_{S^c}\) for some
\(O_S\in\mathcal B(\mathcal H_S)\).  All local observables appearing below have support size
independent of \(n\).  We use the operator norm \(\|O\|_\infty\), the trace norm \(\|O\|_1\), and
the Hilbert--Schmidt norm \(\|O\|_2\), defined by
\begin{align}
    \|O\|_\infty
    &\coloneqq 
    \sup_{\|\psi\|=1}\|O|\psi\rangle\|,
    &
    \|O\|_1
    &\coloneqq 
    \Tr\sqrt{O^\dagger O},
    &
    \|O\|_2
    &\coloneqq 
    \sqrt{\Tr[O^\dagger O]}.
    \label{eq:norm-conventions}
\end{align}
Expectations over random circuits and random gates are denoted by \(\E\).  When it is useful to
specify the underlying ensemble, we write, for example, \(\E_{U_d}\), \(\E_{g\sim\nu}\), or
\(\E_{U\sim\mathrm{Haar}}\).  For a complex-valued random variable \(X\), we use
\begin{align}
    \Var(X)
    &\coloneqq 
    \E[|X-\E X|^2]
    =
    \E[|X|^2]-|\E X|^2.
    \label{eq:complex-variance-convention}
\end{align}
If \(Y\) is a random object on which we condition, such as a partial circuit or a local block of
gates, we write
\begin{align}
    \Var(X\mid Y)
    &\coloneqq 
    \E\!\left[
        |X-\E[X\mid Y]|^2
        \,\middle|\,
        Y
    \right].
    \label{eq:conditional-variance-definition}
\end{align}
With this convention, the law of total variance gives
\begin{align}
    \Var(X)
    &=
    \E\!\left[
        \Var(X\mid Y)
    \right]
    +
    \Var\!\left(
        \E[X\mid Y]
    \right).
    \label{eq:law-total-variance-complex}
\end{align}
In particular, \(\Var(X)\ge\E[\Var(X\mid Y)]\).

We denote by \(\mathcal P_n\coloneqq \{I,X,Y,Z\}^{\otimes n}\) the set of \(n\)-qubit Pauli
strings.  Thus every Pauli observable \(P\in\mathcal P_n\) satisfies
\(P^\dagger=P\) and \(P^2=I\).  Throughout this appendix, \(B\) denotes the butterfly Pauli
observable and \(M\) denotes the measured Pauli observable.  Unless stated otherwise,
\(B,M\in\mathcal P_n\) are nonidentity Pauli observables with
\begin{align}
    |\operatorname{supp}(B)|
    &=
    O(1),
    &
    |\operatorname{supp}(M)|
    &=
    O(1).
    \label{eq:BM-local-support}
\end{align}

For a graph \(G=([n],E)\), \(\dist_G(i,j)\) denotes the minimum number of edges in a path from
\(i\) to \(j\).  For \(S,T\subseteq[n]\), we use
\(\dist_G(i,S)\coloneqq \min_{j\in S}\dist_G(i,j)\) and
\(\dist_G(S,T)\coloneqq \min_{i\in S,j\in T}\dist_G(i,j)\).  This notation will be applied to the interaction
graph of the local circuit architecture.

Asymptotic notation with a subscript, such as \(O_k(\cdot)\), \(\Omega_k(\cdot)\), or
\(\Theta_k(\cdot)\), means that the implicit constants may depend on the fixed OTOC order \(k\),
the fixed local gate ensemble, and fixed architectural parameters, but not on \(n\).

\subsection{Local circuit architectures and deterministic light cones}
\label{app:local-architectures}

Let \(G_n=([n],E_n)\) be an interaction graph of bounded degree, meaning that every vertex has at
most a constant number of neighbours, independently of \(n\).  Equivalently, each qubit can interact
directly with only \(O(1)\) other qubits.  The graph distance between two vertices is the minimum
number of edges in a path connecting them.

A local architecture is a sequence of layers \(\mathcal L_1,\mathcal L_2,\ldots\), where each layer
\(\mathcal L_j\) is a collection of mutually disjoint patches \(S\subseteq[n]\).  Mutually disjoint means
that no qubit belongs to two different patches in the same layer.  Each patch has size \(O(1)\), and
all qubits in a patch are within graph distance \(O(1)\) of each other in \(G_n\).  On each
\(S\in\mathcal L_j\), we apply an independent random unitary
\(g_{j,S}\in\mathrm U(\mathcal H_S)\), drawn from a local distribution \(\nu_{j,S}\).  The layer unitary is
\begin{align}
    L_j
    &\coloneqq 
    \bigotimes_{S\in\mathcal L_j}g_{j,S},
    \label{eq:layer-unitary}
\end{align}
with the identity on all qubits not touched by the layer.  We allow the distributions
\(\nu_{j,S}\) to depend on \(j\) and \(S\), but all local gates are independent unless explicitly stated
otherwise.

We use the Heisenberg ordering
\begin{align}
    U_d
    &\coloneqq 
    L_dL_{d-1}\cdots L_1.
    \label{eq:Ud-definition}
\end{align}
Thus
\begin{align}
    U_d^\dagger B U_d
    &=
    U_{d-1}^\dagger L_d^\dagger B L_d U_{d-1}.
    \label{eq:new-layer-adjacent}
\end{align}
With this convention, the newly added layer \(L_d\) is the layer adjacent to \(B\) in the Heisenberg
evolution.

For \(S\subseteq[n]\), define the deterministic Heisenberg light cone \(\mathsf{LC}_d(S)\) to be the
smallest subset of \([n]\), determined only by the architecture, such that for every operator \(O\)
with \(\operatorname{supp}(O)\subseteq S\), and for every choice of local gates compatible with the
architecture,
\begin{align}
    \operatorname{supp}(U_d^\dagger O U_d)
    &\subseteq
    \mathsf{LC}_d(S).
    \label{eq:light-cone-support}
\end{align}
This light cone is deterministic: it depends only on which gates are allowed by the architecture, not
on the sampled gate values.  It is therefore a worst-case light cone and does not use cancellations or
special algebraic properties of particular circuit instances.  In particular, $\operatorname{supp}(U_d^\dagger B U_d) \subseteq
    \mathsf{LC}_d(\operatorname{supp}B).$

The bounded-degree and locality assumptions imply finite propagation speed.  More precisely, there
exists a constant \(v_{\mathcal A}<\infty\), depending only on the architecture, such that
\(\mathsf{LC}_d(S)\) is contained in the set of vertices whose graph distance from \(S\) is at most
\(v_{\mathcal A}d\).  Hence, if \(|S|=O(1)\) and \(d=O(1)\), then
\begin{align}
    |\mathsf{LC}_d(S)|
    &=
    O_d(1),
    \label{eq:fixed-depth-light-cone}
\end{align}
with a constant independent of \(n\).

In all applications below, \(\operatorname{supp}B\) and
\(\operatorname{supp}M\) are initially disjoint and the deterministic
light cone of \(B\) eventually reaches \(\operatorname{supp}M\). We define
the light-cone depth by
\begin{align}
    d_{\rm lc}
    &\coloneqq
    \max\left\{
        d\ge0:
        \mathsf{LC}_d(\operatorname{supp}B)
        \cap
        \operatorname{supp}M
        =
        \varnothing
    \right\}.
\end{align}
Thus \(d_{\rm lc}\) is the last depth at which the deterministic
Heisenberg light cone of \(B\) is disjoint from \(M\). Equivalently, for every \(d\le d_{\rm lc}\), the Heisenberg light
cone of \(B\) has not reached \(M\).

For nearest-neighbour brickwork circuits in one dimension, if
\(\dist(\operatorname{supp}B,\operatorname{supp}M)=\Theta(n)\), then
\(d_{\rm lc}=\Theta(n)\).  For a standard nearest-neighbour circuit schedule on a \(D\)-dimensional grid of side length
\(\ell\), macroscopically separated \(B\) and \(M\) give
\(d_{\rm lc}=\Theta(\ell)\). In particular, if \(n=\ell^D\), then
\(d_{\rm lc}=\Theta(n^{1/D})\).

\subsection{Higher-order OTOCs}
\label{app:higher-order-otoc-definitions}

For a unitary \(U\in\mathrm U(\mathcal H_n)\), set
\(A_U\coloneqq U^\dagger B U\) and \(C_U\coloneqq A_UM=U^\dagger BUM\).
For a density matrix \(\rho\) and an integer \(k\ge1\), define
\begin{align}
    \mathrm{OTOC}^{(k)}_\rho(U)
    &\coloneqq
    \Tr\!\left[\rho\,C_U^{2k}\right]
    =
    \Tr\!\left[\rho\,(U^\dagger BUM)^{2k}\right].
\end{align}

Since \(B\) and \(M\) are Pauli observables, \(A_U\) and \(M\) are Hermitian unitaries. Hence \(C_U=A_UM\) is unitary and, for every density matrix
\(\rho\),
\begin{align}
    \left|\mathrm{OTOC}^{(k)}_\rho(U)\right|
    &\le
    \|C_U^{2k}\|_\infty\Tr\rho
    =
    1.
\end{align}

In general, \(C_U^{2k}\) need not be Hermitian, so
\(\mathrm{OTOC}^{(k)}_\rho(U)\) can be complex. Whenever
\(M\rho M=\rho\), however, it is real. Indeed,
\(C_U^\dagger=MC_UM\), and therefore
\((C_U^{2k})^\dagger=MC_U^{2k}M\). Thus
\begin{align}
    \overline{\mathrm{OTOC}^{(k)}_\rho(U)}
    &=
    \Tr\!\left[(C_U^{2k})^\dagger\rho\right]
    =
    \Tr\!\left[C_U^{2k}M\rho M\right]
    =
    \mathrm{OTOC}^{(k)}_\rho(U).
\end{align}
This includes the infinite-temperature state \(\rho=2^{-n}I\) and, more
generally, any state commuting with \(M\). For arbitrary \(\rho\), all
variance bounds use the complex variance convention of
Eq.~\eqref{eq:complex-variance-convention}.

As a function of \(U\), \(\mathrm{OTOC}^{(k)}_\rho(U)\) is a
degree-\((2k,2k)\) polynomial in the entries of \(U\) and \(\overline U\):
each of the \(2k\) factors \(U^\dagger B U\) contributes one copy of \(U\)
and one copy of \(U^\dagger\). Thus the relevant moment order for the
design statements below is \(2k\)~\cite{RobertsYoshida2017}.

\begin{lemma}[Pre-light-cone identity]
\label{lem:pre-light-cone-identity}
For every \(d\le d_{\rm lc}\), every circuit realization \(U_d\), every
\(k\ge1\), and every density matrix \(\rho\),
\(\mathrm{OTOC}^{(k)}_\rho(U_d)=1\).
\end{lemma}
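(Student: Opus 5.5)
The plan is to reduce the statement to an operator identity. We show that for \(d\le d_{\rm lc}\) one has \(C_{U_d}^{2k}=I\) exactly, for every circuit realization. Then \(\mathrm{OTOC}^{(k)}_\rho(U_d)=\Tr[\rho]=1\) for every density matrix \(\rho\) and every \(k\ge1\).

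\textbf{Step 1: support.} Write \(A\coloneqq A_{U_d}=U_d^\dagger B U_d\). By the defining property of the deterministic light cone, Eq.~\eqref{eq:light-cone-support}, \(\operatorname{supp}(A)\subseteq\mathsf{LC}_d(\operatorname{supp}B)\). The definition of \(d_{\rm lc}\) is a maximum, so a priori it only guarantees disjointness at depth \(d_{\rm lc}\) itself. We therefore need that \(d\mapsto\mathsf{LC}_d(S)\) is nondecreasing. This follows from the definition: if \(O\) is supported on \(S\), then \(U_{d}^\dagger O U_{d}\) is obtained from \(U_{d+1}^\dagger O U_{d+1}\) by choosing the gates of layer \(L_{d+1}\) to be the identity. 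We assume the architecture allows this, as it does for Haar or any full local gate set; otherwise one reads the definition of \(d_{\rm lc}\) as ``for all \(d'\le d\)''. With monotonicity, \(\mathsf{LC}_d(\operatorname{supp}B)\subseteq\mathsf{LC}_{d_{\rm lc}}(\operatorname{supp}B)\), which is disjoint from \(\operatorname{supp}M\). Hence \(A\) and \(M\) act on disjoint tensor factors.

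\textbf{Step 2: commutation.} Operators with disjoint supports commute, so \([A,M]=0\). Both are Hermitian unitaries, since \(B\) and \(M\) are Pauli, so \(A^2=M^2=I\). Therefore \(C^2=AMAM=A^2M^2=I\), and so \(C^{2k}=(C^2)^k=I\).

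\textbf{Step 3: conclusion.} Taking the trace against \(\rho\) gives \(\mathrm{OTOC}^{(k)}_\rho(U_d)=\Tr\rho=1\). This uses no randomness, and the argument holds pointwise for every circuit instance.

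The only point that is not purely mechanical is the monotonicity of the light cone in Step 1; the remaining steps are immediate algebra.
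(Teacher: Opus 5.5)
Your proof is correct and follows the same route as the paper: disjoint supports give \([A,M]=0\), so \((AM)^2=A^2M^2=I\), and tracing against \(\rho\) gives \(1\). Your Step 1 argument that \(d\mapsto\mathsf{LC}_d(S)\) is nondecreasing (set the added layer to the identity) fills in a point the paper uses without comment when it passes from the definition of \(d_{\rm lc}\) as a maximum to disjointness at every \(d\le d_{\rm lc}\).
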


\begin{proof}
Let \(A_d\coloneqq U_d^\dagger B U_d\). For \(d\le d_{\rm lc}\),
\(\operatorname{supp}(A_d)\cap\operatorname{supp}(M)=\varnothing\), hence
\([A_d,M]=0\). Since \(A_d^2=M^2=I\), we have
\((A_dM)^2=I\), and therefore
\(\mathrm{OTOC}^{(k)}_\rho(U_d)
=\Tr[\rho(A_dM)^{2k}]
=\Tr\rho=1\).
\end{proof}

Thus, before the deterministic Heisenberg light cone of \(B\) reaches
\(M\), the higher-order OTOC is exactly equal to \(1\) for every circuit
instance. Circuit-to-circuit fluctuations can therefore arise only after
this pre-light-cone regime.

We now record two simple locality properties that clarify the spacetime region on
which an OTOC can depend. Let \(G=(V,E)\) be the interaction graph and
\(U=U_T\cdots U_1\) a depth-\(T\) local circuit. For Pauli observables \(B,M\),
define
\begin{align}
    F^{(k)}_{\mathrm{tr},T}(U)
    &\coloneqq
    2^{-|V|}\Tr[(U^\dagger BUM)^{2k}],
    &
    F^{(k)}_{0,T}(U)
    &\coloneqq
    \langle 0^V|(U^\dagger BUM)^{2k}|0^V\rangle .
\end{align}

Let \(\mathsf{LC}^{-}_T(B)\) denote the backward light cone of \(B\), namely
the spacetime gates from which there is a directed path to
\(\operatorname{supp}(B)\) at the final time, and let
\(\mathsf{LC}^{+}_T(M)\) denote the forward light cone of \(M\). Their
intersection
\begin{align}
    \mathsf D_T(B,M)
    &\coloneqq
    \mathsf{LC}^{-}_T(B)\cap\mathsf{LC}^{+}_T(M)
\end{align}
is the causal diamond between \(M\) and \(B\).
The distinction between the two relevant regions is illustrated in
Fig.~\ref{fig:otoc-causal-reductions}.

\begin{figure}[t]
    \centering
    \includegraphics[width=\textwidth]{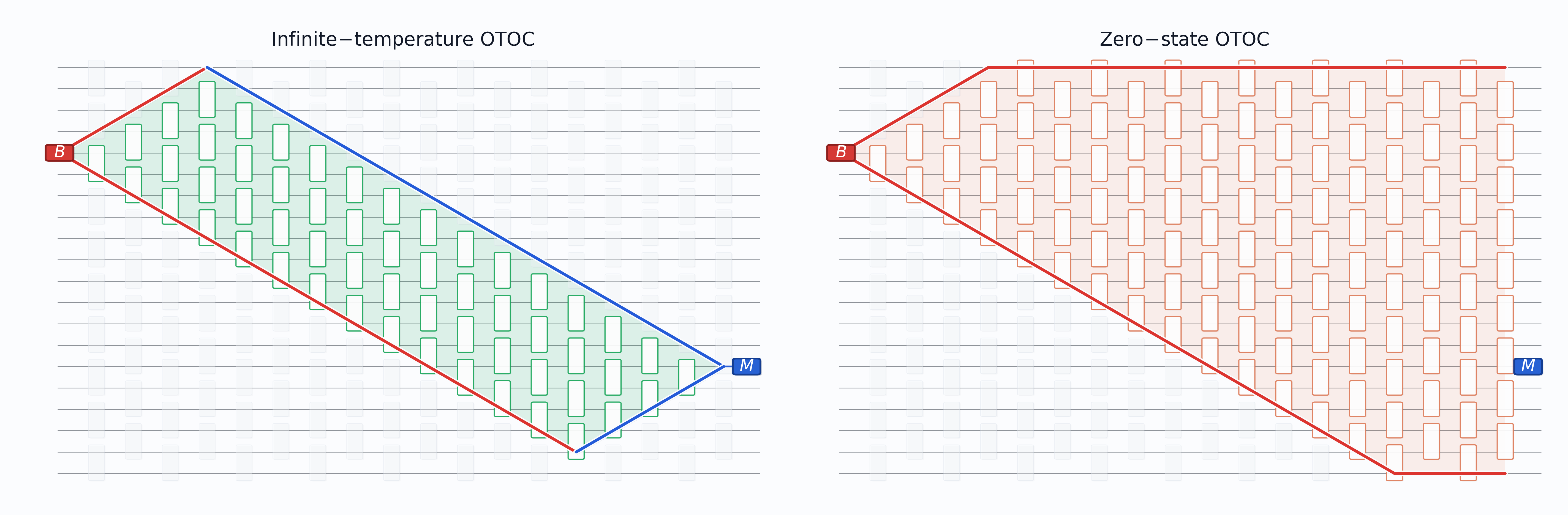}
    \caption{
    Causal light cones for OTOCs in one-dimensional brickwork circuits.
    For the infinite-temperature trace OTOC (left), only gates in the causal
    diamond \(\mathsf{LC}^{-}_T(B)\cap\mathsf{LC}^{+}_T(M)\) can affect the
    observable. For a fixed-state OTOC (right), cyclic trace cancellation is
    unavailable, and the exact deterministic reduction retains the full
    backward light cone \(\mathsf{LC}^{-}_T(B)\) of the butterfly operator.
    }
    \label{fig:otoc-causal-reductions}
\end{figure}

\begin{lemma}[Causal reductions for OTOCs]
\label{lem:otoc-causal-reductions}
The infinite-temperature OTOC
\(F^{(k)}_{\mathrm{tr},T}(U)
\coloneqq 2^{-|V|}\Tr[(U^\dagger BUM)^{2k}]\)
depends only on the gates in the causal diamond
\(\mathsf D_T(B,M)
=\mathsf{LC}^{-}_T(B)\cap\mathsf{LC}^{+}_T(M)\).
More generally, for any fixed input state \(\rho\),
\(\mathrm{OTOC}^{(k)}_\rho(U)\) depends only on the gates in the backward
light cone \(\mathsf{LC}^{-}_T(B)\).
\end{lemma}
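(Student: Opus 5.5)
The plan is to prove both statements by explicit cancellation in the Heisenberg picture. Write $U=U_T\cdots U_1$ and $A_U=U^\dagger BU$. Here $U_1$ acts first on the state and $U_T$ is adjacent to $B$.

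\emph{Step 1: reduction to the backward light cone of $B$.} I will show that $A_U$ itself depends only on gates in $\mathsf{LC}^{-}_T(B)$. Proceed by induction, conjugating layer by layer starting from $U_T$. Let $O_t\coloneqq U_{t+1}^\dagger\cdots U_T^\dagger\, B\, U_T\cdots U_{t+1}$, supported on a set $S_t$. Conjugating by layer $U_t=\bigotimes_S g_{t,S}$ factorizes over patches. Any patch $S$ with $S\cap S_t=\varnothing$ acts as $g_{t,S}^\dagger O_t g_{t,S}=O_t$, since it commutes with $O_t$ and is unitary. So only patches meeting $S_t$ enter. By definition these are exactly the gates with a directed path to $\operatorname{supp}B$, that is, the gates of $\mathsf{LC}^-_T(B)$, and $S_{t-1}$ is contained in the union of these patches. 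Consequently $A_U$ is a function of the gates in $\mathsf{LC}^-_T(B)$ alone. Hence so is $C_U^{2k}=(A_UM)^{2k}$, and so is $\Tr[\rho\,C_U^{2k}]$ for any fixed $\rho$.

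\emph{Step 2: trace reduction to the diamond.} Split $U=W V$ according to the forward light cone of $M$. Let $V$ collect the gates in $\mathsf{LC}^+_T(M)$ and $W$ the rest. This is well defined because the complement of a forward light cone is "past-closed" with respect to it: a gate outside $\mathsf{LC}^+_T(M)$ has no causal predecessor inside it. More precisely, reorder the commuting gates so that $U=V'\,R$, where $R$ is the product of gates not in $\mathsf{LC}^+_T(M)$, taken in causal order. Gates in $R$ never act on a wire after a gate of $\mathsf{LC}^+_T(M)$ has touched it, so they can be moved to act first.

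Also, $R$ acts trivially on $\operatorname{supp}M$, since any gate touching $\operatorname{supp}M$ lies in $\mathsf{LC}^+_T(M)$. So $R$ commutes with $M$, and we can use cyclicity:
\begin{align}
\Tr[(R^\dagger V'^\dagger B V' R M)^{2k}]=\Tr[R^\dagger (V'^\dagger BV'M)^{2k}R]=\Tr[(V'^\dagger BV'M)^{2k}].
\end{align}
The trace therefore depends only on gates in $\mathsf{LC}^+_T(M)$. Now apply Step 1 to $V'$ in place of $U$: conjugating $B$ by $V'$ only involves gates of $V'$ lying in $\mathsf{LC}^-_T(B)$. The remaining gates of $V'$ drop out exactly as before. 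The result is that $F^{(k)}_{\mathrm{tr},T}$ depends only on $\mathsf D_T(B,M)$.

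\emph{Main obstacle.} The delicate point is the reordering in Step 2. I need to check that the set of gates outside $\mathsf{LC}^+_T(M)$ can be consistently commuted to the beginning of the circuit, without violating the order of gates sharing a wire. I would argue this via the spacetime DAG. The forward light cone is upward-closed in the DAG, so its complement is downward-closed, and any linear extension placing the downward-closed set first reproduces $U$, because gates on disjoint wires commute.

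Step 1's cancellation, $g^\dagger O g=O$ for disjoint supports, also uses the same DAG characterization of the backward light cone. For general $\rho$ the cyclicity trick fails because $R$ need not commute with $\rho$. This is why only the $\mathsf{LC}^-_T(B)$ statement holds there.
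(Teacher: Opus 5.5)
Your argument is correct, but it is organized globally where the paper's is local. The paper fixes a single gate $W$, writes $U=U_{>}WU_{<}$, and uses cyclicity to put the trace OTOC in the form $2^{-|V|}\Tr[(W^\dagger B_WWM_W)^{2k}]$, with $B_W=U_{>}^\dagger BU_{>}$ and $M_W=U_{<}MU_{<}^\dagger$. Then $W\notin\mathsf{LC}^{+}_T(M)$ gives $[W,M_W]=0$, and $W\notin\mathsf{LC}^{-}_T(B)$ gives $[W,B_W]=0$. In either case $W$ drops out; the fixed-$\rho$ statement uses only the second case. Since the light cones are fixed by the architecture, independence from each gate separately immediately gives joint independence.

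In the paper's formulation, conjugating $M$ by the earlier gates does the job of your commutation $[R,M]=0$. So the reordering $U=V'R$ that you flagged as the main obstacle never has to be proved. Your route, in exchange, produces the reduced circuit explicitly in one step: $V'$ is the original layered circuit with the gates outside $\mathsf{LC}^{+}_T(M)$ set to the identity. Your downward-closedness argument for that reordering is sound.

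Step~1 has two harmless imprecisions. If $S_t$ is the realization-dependent support, the patches meeting it are contained in, not exactly equal to, the layer-$t$ gates of $\mathsf{LC}^{-}_T(B)$. Also, $S_{t-1}$ lies in $S_t$ together with those patches, rather than in the patches alone.

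When you reuse Step~1 for $V'$, the cleanest justification is that setting gates to $I$ is an allowed gate choice. The worst-case cone $\mathsf{LC}^{-}_T(B)$ of the original architecture therefore still applies.
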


\begin{proof}
Fix a gate \(W\) and write the circuit as \(U=U_{>}WU_{<}\), where
\(U_{<}\) and \(U_{>}\) contain the gates before and after \(W\),
respectively. Set \(B_W\coloneqq U_{>}^\dagger B U_{>}\) and
\(M_W\coloneqq U_{<} M U_{<}^\dagger\).

For the trace OTOC, unitary invariance of the trace gives
\(F^{(k)}_{\mathrm{tr},T}(U)
=2^{-|V|}\Tr[(W^\dagger B_W W M_W)^{2k}]\).
If \(W\notin\mathsf{LC}^{+}_T(M)\), then \([W,M_W]=0\), so
\((W^\dagger B_W W M_W)^{2k}
=W^\dagger(B_WM_W)^{2k}W\), and the trace is independent of \(W\).
If instead \(W\notin\mathsf{LC}^{-}_T(B)\), then
\([W,B_W]=0\), so \(W^\dagger B_WW=B_W\). Hence a gate can affect the
trace OTOC only if it belongs to both light cones, i.e. to
\(\mathsf D_T(B,M)\).

For a fixed input state \(\rho\), all circuit dependence enters through
\(U^\dagger B U\). If \(W\notin\mathsf{LC}^{-}_T(B)\), then
\([W,B_W]=0\), and therefore
\(U^\dagger B U
=U_{<}^\dagger W^\dagger B_WWU_{<}
=U_{<}^\dagger B_WU_{<}\),
which is independent of \(W\). Thus
\(\mathrm{OTOC}^{(k)}_\rho(U)\) depends only on the backward light cone of
\(B\).
\end{proof}

\subsection{\texorpdfstring{\(\mathrm{OTOC}^{(k)}\)}{OTOC(k)} properties in circuit ensembles}

We say that a unitary ensemble is right-Pauli invariant if \(U\) and \(US\)
have the same distribution for every \(S\in\mathcal P_n\).

\begin{lemma}[State-independence under right-Pauli invariance]
\label{lem:state-independence-higher-otoc}
Let the circuit ensemble be right-Pauli invariant and let
\(B,M\in\mathcal P_n\). Then, for every \(k\ge1\), there exists
\(\alpha_k\in\mathbb C\) such that
\begin{align}
    \E_U (U^\dagger BUM)^{2k}
    &=
    \alpha_k I .
\end{align}
Consequently,
\(\E_U\,\mathrm{OTOC}^{(k)}_\rho(U)=\alpha_k\)
for every density matrix \(\rho\); in particular, the ensemble-averaged OTOC
is independent of the input state.
\end{lemma}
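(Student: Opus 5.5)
The plan is to show that the averaged operator \(X\coloneqq\E_U (U^\dagger BUM)^{2k}\) commutes with every Pauli string, and then use that the Pauli group acts irreducibly on \(\mathcal H_n\).

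First, fix \(S\in\mathcal P_n\) and use right-Pauli invariance to replace \(U\) by \(US\) inside the expectation:
\begin{align}
    X
    &=
    \E_U\,(S^\dagger U^\dagger B U S\,M)^{2k}.
\end{align}
Since \(S\) is a Pauli string, \(SMS^\dagger=\pm M\), so \(M=\pm S^\dagger M S\), with sign \(\epsilon_S=\pm1\) according to whether \(S\) and \(M\) commute or anticommute. Then
\begin{align}
    S^\dagger U^\dagger BUS\,M
    &=
    \epsilon_S\,S^\dagger (U^\dagger BUM) S ,
\end{align}
and raising to the even power \(2k\) removes the sign \(\epsilon_S^{2k}=1\). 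Hence \(X=S^\dagger X S\), that is, \([X,S]=0\) for every \(S\in\mathcal P_n\).

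Second, the Pauli strings span all of \(\mathcal B(\mathcal H_n)\), which gives two equivalent ways to finish. One can expand \(X=\sum_P x_P P\); conjugating by \(S\) multiplies \(P\) by \(\pm1\), with sign \(-1\) for some \(S\) whenever \(P\neq I\), so \(x_P=0\) for every \(P\neq I\). Alternatively, the commutant of the Pauli group is trivial by Schur's lemma. Either way \(X=\alpha_k I\) with \(\alpha_k=2^{-n}\Tr X\in\mathbb C\).

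Finally, by linearity of trace and expectation, \(\E_U\mathrm{OTOC}^{(k)}_\rho(U)=\Tr[\rho X]=\alpha_k\Tr\rho=\alpha_k\), which is the same for every density matrix \(\rho\). Exchanging the expectation with the trace is harmless here, because the integrand is bounded.

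I expect no real obstacle. The only point requiring care is the sign bookkeeping: the sign from \(S\) anticommuting with \(M\) appears once in each of the \(2k\) factors, and it cancels only because the power is even.
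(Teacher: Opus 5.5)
Your proposal is correct and follows essentially the same route as the paper: right-Pauli invariance and the even power \(2k\) give \(X=S^\dagger X S\) for every Pauli string \(S\), so \(X=\alpha_k I\) and \(\E_U\mathrm{OTOC}^{(k)}_\rho(U)=\Tr[\rho X]=\alpha_k\). Your Pauli-expansion argument for the trivial commutant spells out a step that the paper only asserts.
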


\begin{proof}
Set \(A_U\coloneqq U^\dagger B U\) and
\(X\coloneqq\E_U(A_UM)^{2k}\). For any \(S\in\mathcal P_n\),
right-Pauli invariance gives
\(X=\E_U(S^\dagger A_USM)^{2k}\).
Since \(S\) and \(M\) either commute or anticommute and the power \(2k\)
is even,
\((S^\dagger A_USM)^{2k}=S^\dagger(A_UM)^{2k}S\).
Hence \(X=S^\dagger XS\) for every Pauli string \(S\).
The only operators commuting with all Pauli strings are scalar multiples
of the identity, so \(X=\alpha_k I\) for some \(\alpha_k\in\mathbb C\).
Therefore
\(\E_U\,\mathrm{OTOC}^{(k)}_\rho(U)
=\Tr(\rho X)=\alpha_k\)
for every density matrix \(\rho\).
\end{proof}

A sufficient condition for right-Pauli invariance is that the input-side random block of the circuit can
absorb arbitrary Pauli strings.  For example, this holds if that block contains independent local
random gates on patches covering all qubits, and each local gate distribution is invariant under right
multiplication by Pauli operators supported on the corresponding patch.  Haar-random local gates
satisfy this condition.  It also holds for ensembles in which fixed
entangling gates are interleaved with independent Haar-random single-qubit layers covering all
qubits.  By contrast, the lemma should not be applied unchanged to symmetry-preserving ensembles,
such as \(U(1)\)-symmetric circuits, because right multiplication by a generic Pauli string does not
preserve the allowed symmetry sector.

We now record that the global-Haar OTOC mean is exponentially
small.
\begin{proposition}[Haar mean of fixed-order OTOCs]
\label{prop:haar-otock-small}
Let \(D=2^n\), let \(U\sim\mathrm{Haar}(U(D))\), and let \(B,M\) be
nonidentity Hermitian Pauli strings. For every fixed \(k\ge1\), there exists
a constant \(C_k<\infty\), depending only on \(k\), such that, uniformly over
all density matrices \(\rho\),
\[
    \left|
    \mathbb E_U\,\mathrm{OTOC}^{(k)}_\rho(U)
    \right|
    \le C_kD^{-2}
    =C_k2^{-2n}.
\]
Moreover, the Haar mean is independent of \(\rho\). Consequently,
\(\left|1-\mathbb E_U\mathrm{OTOC}^{(k)}_\rho(U)\right|\ge1/2\)
for all sufficiently large \(n\).
\end{proposition}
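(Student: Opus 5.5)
The plan is to evaluate the Haar mean through Weingarten calculus after two reductions. First, state independence: Haar measure is right-Pauli invariant, so Lemma~\ref{lem:state-independence-higher-otoc} gives \(\E_U(U^\dagger BUM)^{2k}=\alpha_k I\). Hence
\begin{align}
    \E_U\,\mathrm{OTOC}^{(k)}_\rho(U)=\alpha_k=D^{-1}\E_U\Tr[(U^\dagger BUM)^{2k}]
\end{align}
for every \(\rho\). This reduces the claim to bounding the infinite-temperature trace.

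Second, replace \(A_U=U^\dagger BU\) by a better-behaved random operator. Since \(B\) is a nonidentity Pauli string, it is traceless with eigenvalues \(\pm1\), each of multiplicity \(D/2\). So \(A_U\) is Haar-distributed on the unitary orbit of \(B\), i.e.\ \(A_U=2P-I\) with \(P\) a uniformly random rank-\(D/2\) projector. The quantity \(D^{-1}\Tr[(A_UM)^{2k}]\) is then a degree-\(2k\) moment polynomial. I would expand it with the Weingarten formula for the \(2k\)-th moment,
\begin{align}
    \E_U\,U^{\otimes 2k}B^{\otimes 2k}U^{\dagger\otimes 2k}=\sum_{\sigma,\tau\in S_{2k}}\mathrm{Wg}(\sigma\tau^{-1},D)\,\Tr[B^{\otimes 2k}\tau^{-1}]\,\sigma .
\end{align}
Contracting this with \(M^{\otimes 2k}\) and the cyclic shift that produces \(\Tr[(A M)^{2k}]\) gives
\begin{align}
    \sum_{\sigma,\tau}\mathrm{Wg}(\sigma\tau^{-1},D)\,\prod_{c\in\tau}\Tr[B^{|c|}]\prod_{c\in\gamma\sigma}\Tr[M^{|c|}],
\end{align}
where \(\gamma\) is the long cycle.

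Third, do the power counting. Because \(B^2=M^2=I\) and \(\Tr B=\Tr M=0\), each trace factor is either \(0\) (odd cycle) or \(D\) (even cycle). A nonzero term therefore requires every cycle of \(\tau\) and of \(\gamma\sigma\) to have even length, so \(\#\tau\le k\) and \(\#(\gamma\sigma)\le k\). Using \(\mathrm{Wg}(\pi,D)=O_k(D^{-2k-|\pi|})\), with \(|\pi|\) the transposition length, each term is
\begin{align}
    O_k\!\left(D^{\#\tau+\#(\gamma\sigma)-2k-|\sigma\tau^{-1}|}\right).
\end{align}
Writing \(\#\pi=2k-|\pi|\), the exponent becomes \(2k-|\tau|-|\gamma\sigma|-|\sigma\tau^{-1}|\). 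The triangle inequality gives \(|\tau|+|\sigma\tau^{-1}|+|\sigma^{-1}\gamma^{-1}|\ge|\gamma|=2k-1\), so each term is at most \(O_k(D^{1})\). Once the prefactor \(D^{-1}\) is included, that bound is only \(O(1)\), and the parity constraint has to improve it.

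The main obstacle is exactly this refinement: showing that evenness of all cycles lowers the exponent by \(3\), from \(D^{1}\) to \(D^{-2}\) for the unnormalized trace. I expect the extremal (geodesic, non-crossing) configurations to be the ones forced to contain odd cycles. Concretely, I would show that \(|\tau|\ge k\) and \(|\gamma\sigma|\ge k\) for even-cycle permutations. Then the total exponent is at most \(2k-2k-|\sigma\tau^{-1}|\), which is at most \(0\), and at most \(-1\) unless \(\sigma=\tau\). Getting the sharper \(-3\) requires combining this with a parity argument: \(|\tau|+|\sigma\tau^{-1}|+|\gamma\sigma|\) has the parity of \(|\gamma|=2k-1\), which is odd. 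When \(\sigma=\tau\), both \(\tau\) and \(\gamma\tau\) must consist of even cycles, and the sign parity of \(\gamma\) shows these cannot all hold with \(|\tau|=|\gamma\tau|=k\).

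If this combinatorics becomes messy, I would fall back on the known exact Haar OTOC results of \cite{RobertsYoshida2017}, where the \(k=1\) value is \(-1/(D^2-1)\). For general \(k\) I would use the free-probability picture: \(A\) and \(M\) are asymptotically free symmetric \(\pm1\) variables, so all moments of \(AM\) vanish at leading order, and the \(1/D^2\) genus expansion gives the \(C_kD^{-2}\) correction. Finally, \(|1-\E\,\mathrm{OTOC}|\ge 1-C_k2^{-2n}\ge1/2\) for large \(n\).
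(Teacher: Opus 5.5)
Your argument is essentially the paper's. You expand via Weingarten calculus and use $B^2=M^2=I$ together with tracelessness to force both permutations to have only even cycles, hence at most $k$ cycles each. You then use the sign of the odd $2k$-cycle $\gamma$ to exclude the single extremal configuration. The paper splits on $\#\sigma+\#\eta=2k$ versus $\le 2k-1$ and needs only $\mathrm{Wg}_D(\pi)=O(D^{-2k})$ in general and $O(D^{-2k-1})$ for $\pi\neq e$; this is the contrapositive of your split on $\sigma=\tau$.

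There is one bookkeeping slip. The unnormalized trace must be shown to be $O(D^{-1})$, and it cannot be $O(D^{-2})$, since for $k=1$ it equals $-D/(D^2-1)$. That $O(D^{-1})$ bound is exactly what your concrete steps already give: every nonzero term has unnormalized exponent at most $-1$. So there is no ``$-3$'' to reach, and the free-probability fallback is unnecessary.
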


\begin{proof}
Set \(r=2k\) and \(A_U=U^\dagger B U\). By
Lemma~\ref{lem:state-independence-higher-otoc},
\(\mathbb E_U(A_UM)^r=\alpha_k I\) for some scalar \(\alpha_k\). Hence
\(\mathbb E_U\mathrm{OTOC}^{(k)}_\rho(U)=\alpha_k\), independently of
\(\rho\), and
\(\alpha_k=D^{-1}\mathbb E_U\operatorname{Tr}[(A_UM)^r]\).
It therefore remains to show that \(|\alpha_k|=O_k(D^{-2})\).

Let \(P_\pi\) denote the permutation operator associated with
\(\pi\in S_r\), and let \(\gamma=(1\,2\,\ldots\,r)\) be chosen so that
\(\operatorname{Tr}(X_1\cdots X_r)
=\operatorname{Tr}[(X_1\otimes\cdots\otimes X_r)P_\gamma]\). Thus
\(\operatorname{Tr}[(A_UM)^r]
=\operatorname{Tr}[A_U^{\otimes r}M^{\otimes r}P_\gamma]\).

We now average \(A_U^{\otimes r}=(U^\dagger B U)^{\otimes r}\).
The standard unitary Haar-integration formula
\cite[Cor.~2.4, Eq.~(11)]{collins2006integration} gives
\(\mathbb E_U A_U^{\otimes r}
=\sum_{\sigma,\tau\in S_r}
\operatorname{Wg}_D(\tau^{-1}\sigma)
\operatorname{Tr}[B^{\otimes r}P_\sigma]P_\tau\),
where \(\operatorname{Wg}_D\) is the unitary Weingarten function.
Substituting this into the expression for \(\alpha_k\) gives
\[
    \alpha_k
    =
    \frac1D
    \sum_{\sigma,\tau\in S_r}
    \operatorname{Wg}_D(\tau^{-1}\sigma)\,
    \operatorname{Tr}[B^{\otimes r}P_\sigma]\,
    \operatorname{Tr}[P_\tau M^{\otimes r}P_\gamma].
\]

The two Pauli traces are simple. For any \(\sigma\in S_r\),
\(\operatorname{Tr}[B^{\otimes r}P_\sigma]
=\prod_{c\in\mathrm{cyc}(\sigma)}\operatorname{Tr}(B^{|c|})\).
Since \(B^2=I\) and \(\operatorname{Tr}B=0\), one has
\(\operatorname{Tr}(B^s)=0\) for odd \(s\) and
\(\operatorname{Tr}(B^s)=D\) for even \(s\). Hence the trace vanishes
unless every cycle of \(\sigma\) has even length, in which case it equals
\(D^{\#\sigma}\), where \(\#\sigma\) denotes the number of cycles.
Exactly the same statement holds for \(M\).

Let \(\mathcal E_r\subset S_r\) be the set of permutations with only
even-length cycles. Since
\(\operatorname{Tr}[P_\tau M^{\otimes r}P_\gamma]
=\operatorname{Tr}[M^{\otimes r}P_{\gamma\tau}]\), only
\(\sigma\in\mathcal E_r\) and \(\gamma\tau\in\mathcal E_r\) contribute.
Relabelling \(\eta=\gamma\tau\), so that
\(\tau^{-1}\sigma=\eta^{-1}\gamma\sigma\), we obtain
\begin{equation}
    \alpha_k
    =
    \frac1D
    \sum_{\sigma,\eta\in\mathcal E_r}
    D^{\#\sigma+\#\eta}\,
    \operatorname{Wg}_D(\eta^{-1}\gamma\sigma).
    \label{eq:haar-otock-weingarten}
\end{equation}

It remains to bound the terms in this sum. For fixed \(r\),
Ref.~\cite[Prop.~2.6]{collins2006integration} gives
\(\operatorname{Wg}_D(\pi)=O_r(D^{-r-|\pi|})\), where
\(|\pi|=r-\#\pi\) is the transposition length of \(\pi\). In particular,
\(\operatorname{Wg}_D(\pi)=O_r(D^{-r})\) for every \(\pi\), and
\(\operatorname{Wg}_D(\pi)=O_r(D^{-r-1})\) whenever \(\pi\neq e\).

Since every cycle of \(\sigma,\eta\in\mathcal E_r\) has length at least two,
\(\#\sigma,\#\eta\le r/2\). If
\(\#\sigma+\#\eta\le r-1\), the corresponding contribution in
Eq.~\eqref{eq:haar-otock-weingarten} is
\(O_k(D^{-1}D^{r-1}D^{-r})=O_k(D^{-2})\).

The only remaining case is
\(\#\sigma=\#\eta=r/2=k\). Then every cycle of \(\sigma\) and \(\eta\)
has length exactly two, so both are products of \(k\) disjoint
transpositions and have sign \((-1)^k\). Since \(r=2k\) is even,
the \(r\)-cycle \(\gamma\) is odd. Therefore
\(\operatorname{sgn}(\eta^{-1}\gamma\sigma)
=(-1)^k(-1)(-1)^k=-1\), so
\(\eta^{-1}\gamma\sigma\neq e\). The stronger Weingarten estimate then gives
\(\operatorname{Wg}_D(\eta^{-1}\gamma\sigma)=O_k(D^{-r-1})\), and the
corresponding contribution is again
\(O_k(D^{-1}D^rD^{-r-1})=O_k(D^{-2})\).

Thus every nonzero term in Eq.~\eqref{eq:haar-otock-weingarten} is
\(O_k(D^{-2})\). Since \(r=2k\) is fixed, there are only \(O_k(1)\) terms,
and therefore \(|\alpha_k|=O_k(D^{-2})\). The Weingarten estimate holds for
all sufficiently large \(D\) at fixed \(r\); the finitely many remaining
dimensions can be absorbed into \(C_k\), using \(|\alpha_k|\le1\).

Finally,
\(\left|1-\alpha_k\right|\ge1-|\alpha_k|
\ge1-C_kD^{-2}\), which is at least \(1/2\) for all sufficiently large \(n\).
\end{proof}

\section{Fluctuations of \texorpdfstring{\(\mathrm{OTOC}^{(k)}\)}{OTOC(k)}}
\label{app:fluct}
In this section we show that $\mathrm{OTOC}^{(k)}$ in local random circuits have
inverse-polynomial circuit-to-circuit fluctuations in the transition window.  We first explain the
general proof idea.  Before the light cone reaches the measured operator, the OTOC is pinned to
\(1\).  After the circuit has reached sufficiently high depth, its mean is nearly zero -- which can be certified for example using convergence to higher-order design.  Hence the averaged OTOC must change by a constant amount across this
transition window. Since this change is produced by local random gates, at least one
constant-depth block in the transition window must have a non-negligible effect on the OTOC, and
this forces circuit-to-circuit fluctuations.

We begin by proving this in the concrete case of \(\mathrm{OTOC}^{(2)}\) for
one-dimensional brickwork circuits with Haar random gates. The subsequent subsections abstract the two inputs used in
that proof---mean control from Haar/design convergence and local ensemble reverse-variance bound ---and combine them to obtain the more general $\mathrm{OTOC}^{(k)}$ theorem for more general circuit architectures and local gate ensembles, such as fixed-two-qubits gates interleaved with random single-qubit rotations which are more aligned with the Google experiment~\cite{abanin2025constructiveinterferenceedgequantum}.

\subsubsection{Warm-up: One-dimensional Haar brickwork \(\mathrm{OTOC}^{(2)}\)}

\begin{theorem}[One-dimensional Haar brickwork \(\mathrm{OTOC}^{(2)}\)]
\label{thm:one-dimensional-haar-brickwork-otoc2}
There exists a constant \(C\ge1\) such that, for every fixed integer \(R\ge0\), there exists
\(c_R>0\) for which the following holds.  Let \(n\ge\max\{4,R\}\) be even, and consider a
one-dimensional Haar brickwork circuit on \(n\) qubits.  Set \(B\coloneqq Z_1\) and
\(M\coloneqq Z_{1+n/2}\), and let \(\rho\) be any fixed \(n\)-qubit initial state.

Then there exists an interval of \(R+1\) consecutive depths
\begin{align}
    I
    &\subseteq
    \left\{
        \frac n2,\ldots,\lceil Cn\rceil
    \right\}
\end{align}
such that, for every \(d\in I\),
\begin{align}
    \Var_{U_d}
    \!\left(
        \mathrm{OTOC}^{(2)}_\rho(U_d)
    \right)
    &\ge
    \frac{c_R}{n^2}.
    \label{eq:otoc2-brickwork-variance-lower-bound}
\end{align}
In particular, throughout the whole interval \(I\), the circuit-to-circuit fluctuations are
inverse-polynomial:
\begin{align}
    \Var_{U_d}
    \!\left(
        \mathrm{OTOC}^{(2)}_\rho(U_d)
    \right)
    &=
    \Omega\!\left(
        \frac{1}{\operatorname{poly}(n)}
    \right)
    \qquad
    \text{for every } d\in I .
\end{align}
\end{theorem}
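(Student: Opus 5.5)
The argument follows the slope-to-variance principle from the main text. It has four steps: pin the mean at both ends of a depth window, locate a single gate whose average effect on the mean is large, convert that into conditional variance, and propagate it to $R+1$ consecutive depths. Write $\mu_d\coloneqq\E_{U_d}\mathrm{OTOC}^{(2)}_\rho(U_d)$. By Lemma~\ref{lem:state-independence-higher-otoc}, $\mu_d$ is independent of $\rho$, since Haar brickwork is right-Pauli invariant.

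\emph{Step 1: the window.} The distance from site $1$ to site $1+n/2$ is $n/2$, and the brickwork light cone moves one site per layer. Hence $d_{\rm lc}\ge n/2-1$, and Lemma~\ref{lem:pre-light-cone-identity} gives $\mu_{d}=1$ at the left end of the window, $d=\lfloor n/2\rfloor$ or nearby.

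At the right end I would invoke design convergence. One-dimensional Haar brickwork circuits form $\varepsilon$-approximate $4$-designs, in relative or additive moment-operator error, at depth $O(n+\log(1/\varepsilon))$; see Refs.~\cite{brandao2016local,haferkamp2022random}. Since $\mathrm{OTOC}^{(2)}$ is a degree-$(4,4)$ polynomial, taking $\varepsilon=2^{-\Theta(n)}$ makes $|\mu_d-\E_{\rm Haar}|$ small at depth $\lceil Cn\rceil$. I would bound this difference by the diamond-norm distance of the $4$-fold twirl times $\|B^{\otimes 4}\|$-type norms; this is bounded because $\mathrm{OTOC}^{(2)}$ is a trace of a bounded operator against $\rho$ with the twirled $B^{\otimes 4}$. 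Proposition~\ref{prop:haar-otock-small} then gives $|\mu_{\lceil Cn\rceil}|\le 1/4$. So $\Delta\ge 1/2$ over width $W\le Cn$, and some layer index $d^*$ satisfies $|\mu_{d^*}-\mu_{d^*-1}|\ge 1/(2Cn)$.

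\emph{Step 2: localize to a gate.} In the Heisenberg ordering \eqref{eq:new-layer-adjacent}, the new layer $L_d$ acts adjacent to $B=Z_1$. Since $U_d^\dagger BU_d$ only sees $L_d^\dagger Z_1L_d$, only the single gate $g$ on qubits $\{1,2\}$ of layer $d$ matters; all other gates of that layer commute with $Z_1$. When the layer does not touch qubit $1$, $\mu_d=\mu_{d-1}$.

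Thus the increment compares $Z_1$ with $g^\dagger Z_1 g$. Averaging over Haar $g\in U(4)$ while fixing everything else, and using left-invariance of the remaining circuit, I expect $\mu_d$ to equal $\E_{g}F(g^\dagger Z_1g)$, where $F(P)$ is the depth-$(d-1)$ mean with butterfly $P$. Meanwhile $\mu_{d-1}=F(Z_1)$.

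\emph{Step 3: reverse variance.} This is the main obstacle. We need $\Var_g(X\mid \text{rest})\gtrsim|\E_g X-X|_{g=\mathrm{id}}|^2$ in expectation, and this is not automatic: the identity is not a typical Haar gate. I would instead use the polynomial structure. Conditional on the rest of the circuit, $X(g)$ is a degree-$(4,4)$ polynomial on $U(4)$. For such polynomials, a Markov/Bernstein-type inequality on the compact group bounds $\sup_g|X(g)-\E X|^2\le K\,\Var_g X$, with $K$ depending only on the degree and dimension; this follows from equivalence of norms on the finite-dimensional space of polynomials of bounded degree modulo constants.

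Next, I need $X(\mathrm{id})$ averaged over the rest to equal $\mu_{d-1}$. With $g=\mathrm{id}$ the circuit is distributed as depth $d-1$ up to relabeling layers. Here the alternating brickwork matters, so I would compare depth $d$ with depth $d-2$ (two layers, one of which contains the gate on $\{1,2\}$) and argue with a two-gate block; this is a constant-size block, so the same finite-dimensional norm argument applies. Jensen and the law of total variance then give $\Var\mu\text{-fluctuation}\ge K^{-1}(2Cn)^{-2}=\Omega(n^{-2})$ at depth $d^*$.

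\emph{Step 4: persistence.} Adding further layers on the input side of the Heisenberg picture (layers $L_1,\dots$) does not cancel the conditional variance created by the block. I would re-index the circuit so the block under consideration sits at a fixed offset of at most $R$ layers from the $B$ end. Concretely, I would apply Step 3 to the block located $j\le R$ layers from the $B$-end at depth $d^*+j$, using $\mu$-increments of the shifted profile. If needed, I would pigeonhole over windows of $R+1$ consecutive increments, which costs a factor $(R+1)^2$ and is absorbed into $c_R$. Checking that the window fits inside $\{n/2,\dots,\lceil Cn\rceil\}$ and that $n\ge R$ suffices would finish the argument.
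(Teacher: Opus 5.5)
Your Steps 1--3 follow the paper's proof. You pin $\mu_{n/2-1}=1$ by the light cone, then use linear-depth fourth-moment convergence with Proposition~\ref{prop:haar-otock-small} to get an order-one change of the mean. You then telescope to an increment of order $1/n$ and feed a one-gate reverse-variance bound into the law of total variance and Jensen. Your norm-equivalence argument is Lemma~\ref{lem:finite-dimensional-reverse-variance} in another form; the paper makes the constant explicit via Schur orthogonality. The detour through a two-layer comparison is unnecessary. With $U_d=L_d\cdots L_1$, setting the gate on $\{1,2\}$ of $L_d$ to the identity leaves exactly $U_{d-1}$, the standard depth-$(d-1)$ brickwork, so its mean is $\mu_{d-1}$.

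The genuine gap is Step 4. Write $F_d\coloneqq\mathrm{OTOC}^{(2)}_\rho(U_d)$. For a block $X$ of the depth-$(d^*+j)$ circuit, let $\bar\mu_X$ be the mean with $X$ replaced by the identity. Reverse variance then only gives $\Var(F_{d^*+j})\ge\eta_X|\mu_{d^*+j}-\bar\mu_X|^2$, and you never exhibit a block whose right-hand side is controlled by the increment at $d^*$.

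\begin{itemize}
\item For the original gate, now $j$ layers below the $B$-end, $\bar\mu_X$ is the mean of a punctured circuit about which nothing is known. Taking the whole light-cone slice of layer $d^*$ instead, Haar-merging the two adjacent even layers gives $\mu_{d^*+j-2}$, which is again an uncontrolled increment.
\item Pigeonholing over windows yields one window with a large \emph{sum} of increments. That gives variance at one depth, not at $R+1$ consecutive depths.
\item Large single increments at consecutive depths are impossible, because $F_{2m}=F_{2m-1}$: even layers miss qubit $1$.
\item The family grows at the $B$-end, $U_{d+1}=L_{d+1}U_d$. The claim that new layers do not cancel existing variance is exactly what must be proved, since the new layer's conditional mean could a priori undo $F_d-\E F_d$.
\end{itemize}

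The paper closes this with Lemma~\ref{lem:forward-persistence-one-layer}. Set $m\coloneqq\E[F_{d+1}\mid U_d]$. Total variance and reverse variance give $\Var(F_{d+1})\ge\E\bigl[\eta|m-F_d|^2+|m-\E F_{d+1}|^2\bigr]$. Completing the square bounds this below by $\frac{\eta}{1+\eta}\E|F_d-\E F_{d+1}|^2\ge\frac{\eta}{1+\eta}\Var(F_d)$, and iterating $R$ times gives $c_R=c\,\kappa^R$.

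Alternatively, your multi-layer-block idea can be repaired by anchoring. For $d\in\{d^*+1,\dots,d^*+R\}$, take as block the light-cone gates of the top $m$ layers. Removing this block leaves exactly $U_{d-m}$, so $\Var(F_d)\ge\eta_m|\mu_d-\mu_{d-m}|^2$. Here $\eta_m$ depends only on $R$, since the block has $O(R^2)$ Haar gates (cf.\ Corollary~\ref{cor:explicit-product-haar-local-blocks}). Since $|\mu_{d^*}-\mu_{d^*-1}|\le|\mu_d-\mu_{d^*}|+|\mu_d-\mu_{d^*-1}|$, one of $m=d-d^*$ or $m=d-d^*+1$ retains half of the anchored increment.
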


\begin{proof}
We prove a slightly stronger quantitative statement.  There are constants \(C,c>0\) and
\(0<\kappa<1\), independent of \(R,n\), such that for every \(0\le R\le n\) there is a depth
\(d_\star\) for which
\begin{align}
    I_R
    &\coloneqq 
    \{d_\star,d_\star+1,\ldots,d_\star+R\}
    \subseteq
    \left\{
        \frac n2,\ldots,\lceil Cn\rceil
    \right\},
    \label{eq:otoc2-proof-interval-definition}
\end{align}
and, for every \(r=0,\ldots,R\),
\begin{align}
    \Var_{U_{d_\star+r}}
    \!\left(
        \mathrm{OTOC}^{(2)}_\rho(U_{d_\star+r})
    \right)
    &\ge
    \frac{c\,\kappa^r}{n^2}.
    \label{eq:otoc2-proof-pointwise-quantitative-bound}
\end{align}

Write
\begin{align}
    F_d
    &\coloneqq 
    \mathrm{OTOC}^{(2)}_\rho(U_d),
    &
    \mu_d
    &\coloneqq 
    \E F_d,
    &
    d_0
    &\coloneqq 
    \frac n2-1 .
\end{align}
For every \(d\le d_0\), the Heisenberg light cone of \(B=Z_1\) has not reached the support of
\(M=Z_{1+n/2}\).  Hence Lemma~\ref{lem:pre-light-cone-identity} gives
\begin{align}
    F_d
    &=
    1
    \qquad
    \text{for every circuit instance and every } d\le d_0,
\end{align}
and therefore
\begin{align}
    \mu_{d_0}
    &=
    1 .
    \label{eq:otoc2-proof-pre-light-cone-mean}
\end{align}

Choose \(C_0\ge1\) so that, with \(T\coloneqq \lceil C_0n\rceil\), linear-depth fourth-moment convergence
for one-dimensional Haar brickwork circuits gives~\cite{brandao2016local,haferkamp2022random}
\begin{align}
    \left|
        \mu_T
        -
        \E_{U\sim{\rm Haar}}
        \mathrm{OTOC}^{(2)}_\rho(U)
    \right|
    &\le
    \frac14 .
    \label{eq:otoc2-proof-design-mean-control}
\end{align}
By the state-independence lemma for averaged OTOCs,
Proposition~\ref{prop:haar-otock-small}, the global-Haar mean is independent of
\(\rho\).  The exact fourth-moment computation, as done in
Proposition~\ref{prop:haar-otock-small}, gives
\begin{align}
    h_2
    &\coloneqq 
    \E_{U\sim{\rm Haar}}
    \mathrm{OTOC}^{(2)}_\rho(U)
    =
    -\frac{2D^2-9}{(D^2-1)(D^2-9)},
    &
    D
    &\coloneqq 
    2^n .
    \label{eq:otoc2-proof-exact-haar-value}
\end{align}
For \(n\ge4\), \(h_2\le0\).  Combining this with
Eqs.~\eqref{eq:otoc2-proof-pre-light-cone-mean} and
\eqref{eq:otoc2-proof-design-mean-control}, we get
\begin{align}
    |\mu_T-\mu_{d_0}|
    &=
    |\mu_T-1|
    \ge
    |h_2-1|-|\mu_T-h_2|
    \ge
    1-\frac14
    =
    \frac34
    \ge
    \frac12 .
    \label{eq:otoc2-proof-mean-change}
\end{align}

By telescoping, the triangle inequality, and Eq.~\eqref{eq:otoc2-proof-mean-change},
\begin{align}
    \frac12
    &\le
    |\mu_T-\mu_{d_0}|
    =
    \left|
        \sum_{d=d_0+1}^{T}
        (\mu_d-\mu_{d-1})
    \right|
    \le
    \sum_{d=d_0+1}^{T}
    |\mu_d-\mu_{d-1}|
    \notag\\
    &\le
    (T-d_0)
    \max_{d\in\{d_0+1,\ldots,T\}}
    |\mu_d-\mu_{d-1}| .
\end{align}
Hence there exists \(d_\star\in\{d_0+1,\ldots,T\}\) such that
\begin{align}
    |\mu_{d_\star}-\mu_{d_\star-1}|
    &\ge
    \frac{1}{2(T-d_0)} .
\end{align}

Since \(T-d_0=\lceil C_0n\rceil-n/2+1\le2C_0n\) for \(n\ge4\), this gives
\begin{align}
    |\mu_{d_\star}-\mu_{d_\star-1}|
    &\ge
    \frac{1}{4C_0n}.
    \label{eq:otoc2-proof-large-increment}
\end{align}

We now convert this mean increment into variance at depth \(d_\star\).  Condition on
\(U_{d_\star-1}=V\), and expose the layer \(L_{d_\star}\).  Let \(X\) denote the active part of this
layer, namely the local gate or gates that can affect \(L_{d_\star}^\dagger B L_{d_\star}\).  All
other gates in the layer commute through \(B\) and cancel, so they are suppressed from the notation.
In one-dimensional Haar brickwork, \(X\) is a constant-size local patch made by Haar random $2$-qubits gates.  If the active
patch were empty, then \(F_{d_\star}=F_{d_\star-1}\) after conditioning on \(V\), and hence
\(\mu_{d_\star}=\mu_{d_\star-1}\), contradicting Eq.~\eqref{eq:otoc2-proof-large-increment}.
Thus the active patch is nonempty.

For fixed \(V\), define
\begin{align}
    f_V(X)
    &\coloneqq 
    \mathrm{OTOC}^{(2)}_\rho(L_{d_\star}(X)V).
\end{align}
Replacing the active patch by the identity gives
\begin{align}
    f_V(I)
    &=
    F_{d_\star-1}(V),
\end{align}
while averaging over the active patch gives
\begin{align}
    \E_X f_V(X)
    &=
    \E[
        F_{d_\star}
        \mid
        U_{d_\star-1}=V
    ].
\end{align}
For Haar random gates (as well as for more general gates ensembles) the following reverse-variance bound holds, proved below in
Corollary~\ref{cor:explicit-product-haar-local-blocks}. Specifically, there exists an absolute constant
\(\eta_{\rm H}>0\), independent of \(n\), such that
\begin{align}
    \Var_X f_V(X)
    &\ge
    \eta_{\rm H}
    \left|
        \E_X f_V(X)-f_V(I)
    \right|^2 .
    \label{eq:otoc2-proof-local-rv}
\end{align}
Using the law of total variance, Eq.~\eqref{eq:otoc2-proof-local-rv}, and Jensen's inequality,
\begin{align}
    \Var(F_{d_\star})
    &=
    \E_V\Var_X f_V(X)
    +
    \Var_V\!\left(
        \E_X f_V(X)
    \right)
    \notag\\
    &\ge
    \eta_{\rm H}
    \E_V
    \left|
        \E_X f_V(X)-f_V(I)
    \right|^2
    \notag\\
    &\ge
    \eta_{\rm H}
    \left|
        \E_V\E_X f_V(X)-\E_V f_V(I)
    \right|^2
    \notag\\
    &=
    \eta_{\rm H}
    |\mu_{d_\star}-\mu_{d_\star-1}|^2 .
    \label{eq:otoc2-proof-slope-to-variance-at-dstar}
\end{align}
Together with Eq.~\eqref{eq:otoc2-proof-large-increment}, this yields
\begin{align}
    \Var(F_{d_\star})
    &\ge
    \frac{\eta_{\rm H}}{16C_0^2n^2}.
    \label{eq:otoc2-proof-first-variance-depth}
\end{align}

It remains to propagate this variance forward. The intuition is that adding one more local Haar layer
cannot erase an existing variance spike. Formally, by Lemma~\ref{lem:forward-persistence-one-layer}, there exists
\begin{align}
    \kappa
    &\coloneqq 
     \frac{\eta_{\rm H}}{1+\eta_{\rm H}}
    >
    0
\end{align}
such that \(\Var(F_{d+1})\ge\kappa\,\Var(F_d)\) for every depth \(d\).  Iterating from the
variance-producing depth \(d_\star\), we obtain, for every \(r=0,\ldots,R\),
\begin{align}
    \Var(F_{d_\star+r})
    &\ge
    \kappa^r\Var(F_{d_\star})
    \notag\\
    &\ge
    \frac{\eta_{\rm H}\,\kappa^r}{16C_0^2n^2}.
    \label{eq:otoc2-proof-persistence-pointwise}
\end{align}

Setting \(c\coloneqq \eta_{\rm H}/(16C_0^2)\) gives
\begin{align}
    \Var(F_{d_\star+r})
    &\ge
    \frac{c\,\kappa^r}{n^2}
    \qquad
    \text{for every } r=0,\ldots,R.
    \label{eq:otoc2-proof-pointwise-bound-r}
\end{align}

It remains only to check the location of the interval.  Since
\(d_\star\in\{d_0+1,\ldots,T\}\), we have \(d_\star\ge n/2\).  Also, using \(R\le n\),
\begin{align}
    d_\star+R
    &\le
    T+R
    \le
    \lceil C_0n\rceil+n
    \le
    \lceil Cn\rceil
\end{align}
for any constant \(C\ge C_0+2\).  Thus \(I_R\) satisfies
Eq.~\eqref{eq:otoc2-proof-interval-definition}.  Finally, since \(0<\kappa<1\),
Eq.~\eqref{eq:otoc2-proof-pointwise-bound-r} implies
\begin{align}
    \Var(F_{d_\star+r})
    &\ge
    \frac{c\,\kappa^R}{n^2}
    \qquad
    \text{for every } r=0,\ldots,R.
\end{align}
For fixed \(R\), this proves the theorem with \(I\coloneqq I_R\) and \(c_R\coloneqq c\,\kappa^R>0\).
\end{proof}

\subsection{Mean change from convergence to the Haar OTOC value}
\label{app:moment-control-design-depth}
The slope-to-variance argument uses approximate designs convergence: specifically, it uses that,
at some depth, the circuit ensemble must reproduce the Haar average of the OTOC observable under
consideration.  We formulate this condition directly.

\begin{definition}[OTOC-test moment control]
\label{def:otoc-test-moment-control}
Fix \(k\ge1\), a density matrix \(\rho\), and Pauli observables \(B,M\).  A unitary ensemble
\(\nu\) on \(\mathrm U(\mathcal H_n)\) satisfies \((k,\varepsilon)\)-OTOC-test moment control for
\((\rho,B,M)\) if
\begin{align}
    \left|
        \E_{U\sim\nu}
        \mathrm{OTOC}^{(k)}_\rho(U)
        -
        \E_{U\sim\mathrm{Haar}}
        \mathrm{OTOC}^{(k)}_\rho(U)
    \right|
    &\le
    \varepsilon,
    \label{eq:otoc-test-moment-control}
\end{align}
where
\begin{align}
    \mathrm{OTOC}^{(k)}_\rho(U)
    &\coloneqq 
    \Tr\!\left[
        \rho\,(U^\dagger BUM)^{2k}
    \right].
    \label{eq:otoc-test-observable}
\end{align}
For a depth-\(d\) circuit ensemble \(\nu_d\), we denote by
\(d_{\rm OTOC}^{(2k)}(\varepsilon;\rho,B,M)\) the smallest depth at which
\(\nu_d\) satisfies \((k,\varepsilon)\)-OTOC-test moment control for \((\rho,B,M)\).
\end{definition}

Definition~\ref{def:otoc-test-moment-control} is the exact mean-control property required in the
proof. We next recall approximations-notions of $k$-design that would imply this in particular and that we can import from the literature.  One of the cleanest approach is through the operational definition of approximate designs: if no experiment with forward and inverse access to the unitary can distinguish the ensemble from Haar, then in particular the OTOC experiment cannot distinguish them.

\subsubsection{Operational unitary designs with forward and inverse access}
\label{app:operational-design-notions}

As for any approximation notion, a particularly useful metric for unitary designs is dictated by its operational
meaning. For an approximate-design, we may want to ask whether the ensemble can be distinguished from Haar
randomness by experiments with query access to the sampled unitary.  
Since OTOC protocols use
time reversal, the relevant experiments are allowed to query both \(U\) and \(U^\dagger\).

A \((p,q)\)-query forward/inverse experiment is a quantum protocol that makes at most \(p\) calls
to \(U\) and at most \(q\) calls to \(U^\dagger\), with arbitrary auxilirary qubits, quantum memory,
intermediate quantum channels, and a final measurement.  If \(\mathcal A\) is such an experiment,
let \(\sigma_\nu^{\mathcal A}\) denote the final output state averaged over \(U\sim\nu\), and let
\(\sigma_{\rm Haar}^{\mathcal A}\) denote the corresponding output state averaged over Haar-random
\(U\).

\begin{definition}[Operational forward/inverse strong-design approximation]
\label{def:operational-forward-inverse-design}
A unitary ensemble \(\nu\) is an operational forward/inverse \((p,q)\)-design approximation with
error \(\varepsilon_{\rm op}\) if
\begin{align}
    \sup_{\mathcal A}
    \frac12
    \left\|
        \sigma_\nu^{\mathcal A}
        -
        \sigma_{\rm Haar}^{\mathcal A}
    \right\|_1
    &\le
    \varepsilon_{\rm op},
    \label{eq:operational-forward-inverse-error}
\end{align}
where the supremum is over all \((p,q)\)-query forward/inverse experiments.
\end{definition}

Thus \(\varepsilon_{\rm op}\) bounds the distinguishability of the circuit
ensemble from Haar by any allowed experiment. Closely related notions are
referred to as strong, measurable, or adaptive unitary designs in the
literature
~\cite{bittel2025adaptivehomeopathy,cui2025unitarydesigns,
schuster2025strongrandom,parelladilmé2026strongunitarydesignsoptimal}.
Some works consider the stronger mixed-query setting in which each oracle
call may be chosen from \(U\), \(U^\dagger\), \(\overline U\), or \(U^T\).
Any bound in this stronger model immediately implies the corresponding
forward/inverse bound
~\cite{schuster2025strongrandom,folkertsma2026artscrafts,
parelladilmé2026strongunitarydesignsoptimal}.

We now connect this notion to OTOCs. Set
\(A_U\coloneqq U^\dagger B U\) and
\(W_U\coloneqq(A_UM)^{2k}\), so that
\(\mathrm{OTOC}^{(k)}_\rho(U)=\Tr[\rho W_U]\).
A Hadamard test for \(W_U\) does not require controlled access to \(U\):
indeed,
\(\operatorname{ctrl}(A_U)
=(I_{\rm c}\otimes U^\dagger)\operatorname{ctrl}(B)
(I_{\rm c}\otimes U)\),
while \(\operatorname{ctrl}(M)\) is a controlled Pauli. Hence
\(\operatorname{ctrl}(W_U)\) can be implemented using \(2k\) calls to
\(U\), \(2k\) calls to \(U^\dagger\), and controlled Pauli gates.
Measuring \(X\) on the control qubit gives
\(\operatorname{Re}\Tr[\rho W_U]\), while measuring \(Y\) gives
\(\operatorname{Im}\Tr[\rho W_U]\).

\begin{lemma}[Operational designs control OTOC tests]
\label{lem:operational-design-implies-otoc}
Let \(B,M\) be Pauli observables and let \(\rho\) be a density matrix.
If \(\nu\) is an operational forward/inverse
\((2k,2k)\)-design approximation with error \(\varepsilon_{\rm op}\), then
\begin{align}
    \left|
        \E_{U\sim\nu}
        \mathrm{OTOC}^{(k)}_\rho(U)
        -
        \E_{U\sim{\rm Haar}}
        \mathrm{OTOC}^{(k)}_\rho(U)
    \right|
    &\le
    2\varepsilon_{\rm op}.
    \label{eq:operational-design-implies-otoc}
\end{align}
\end{lemma}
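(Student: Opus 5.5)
The plan is to turn the Hadamard-test construction described just before the lemma into two admissible \((2k,2k)\)-query forward/inverse experiments. Their output statistics encode \(\operatorname{Re}\) and \(\operatorname{Im}\) of \(\Tr[\rho W_U]\). Operational indistinguishability then bounds each part separately.

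\emph{Step 1: the real part.} Define \(\mathcal A_{\rm Re}\) as follows. Prepare the control qubit in \(|+\rangle\) and the system in \(\rho\); if \(\rho\) is mixed, it is prepared from a fixed purification on auxiliary qubits, which the model allows. Apply \(\operatorname{ctrl}(W_U)\), implemented as the alternating product of \(2k\) factors \(\operatorname{ctrl}(A_U)=(I\otimes U^\dagger)\operatorname{ctrl}(B)(I\otimes U)\) and \(2k\) controlled Paulis \(\operatorname{ctrl}(M)\). Finally measure the control in the \(X\) basis. This uses exactly \(2k\) calls to \(U\) and \(2k\) calls to \(U^\dagger\), so it is a \((2k,2k)\)-query experiment. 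Because the unitaries act linearly, the output state averaged over \(U\sim\nu\) is \(\sigma^{\mathcal A}_\nu=\E_{U\sim\nu}\sigma^{\mathcal A}_U\), and likewise for Haar. The standard Hadamard-test computation gives \(\Pr[+]-\Pr[-]=\operatorname{Re}\Tr[\rho W_U]\) for each fixed \(U\). Averaging gives
\[
\E_\nu \operatorname{Re}\,\mathrm{OTOC}^{(k)}_\rho-\E_{\rm Haar}\operatorname{Re}\,\mathrm{OTOC}^{(k)}_\rho=\Tr[(\sigma^{\mathcal A}_\nu-\sigma^{\mathcal A}_{\rm Haar})(X_{\rm c}\otimes I)].
\]
Since \(\|X_{\rm c}\otimes I\|_\infty=1\) and the trace-difference matrix is traceless, Hölder's inequality bounds this by \(\tfrac12\|\sigma^{\mathcal A}_\nu-\sigma^{\mathcal A}_{\rm Haar}\|_1\cdot 2\le 2\varepsilon_{\rm op}\). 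In fact a sharper bound is available: for a two-outcome observable \(O=\Pi_+-\Pi_-\) and traceless \(\Delta\), \(|\Tr[\Delta O]|=2|\Tr[\Delta\Pi_+]|\le\|\Delta\|_1\), which equals \(2\varepsilon_{\rm op}\) times the normalized distance.

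\emph{Step 2: the imaginary part.} Define \(\mathcal A_{\rm Im}\) identically except that the final measurement is in the \(Y\) basis. The same argument bounds the imaginary-part difference by \(2\varepsilon_{\rm op}\).

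\emph{Step 3: combining.} Adding the two bounds naively gives a complex modulus of at most \(2\sqrt2\,\varepsilon_{\rm op}\), which is slightly worse than the stated constant. To reach \(2\varepsilon_{\rm op}\), let \(\delta\) denote the complex difference of means and write \(\delta=|\delta|e^{i\theta}\). Run a single experiment \(\mathcal A_\theta\) that measures the control along the rotated axis \(\cos\theta\,X+\sin\theta\,Y\). The resulting expectation is \(\operatorname{Re}(e^{-i\theta}\Tr[\rho W_U])\). The difference of its \(\nu\)- and Haar-averages is exactly \(|\delta|\), and it is bounded by \(\|\sigma^{\mathcal A_\theta}_\nu-\sigma^{\mathcal A_\theta}_{\rm Haar}\|_1\le 2\varepsilon_{\rm op}\). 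This is legitimate because \(\theta\) is a fixed number determined by the two ensembles, not by the sampled \(U\), and the supremum in Definition~\ref{def:operational-forward-inverse-design} ranges over all experiments.

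\emph{Main obstacle.} The only delicate point is admissibility of the experiment. First, \(\operatorname{ctrl}(W_U)\) must use only uncontrolled \(U\) and \(U^\dagger\); this holds because \(U\) conjugates a controlled Pauli, so the control cancels outside the \(\operatorname{ctrl}(B)\) factor. Second, the query counts must be exactly \(2k\) each. Both follow directly from the construction preceding the lemma, so the remainder reduces to the Hölder bound.
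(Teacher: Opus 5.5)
Your proposal is correct and takes essentially the paper's route: realize $\operatorname{ctrl}(W_U)$ in a Hadamard test using $2k$ uncontrolled calls each to $U$ and $U^\dagger$, then apply operational indistinguishability to the control qubit. The paper computes directly that the difference of the averaged control-qubit states, $\frac12\bigl(\begin{smallmatrix}0&\overline{\delta}\\ \delta&0\end{smallmatrix}\bigr)$, has trace norm $|\delta|$. Your Step~3 measurement along $\cos\theta\,X+\sin\theta\,Y$ is exactly the optimal measurement attaining that trace norm, so you can drop Steps~1--2 and the ``sharper bound'' remark, which is no sharper than H\"older.
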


\begin{proof}
Let \(z_U\coloneqq\Tr[\rho W_U]\). Preparing the control qubit in
\(\lvert+\rangle\), applying \(\operatorname{ctrl}(W_U)\), and tracing out
the system leaves the control qubit in the state
\begin{align}
    \sigma_U
    &=
    \frac12
    \begin{pmatrix}
        1 & \overline{z_U}\\
        z_U & 1
    \end{pmatrix}.
\end{align}
Writing
\(\delta\coloneqq
\E_{U\sim\nu}z_U-\E_{U\sim{\rm Haar}}z_U\),
the difference between the corresponding averaged control-qubit states is
\(\frac12
\bigl(\begin{smallmatrix}
0 & \overline{\delta}\\
\delta & 0
\end{smallmatrix}\bigr)\),
whose trace norm is \(|\delta|\). Hence
Definition~\ref{def:operational-forward-inverse-design} gives
\(|\delta|/2\le\varepsilon_{\rm op}\), or equivalently
\(|\delta|\le2\varepsilon_{\rm op}\), proving
Eq.~\eqref{eq:operational-design-implies-otoc}.
\end{proof}

Consequently, any convergence theorem controlling the forward/inverse
operational error above, or a stronger mixed-query error, can be used
directly to establish the OTOC-test moment control of
Definition~\ref{def:otoc-test-moment-control}.
  
\subsubsection{OTOC control via the moment-operator spectral gap}
\label{app:tpe-otoc-control}
Although the operational forward/inverse notion above provides a natural
way to formulate convergence of OTOC means to their Haar values, many
local-random-circuit convergence results in the literature are instead
stated in terms of moment operators and tensor-product-expander (TPE)
bounds. We therefore recall the TPE formulation and show that a sufficiently
small \(2k\)-TPE error implies the OTOC-test moment control of
Definition~\ref{def:otoc-test-moment-control}.


Let \(r\ge1\) and \(d_n\coloneqq2^n\). For a probability measure \(\nu\) on
\(\mathrm U(\mathcal H_n)\), define the \(r\)-fold moment channel
\begin{align}
    \Phi_\nu^{(r)}(X)
    &\coloneqq
    \E_{U\sim\nu}
    \left[
        U^{\otimes r}X(U^\dagger)^{\otimes r}
    \right],
    \qquad
    X\in\mathcal B(\mathcal H_n^{\otimes r}).
    \label{eq:tpe-moment-channel}
\end{align}
Its Liouville representation is, up to the conventional ordering of tensor
factors, the moment operator
\begin{align}
    \mathcal M_\nu^{(r)}
    &\coloneqq
    \E_{U\sim\nu}
    \left[
        U^{\otimes r}\otimes\overline U^{\otimes r}
    \right].
    \label{eq:tpe-moment-operator}
\end{align}

\begin{definition}[\(r\)-TPE error]
\label{def:tpe-error}
The \(r\)-tensor-product-expander error of \(\nu\) is
\begin{align}
    g(\nu,r)
    &\coloneqq
    \left\|
        \mathcal M_\nu^{(r)}
        -
        \mathcal M_{\rm Haar}^{(r)}
    \right\|_\infty
    =
    \left\|
        \Phi_\nu^{(r)}
        -
        \Phi_{\rm Haar}^{(r)}
    \right\|_{2\to2}.
    \label{eq:tpe-error-definition}
\end{align}
\end{definition}

Thus \(g(\nu,r)\) is the operator-norm error of the \(r\)-th moment
operator, equivalently the Hilbert--Schmidt \(2\to2\) error of the
\(r\)-fold moment channel.
  
The usefulness of the TPE norm comes from its amplification under independent composition.  Let
\(\mu\) be the distribution of one step, such as a single circuit layer or a fixed block of layers, and let
\(\mu^{*L}\) be the distribution obtained by composing \(L\) independent copies of this step.  Assume
that the fixed-point space of \(\mathcal M_\mu^{(r)}\) is the Haar-invariant subspace, so that
\(\mathcal M_{\rm Haar}^{(r)}\) is the corresponding fixed-point projection.  Define the global
\(r\)-moment spectral gap of this step by
\begin{align}
    \Delta_r(n)
    &\coloneqq 
    1-g(\mu,r).
    \label{eq:def-global-tpe-gap}
\end{align}

If
\(\Delta_r(n)>0\), then it holds
\begin{align}
    g(\mu^{*L},r)
    &\le
    \exp(-\Delta_r(n)L).
    \label{eq:tpe-spectral-gap-amplification}
\end{align}
Indeed, independent composition gives
\(\mathcal M_{\mu^{*L}}^{(r)}=(\mathcal M_\mu^{(r)})^L\), while the invariance of the Haar measure
implies
\(\mathcal M_\mu^{(r)}\mathcal M_{\rm Haar}^{(r)}
=\mathcal M_{\rm Haar}^{(r)}\mathcal M_\mu^{(r)}
=\mathcal M_{\rm Haar}^{(r)}\).  Hence
\(\mathcal M_{\mu^{*L}}^{(r)}-\mathcal M_{\rm Haar}^{(r)}
=(\mathcal M_\mu^{(r)}-\mathcal M_{\rm Haar}^{(r)})^L\), and taking operator norms yields
\(g(\mu^{*L},r)\le g(\mu,r)^L=(1-\Delta_r(n))^L\le e^{-\Delta_r(n)L}\).  This is the standard TPE-gap amplification mechanism behind moment convergence
~\cite{brandao2016local,haferkamp2022random}.

\begin{lemma}[Direct TPE coefficient bound and OTOC consequence]
\label{lem:tpe-coefficient-bound}
Let \(r\ge1\), let \(A,Y\in\mathcal B(\mathcal H_n^{\otimes r})\), and set
\(f_{A,Y}(U)\coloneqq \Tr[A\,U^{\otimes r}Y(U^\dagger)^{\otimes r}]\). Then
\begin{align}
    \left|
        \E_{U\sim\nu}f_{A,Y}(U)
        -
        \E_{U\sim{\rm Haar}}f_{A,Y}(U)
    \right|
    &\le
    \|A\|_2\,\|Y\|_2\,g(\nu,r).
    \label{eq:tpe-direct-coefficient-bound}
\end{align}
Consequently, for \(r=2k\) and \(d_n\coloneqq \dim\mathcal H_n=2^n\),
\begin{align}
    \left|
        \E_{U\sim\nu}\mathrm{OTOC}^{(k)}_\rho(U)
        -
        \E_{U\sim{\rm Haar}}\mathrm{OTOC}^{(k)}_\rho(U)
    \right|
    &\le
    d_n^{\,r-\frac12}\|\rho\|_2\,g(\nu,r),
    \qquad r=2k.
    \label{eq:tpe-otoc-control-rho}
\end{align}
Moreover, if \(\nu\) is right-Pauli invariant, then the averaged OTOC is state-independent, and the
maximally mixed bound applies to every density matrix \(\rho\):
\begin{align}
    \left|
        \E_{\nu}\mathrm{OTOC}^{(k)}_\rho
        -
        \E_{\rm Haar}\mathrm{OTOC}^{(k)}_\rho
    \right|
    &\le
    d_n^{2k-1}g(\nu,2k).
    \label{eq:tpe-otoc-control-right-pauli}
\end{align}
\end{lemma}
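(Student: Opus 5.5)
The plan is to prove the coefficient bound by rewriting the expectation difference as a Hilbert--Schmidt pairing against the moment channels. Then I would specialize $A$ and $Y$ so that $f_{A,Y}$ reproduces the OTOC, and finally use right-Pauli invariance to replace $\rho$ by the maximally mixed state.

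\emph{Step 1 (coefficient bound).} By linearity,
\(\E_{U\sim\nu}f_{A,Y}(U)=\Tr[A\,\Phi^{(r)}_\nu(Y)]\), and similarly for Haar. Hence the difference equals
\(\Tr[A\,(\Phi^{(r)}_\nu-\Phi^{(r)}_{\rm Haar})(Y)]\). Applying Cauchy--Schwarz in the Hilbert--Schmidt inner product bounds its modulus by \(\|A^\dagger\|_2\,\|(\Phi_\nu-\Phi_{\rm Haar})(Y)\|_2\). By Definition~\ref{def:tpe-error}, the second factor is at most \(g(\nu,r)\|Y\|_2\). This gives Eq.~\eqref{eq:tpe-direct-coefficient-bound}. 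The only point to check is that the \(2\to2\) norm of the channel difference coincides with the operator norm of the moment-operator difference. This follows from vectorization, which is an isometry from Hilbert--Schmidt space to \(\mathcal H^{\otimes r}\otimes\overline{\mathcal H}^{\otimes r}\).

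\emph{Step 2 (OTOC as such a coefficient).} I would write the OTOC in the form
\(\mathrm{OTOC}^{(k)}_\rho(U)=\Tr[\rho(A_UM)^{r}]\) with \(r=2k\) and \(A_U=U^\dagger BU\). Using the cyclic-permutation trick from Proposition~\ref{prop:haar-otock-small},
\(\Tr[X_1\cdots X_r]=\Tr[(X_1\otimes\cdots\otimes X_r)P_\gamma]\), I would insert \(\rho\) into one tensor slot. This yields
\(\Tr[\rho(A_UM)^r]=\Tr[(U^\dagger BU)^{\otimes r}\,(M\otimes\cdots\otimes M\rho)P_\gamma]\), up to the placement of \(\rho\) in the appropriate slot. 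Since the formula \(f_{A,Y}\) uses \(U^{\otimes r}Y U^{\dagger\otimes r}\), I would apply Step 1 to the ensemble \(\nu^\dagger\) (the law of \(U^\dagger\)), or equivalently take adjoints. Note that \(g(\nu^\dagger,r)=g(\nu,r)\), because the moment operator of \(\nu^\dagger\) is the adjoint of that of \(\nu\), and the Haar measure is inversion invariant. The choice is then \(Y=B^{\otimes r}\), with \(\|Y\|_2=d_n^{r/2}\), and \(A=(M^{\otimes r}(\rho\otimes I))P_\gamma\) suitably ordered. Since \(P_\gamma\) and the Paulis are unitary, \(\|A\|_2=\|\rho\otimes I^{\otimes(r-1)}\|_2=\|\rho\|_2\,d_n^{(r-1)/2}\). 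Multiplying the two norms gives \(d_n^{r-1/2}\|\rho\|_2\,g\), which is Eq.~\eqref{eq:tpe-otoc-control-rho}.

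\emph{Step 3 (right-Pauli case).} Lemma~\ref{lem:state-independence-higher-otoc} applies to \(\nu\) and to Haar, both of which are right-Pauli invariant. It shows that both averaged OTOCs are independent of \(\rho\). So the difference for any \(\rho\) equals the difference for \(\rho=I/d_n\), which has \(\|\rho\|_2=d_n^{-1/2}\). Substituting into Step 2 gives \(d_n^{r-1}g=d_n^{2k-1}g(\nu,2k)\).

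The main obstacle is bookkeeping in Step 2. The hard part is arranging \(\rho\), \(M\), \(B\) and the cyclic permutation so that the expression genuinely has the form \(\Tr[A\,U^{\otimes r}YU^{\dagger\otimes r}]\) with the correct orientation of \(U\) versus \(U^\dagger\), while keeping \(\|A\|_2\) equal to \(\|\rho\|_2d_n^{(r-1)/2}\) rather than something larger. If the orientation is awkward, I would pass to \(\nu^\dagger\) as described.
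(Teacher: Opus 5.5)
Your proof is correct and follows the paper's argument: Cauchy--Schwarz against the moment-channel difference, then the cyclic-permutation encoding of the OTOC with norms \(d_n^{r/2}\) and \(d_n^{(r-1)/2}\|\rho\|_2\), then right-Pauli invariance to pass to \(\rho=I/d_n\). The one difference is your detour through \(\nu^\dagger\). It is valid, but it is unnecessary, because cyclicity of the trace already gives \(\Tr[(U^\dagger)^{\otimes r}B^{\otimes r}U^{\otimes r}Z]=\Tr[B^{\otimes r}U^{\otimes r}Z(U^\dagger)^{\otimes r}]\); so, as the paper does, you can take \(A=B^{\otimes r}\) and \(Y=(M^{\otimes(r-1)}\otimes M\rho)\Omega_r\) directly.
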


\begin{proof}
By definition of the moment channel and by Cauchy--Schwarz,
\begin{align}
    \left|
        \E_\nu f_{A,Y}
        -
        \E_{\rm Haar}f_{A,Y}
    \right|
    &=
    \left|
        \Tr\!\left[
            A
            \left(
                \Phi_\nu^{(r)}
                -
                \Phi_{\rm Haar}^{(r)}
            \right)(Y)
        \right]
    \right|
    \le
    \|A\|_2
    \left\|
        \left(
            \Phi_\nu^{(r)}
            -
            \Phi_{\rm Haar}^{(r)}
        \right)(Y)
    \right\|_2
    \le
    \|A\|_2\,\|Y\|_2\,g(\nu,r).
    \label{eq:tpe-coefficient-proof}
\end{align}
This proves Eq.~\eqref{eq:tpe-direct-coefficient-bound}.

For the OTOC application, let \(r=2k\). Let \(\Omega_r\) be the cyclic permutation operator chosen
so that \(\Tr[X_1\cdots X_r]=\Tr[(X_1\otimes\cdots\otimes X_r)\Omega_r]\), and define
\(Y_{\rho,M}^{(r)}\coloneqq (M^{\otimes(r-1)}\otimes M\rho)\Omega_r\). Writing \(A_U\coloneqq U^\dagger B U\), cyclicity of the trace together
with the defining property of \(\Omega_r\) gives
\begin{align}
    \mathrm{OTOC}^{(k)}_\rho(U)
    &=
    \Tr[\rho(A_U M)^r]
    =
    \Tr\!\left[
        B^{\otimes r}
        U^{\otimes r}
        Y_{\rho,M}^{(r)}
        (U^\dagger)^{\otimes r}
    \right].
    \label{eq:otoc-as-tpe-coefficient}
\end{align}
Thus the first part applies with \(A=B^{\otimes r}\) and \(Y=Y_{\rho,M}^{(r)}\). Since \(B\) and
\(M\) are Pauli unitaries and \(\Omega_r\) is unitary, \(\|B^{\otimes r}\|_2=d_n^{r/2}\) and
\(\|Y_{\rho,M}^{(r)}\|_2=d_n^{(r-1)/2}\|\rho\|_2\). Substituting these estimates into
Eq.~\eqref{eq:tpe-direct-coefficient-bound} gives Eq.~\eqref{eq:tpe-otoc-control-rho}.

The final claim follows from Lemma~\ref{lem:state-independence-higher-otoc}, since both \(\nu\)
and Haar measure are right-Pauli invariant, and
\(\|I/d_n\|_2=d_n^{-1/2}\).
\end{proof}

\begin{corollary}[TPE accuracy sufficient for OTOC-test moment control]
\label{cor:tpe-sufficient-for-otoc-control}
Let \(d_n\coloneqq 2^n\).  For arbitrary input states, it is sufficient to have
\begin{align}
    g(\nu,2k)
    &\le
    \varepsilon\,d_n^{-(2k-\frac12)}
    \label{eq:tpe-error-needed-for-otoc}
\end{align}
in order for \(\nu\) to satisfy \((k,\varepsilon)\)-OTOC-test moment control for
\((\rho,B,M)\).

In particular, suppose that \(\nu=\mu^{*L}\), where \(\mu\) is one step, or one fixed block of layers,
and set
\begin{align}
    \Delta_{2k}(n)
    &\coloneqq 
    1-g(\mu,2k).
    \label{eq:def-global-gap-2k}
\end{align}
Then the condition
\begin{align}
    L
    &\ge
    \Delta_{2k}(n)^{-1}
    \left[
        2k \log d_n
        +
        \log\frac1{\varepsilon}
    \right]
    \label{eq:tpe-depth-needed-for-otoc}
\end{align}
is sufficient for \((k,\varepsilon)\)-OTOC-test moment control for arbitrary input states.
\end{corollary}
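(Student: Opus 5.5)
The plan is to combine the coefficient bound of Lemma~\ref{lem:tpe-coefficient-bound} with the gap amplification estimate in Eq.~\eqref{eq:tpe-spectral-gap-amplification}. Almost nothing else is needed.

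\emph{First claim.} Apply Eq.~\eqref{eq:tpe-otoc-control-rho} with $r=2k$. This bounds the deviation of the mean by $d_n^{2k-1/2}\|\rho\|_2\,g(\nu,2k)$. Since $\rho$ is a density matrix, $\|\rho\|_2\le\|\rho\|_1=1$ (equivalently, $\Tr\rho^2\le1$), and the bound holds uniformly over states. Under the hypothesis $g(\nu,2k)\le\varepsilon\,d_n^{-(2k-1/2)}$ the deviation is therefore at most $\varepsilon$. This is exactly $(k,\varepsilon)$-OTOC-test moment control in the sense of Definition~\ref{def:otoc-test-moment-control}, for every $\rho$, $B$ and $M$.

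\emph{Second claim.} Take $\nu=\mu^{*L}$. If $\Delta_{2k}(n)>0$, Eq.~\eqref{eq:tpe-spectral-gap-amplification} gives $g(\mu^{*L},2k)\le e^{-\Delta_{2k}(n)L}$. This relies on the standing assumption that the fixed-point space of $\mathcal M^{(2k)}_\mu$ is the Haar-invariant subspace, which I will state explicitly. It then suffices to have $e^{-\Delta_{2k}L}\le\varepsilon\,d_n^{-(2k-1/2)}$, i.e.
\[
L\ge\Delta_{2k}^{-1}\bigl[(2k-\tfrac12)\log d_n+\log(1/\varepsilon)\bigr].
\]
Condition Eq.~\eqref{eq:tpe-depth-needed-for-otoc} uses $2k\log d_n$ in place of $(2k-\tfrac12)\log d_n$. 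Since $d_n\ge1$ makes the former the larger quantity, Eq.~\eqref{eq:tpe-depth-needed-for-otoc} implies the required inequality. The first claim then applies.

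There is no real obstacle. The only points to watch are the positivity of the gap, without which the condition is vacuous or meaningless, and the fixed-point hypothesis underlying the amplification step. Both are inherited assumptions. I will also note that for right-Pauli-invariant ensembles one can use Eq.~\eqref{eq:tpe-otoc-control-right-pauli} instead. This needs only $g\le\varepsilon d_n^{-(2k-1)}$, a slightly weaker requirement that is still covered by the stated condition.
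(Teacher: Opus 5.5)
Your proposal is correct and follows the paper's own proof. You apply Eq.~\eqref{eq:tpe-otoc-control-rho}, then use the amplification bound $g(\mu^{*L},2k)\le e^{-\Delta_{2k}(n)L}$, and you explicitly verify two steps the paper leaves implicit: $\|\rho\|_2\le 1$ and the slack between $2k\log d_n$ and $(2k-\tfrac12)\log d_n$.
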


\begin{proof}
By Lemma~\ref{lem:tpe-coefficient-bound}, for arbitrary density matrices \(\rho\),
\begin{align}
    \left|
        \E_{\nu}\mathrm{OTOC}^{(k)}_\rho
        -
        \E_{\rm Haar}\mathrm{OTOC}^{(k)}_\rho
    \right|
    &\le
    d_n^{2k-\frac12}g(\nu,2k).
    \label{eq:tpe-corollary-first-step}
\end{align}
Thus the TPE accuracy condition~\eqref{eq:tpe-error-needed-for-otoc} implies the desired
\((k,\varepsilon)\)-OTOC-test moment control.

If \(\nu=\mu^{*L}\), the spectral-gap amplification gives
\begin{align}
    g(\mu^{*L},2k)
    &\le
    \exp[-\Delta_{2k}(n)L].
    \label{eq:tpe-gap-amplification-proof-use}
\end{align}
The lower bound~\eqref{eq:tpe-depth-needed-for-otoc} implies
\(g(\mu^{*L},2k)\le\varepsilon d_n^{-(2k-\frac12)}\), so the first part applies.
\end{proof}

\subsubsection{Design and OTOC-convergence from the literature}
There are several ways to obtain the mean-control condition in
Definition~\ref{def:otoc-test-moment-control}. Direct OTOC results are the
most tailored, but they are comparatively limited and mostly concern
\(\mathrm{OTOC}^{(1)}\)
~\cite{harrow2019separation,nahum2018operator}. For example,
Ref.~\cite{harrow2019separation} establishes rigorous OTOC-saturation
bounds for local random circuits on graphs. For fixed-dimensional Euclidean
lattices, the saturation scale is \(\Theta(n^{1/D_{\rm lat}})\).

A second route is through the operational design notion introduced above.
This is a natural metric for OTOC-like protocols, since it compares
experiments that may interleave calls to \(U\) and \(U^\dagger\). Any
convergence theorem controlling this operational error can be used directly
in Lemma~\ref{lem:operational-design-implies-otoc}. Recent strong-design
results provide such guarantees for one-dimensional and all-to-all
architectures, in a stronger query model allowing access to
\(U,U^\dagger,\overline U\), and \(U^T\)
~\cite{schuster2025strongrandom,parelladilmé2026strongunitarydesignsoptimal}. For constant-dimensional grids,
Ref.~\cite{folkertsma2026artscrafts} gives strong
\(\varepsilon\)-approximate \(q\)-design constructions in depth
\(O_q(n^{1/D_{\rm lat}})\).
Since the OTOC experiment above uses \(2k\) calls to \(U\) and \(2k\)
calls to \(U^\dagger\), taking \(q=4k\) gives OTOC-test moment control
at this depth scale.

As earlier mentioned, the broadest route available for local random circuits is through
moment-operator TPE bounds. Refs.~\cite{brandao2016local,haferkamp2022random}
give standard one-dimensional TPE/design inputs,
Ref.~\cite{harrowmehraban2023approximate} treats higher-dimensional
lattices, and Refs.~\cite{belkin2024generic,mittal2023local} treat more
general architectures and graph families. To state the resulting depth
bound, let \(d_{\rm TPE}^{(2k)}(\delta)\) denote the smallest depth \(d\)
for which the depth-\(d\) circuit ensemble \(\nu_d\) satisfies
\(g(\nu_d,2k)\le\delta\). Corollary~\ref{cor:tpe-sufficient-for-otoc-control}
then gives, for arbitrary input states,
\begin{align}
    d_{\rm OTOC}^{(2k)}(\varepsilon;\rho,B,M)
    &\le
    d_{\rm TPE}^{(2k)}
    \!\left(
        \varepsilon d_n^{-(2k-\frac12)}
    \right).
    \label{eq:otoc-depth-from-tpe-depth}
\end{align}

As a concrete example, Ref.~\cite{chen2024incompressibility} proves a
TPE-gap bound for the one-dimensional periodic brickwork random-circuit
ensemble. Let \(\nu_{\rm BRQC,n}\) denote one brickwork period, consisting
of the two alternating nearest-neighbor layers. In the notation of that
work,
\begin{align}
    g(\nu_{\rm BRQC,n},t)
    &\le
    1-\Omega\!\left((\log t)^{-7}\right),
    \qquad
    t\le \Theta(2^{2n/5}).
    \label{eq:chen-brickwork-gap}
\end{align}
Hence, for \(L\) independent brickwork periods,
\(g(\nu_{\rm BRQC,n}^{*L},t)
\le
\exp[-\Omega(L/(\log t)^7)]\).

We now apply Corollary~\ref{cor:tpe-sufficient-for-otoc-control} with
\(t=2k\) and fixed \(k\). The corollary requires
\(g(\nu_{\rm BRQC,n}^{*L},2k)
\le
\varepsilon d_n^{-(2k-\frac12)}\).
Since \(d_n=2^n\) and \(k\) is fixed, it is sufficient to take
\(L=O_k(n+\log(1/\varepsilon))\). Therefore
Ref.~\cite{chen2024incompressibility} implies
\begin{align}
    d_{\rm OTOC}^{(2k)}(\varepsilon;\rho,B,M)
    &=
    O_k\!\left(
        n+\log\frac1{\varepsilon}
    \right)
    \label{eq:chen-otoc-depth-final}
\end{align}
for arbitrary input states. In particular, inverse-polynomial OTOC-test
error is obtained in depth \(O_k(n)\).

For higher-dimensional lattices and more general architectures, one can similarly insert the available
\(2k\)-moment convergence depth into Eq.~\eqref{eq:otoc-depth-from-tpe-depth}.  This gives a
conservative \(O_k(n)\)-depth bound whenever the cited convergence theorem has logarithmic
dependence on the target TPE accuracy. This is sufficient for our purposes, although it is not meant
to be tight.

\subsubsection{Local gate ensembles for design convergence}

Several of the design-convergence results cited above are stated for
Haar-random local two-qubit gates, i.e., for the local ensemble
\({\rm Haar}(\mathrm U(4))\). This assumption is not essential for our
application. Ref.~\cite{yada2026nonhaar} gives comparison results that transfer
Haar-local-circuit design bounds to non-Haar local ensembles with a
positive local \(t\)-moment gap. For architectures in which the relevant
comparison involves only a constant number of consecutive layers, the
resulting overhead depends on the local ensemble and on \(t\), but not on
the total system size \(n\). Since \(t=2k\) is fixed in our application,
this overhead is absorbed into the constants hidden in \(O_k(\cdot)\).

Concretely, for a local two-qubit ensemble \(\nu_{\rm loc}\), write
\(M_t(\nu_{\rm loc})
\coloneqq
\E_{U\sim\nu_{\rm loc}}
[U^{\otimes t}\otimes\overline U^{\otimes t}]\),
and let
\(P_t\coloneqq M_t({\rm Haar}(\mathrm U(4)))\).
A positive local \(t\)-moment gap means that
\(\|M_t(\nu_{\rm loc})-P_t\|_{\rm op}
\le 1-\gamma_t(\nu_{\rm loc})\)
for some \(\gamma_t(\nu_{\rm loc})>0\).
Repeated independent applications of the same local ensemble then converge
exponentially to the Haar \(t\)-th moment operator, with a rate depending
only on the local ensemble and on \(t\).

This framework also covers hardware-motivated ensembles built from a fixed
entangling two-qubit gate interleaved with random one-qubit rotations. For
example, consider
\begin{align}
    U_{\rm loc}
    &\coloneqq
    (u_1\otimes u_2)\,G\,(v_1\otimes v_2),
    \label{eq:fixed-entangler-local-ensemble}
\end{align}
where \(G\) is a fixed entangling two-qubit gate and \(u_i,v_i\) are
independent Haar-random one-qubit unitaries. For this fully Haar-dressed ensemble, the required local \(t\)-moment gap
is positive for every fixed \(t\). Hence, taking \(t=2k\), the comparison
results of Ref.~\cite{yada2026nonhaar} transfer the corresponding
Haar-local-gate design bounds with an \(n\)-independent overhead for the
architectures considered here. For more general local ensembles, the
positive moment gap may hold only after grouping a constant number of
local steps; such constant-size blocking changes only constant factors and
does not affect the asymptotic dependence on \(n\).


\subsection{Reverse-variance bounds for local gate ensembles}
\label{app:local-reverse-variance}
In this subsection we discuss the local gate-ensemble assumptions under which our fluctuation lower
bound applies. The fluctuation argument uses two distinct properties of the local gate
ensemble. The first is control of the ensemble-averaged OTOC, discussed in
the previous subsection above. The second is a local reverse-variance bound,
which we establish here.

Fix two depths \(s<t\) with \(t-s=O(1)\), and write \(U_t=WU_s\), where
\(W\) is the random circuit block between depths \(s\) and \(t\). We
condition on the circuit up to depth \(s\), \(U_s=V\), and average over
the gates in \(W\). In the Heisenberg evolution \(W^\dagger B W\), only
gates inside the backward light cone of \(B\) can affect the OTOC. Since
\(W\) has constant depth and the architecture has bounded degree, these
gates act on a subsystem of \(O(1)\) qubits.

We denote by \(X\) the effective unitary generated by the gates in this
active part of \(W\), and by \(\nu_{\rm loc}\) its induced distribution.
If the active block contains several local gates, expectations over \(X\)
can equivalently be evaluated over their joint distribution. Throughout
this subsection, \(B\) denotes the corresponding Pauli operator on the
active subsystem, with identities on the remaining qubits left implicit.

For fixed \(V\), define
\begin{align}
    f_V(X)
    &\coloneqq
    \Tr\!\left[
        \rho
        \left(
            V^\dagger X^\dagger B X V M
        \right)^{2k}
    \right].
    \label{eq:conditional-local-otoc-function}
\end{align}
Setting the active block to the identity gives
\(f_V(I)=\mathrm{OTOC}^{(k)}_\rho(U_s)\), while averaging over the local
randomness gives
\(\E_X f_V(X)
=\E[\mathrm{OTOC}^{(k)}_\rho(U_t)\mid U_s=V]\).

The reverse-variance bound expresses the following simple principle:
if averaging over the local block changes the OTOC relative to the
identity-block value, then the same local randomness must generate
fluctuations. More precisely, we seek conditions on \(\nu_{\rm loc}\)
under which
\begin{align}
    \Var_{X\sim\nu_{\rm loc}} f_V(X)
    &\ge
    \eta
    \left|
        \E_{X\sim\nu_{\rm loc}}f_V(X)-f_V(I)
    \right|^2 ,
    \label{eq:desired-local-reverse-variance}
\end{align}
where \(\eta>0\) depends only on the local ensemble and on \(k\), but not on the total number of
qubits \(n\).

\subsubsection{Local criteria for reverse variance}
\label{app:local-reverse-variance-criteria}
We now record simple criteria which imply the reverse-variance bound. The basic point is that it is enough to rule out the following degeneracy: a OTOC function should not be constant on the support of the local ensemble while taking a different value at the identity. More precisely, the finite-dimensional lemma below shows that it is enough to prove the implication
\begin{align*} 
\Var_{X\sim\nu_{\rm loc}} f(X)=0 \quad\Longrightarrow\quad \E_{X\sim\nu_{\rm loc}}f(X)=f(I) 
\end{align*}
for the relevant finite-dimensional space of local OTOC functions. Here \(B\) denotes the local Pauli seen by the active block \(X\). If several microscopic gates are grouped into one effective local block, then \(\nu_{\rm loc}\) denotes the product distribution of this grouped block.

\begin{definition}[Local reverse-variance criteria]
\label{def:local-gate-set-condition}
Let \(\nu_{\rm loc}\) be the distribution of the local block on the relevant local subsystem, and let
\(B\) denote the butterfly Pauli acting on this subsystem.  We say that \(\nu_{\rm loc}\) satisfies a
local reverse-variance criterion for \(B\) if at least one of the following conditions holds:
\begin{enumerate}
    \item \emph{Stabilizing element.}  There exists
    \(X_\star\in\operatorname{supp}\nu_{\rm loc}\) such that
    \begin{align}
        X_\star^\dagger B X_\star
        &=
        \pm B .
        \label{eq:local-gate-set-condition}
    \end{align}
    We refer to this condition as the local gate-set condition.

    \item \emph{Open support.}  Up to irrelevant global phases, the support of \(\nu_{\rm loc}\)
    contains a neighbourhood of some local unitary.
\end{enumerate}
\end{definition}

The first condition is the most elementary one.  If \(I\in\operatorname{supp}\nu_{\rm loc}\), it
holds with \(X_\star=I\).  More generally, the identity block need not be in the support: it is
enough that the support contain one element which stabilizes the relevant Pauli up to sign.  Indeed,
if \(X_\star^\dagger B X_\star=\pm B\), then for
\(f_V(X)=\Tr[\rho(V^\dagger X^\dagger B X V M)^{2k}]\) we have
\begin{align}
    f_V(X_\star)
    &=
    f_V(I),
    \label{eq:local-gate-set-identity-value}
\end{align}
because the sign appears \(2k\) times and therefore cancels.  Hence, if
\(\Var_X f_V(X)=0\), then \(f_V\) is constant on \(\operatorname{supp}\nu_{\rm loc}\), and
\[
    \E_X f_V(X)=f_V(X_\star)=f_V(I).
\]
Thus the stabilizing-element condition gives the required zero-variance implication.  The next lemma
turns this exact implication into a quantitative lower bound.  We will then apply it to the
finite-dimensional space generated by the local OTOC functions.

\begin{lemma}[Finite-dimensional reverse variance]
\label{lem:finite-dimensional-reverse-variance}
Let \(\nu_{\rm loc}\) be a probability distribution on the local block of gates, and let
\(\mathcal F\) be a finite-dimensional vector space of continuous functions of \(X\).  Suppose that,
for every \(f\in\mathcal F\),
\begin{align}
    \Var_{X\sim\nu_{\rm loc}} f(X)=0
    \quad\Longrightarrow\quad
    \E_{X\sim\nu_{\rm loc}}f(X)=f(I).
    \label{eq:finite-dimensional-zero-var-implication}
\end{align}
Then there exists \(\eta>0\), depending only on \(\nu_{\rm loc}\) and \(\mathcal F\), such that
\begin{align}
    \Var_{X\sim\nu_{\rm loc}} f(X)
    &\ge
    \eta
    \left|
        \E_{X\sim\nu_{\rm loc}}f(X)-f(I)
    \right|^2
    \label{eq:finite-dimensional-reverse-variance}
\end{align}
for every \(f\in\mathcal F\).
\end{lemma}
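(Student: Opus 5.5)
The plan is a compactness argument on the unit sphere of the finite-dimensional space \(\mathcal F\), modulo the subspace where both sides vanish. Both \(f\mapsto\Var_{X\sim\nu_{\rm loc}}f(X)\) and \(f\mapsto|\E f-f(I)|^2\) are continuous seminorm-squares on \(\mathcal F\).

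The first step is to set up these two seminorms.
\begin{align}
    q_1(f)&\coloneqq\Var_{\nu_{\rm loc}}f=\E|f-\E f|^2,
    &
    q_2(f)&\coloneqq|\E f-f(I)|^2 .
\end{align}
The quantity \(q_1\) is the square of the \(L^2(\nu_{\rm loc})\)-seminorm of the linear map \(f\mapsto f-\E f\). The quantity \(q_2\) is the square of the modulus of the linear functional \(\ell(f)\coloneqq\E f-f(I)\).

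Continuity of both on \(\mathcal F\), equipped with any norm, is automatic because all norms on a finite-dimensional space are equivalent. To see this, fix a basis \(f_1,\dots,f_m\) of continuous, hence bounded, functions on the compact group. Then \(f=\sum c_if_i\) gives \(q_1\) and \(q_2\) as continuous (quadratic) functions of \(c\in\mathbb C^m\).

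The second step is to handle degeneracy. Let \(\mathcal N\coloneqq\{f\in\mathcal F:q_1(f)=0\}\), which is the kernel of a linear map into \(L^2(\nu_{\rm loc})\) and hence a subspace. The hypothesis \eqref{eq:finite-dimensional-zero-var-implication} says exactly that \(\ell\) vanishes on \(\mathcal N\).

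Both \(q_1\) and \(q_2\) are invariant under adding elements of \(\mathcal N\):
\begin{itemize}
\item For \(q_2\) this follows because \(\ell\) is linear and zero on \(\mathcal N\).
\item For \(q_1\), if \(g\in\mathcal N\) then \(g-\E g=0\) \(\nu_{\rm loc}\)-almost surely, so \(f+g-\E(f+g)=f-\E f\) almost surely.
\end{itemize}
Hence both descend to the finite-dimensional quotient \(\mathcal F/\mathcal N\). On this quotient \(\sqrt{q_1}\) is a genuine norm: it is a seminorm, and it vanishes only on the class of \(\mathcal N\).

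The third step is the compactness conclusion. Equip \(\mathcal F/\mathcal N\) with the norm \(\sqrt{q_1}\). The functional \(\bar\ell\) is linear on a finite-dimensional normed space, hence bounded. So there exists \(K<\infty\) with \(|\ell(f)|\le K\sqrt{q_1(f)}\) for all \(f\). Setting \(\eta\coloneqq K^{-2}\), or any \(\eta>0\) if \(\bar\ell=0\), gives \eqref{eq:finite-dimensional-reverse-variance}.

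Equivalently, one can argue directly: minimize the continuous function \(q_1\) over the compact set \(\{[f]:\|[f]\|=1\}\) in some auxiliary norm. The minimum is positive, and \(q_2\) is bounded above on the same set, so homogeneity of degree two in both gives the ratio bound.

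The only delicate point is the second step. One must verify that \(\mathcal N\) is a linear subspace and that \(q_1\) is well defined on the quotient. This requires the almost-sure identity above and the observation that evaluation at \(I\) appears only through \(\ell\), which the hypothesis kills on \(\mathcal N\).

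Note that \(I\) need not lie in \(\operatorname{supp}\nu_{\rm loc}\). This is why one cannot bound \(|f(I)-\E f|\) by an \(L^2(\nu_{\rm loc})\) norm directly, and the hypothesis is genuinely needed there. The dependence of \(\eta\) is only on \(\nu_{\rm loc}\) and \(\mathcal F\), as claimed, since \(K\) is determined by these data.
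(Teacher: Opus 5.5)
Your proposal is correct and follows essentially the same route as the paper. The paper chooses a complement $\mathcal G$ of the zero-variance subspace $\mathcal N$ instead of passing to the quotient $\mathcal F/\mathcal N$, observes that $\sqrt{\Var}$ is a norm on $\mathcal G$, and bounds the functional $\E f-f(I)$ on the compact unit sphere, which is the same as your step of bounding a linear functional on a finite-dimensional normed space.
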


\begin{proof}
Let \(\mathcal N\coloneqq \{f\in\mathcal F:\Var_X f(X)=0\}\).  This is the subspace of functions that are
constant on \(\operatorname{supp}\nu_{\rm loc}\).  By the assumption of the lemma, the linear
functional \(L(f)\coloneqq \E_X f(X)-f(I)\) vanishes on \(\mathcal N\).

Choose a vector-space complement \(\mathcal G\) such that
\begin{align}
    \mathcal F
    &=
    \mathcal N\oplus\mathcal G .
\end{align}
Every \(f\in\mathcal F\) can be written uniquely as \(f=f_0+g\), with
\(f_0\in\mathcal N\) and \(g\in\mathcal G\).  Since \(f_0\) is constant on the support of
\(\nu_{\rm loc}\), adding \(f_0\) does not change the variance.  Since \(L\) vanishes on
\(\mathcal N\), adding \(f_0\) also does not change \(L\).  Therefore
\begin{align}
    \Var_X f(X)
    &=
    \Var_X g(X),
    &
    L(f)
    &=
    L(g).
    \label{eq:reduce-to-complement-local}
\end{align}

On \(\mathcal G\), the map \(g\mapsto\sqrt{\Var_X g(X)}\) is a norm.  Indeed,
\(\sqrt{\Var_X g(X)}=\|g-\E_Xg\|_{L^2(\nu_{\rm loc})}\), so homogeneity and the triangle
inequality follow from the \(L^2\)-norm.  Moreover, if this norm vanishes for some
\(g\in\mathcal G\), then \(g\in\mathcal N\cap\mathcal G\), and hence \(g=0\).

Since \(\mathcal G\) is finite-dimensional, the unit sphere
\begin{align}
    S\coloneqq \{g\in\mathcal G:\sqrt{\Var_X g(X)}=1\}
\end{align}
is compact whenever \(\mathcal G\neq\{0\}\).  If \(\mathcal G=\{0\}\), then
\(L(f)=0\) for all \(f\in\mathcal F\), and the claim is immediate.  Otherwise, the function
\(g\mapsto |L(g)|\) is continuous on \(S\), and hence
\begin{align}
    C\coloneqq \max_{g\in S}|L(g)|<\infty .
\end{align}
If \(C=0\), then again \(L(f)=0\) for all \(f\in\mathcal F\), and the claim is immediate.  We may
therefore assume \(C>0\).  For \(g\neq0\), set
\begin{align}
    \widetilde g\coloneqq \frac{g}{\sqrt{\Var_X g(X)}} .
\end{align}
Then \(\widetilde g\in S\), and by linearity of \(L\),
\begin{align}
    |L(g)|
    =
    \sqrt{\Var_X g(X)}\,|L(\widetilde g)|
    \le
    C\sqrt{\Var_X g(X)}.
\end{align}
The same bound is trivial for \(g=0\).  Using Eq.~\eqref{eq:reduce-to-complement-local}, we get,
for every \(f\in\mathcal F\),
\begin{align}
    \left|
        \E_X f(X)-f(I)
    \right|^2
    &=
    |L(f)|^2
    =
    |L(g)|^2
    \le
    C^2\Var_X g(X)
    =
    C^2\Var_X f(X).
\end{align}
Taking \(\eta\coloneqq C^{-2}\) proves the lemma.
\end{proof}

We now apply the finite-dimensional lemma to the local OTOC functions.  Set \(r\coloneqq 2k\), let
\(\mathcal H_{\rm loc}\) be the Hilbert space of the local subsystem on which \(X\) acts, and write
\(d_{\rm loc}\coloneqq \dim\mathcal H_{\rm loc}\).  Define
\begin{align}
    \Phi_{B,k}(X)
    &\coloneqq 
    (X^\dagger B X)^{\otimes r}.
    \label{eq:Phi-B-k-local}
\end{align}
Fix an orthonormal basis
\(\mathcal B_{\rm loc}^{(r)}=\{|\alpha\rangle\}_{\alpha=1}^{d_{\rm loc}^r}\) of
\(\mathcal H_{\rm loc}^{\otimes r}\), and define \(\mathcal F_{B,k}\) as the finite-dimensional vector space of functions generated by the matrix entries of
\((X^\dagger B X)^{\otimes r}\), i.e.
\begin{align}
    \mathcal F_{B,k}
    &\coloneqq 
    \operatorname{span}
    \left\{
        X\longmapsto
        \langle \alpha|
            \Phi_{B,k}(X)
        |\beta\rangle
        \;:\;
        |\alpha\rangle,|\beta\rangle\in\mathcal B_{\rm loc}^{(r)}
    \right\}.
    \label{eq:FBk-definition}
\end{align}
Since
\(|\mathcal B_{\rm loc}^{(r)}|=d_{\rm loc}^r\), we have $ \dim\mathcal F_{B,k}
    \le
    |\mathcal B_{\rm loc}^{(r)}|^2
     =
    d_{\rm loc}^{2r}
     =
    d_{\rm loc}^{4k}.$
Since the local subsystem has constant size, this dimension is independent of the total number of
qubits.  Although the coefficients with which a particular OTOC function is expanded in this spanning family may depend on the conditioned circuit \(V\), on \(\rho\), on \(M\), and on the total system size \(n\), the function space \(\mathcal F_{B,k}\) itself depends only on the local subsystem, on \(B\), and on \(k\).

Every conditional OTOC function in Eq.~\eqref{eq:conditional-local-otoc-function} belongs to
\(\mathcal F_{B,k}\).  Indeed, \(f_V(X)\) is obtained by contracting the \(r\) copies of
\(X^\dagger B X\) with an operator depending on \(V\), \(M\), and \(\rho\).  Equivalently, there is
an operator \(K_{V,\rho,M}\in\mathsf L(\mathcal H_{\rm loc}^{\otimes r})\), independent of \(X\),
such that
\begin{align}
    f_V(X)
    &=
    \Tr\!\left[
        K_{V,\rho,M}\,
        \Phi_{B,k}(X)
    \right]=
    \sum_{\alpha,\beta}
    c_{\alpha,\beta}^{(V,\rho,M)}
    \langle \alpha|
        \Phi_{B,k}(X)
    |\beta\rangle ,
    \label{eq:fV-in-FBk}
\end{align}
for suitable coefficients \(c_{\alpha,\beta}^{(V,\rho,M)}\).  Hence
\(f_V\in\mathcal F_{B,k}\).

\begin{lemma}[Local criteria imply reverse variance]
\label{lem:two-local-criteria-reverse-variance}
Assume that \(\nu_{\rm loc}\) satisfies one of the local reverse-variance criteria in
Definition~\ref{def:local-gate-set-condition}.  Then there exists a constant \(\eta_{B,k}>0\),
independent of \(n\), such that every conditional OTOC function \(f_V\) satisfies
\begin{align}
    \Var_{X\sim\nu_{\rm loc}} f_V(X)
    &\ge
    \eta_{B,k}
    \left|
        \E_{X\sim\nu_{\rm loc}}f_V(X)-f_V(I)
    \right|^2 .
    \label{eq:two-local-criteria-reverse-variance}
\end{align}
\end{lemma}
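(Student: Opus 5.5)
The plan is to apply Lemma~\ref{lem:finite-dimensional-reverse-variance} to the finite-dimensional space \(\mathcal F=\mathcal F_{B,k}\) of Eq.~\eqref{eq:FBk-definition}. That space depends only on the local subsystem, on \(B\), and on \(k\). We already showed in Eq.~\eqref{eq:fV-in-FBk} that every conditional OTOC function \(f_V\) lies in \(\mathcal F_{B,k}\), whatever \(V\), \(\rho\), \(M\), and \(n\) are. So once the zero-variance implication \eqref{eq:finite-dimensional-zero-var-implication} holds for all \(f\in\mathcal F_{B,k}\), the lemma gives a constant \(\eta_{B,k}=C^{-2}>0\) depending only on \(\nu_{\rm loc}\) and \(\mathcal F_{B,k}\). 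This constant is therefore independent of \(n\), and the lemma then yields Eq.~\eqref{eq:two-local-criteria-reverse-variance}. The functions in \(\mathcal F_{B,k}\) are polynomial in the entries of \(X,\overline X\), hence continuous, as the lemma requires.

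The remaining task is to verify the implication for arbitrary \(f\in\mathcal F_{B,k}\), not merely for the \(f_V\). Every such \(f\) has the form \(f(X)=\Tr[K\,\Phi_{B,k}(X)]\) for some operator \(K\), so it depends on \(X\) only through \(X^\dagger BX\). In addition, each matrix entry of \(\Phi_{B,k}\) is homogeneous of degree \(r=2k\) in \(X^\dagger BX\).

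\emph{Stabilizing element.} Suppose \(X_\star\in\operatorname{supp}\nu_{\rm loc}\) satisfies \(X_\star^\dagger BX_\star=\pm B\). Then \(\Phi_{B,k}(X_\star)=(\pm1)^{2k}B^{\otimes r}=\Phi_{B,k}(I)\), so \(f(X_\star)=f(I)\) for every \(f\in\mathcal F_{B,k}\). If \(\Var f=0\), then \(f\) is \(\nu_{\rm loc}\)-a.e.\ constant, and by continuity it is constant on the whole support. Hence \(\E f=f(X_\star)=f(I)\). The one care point is the passage from a.e.\ constancy to constancy on the closed support, which continuity handles.

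\emph{Open support.} Suppose \(\operatorname{supp}\nu_{\rm loc}\) contains an open neighbourhood of some \(X_0\) (up to phases, which do not affect \(X^\dagger BX\)). If \(\Var f=0\), then \(f\) is constant on that open set. Each \(f\) is a real-analytic function on the connected manifold \(\mathrm U(d_{\rm loc})\), being a polynomial in the real and imaginary parts of the entries of \(X\). By the identity theorem, \(f\) is therefore constant on all of \(\mathrm U(d_{\rm loc})\). In particular \(f(I)=f(X_0)=\E f\).

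I expect the main obstacle to be this analytic-continuation step in the open-support case. Specifically, one must check that the identity theorem applies to real-analytic functions on the connected compact Lie group, which holds since \(\mathrm U(d)\) is a connected real-analytic manifold, and that phase ambiguity in the support is harmless. The rest is direct bookkeeping into Lemma~\ref{lem:finite-dimensional-reverse-variance}.
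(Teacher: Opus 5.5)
Your proposal is correct and follows the paper's proof: you reduce to the zero-variance implication on \(\mathcal F_{B,k}\) via Lemma~\ref{lem:finite-dimensional-reverse-variance}, handle the stabilizing-element case through \(f(X_\star)=f(I)\), and handle the open-support case by analytic continuation on the connected group \(\mathrm U(d_{\rm loc})\). Your additional remarks on the two points the paper leaves implicit are correct: continuity takes you from \(\nu_{\rm loc}\)-a.e.\ constancy to constancy on the support, and \(X^\dagger BX\) is phase-invariant.
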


\begin{proof}
It is enough to verify the zero-variance implication
\eqref{eq:finite-dimensional-zero-var-implication} for all \(f\in\mathcal F_{B,k}\).

First assume the stabilizing-element condition.  Since \(X_\star^\dagger B X_\star=\pm B\) and
\(2k\) is even, \((X_\star^\dagger B X_\star)^{\otimes 2k}=B^{\otimes 2k}\).  Therefore every
\(f\in\mathcal F_{B,k}\) satisfies \(f(X_\star)=f(I)\).  If \(\Var_X f(X)=0\), then \(f\) is
constant on \(\operatorname{supp}\nu_{\rm loc}\).  Since
\(X_\star\in\operatorname{supp}\nu_{\rm loc}\), we get $\E_X f(X)=f(X_\star)=f(I).$

Now assume the open-support condition.  If \(\Var_X f(X)=0\), then \(f\) is constant on
\(\operatorname{supp}\nu_{\rm loc}\), and hence on some nonempty open set of local unitaries.  Let
\(c\) be this constant value.  For \(f\in\mathcal F_{B,k}\), the difference \(f-c\) is a polynomial
function of the matrix entries of \(X\) and \(\overline X\), restricted to the local unitary group.
Since the local unitary group is connected, such a function that vanishes on a nonempty open set
vanishes everywhere on the local unitary group.  Therefore \(f(X)=c\) for all local unitaries \(X\),
and in particular \(\E_X f(X)=c=f(I)\).

In either case the hypothesis of Lemma~\ref{lem:finite-dimensional-reverse-variance} holds with
\(\mathcal F=\mathcal F_{B,k}\), and the claimed bound follows.
\end{proof}

\subsubsection{Examples of local ensembles satisfying the reverse-variance criteria} 
\label{app:examples-local-reverse-variance-criteria} 

We now list local ensembles to which Lemma~\ref{lem:two-local-criteria-reverse-variance} applies. In the next subsection, we give explicit constants. 

\begin{itemize}
    \item \emph{Identity in the support, including Haar-random gates.}
    If \(I\in\operatorname{supp}\nu_{\rm loc}\), then the stabilizing-element criterion holds with
    \(X_\star=I\).  This includes Haar-random two-qubit gates and any constant-size block whose
    support contains the identity, for example blocks of Haar-random two-qubit gates.

    \item \emph{Controlled-phase, CZ, and CNOT gates interleaved with one-qubit rotations.}
    Consider local ensembles of the form
    \begin{align}
        X
        &=
        (u_1\otimes u_2)\,G\,(v_1\otimes v_2),
        \label{eq:fixed-entangler-local-condition}
    \end{align}
    where \(G\) is fixed and the one-qubit ensembles have support containing the local basis changes
    used below; this is automatic for Haar-random one-qubit rotations.  If \(G\) is a
    controlled-phase gate, then \(G\) commutes with all \(Z\)-type Paulis
    \(Z\otimes I\), \(I\otimes Z\), and \(Z\otimes Z\).  Given any local Pauli \(B\), choose
    \(R\in U(2)\otimes U(2)\) such that \(RBR^\dagger\) is \(Z\)-type, and set
    \(X_\star=R^\dagger G R\).  Then \(X_\star\in\operatorname{supp}\nu_{\rm loc}\), and
    \begin{align}
        X_\star^\dagger B X_\star
        &=
        R^\dagger G^\dagger(RBR^\dagger)G R
         =
        B .
        \label{eq:controlled-phase-local-gate-set-condition}
    \end{align}
    Hence the stabilizing-element criterion holds.  The same argument applies to CNOT, using the
    Pauli directions stabilized by CNOT, for instance \(Z\otimes I\), \(I\otimes X\), and
    \(Z\otimes X\).  Generic controlled-phase gates need not be Clifford; the criterion only uses
    the existence of a Pauli direction stabilized by the fixed entangler, after local basis changes.
    \item \emph{iSWAP/fSim-type gates interleaved with one-qubit rotations.}
These gates are motivated by hardware implementations, including the Google
OTOC experiment, which uses fixed iSWAP-like two-qubit gates interleaved
with random one-qubit gates
~\cite{abanin2025constructiveinterferenceedgequantum,
king2025simplifiedversionquantumotoc2}. Consider the two-qubit block \(X=K_2GK_1GK_0\), where
\(G={\rm iSWAP}\) or \(G={\rm FSim}(\pi/2,\varphi)\), and
\(K_j=u_{j,1}\otimes u_{j,2}\). Assume that the six one-qubit gates
\(u_{j,a}\) are independent and their distributions have full support on
\(\mathrm U(2)\), as for Haar-random one-qubit rotations.
For either choice of \(G\), its square is diagonal: explicitly,
\({\rm iSWAP}^2=Z\otimes Z\) and
\({\rm FSim}(\pi/2,\varphi)^2
=\operatorname{diag}(1,-1,-1,e^{-2i\varphi})\).
Given a two-qubit Pauli \(B\), choose a local unitary \(R\) such that
\(RBR^\dagger\) is \(Z\)-type. The choices
\(K_2=R^\dagger\), \(K_1=I\), and \(K_0=R\) belong to the joint support
and give \(X_\star=R^\dagger G^2R\). Since \(G^2\) commutes with every
\(Z\)-type Pauli,
\begin{align}
    X_\star^\dagger B X_\star
    &=
    R^\dagger(G^2)^\dagger(RBR^\dagger)G^2R
    =
    B .
\end{align}
Thus the block satisfies the stabilizing-element criterion.
For one-qubit distributions without full support, the same argument
applies whenever the required choices of \(K_0,K_1,K_2\) belong to their
joint support.
This establishes the local reverse-variance condition for the two-entangler blocked ensemble. We do not claim here that our main theorem applies directly to a single iSWAP/fSim layer.

 
    \item \emph{Hamiltonian local ensembles.}
Consider \(X(h)=\exp(-i\sum_j h_jP_j)\), where the \(P_j\) are fixed
Hermitian operators and \(h\in\mathbb R^m\) is sampled from a probability
measure with support \(S\). If \(0\in S\), continuity implies
\(I=X(0)\in\operatorname{supp}\nu_{\rm loc}\), so the
identity-in-the-support criterion applies. More generally, the
stabilizing-element criterion holds whenever \(S\) contains
\(h_\star\) such that \(X(h_\star)^\dagger B X(h_\star)=\pm B\).

    \item \emph{General fixed entangling-gate interleaved with one-qubit rotations.} Assume that the one-qubit rotations are sampled independently from fixed distributions with full support on \(\mathrm U(2)\).
    A single block of the form
    \(X=(u_1\otimes u_2)G(v_1\otimes v_2)\), with \(G\) entangling, need not contain an element
    that stabilizes the particular Pauli \(B\).  However, after grouping a constant number of such
    blocks, the open-support criterion applies.  Indeed, exact universality of entangling two-qubit
    gates with arbitrary one-qubit gates implies that, for some finite blocking length depending only
    on \(G\), the support of the blocked ensemble contains a nonempty open set of two-qubit
    unitaries, up to global phases~\cite{brylinski2002universal,bremner2002practical,zhang2003exact}.
    Hence Lemma~\ref{lem:two-local-criteria-reverse-variance} gives a reverse-variance constant
    depending only on the blocked local ensemble and on \(k\), but not on \(n\).

\end{itemize}

Finally, the same argument applies when, as in our proof, a few consecutive individual local gates in
the active backward light cone of \(B\) are grouped into a single effective local ensemble \(X\).  In
this case \(\nu_{\rm loc}\) is the product distribution of the grouped local gates.  Since both the
number of grouped gates and the size of the active patch are bounded independently of \(n\), it is
enough to verify either the stabilizing-element criterion or the open-support criterion for this
grouped ensemble.  For example, for constant-size active patches built from CZ, CNOT, or
controlled-phase local ensembles interleaved with Haar one-qubit rotations, one can choose supported
local gates whose product stabilizes the local Pauli \(B\) up to sign.  Hence the finite-dimensional
reverse-variance argument gives a constant depending only on the grouped local ensemble and on \(k\),
but not on the total number of qubits.

\subsubsection{Explicit constants for product-Haar local ensembles}
\label{app:explicit-product-haar-local-constants}

For the Haar examples in
Section~\ref{app:examples-local-reverse-variance-criteria}, the reverse-variance constant can be
made explicit.  This includes Haar two-qubit gates, as well as fixed two-qubit gates interleaved
with Haar-random one-qubit rotations.  We use the following standard consequence of Schur
orthogonality.

\begin{lemma}[Haar matrix-coefficient reverse variance]
\label{lem:haar-matrix-coefficient-reverse-variance}
Let \(\mathsf G\) be a compact group with normalized Haar measure, let
\(\pi:\mathsf G\to\mathrm U(\mathcal V)\) be a finite-dimensional unitary representation, and let
\(\mathcal M(\pi)\) be the span of the matrix coefficients of \(\pi\).  Then, for every
\(F\in\mathcal M(\pi)\),
\begin{align}
    \Var_{g\sim{\rm Haar}(\mathsf G)}F(g)
    &\ge
    (\dim\mathcal V)^{-2}
    \left|
        \E_g F(g)-F(e)
    \right|^2.
    \label{eq:haar-matrix-coefficient-reverse-variance}
\end{align}
\end{lemma}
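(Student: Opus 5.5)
We decompose the span of matrix coefficients isotypically and use Schur orthogonality. By the Peter--Weyl theorem, \(\pi\) decomposes into irreducibles \(\pi\cong\bigoplus_{\lambda}m_\lambda\lambda\). Hence every \(F\in\mathcal M(\pi)\) can be written as
\[
F(g)=\sum_{\lambda}\Tr[A_\lambda\lambda(g)],
\]
where \(\lambda\) ranges over the distinct irreps occurring in \(\pi\) and each \(A_\lambda\) is an operator on the representation space of \(\lambda\). The trivial irrep contributes the constant \(\E_gF(g)=\Tr A_{\rm triv}\), because Haar averages of nontrivial irreps vanish. Consequently
\[
\E_gF(g)-F(e)=-\sum_{\lambda\neq{\rm triv}}\Tr A_\lambda .
\]

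For the variance, Schur orthogonality gives \(\E_g\overline{\lambda(g)_{ij}}\mu(g)_{kl}=\delta_{\lambda\mu}\delta_{ik}\delta_{jl}/d_\lambda\). Therefore
\[
\Var_gF(g)=\sum_{\lambda\neq{\rm triv}}\frac{\|A_\lambda\|_2^2}{d_\lambda}.
\]

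Next we bound each trace by Cauchy--Schwarz:
\[
|\Tr A_\lambda|\le\sqrt{d_\lambda}\,\|A_\lambda\|_2=d_\lambda\sqrt{\|A_\lambda\|_2^2/d_\lambda}.
\]
Applying Cauchy--Schwarz again to the sum over \(\lambda\) yields
\[
\Big|\sum_{\lambda\neq{\rm triv}}\Tr A_\lambda\Big|^2\le\Big(\sum_{\lambda\neq{\rm triv}}d_\lambda^2\Big)\,\Var_gF(g).
\]

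It remains to control the prefactor. Each distinct irrep \(\lambda\) occurring in \(\pi\) satisfies \(\sum_\lambda d_\lambda\le\dim\mathcal V\), since every such irrep appears at least once in \(\mathcal V\). Hence
\[
\sum_\lambda d_\lambda^2\le\Big(\sum_\lambda d_\lambda\Big)^2\le(\dim\mathcal V)^2,
\]
which gives exactly the constant \((\dim\mathcal V)^{-2}\) claimed in the statement.

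The one delicate point is justifying the expansion of \(F\) in the form \(\sum_\lambda\Tr[A_\lambda\lambda(g)]\) with one operator per distinct irrep. A matrix coefficient \(\langle v,\pi(g)w\rangle\) restricts to the isotypic components of \(\pi\). Within a component \(m_\lambda\lambda\), multiplicity copies merge, because \(\Tr[(B\otimes\lambda(g))]\)-type coefficients on \(\mathbb C^{m_\lambda}\otimes V_\lambda\) reduce to \(\Tr[A\lambda(g)]\) with \(A\) obtained by a partial trace over the multiplicity space. This merging is what makes the bound use distinct irreps, and therefore keeps the constant at \((\dim\mathcal V)^{-2}\) rather than something depending on multiplicities.
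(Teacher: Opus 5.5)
Your proof is correct and follows the paper's argument: both expand the mean-zero part of \(F\) over the nontrivial irreducibles occurring in \(\pi\), use Schur orthogonality to write the variance as a sum of squared coefficients, and apply Cauchy--Schwarz to the value at the identity, giving the factor \(\sum_{\lambda\neq\mathrm{triv}} d_\lambda^2\le(\sum_\lambda d_\lambda)^2\le(\dim\mathcal V)^2\). The only differences are notational: you package each isotypic block into an operator \(A_\lambda\) and apply Cauchy--Schwarz in two stages, whereas the paper works with scalar coefficients \(c^{(\lambda)}_{ij}\) and applies it once.
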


\begin{proof}
Set \(p(g)\coloneqq F(g)-\E_gF(g)\).  Then
\begin{align}
    \Var_g F(g)
    &=
    \E_g|p(g)|^2,
    &
    \E_gF(g)-F(e)
    &=
    -p(e).
    \label{eq:p-variance-identity}
\end{align}
Decompose the representation as
\begin{align}
    \mathcal V
    &\simeq
    \bigoplus_{\lambda\in\Lambda}
    \mathcal V_\lambda\otimes\mathbb C^{m_\lambda},
    &
    \pi(g)
    &\simeq
    \bigoplus_{\lambda\in\Lambda}
    \pi_\lambda(g)\otimes I_{m_\lambda},
\end{align}
where the \(\pi_\lambda\)'s are inequivalent irreducible representations, and write
\(d_\lambda\coloneqq \dim\mathcal V_\lambda\).  The space \(\mathcal M(\pi)\) is spanned by the matrix
coefficients \(u_{ij}^{(\lambda)}(g)\) of the irreducible representations \(\pi_\lambda\) which
appear in \(\pi\).  Since \(p\) has zero Haar average, its trivial component vanishes.  Hence we may
write
\begin{align}
    p(g)
    &=
    \sum_{\lambda\neq{\rm triv}}
    \sum_{i,j=1}^{d_\lambda}
    c_{ij}^{(\lambda)}
    \sqrt{d_\lambda}\,
    u_{ij}^{(\lambda)}(g).
    \label{eq:p-schur-expansion}
\end{align}
By Schur orthogonality, the functions
\(\sqrt{d_\lambda}\,u_{ij}^{(\lambda)}\) are orthonormal in \(L^2(\mathsf G)\).  Therefore
\begin{align}
    \E_g|p(g)|^2
    &=
    \sum_{\lambda\neq{\rm triv}}
    \sum_{i,j=1}^{d_\lambda}
    \left|
        c_{ij}^{(\lambda)}
    \right|^2 .
    \label{eq:p-L2-haar}
\end{align}
On the other hand, \(u_{ij}^{(\lambda)}(e)=\delta_{ij}\), so evaluating
Eq.~\eqref{eq:p-schur-expansion} at the identity gives
\begin{align}
    p(e)
    &=
    \sum_{\lambda\neq{\rm triv}}
    \sum_{i=1}^{d_\lambda}
    c_{ii}^{(\lambda)}
    \sqrt{d_\lambda}.
    \label{eq:p-identity-value}
\end{align}
Applying Cauchy--Schwarz to Eq.~\eqref{eq:p-identity-value},
\begin{align}
    |p(e)|^2
    &\le
    \left(
        \sum_{\lambda\neq{\rm triv}}
        \sum_{i=1}^{d_\lambda}
        \left|
            c_{ii}^{(\lambda)}
        \right|^2
    \right)
    \left(
        \sum_{\lambda\neq{\rm triv}}
        d_\lambda^2
    \right)
    \notag\\
    &\le
    \E_g|p(g)|^2
    \left(
        \sum_{\lambda\in\Lambda}
        d_\lambda
    \right)^2
    \notag\\
    &\le
    \E_g|p(g)|^2
    \left(
        \sum_{\lambda\in\Lambda}
        m_\lambda d_\lambda
    \right)^2
    =
    (\dim\mathcal V)^2\E_g|p(g)|^2.
    \label{eq:pe-bound}
\end{align}
Combining Eq.~\eqref{eq:pe-bound} with Eq.~\eqref{eq:p-variance-identity} gives
\[
    \left|
        \E_gF(g)-F(e)
    \right|^2
    =
    |p(e)|^2
    \le
    (\dim\mathcal V)^2 \Var_gF(g),
\]
which is equivalent to the claimed bound.
\end{proof}

\begin{corollary}[Explicit constants for product-Haar local blocks]
\label{cor:explicit-product-haar-local-blocks}
Let the active local block be parameterized by independent Haar variables
\(h=(h_1,\ldots,h_s)\in H\coloneqq \prod_{a=1}^s \mathrm U(d_a)\), and write the corresponding local
unitary as \(X(h)\).  Suppose that, for every conditional OTOC function \(f_V\), the function
\(F(h)\coloneqq f_V(X(h))\) is a matrix coefficient of
\begin{align}
    \pi(h_1,\ldots,h_s)
    &\coloneqq 
    \bigotimes_{a=1}^s
    \left(
        h_a^{\otimes 2k}
        \otimes
        \overline h_a^{\otimes 2k}
    \right).
    \label{eq:product-haar-representation}
\end{align}
Assume moreover that there exists \(h_\star\in H\) such that
\(X(h_\star)^\dagger B X(h_\star)=\pm B\).  Then
\begin{align}
    \Var_h f_V(X(h))
    &\ge
    \left(
        \prod_{a=1}^s d_a^{-8k}
    \right)
    \left|
        \E_h f_V(X(h))-f_V(I)
    \right|^2 .
    \label{eq:product-haar-local-reverse-variance}
\end{align}
\end{corollary}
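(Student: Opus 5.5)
The plan is to reduce the statement directly to Lemma~\ref{lem:haar-matrix-coefficient-reverse-variance}, applied to the compact group \(\mathsf G=H=\prod_{a=1}^s\mathrm U(d_a)\) with its product Haar measure. This measure is the normalized Haar measure of \(H\), so the independent Haar variables \(h_a\) are exactly a Haar sample from \(H\). For the representation \(\pi\) of Eq.~\eqref{eq:product-haar-representation}, the space is \(\mathcal V=\bigotimes_a(\mathbb C^{d_a})^{\otimes 4k}\). Its dimension is \(\dim\mathcal V=\prod_a d_a^{4k}\), so \((\dim\mathcal V)^{-2}=\prod_a d_a^{-8k}\), which is exactly the constant claimed.

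The one mismatch is the reference point. The lemma compares the mean with the value at the group identity \(e\), that is, with \(f_V(X(e))\). The corollary compares it with \(f_V(I)\). These differ in general, for instance when \(X(h)\) contains a fixed entangler \(G\), so that \(X(e)=G\neq I\).

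To fix this I will translate by \(h_\star\). Define \(F_\star(h)\coloneqq F(h_\star h)\).

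First, \(F_\star\) still lies in \(\mathcal M(\pi)\). If \(F(h)=\Tr[A\,\pi(h)]\) for some operator \(A\) on \(\mathcal V\), then
\[
F_\star(h)=\Tr[A\,\pi(h_\star)\pi(h)]=\Tr[A'\,\pi(h)],\qquad A'\coloneqq A\,\pi(h_\star),
\]
using the homomorphism property of the product representation.

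Second, by left invariance of Haar measure on \(H\), the variable \(h_\star h\) is again Haar distributed. Hence \(\E_hF_\star(h)=\E_hF(h)\) and \(\Var_hF_\star(h)=\Var_hF(h)\).

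Third, \(F_\star(e)=F(h_\star)=f_V(X(h_\star))\). With \(X_\star\coloneqq X(h_\star)\), this equals \(f_V(I)\) by the stabilizing-element computation of Eq.~\eqref{eq:local-gate-set-identity-value}. Since \(X_\star^\dagger B X_\star=\pm B\), every factor \(V^\dagger X_\star^\dagger B X_\star VM\) in \(f_V(X_\star)=\Tr[\rho(V^\dagger X_\star^\dagger BX_\star VM)^{2k}]\) picks up the same sign. That sign appears an even number \(2k\) of times and cancels.

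Applying Lemma~\ref{lem:haar-matrix-coefficient-reverse-variance} to \(F_\star\) then gives
\[
\Var_hf_V(X(h))=\Var_hF_\star(h)\ge(\dim\mathcal V)^{-2}\left|\E_hF_\star(h)-F_\star(e)\right|^2=\Bigl(\prod_a d_a^{-8k}\Bigr)\left|\E_hf_V(X(h))-f_V(I)\right|^2,
\]
which is Eq.~\eqref{eq:product-haar-local-reverse-variance}.

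The only step requiring care is this translation argument, which shows that the matrix-coefficient space is stable under left translation and that the variance and mean are unchanged. Everything else is bookkeeping with the dimension of \(\pi\).

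For the Haar two-qubit case, I will also note that the matrix-coefficient hypothesis holds. Each entry of \((X^\dagger BX)^{\otimes 2k}\) is a product of \(2k\) entries of \(X\) and \(2k\) of \(\overline X\). By Eq.~\eqref{eq:fV-in-FBk}, \(f_V\) is a linear combination of such entries, which are matrix coefficients of \(h^{\otimes 2k}\otimes\overline h^{\otimes 2k}\). In this case \(h_\star=I\) can be taken directly.
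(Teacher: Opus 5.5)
Your proof is correct and matches the paper's argument: translate by \(h_\star\), use Haar invariance to keep the mean and variance unchanged, apply Lemma~\ref{lem:haar-matrix-coefficient-reverse-variance} with \(\dim\mathcal V=\prod_a d_a^{4k}\), and use the even number \(2k\) of sign flips to get \(F(h_\star)=f_V(I)\). You also check explicitly that \(F(h_\star\,\cdot)\) remains in \(\mathcal M(\pi)\) via \(A'=A\,\pi(h_\star)\); the paper leaves this step implicit.
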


\begin{proof}
The representation space in Eq.~\eqref{eq:product-haar-representation} has dimension
\[
    \dim\mathcal V
    =
    \prod_{a=1}^s d_a^{4k}.
\]
Apply Lemma~\ref{lem:haar-matrix-coefficient-reverse-variance} to the shifted matrix coefficient
\(\widetilde F(h)\coloneqq F(h_\star h)\).  By Haar invariance,
\(\Var_h\widetilde F(h)=\Var_hF(h)\) and \(\E_h\widetilde F(h)=\E_hF(h)\), while
\(\widetilde F(e)=F(h_\star)\).  Therefore
\begin{align}
    \Var_h F(h)
    &\ge
    \left(
        \prod_{a=1}^s d_a^{-8k}
    \right)
    \left|
        \E_hF(h)-F(h_\star)
    \right|^2 .
    \label{eq:shifted-product-haar-reverse-variance}
\end{align}
Finally, \(X(h_\star)^\dagger B X(h_\star)=\pm B\), and the sign cancels in the \(2k\)-fold OTOC.
Thus \(F(h_\star)=f_V(I)\), giving Eq.~\eqref{eq:product-haar-local-reverse-variance}.
\end{proof}
We now spell out the constants for the examples above.

\begin{itemize}
    \item \emph{Haar two-qubit gates.}
    If the active block contains \(m\) independent Haar-random two-qubit gates, then
    \(s=m\) and \(d_a=4\) for every \(a\).  Since the identity is in the support, we may take
    \(h_\star=e\).  Hence
    \begin{align}
        \Var f_V
        &\ge
        4^{-8km}
        \left|
            \E f_V-f_V(I)
        \right|^2 .
        \label{eq:two-qubit-haar-explicit-constant}
    \end{align}
    In particular, for one Haar two-qubit gate the constant is \(4^{-8k}=2^{-16k}\).

    \item \emph{Controlled-phase, CZ, and CNOT with Haar one-qubit rotations.}
    Consider
    \[
        X=(u_1\otimes u_2)\,G\,(v_1\otimes v_2),
        \qquad
        u_i,v_i\sim{\rm Haar}(\mathrm U(2)),
    \]
    with \(G\) a controlled-phase gate, CZ, or CNOT.  As in
    Section~\ref{app:examples-local-reverse-variance-criteria}, there is a choice
    \(h_\star=(u_{1,\star},u_{2,\star},v_{1,\star},v_{2,\star})\) such that
    \(X(h_\star)^\dagger B X(h_\star)=\pm B\).  Here there are \(s=4\) independent one-qubit Haar
    variables, all with \(d_a=2\).  Corollary~\ref{cor:explicit-product-haar-local-blocks} gives
    \begin{align}
        \Var f_V
        &\ge
        2^{-32k}
        \left|
            \E f_V-f_V(I)
        \right|^2 .
        \label{eq:controlled-entangler-explicit-constant}
    \end{align}

    \item \emph{iSWAP/fSim blocks with Haar one-qubit layers.}
    Consider a two-entangler block
    \[
        X=K_2G K_1G K_0,
    \]
    where \(G={\rm iSWAP}\) or \(G={\rm FSim}(\pi/2,\varphi)\), and each \(K_j\) is an independent
    Haar-random one-qubit layer on the two sites.  Each layer contributes two independent
    \(\mathrm U(2)\) variables, so \(s=6\) and \(d_a=2\).  Since the blocked ensemble contains a
    stabilizing element for \(B\), as explained in
    Section~\ref{app:examples-local-reverse-variance-criteria}, we obtain
    \begin{align}
        \Var f_V
        &\ge
        2^{-48k}
        \left|
            \E f_V-f_V(I)
        \right|^2 .
        \label{eq:iswap-fsim-explicit-constant}
    \end{align}

    \item \emph{General fixed-gate blocks with Haar one-qubit layers and a stabilizing element.}
    More generally, suppose a constant-size two-qubit block contains \(L\) fixed two-qubit gates and
    \(L+1\) independent Haar-random one-qubit layers, and suppose the block satisfies the
    stabilizing-element criterion for the relevant local Pauli \(B\).  Then \(s=2(L+1)\), all
    \(d_a=2\), and
    \begin{align}
        \Var f_V
        &\ge
        2^{-16k(L+1)}
        \left|
            \E f_V-f_V(I)
        \right|^2 .
        \label{eq:general-fixed-gate-block-explicit-constant}
    \end{align}
\end{itemize}

The explicit constants above apply to product-Haar examples verified by the stabilizing-element
criterion.  For the generic fixed-entangler examples covered by the open-support criterion, the
finite-dimensional argument still gives a constant depending only on the blocked local ensemble and
on \(k\), but we do not use the product-Haar formula above to make it explicit.

\subsection{Forward persistence of variance}
\label{subsec:forward-persistence}

We now show that the local reverse-variance bound has a useful consequence:
adding one more random layer cannot erase existing circuit-to-circuit
fluctuations by more than a constant factor.

Write
\(F_d:=\mathrm{OTOC}^{(k)}_\rho(U_d)\) and
\(\Var Z:=\E|Z-\E Z|^2\).
After conditioning on the depth-\(d\) circuit \(U_d=V\), the only remaining
randomness in \(F_{d+1}\) comes from the last layer. If \(X\) denotes the effective local unitary generated by the gates in the
active part of the last layer, then the notation of the previous subsection gives \(F_{d+1}=f_V(X)\),
\(\E[F_{d+1}\mid U_d=V]=\E_X f_V(X)\), and
\(F_d(V)=f_V(I)\). Thus the local reverse-variance bound is equivalently
\begin{align}
    \Var(F_{d+1}\mid U_d)
    &\ge
    \eta\,
    \left|
        \E[F_{d+1}\mid U_d]-F_d
    \right|^2 ,
    \label{eq:one-step-local-reverse-variance}
\end{align}
for some constant \(\eta>0\) independent of the system size.

This condition has a simple interpretation. If averaging over the new layer
changes the OTOC substantially relative to \(F_d\), then the new layer must
itself generate fluctuations. Consequently, a pre-existing variance cannot
disappear in a single step.

\begin{lemma}[Forward persistence of variance]
\label{lem:forward-persistence-one-layer}
Suppose Eq.~\eqref{eq:one-step-local-reverse-variance} holds uniformly at
every depth with some \(\eta>0\). Then, for every \(d\),
\begin{align}
    \Var(F_{d+1})
    &\ge
    \kappa\,\Var(F_d),
    \qquad
    \kappa:=\frac{\eta}{1+\eta}>0.
    \label{eq:one-step-forward-persistence}
\end{align}
\end{lemma}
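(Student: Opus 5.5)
The plan is to use the law of total variance conditioned on \(U_d\), and then split into two cases according to whether the conditional mean stays close to \(F_d\) or moves away from it. Set \(G\coloneqq\E[F_{d+1}\mid U_d]\). The law of total variance, Eq.~\eqref{eq:law-total-variance-complex}, gives
\begin{align}
    \Var(F_{d+1})
    &=
    \E\,\Var(F_{d+1}\mid U_d)+\Var(G).
\end{align}
By Eq.~\eqref{eq:one-step-local-reverse-variance}, the first term is at least \(\eta\,\E|G-F_d|^2\).

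It therefore remains to show
\begin{align}
    \eta\,\E|G-F_d|^2+\Var(G)\ge\kappa\Var(F_d).
\end{align}
Both \(G\) and \(F_d\) are functions of \(U_d\). Write \(\tilde F\coloneqq F_d-\E F_d\) and \(\tilde G\coloneqq G-\E G\). Since subtracting a constant does not increase the second moment, \(\E|G-F_d|^2\ge\E|\tilde G-\tilde F|^2\). In \(L^2\) notation, the triangle inequality gives \(\|\tilde F\|\le\|\tilde F-\tilde G\|+\|\tilde G\|\). Denote \(a\coloneqq\|\tilde F-\tilde G\|\) and \(b\coloneqq\|\tilde G\|\). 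Then \(\Var(F_d)\le(a+b)^2\), and the left-hand side above is at least \(\eta a^2+b^2\).

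The only step needing care is the elementary inequality
\begin{align}
    (a+b)^2\le\Bigl(1+\tfrac1\eta\Bigr)\bigl(\eta a^2+b^2\bigr).
\end{align}
It follows from Cauchy--Schwarz applied to the vectors \((\sqrt\eta\,a,\,b)\) and \((1/\sqrt\eta,\,1)\). Rearranging yields
\begin{align}
    \eta a^2+b^2\ge\frac{\eta}{1+\eta}(a+b)^2\ge\kappa\Var(F_d),
\end{align}
which is the claim.

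I expect no real obstacle. The one point to check is that the reverse-variance bound is applied pointwise in \(V\) and then averaged; this is legitimate because Eq.~\eqref{eq:one-step-local-reverse-variance} holds uniformly with an \(n\)-independent \(\eta\). It also uses the identification \(F_d(V)=f_V(I)\), which holds because replacing the active part of the last layer by the identity gives back the depth-\(d\) circuit.
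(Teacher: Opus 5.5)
Your proof is correct and follows the paper's route: the law of total variance, the pointwise reverse-variance bound averaged over \(U_d\), and an elementary quadratic inequality. The only difference is cosmetic: you center and combine the \(L^2\) triangle inequality with Cauchy--Schwarz, whereas the paper completes the square pointwise, \(\eta|m-x|^2+|m-c|^2\ge\frac{\eta}{1+\eta}|x-c|^2\), and then uses \(\E|F_d-c|^2\ge\Var(F_d)\); the case split announced in your first sentence is never used and can be dropped.
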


\begin{proof}
Let \(m:=\E[F_{d+1}\mid U_d]\) be the conditional mean after averaging
over the last layer, and let \(c:=\E F_{d+1}\). By the law of total variance,
\begin{align}
    \Var(F_{d+1})
    &=
    \E\Var(F_{d+1}\mid U_d)
    +
    \E|m-c|^2.
    \label{eq:forward-persistence-total-variance}
\end{align}
The reverse-variance assumption gives, for every conditioned circuit \(U_d\),
\(\Var(F_{d+1}\mid U_d)\ge\eta|m-F_d|^2\). Averaging this inequality and
substituting it above yields
\begin{align}
    \Var(F_{d+1})
    &\ge
    \E\!\left[
        \eta|m-F_d|^2+|m-c|^2
    \right].
    \label{eq:forward-persistence-two-errors}
\end{align}

We now eliminate the intermediate quantity \(m\). For any complex numbers
\(m,x,c\), completing the square gives
\begin{align}
    \eta|m-x|^2+|m-c|^2
    &=
    (1+\eta)
    \left|
        m-\frac{\eta x+c}{1+\eta}
    \right|^2
    +
    \frac{\eta}{1+\eta}|x-c|^2
    \nonumber\\
    &\ge
    \frac{\eta}{1+\eta}|x-c|^2.
    \label{eq:forward-persistence-completing-square}
\end{align}
Applying this pointwise with \(x=F_d\) in
Eq.~\eqref{eq:forward-persistence-two-errors} gives
\(\Var(F_{d+1})\ge\kappa\,\E|F_d-c|^2\), where
\(\kappa=\eta/(1+\eta)\).

Finally, the mean \(\E F_d\) minimizes the mean squared distance from
\(F_d\) to a constant. Explicitly,
\(\E|F_d-c|^2
=\Var(F_d)+|\E F_d-c|^2\ge\Var(F_d)\).
Therefore
\(\Var(F_{d+1})\ge\kappa\,\Var(F_d)\), as claimed.
\end{proof}

\subsection{Putting the pieces together: large OTOC fluctuations in the transition depth window}
\label{app:putting-pieces-together-transition-window}

With the ingredients established in the previous subsections, the same mechanism as the one-dimensional Haar-brickwork
\(\mathrm{OTOC}^{(2)}\) result of Theorem~\ref{thm:one-dimensional-haar-brickwork-otoc2} can be established for \(\mathrm{OTOC}^{(k)}\) as well in more generality architecture and ensembles.  In
particular, any bounded-degree local architecture whose random-circuit ensemble reaches the relevant
fixed-order OTOC-test moment control within polynomial depth, and whose local gate ensemble satisfies
the reverse-variance condition, exhibits inverse-polynomial \(\mathrm{OTOC}^{(k)}\) fluctuations on
an interval of depths inside the transition window.  The required OTOC-test moment control can be
certified using the design-convergence criteria discussed in
Section~\ref{app:moment-control-design-depth}, while the reverse-variance input is verified for the
local gate ensembles described in Section~\ref{app:local-reverse-variance}.

For completeness, we now state the result in this general form and briefly repeat the proof,
emphasizing how the ingredients are combined.
 
\begin{theorem}[Fixed-order OTOC fluctuations for bounded-degree random circuits]
\label{thm:fixed-k-transition-window-fluctuation-bound}
Fix \(k\ge1\).  Let \(\mathcal A\) be a bounded-degree local random-circuit architecture whose
ensembles reach \((k,1/4)\)-OTOC-test moment control within \(\operatorname{poly}(n)\) depth
(see Section~\ref{app:moment-control-design-depth} for examples), and assume that every one-layer
active patch satisfies the local reverse-variance condition with an \(n\)-independent constant
(see Section~\ref{app:local-reverse-variance} for examples).

Then, for every fixed integer \(R\ge0\), the following holds.  For all sufficiently large \(n\), for
every fixed \(n\)-qubit state \(\rho\) and local Pauli observables \(B,M\), there exists an interval
\(I\) of \(R+1\) consecutive depths such that
\begin{align}
    I
    &\subseteq
    \{d_{\rm lc}+1,\ldots,d_{\rm mc}+R\},
\end{align}
where \(d_{\rm lc}\) is the last pre-light-cone depth and \(d_{\rm mc}\) is a polynomial-depth
OTOC-test moment-control depth.  Moreover, for every \(d\in I\),
\begin{align}
    \Var_{U_d}
    \!\left(
        \mathrm{OTOC}^{(k)}_\rho(U_d)
    \right)
    &=
    \Omega
    \!\left(
        \frac{1}{\operatorname{poly}(n)}
    \right).
\end{align}
\end{theorem}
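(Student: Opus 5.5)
The plan follows the template of Theorem~\ref{thm:one-dimensional-haar-brickwork-otoc2}, with the one-dimensional Haar-specific inputs replaced by the two abstract hypotheses. Write \(F_d\coloneqq\mathrm{OTOC}^{(k)}_\rho(U_d)\) and \(\mu_d\coloneqq\E F_d\).

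\emph{Step 1: the endpoints of the mean profile.} By Lemma~\ref{lem:pre-light-cone-identity}, \(\mu_{d_{\rm lc}}=1\). Let \(d_{\rm mc}=\operatorname{poly}(n)\) be a depth at which \((k,1/4)\)-OTOC-test moment control holds. Then Proposition~\ref{prop:haar-otock-small} gives
\[
|\mu_{d_{\rm mc}}-1|\ge|1-h_k|-\tfrac14\ge\tfrac12-\tfrac14=\tfrac14
\]
for all sufficiently large \(n\), where \(h_k\) is the Haar value. This lower bound is uniform in \(\rho\), since the Haar mean is state-independent. We need \(d_{\rm mc}>d_{\rm lc}\). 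This holds because moment control is impossible at depths \(\le d_{\rm lc}\): there the mean is exactly \(1\), which differs from \(h_k\) by at least \(1/2\).

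\emph{Step 2: locate a steep step.} Telescope over \(W\coloneqq d_{\rm mc}-d_{\rm lc}\le\operatorname{poly}(n)\) steps. This gives some \(d_\star\in\{d_{\rm lc}+1,\ldots,d_{\rm mc}\}\) with
\[
|\mu_{d_\star}-\mu_{d_\star-1}|\ge\frac{1}{4W}.
\]

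\emph{Step 3: convert slope to variance.} Condition on \(U_{d_\star-1}=V\) and expose layer \(L_{d_\star}\). By the light-cone reasoning of Lemma~\ref{lem:otoc-causal-reductions}, only the active patch \(X\) (the gates of that layer meeting \(\operatorname{supp}(B)\)) matters, so that \(F_{d_\star}=f_V(X)\) and \(f_V(I)=F_{d_\star-1}(V)\). Bounded degree and constant support of \(B\) make this patch of size \(O(1)\). The assumed reverse-variance bound, with \(n\)-independent \(\eta\) as in Lemma~\ref{lem:two-local-criteria-reverse-variance} or Corollary~\ref{cor:explicit-product-haar-local-blocks}, then combines with the law of total variance and Jensen's inequality, exactly as in Eq.~\eqref{eq:otoc2-proof-slope-to-variance-at-dstar}, to give
\[
\Var(F_{d_\star})\ge\eta\,|\mu_{d_\star}-\mu_{d_\star-1}|^2\ge\frac{\eta}{16W^2}.
\]

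\emph{Step 4: propagate forward.} Lemma~\ref{lem:forward-persistence-one-layer} applies at every depth, since the reverse-variance hypothesis holds for every one-layer active patch. It gives \(\Var(F_{d_\star+r})\ge\kappa^R\eta/(16W^2)\) for all \(r\le R\), with \(\kappa=\eta/(1+\eta)\). This is \(\Omega(1/\operatorname{poly}(n))\) for fixed \(R\). Finally, \(I=\{d_\star,\ldots,d_\star+R\}\subseteq\{d_{\rm lc}+1,\ldots,d_{\rm mc}+R\}\).

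\emph{Main obstacle.} The delicate point is Step 3 in full generality: the active part of the layer must genuinely be a constant-size block with an \(n\)-independent reverse-variance constant. The active set depends on the support of \(B\), not of \(U^\dagger BU\), because the new layer is adjacent to \(B\) in the Heisenberg ordering. I would check carefully that this ordering convention makes \(f_V(I)=F_{d_\star-1}\) exact. I would also check that, for architectures where the local criterion holds only after blocking a constant number of layers, one may replace "layer" by "block of \(O(1)\) layers" throughout. That replacement only inflates \(W\) and \(R\) by constant factors.
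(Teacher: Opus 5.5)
Your proposal is correct and follows the paper's proof essentially step for step: $\mu_{d_{\rm lc}}=1$ before the light cone, an order-one gap to the Haar value via moment control, telescoping to a single steep layer, the local reverse-variance bound combined with the law of total variance and Jensen, and then Lemma~\ref{lem:forward-persistence-one-layer}. The differences are cosmetic: you use $|1-h_k|\ge 1/2$ instead of $|h_k|\le 1/4$, so your constants are $1/(4W)$ and $\eta/(16W^2)$ rather than $1/(2W)$ and $\eta/(4W^2)$, and you verify $d_{\rm mc}>d_{\rm lc}$ explicitly where the paper leaves it implicit. One caution on your closing aside, which lies outside the theorem's one-layer hypothesis: with $c$-layer blocks the persistence lemma only gives $\Var(F_{d+c})\ge\kappa\Var(F_d)$, so you would get depths spaced by $c$ rather than $R+1$ consecutive depths unless partial blocks also satisfy a reverse-variance bound.
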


\begin{proof}
Let \(d_{\rm lc}\) be the last pre-light-cone depth, and let \(d_{\rm mc}\) be a depth at which
\((k,1/4)\)-OTOC-test moment control holds.  Write
\(F_d\coloneqq \mathrm{OTOC}^{(k)}_\rho(U_d)\), \(\mu_d\coloneqq \E F_d\), and
\(W\coloneqq d_{\rm mc}-d_{\rm lc}\).  By assumption, \(W\le p(n)\) for some polynomial \(p\).
Lemma~\ref{lem:pre-light-cone-identity} gives \(\mu_{d_{\rm lc}}=1\).  Let
\(h_{\rho,k}\coloneqq \E_{U\sim{\rm Haar}}\mathrm{OTOC}^{(k)}_\rho(U)\).  By \((k,1/4)\)-OTOC-test
moment control, \(|\mu_{d_{\rm mc}}-h_{\rho,k}|\le1/4\), while
Proposition~\ref{prop:haar-otock-small} gives \(|h_{\rho,k}|\le1/4\) for fixed \(k\) and all
sufficiently large \(n\).  Hence
\begin{align}
    |\mu_{d_{\rm mc}}-\mu_{d_{\rm lc}}|
    &=
    |\mu_{d_{\rm mc}}-1|
    \ge
    |1-h_{\rho,k}|-|\mu_{d_{\rm mc}}-h_{\rho,k}|
    \ge
    1-\frac14-\frac14
    =
    \frac12 .
    \label{eq:fixed-k-proof-order-one-mean-change}
\end{align}

By telescoping and the triangle inequality,
\begin{align}
    \frac12
    &\le
    |\mu_{d_{\rm mc}}-\mu_{d_{\rm lc}}|
    =
    \left|
        \sum_{d=d_{\rm lc}+1}^{d_{\rm mc}}
        (\mu_d-\mu_{d-1})
    \right|
    \le
    W
    \max_{d\in\{d_{\rm lc}+1,\ldots,d_{\rm mc}\}}
    |\mu_d-\mu_{d-1}| .
\end{align}
Therefore there exists \(d_\star\in\{d_{\rm lc}+1,\ldots,d_{\rm mc}\}\) such that
\begin{align}
    |\mu_{d_\star}-\mu_{d_\star-1}|
    &\ge
    \frac{1}{2W}.
    \label{eq:fixed-k-proof-large-increment}
\end{align}

Condition on \(U_{d_\star-1}=V\), expose the active one-layer patch \(X\), and set
\(f_V(X)\coloneqq \mathrm{OTOC}^{(k)}_\rho(L_{d_\star}(X)V)\).  As in the proof of
Theorem~\ref{thm:one-dimensional-haar-brickwork-otoc2}, replacing the active patch by the identity
gives \(f_V(I)=F_{d_\star-1}(V)\), while averaging over \(X\) gives
\(\E_X f_V(X)=\E[F_{d_\star}\mid U_{d_\star-1}=V]\).  Let \(\eta>0\) be the
\(n\)-independent local reverse-variance constant.  The local reverse-variance estimate, the law of
total variance, and Jensen's inequality give
\begin{align}
    \Var(F_{d_\star})
    &\ge
    \eta
    |\mu_{d_\star}-\mu_{d_\star-1}|^2
    \ge
    \frac{\eta}{4W^2}.
    \label{eq:fixed-k-proof-first-variance-depth}
\end{align}

It remains to propagate this variance forward. By Lemma~\ref{lem:forward-persistence-one-layer}, with
\(\kappa:=\eta/(1+\eta)\), we have
\(\Var(F_{d+1})\ge\kappa\Var(F_d)\) for every depth \(d\).  Iterating from \(d_\star\), for every
\(r=0,\ldots,R\),
\begin{align}
    \Var(F_{d_\star+r})
    &\ge
    \kappa^r\Var(F_{d_\star})
    \ge
    \frac{\eta\,\kappa^r}{4W^2}.
    \label{eq:fixed-k-proof-pointwise-interval}
\end{align}
Set \(I\coloneqq \{d_\star,d_\star+1,\ldots,d_\star+R\}\).  Then
\(I\subseteq\{d_{\rm lc}+1,\ldots,d_{\rm mc}+R\}\).  Since \(0<\kappa<1\), \(r\le R\), and
\(W\le p(n)\), Eq.~\eqref{eq:fixed-k-proof-pointwise-interval} gives, for every \(d\in I\),
\begin{align}
    \Var(F_d)
    &\ge
    \frac{\eta\,\kappa^R}{4W^2}
    \ge
    \frac{\eta\,\kappa^R}{4p(n)^2}.
\end{align}
For fixed \(k,R\) and fixed architecture, the prefactor is independent of \(n\).  Since \(p\) is a
polynomial, this is an inverse-polynomial lower bound.
\end{proof}

\begin{remark}[From the averaged OTOC profile to fluctuations]
\label{rem:variance-bound-mean-slope}
The argument above reveals a more general principle: a sufficiently rapid
change of the ensemble-averaged OTOC necessarily produces
circuit-to-circuit fluctuations.

Indeed, write
\(F_d\coloneqq\mathrm{OTOC}^{(k)}_\rho(U_d)\) and
\(\mu_d\coloneqq\E F_d\), and suppose that, for some depths \(a<b\),
\begin{align}
    |\mu_b-\mu_a|
    &\ge
    \Delta,
    \qquad
    W\coloneqq b-a .
\end{align}
By telescoping, there must exist
\(d_\star\in\{a+1,\ldots,b\}\) for which
\(|\mu_{d_\star}-\mu_{d_\star-1}|\ge\Delta/W\).
The local reverse-variance bound then converts this change of the mean
into fluctuations,
\begin{align}
    \Var(F_{d_\star})
    &\ge
    \eta\,\frac{\Delta^2}{W^2}.
\end{align}
Moreover, forward persistence implies that, for every fixed \(R\ge0\),
the same inverse-polynomial scaling, up to an \(n\)-independent constant
factor, holds throughout
\(\{d_\star,\ldots,d_\star+R\}\).

Thus, the use of Haar convergence in
Theorem~\ref{thm:fixed-k-transition-window-fluctuation-bound} is not
essential to the mechanism. It provides a convenient way of certifying
an order-one change of the averaged OTOC between the pre-light-cone
regime and a later depth. If this change occurs over \(W\) layers, the
argument certifies fluctuations of order at least \(W^{-2}\).

More detailed information about the transition profile can therefore
lead directly to stronger fluctuation bounds. In particular, if an
order-one change of the averaged OTOC can be localized to a narrower
depth window, then the same argument yields a correspondingly larger
variance lower bound. We will see precisely this improvement in the next
section for \(\mathrm{OTOC}^{(1)}\) in one-dimensional Haar brickwork circuits.
\end{remark}

\newpage

\section{\texorpdfstring{\(\mathrm{OTOC}^{(1)}\)}{OTOC(1)} in one-dimensional Haar brickwork circuit}
\label{subsec:endpoint-otoc1-setup}

We now specialize to \texorpdfstring{\(\mathrm{OTOC}^{(1)}\)}{OTOC(1)} in a one-dimensional chain of \(n\) qubits, with \(n\) even. The
depth-\(d\) circuit is \(U_d=L_1L_2\cdots L_d\), with \(d\) even. Odd layers act on
\((1,2),(3,4),\ldots,(n-1,n)\), and even layers act on
\((2,3),(4,5),\ldots,(n-2,n-1)\). All two-qubit gates are independent
Haar-random gates in \(U(4)\).

We take the butterfly operator to be \(Z_1\) and the measurement operator to be \(Z_n\). Since qubits \(1\) and \(n\) are the endpoints of the one-dimensional chain of \(n\) qubits, we call the resulting OTOC the \emph{endpoint \texorpdfstring{\(\mathrm{OTOC}^{(1)}\)}{OTOC(1)}}. Thus the endpoint \texorpdfstring{\(\mathrm{OTOC}^{(1)}\)}{OTOC(1)} takes the form
\begin{equation}
\label{eq:endpoint-otoc1-def}
    \mathrm{OTOC}^{(1)}_\rho(U_d)
    \coloneqq
    \Tr\!\left[
        \rho\,
        \bigl(U_d^\dagger Z_1U_d\,Z_n\bigr)^2
    \right].
\end{equation}
Below, we find an exact expression for the averaged value of this OTOC, namely~\(\E_{U_d}\mathrm{OTOC}^{(1)}_\rho(U_d)\).

\subsection{Exact formula for the averaged endpoint \texorpdfstring{\(\mathrm{OTOC}^{(1)}\)}{OTOC(1)}}
\label{subsec:endpoint-otoc1-exact-formula}
In Lemma~\ref{thm:endpoint-otoc1-exact-formula} below, we record an exact formula for the averaged endpoint \(\mathrm{OTOC}^{(1)}\). The starting point of the proof is the operator-spreading picture for
Haar-random local circuits developed in the seminal works
\cite{nahum2018operator,keyserlingk2018operator}. In this picture, after
averaging over the Haar gates, the squared Pauli weights of a Heisenberg-evolved
local operator evolve according to a classical Markov process~\cite{BensaZnidaric2022}. For the endpoint
OTOC considered here, the averaged value is determined by the distribution of
the right endpoint of the evolved Pauli string, namely the largest site \(i\)
on which the Pauli string is nontrivial. Specifically, we will show that this
endpoint evolves as a biased random walk with open-boundary effects. In the
language of Refs.~\cite{nahum2018operator,keyserlingk2018operator}, the drift
of this walk gives the butterfly velocity \(v_B=3/5\).

We exploit this structure to solve the finite-chain Markov chain analytically.
This gives an exact formula for the averaged endpoint
\(\mathrm{OTOC}^{(1)}\) for every finite \(n\), going beyond the asymptotic
front picture of Refs.~\cite{nahum2018operator,keyserlingk2018operator}. This
exact expression will be important for the sensitivity and variance analysis
below.

\begin{lemma}[Exact averaged endpoint \(\mathrm{OTOC}^{(1)}\)]
\label{thm:endpoint-otoc1-exact-formula}
Assume that \(n\ge2\) and \(d\ge0\) are even. Then, for every density matrix
\(\rho\),
\begin{equation}
\label{eq:endpoint-otoc1-exact-formula}
\E_{U_d}\mathrm{OTOC}^{(1)}_\rho(U_d)
=
1-
\sum_{j=0}^{\left\lfloor\frac{d-n}{2n}\right\rfloor}
4^{-nj}\,
\Psi_d\!\left(
\frac{d-n}{2}-nj
\right),
\end{equation}
where  
\begin{equation}
\label{eq:Psi-d-def}
\Psi_d(r)
\coloneqq
\Pr(X_d\le r)
+
\frac23\,\Pr(X_d=r)
-
\sum_{s=1}^{r}16^{-s}\Pr(X_d=r-s)
\end{equation}
with \(X_d\sim\operatorname{Bin}(d,1/5)\)---that is,
\(\Pr(X_d=m)=\binom dm(1/5)^m(4/5)^{d-m}\) and
\(\Pr(X_d\le r)=\sum_{m=0}^{r}\binom dm(1/5)^m(4/5)^{d-m}\).
In particular, if \(d<n\), then
\(\E_{U_d}\mathrm{OTOC}^{(1)}_\rho(U_d)=1\).
\end{lemma}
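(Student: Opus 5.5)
By Lemma~\ref{lem:state-independence-higher-otoc}, the Haar brickwork ensemble is right-Pauli invariant, so the averaged OTOC does not depend on \(\rho\). I will therefore work with \(\rho=2^{-n}I\). Write \(A_d=U_d^\dagger Z_1U_d=\sum_{P}a_P P\), with real coefficients satisfying \(\sum_P a_P^2=1\). Then
\[
2^{-n}\Tr[(A_dZ_n)^2]=\sum_P a_P^2\,\chi_P ,
\]
where \(\chi_P=+1\) if \(P\) commutes with \(Z_n\) and \(\chi_P=-1\) otherwise. Consequently
\[
\E\,\mathrm{OTOC}^{(1)}=1-2\Pr(P_n\in\{X,Y\}),
\]
where \(P\) is drawn from the averaged weight distribution \(w_P=\E a_P^2\).

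The first structural input is the standard operator-spreading fact of \cite{nahum2018operator,keyserlingk2018operator}. Under a Haar two-qubit gate, \(w\) evolves as a Markov chain: identity on the pair stays identity, and any nonidentity two-site Pauli is mapped uniformly onto the 15 nonidentity ones. The only statistic I need is the right endpoint \(R\), the largest nontrivial site. Conditioned on the Pauli on the last gate's pair, the single-site marginals are uniform, so \(\Pr(P_n\in\{X,Y\})=\tfrac23\Pr(R=n)\) up to a harmless correction at the last layer. The task therefore reduces to computing \(\Pr(R_d=n)\).

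The endpoint dynamics are as follows. If \(R\) is the left member of an active pair \((r,r+1)\), it jumps to \(r+1\) with probability \(12/15=4/5\) and stays with probability \(1/5\). If \(R\) is the right member of \((r-1,r)\), it stays with probability \(4/5\) and moves to \(r-1\) with probability \(1/5\). Because the brickwork alternates layers, after a right jump the endpoint is again a left member. After a stay it becomes a right member, which produces a back-step or a further stay. Counting the \(1/5\)-events in \(d\) layers, I expect the endpoint position to be an affine function of \(X_d\sim\mathrm{Bin}(d,1/5)\), namely \(R_d=1+d-2X_d\) in the bulk. This gives drift \(3/5\). The first hitting condition \(R_d\ge n\) then reads \(X_d\le (d-n)/2\), which produces the leading term \(\Pr(X_d\le r)\) with \(r=(d-n)/2\). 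The case \(d<n\) gives an empty sum, so the mean equals \(1\), consistent with Lemma~\ref{lem:pre-light-cone-identity}.

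The main obstacle is the exact finite-chain correction coming from the boundary at site \(n\), together with the parity and last-layer bookkeeping. At \(n\) the endpoint can only be a right member, so it sits there and leaves with probability \(1/5\) per visit. This gives a partially reflecting boundary. The \(\tfrac23\Pr(X_d=r)\) term and the sum \(\sum_s 16^{-s}\Pr(X_d=r-s)\) should come out of it: a geometric factor \(16^{-s}\) arises from requiring \(s\) consecutive excursions that return to the boundary. The factors \(4^{-nj}\) I expect to come from repeated crossings of the whole chain, or from image terms of period \(2n\) in \(r\), which are exponentially suppressed.

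To handle this, I would first try generating functions. I would write the endpoint chain on the two-layer period as a transfer matrix, and solve the boundary equation at \(n\) with a geometric ansatz. This yields a method-of-images expansion in which the image shifted by \(nj\) carries weight \(4^{-nj}\). I would then expand back into binomial probabilities of \(X_d\) and match with \(\Psi_d\). Finally, I would check small cases (\(d=n\) and \(d=n+2\)) directly to fix the constants.
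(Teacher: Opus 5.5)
\textbf{Verdict: genuine gap.} Your reductions are sound and coincide with the paper's first two steps:
\begin{itemize}
\item state independence lets you take $\rho=2^{-n}I$;
\item $\E\,\mathrm{OTOC}^{(1)}=1-2\Pr(P_n\in\{X,Y\})$;
\item the factor $\tfrac23$ holds exactly, with no last-layer correction, because for even $d$ the last gate touching site $n$ is the Haar gate on $(n-1,n)$ in layer $d-1$;
\item the endpoint performs the Haar walk with rates $4/5$ and $1/5$.
\end{itemize}

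The gap is that the lemma \emph{is} the exact finite-$n$ formula, and the proposal never derives it. The $\tfrac23\Pr(X_d=r)$ term, the $16^{-s}$ sum and the $4^{-nj}$ images are only anticipated (``should come out of it''). Checking $d=n$ and $d=n+2$ cannot fix constants in an identity claimed for all even $n,d$.

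The picture on which you would build the images is also not exact.
\begin{itemize}
\item \emph{Bulk relation.} The correct relation is $R_d=1+d-2X_d+\chi_d$, where $\chi_d\in\{0,1\}$ indicates that layer $d$ produced a $1/5$ event. Already at $d=n$ one has $\Pr(R_n=n)=(4/5)^{n-1}$, while your hitting count $X_n\le 0$ gives $(4/5)^n$.
\item \emph{Left boundary.} This is the more serious problem: the left boundary is not a crossings-only effect. The walk starts at site $1$, and the reflection there already changes the $j=0$ term. For $n=4$, $d=6$, where only $j=0$ is present, the true value is $\Pr(R_6=4)=\tfrac45\cdot\tfrac{528}{625}$. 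A walk that is free to the left and modified only at site $n$ gives $\tfrac45\cdot\tfrac{512}{625}$ instead.
\end{itemize}
Hence an images construction based on ``the boundary equation at $n$'' alone cannot reproduce $\Psi_d$.

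\textbf{How to close the gap.} You need to carry out the two-layer transfer-matrix computation you mention, with both boundaries included; this is what the paper does.
\begin{enumerate}
\item Observe the endpoint only after odd layers, in the cells $\mathcal D_a=\{2a-1,2a\}$. Each odd Haar layer re-randomizes the position inside the current cell: right site with probability $4/5$, independently of the history. So the sublattice variable drops out, and the cell index is a Markov chain with the tridiagonal matrix $Q$ of Eq.~\eqref{eq:otoc1-Q}. Its bulk columns are $(1,8,16)/25$ and its boundary columns are $(9,16)/25$ and $(1,24)/25$.
\item Since the final even layer does not touch site $n$, this gives exactly $1-\E\,\mathrm{OTOC}^{(1)}=\tfrac{16}{15}\,e_{n/2}^{\mathsf T}Q^{d/2-1}e_1$.
\item The cofactor formula gives $e_{n/2}^{\mathsf T}(I-zQ)^{-1}e_1=(16z/25)^{n/2-1}/\det(I-zQ)$.
\item The substitution $z=25r/(4(1+r)^2)$ turns the characteristic roots of the minor recursion into $(1+r)^{-2}$ and $r^2(1+r)^{-2}$. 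It yields $\det(I-zQ)=(4-r)(1-4r)(1-r^n)/[4(1-r)(1+r)^{n+1}]$, in which both boundary rows enter.
\item Cauchy's formula converts $[z^{d/2-1}]$ into an explicit multiple of the coefficient $[r^{(d-n)/2}]$ of $4(1-r)^2(1+r)^d/[(4-r)(1-4r)(1-r^n)]$.
\item Expanding $1/(1-r^n)=\sum_j r^{nj}$ produces the $j$-sum with weights $4^{-nj}$.
\item The partial fraction $4(1-r)^2/[(4-r)(1-4r)]=1+\tfrac35\sum_{\ell\ge1}(4^\ell-4^{-\ell})r^\ell$, together with $\Pr(X_d=d-m)=4^{2m-d}\Pr(X_d=m)$, produces the $\tfrac23\Pr(X_d=r)$ term and the $16^{-s}$ sum.
\end{enumerate}
With this computation, none of the terms has to be guessed or fitted.
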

\begin{proof}
By Lemma~\ref{lem:state-independence-higher-otoc}, the average is independent
of \(\rho\), and therefore we are free to compute it using \(\rho=2^{-n}I\). Thus,  $\E_{U_d}\mathrm{OTOC}^{(1)}_\rho(U_d)
=
\E_{U_d}\left[
2^{-n}\Tr\!\left(
(U_d^\dagger Z_1U_dZ_n)^2
\right)
\right]$. The proof has three steps.

\medskip
\noindent
\textbf{Step 1: reduction to the endpoint weight.}

Expand the evolved operator in the Pauli basis: $U_d^\dagger Z_1U_d=\sum_P c_P(U_d^\dagger Z_1 U_d)P$ with $c_P(U_d^\dagger Z_1 U_d)\coloneqq 2^{-n}\Tr\!\left(PU_d^\dagger Z_1U_d\right)$, where \(P=P_1\otimes\cdots\otimes P_n\) runs over all \(n\)-qubit Pauli
strings. Since \(U_d^\dagger Z_1U_d\) is Hermitian, the coefficients
\(c_P(U_d^\dagger Z_1 U_d)\) are real. Moreover, Pauli orthogonality gives $\sum_P\left(c_P(U_d^\dagger Z_1 U_d)\right)^2
= 
1$.

For a Pauli string \(P\), the sign of \(2^{-n}\Tr(PZ_nPZ_n)\) depends only on
the last single-site Pauli \(P_n\). If \(P_n=I\) or \(P_n=Z\), then \(P\)
commutes with \(Z_n\), and this trace equals \(1\). If \(P_n=X\) or \(P_n=Y\),
then \(P\) anticommutes with \(Z_n\), and the trace equals \(-1\). Using Pauli
orthogonality once more, we get $2^{-n}\Tr\!\left[(U_d^\dagger Z_1U_dZ_n)^2\right]
=
1-
2\sum_{P:\,P_n=X\text{ or }Y}
\left(c_P(U_d^\dagger Z_1 U_d)\right)^2$. Therefore
\begin{equation}
\label{eq:otoc1-Cn-pauli-last-site}
1-\E_{U_d}\mathrm{OTOC}^{(1)}_\rho(U_d)
=
2\,\E_{U_d}
\sum_{P:\,P_n=X\text{ or }Y}
\left(c_P(U_d^\dagger Z_1 U_d)\right)^2 .
\end{equation}

We now rewrite the right-hand side in terms of the total averaged weight of
Pauli strings whose right endpoint is the last site. For a Pauli string
\(P=P_1\otimes\cdots\otimes P_n\), the condition \(P_n\neq I\) means precisely
that the Pauli string has reached site \(n\). Since \(n\) is the right endpoint
of the chain, this is the same as saying that the right endpoint of the string
is at site \(n\).

We claim that, after Haar averaging,
\begin{equation}
\label{eq:otoc1-XY-weight-two-thirds-endpoint-weight}
\E_{U_d}
\sum_{P:\,P_n=X\text{ or }Y}
\left(c_P(U_d^\dagger Z_1 U_d)\right)^2
=
\frac23\,
\E_{U_d}
\sum_{P:\,P_n\neq I}
\left(c_P(U_d^\dagger Z_1 U_d)\right)^2 .
\end{equation}
Indeed, condition on all gates except the last Haar gate which touches site
\(n\). Since \(n\) and \(d\) are even, this is a gate on the boundary bond
\((n-1,n)\) in the last odd layer before time \(d\); the final even layer does
not touch site \(n\). Conditional on the event that the Pauli string is
nontrivial on site \(n\) after this boundary gate, Haar invariance makes the
local Pauli direction at site \(n\) uniform among \(X,Y,Z\). Therefore, among
the total weight with \(P_n\neq I\), a fraction \(2/3\) has \(P_n=X\) or
\(P_n=Y\). This proves
\eqref{eq:otoc1-XY-weight-two-thirds-endpoint-weight}.

Substituting \eqref{eq:otoc1-XY-weight-two-thirds-endpoint-weight} into
\eqref{eq:otoc1-Cn-pauli-last-site}, we obtain
\begin{equation}
\label{eq:otoc1-Cn-endpoint-weight}
1-\E_{U_d}\mathrm{OTOC}^{(1)}_\rho(U_d)
=
\frac43\,
\E_{U_d}
\sum_{P:\,P_n\neq I}
\left(c_P(U_d^\dagger Z_1 U_d)\right)^2 .
\end{equation}
Thus the averaged OTOC can be expressed in terms of the averaged squared Pauli weight of
strings whose right endpoint is the last site.

\medskip
\noindent
\textbf{Step 2: the endpoint Markov matrix.}

We now compute the endpoint weight appearing in
\eqref{eq:otoc1-Cn-endpoint-weight}. Let us recall the Haar rule for a
single two-qubit gate~\cite{nahum2018operator,keyserlingk2018operator}. The identity \(I\otimes I\) remains fixed. If the input
on a two-qubit bond is a nonidentity Pauli, then, after Haar averaging, the
squared Pauli coefficients are uniformly distributed over the \(15\)
nonidentity two-qubit Paulis.

Among these \(15\) Paulis, \(12\) are nontrivial on the right qubit and \(3\)
are nontrivial only on the left qubit. Therefore, if the right endpoint is on
the left qubit of an active bond, it moves one site to the right with
probability \(12/15=4/5\), and it stays on the left qubit with probability
\(1/5\). Similarly, if the endpoint is on the right qubit of an active bond, it
moves one site to the left with probability \(1/5\), and it stays on the right
qubit with probability \(4/5\).

Since \(n\) is even, the odd layers act inside the two-site cells
\[
\mathcal D_a\coloneqq\{2a-1,2a\},
\qquad
a=1,\ldots,\frac n2 .
\]
We observe the endpoint immediately after odd layers. This is the convenient
observation time because an odd layer acts inside every cell. After such a
layer, conditional on the endpoint being in \(\mathcal D_a\), the endpoint is
at the right site \(2a\) with probability \(4/5\), and at the left site
\(2a-1\) with probability \(1/5\).

Let \(Y_m\) be the cell containing the endpoint immediately after the \(m\)-th
odd layer. After the first odd layer, the endpoint is in the first cell, so
\(Y_1=1\). One effective step of \(Y_m\) consists of an even layer followed by
the next odd layer. The even layer can move the endpoint from one cell to a
neighbouring cell, while the next odd layer only resets the internal left/right
distribution inside the new cell.

We now compute the transition probabilities. Consider an interior cell
\(\mathcal D_a\), with \(2\le a\le n/2-1\). Immediately after an odd layer, the
endpoint is at \(2a\) with probability \(4/5\). The following even layer acts on
\((2a,2a+1)\), and moves the endpoint to the next cell with probability \(4/5\).
Hence $\mathbb P(Y_{m+1}=a+1\mid Y_m=a)
=
\frac45\cdot\frac45
=
\frac{16}{25}$. Similarly, the endpoint is at \(2a-1\) with probability \(1/5\). The following
even layer acts on \((2a-2,2a-1)\), and moves the endpoint to the previous cell
with probability \(1/5\). Hence $\mathbb P(Y_{m+1}=a-1\mid Y_m=a)
=
\frac15\cdot\frac15
=
\frac1{25}$. The remaining probability stays in the same cell: $\mathbb P(Y_{m+1}=a\mid Y_m=a)
=
1-\frac{16}{25}-\frac1{25}
=
\frac8{25}$. At the left boundary, the move to the left is impossible, so its probability is
added to the probability of staying in the first cell: $\mathbb P(Y_{m+1}=1\mid Y_m=1)=\frac9{25}$, $\mathbb P(Y_{m+1}=2\mid Y_m=1)=\frac{16}{25}$. At the right boundary, the move to the right is impossible, so $\mathbb P\!\left(Y_{m+1}=\frac n2-1\,\middle|\,Y_m=\frac n2\right)
=
\frac1{25}$, $
\mathbb P\!\left(Y_{m+1}=\frac n2\,\middle|\,Y_m=\frac n2\right)
=
\frac{24}{25}$.

Let \(Q\) be the transition matrix of this cell chain, defined as: \(Q_{a,b}\coloneqq \mathbb P(Y_{m+1}=a\mid Y_m=b)\). Thus
the \(b\)-th column of \(Q\) is the distribution of \(Y_{m+1}\) when the chain
starts from cell \(b\). Equivalently, \(Qe_b\) is the distribution after one
effective step, starting from cell \(b\).

For \(n=2\), there is only one cell and \(Q=(1)\). For \(n\ge4\), \(Q\) is the
\((n/2)\times(n/2)\) tridiagonal matrix
\begin{equation}
\label{eq:otoc1-Q}
Q
=
\frac1{25}
\begin{pmatrix}
9 & 1 & 0 & 0 & \cdots & 0 & 0\\
16 & 8 & 1 & 0 & \cdots & 0 & 0\\
0 & 16 & 8 & 1 & \cdots & 0 & 0\\
0 & 0 & 16 & 8 & \ddots & 0 & 0\\
\vdots & \vdots & \vdots & \ddots & \ddots & 1 & 0\\
0 & 0 & 0 & 0 & 16 & 8 & 1\\
0 & 0 & 0 & 0 & 0 & 16 & 24
\end{pmatrix}.
\end{equation}
Equivalently, \(Qe_1=\frac9{25}e_1+\frac{16}{25}e_2\),
\(Qe_a=\frac1{25}e_{a-1}+\frac8{25}e_a+\frac{16}{25}e_{a+1}\) for
\(2\le a\le n/2-1\), and
\(Qe_{n/2}=\frac1{25}e_{n/2-1}+\frac{24}{25}e_{n/2}\).

Because \(d\) is even, the first \(d\) layers contain exactly \(d/2\) odd
layers. The chain starts at \(Y_1=1\), and therefore after the last odd layer
the cell distribution is \(Q^{d/2-1}e_1\). The last odd layer is layer \(d-1\),
while the final even layer does not touch site \(n\). Thus, at time \(d\), the
Pauli string is nontrivial at site \(n\) exactly when, after the last odd layer,
the endpoint is in the last cell and is on its right site. Conditional on
\(Y_{d/2}=n/2\), this happens with probability \(4/5\). Therefore
\[
\E_{U_d}
\sum_{P:\,P_n\neq I}
\left(c_P(U_d^\dagger Z_1 U_d)\right)^2
=
\frac45\,
\mathbb P\!\left(Y_{d/2}=\frac n2\right)
=
\frac45\,e_{n/2}^{\mathsf T}Q^{d/2-1}e_1.
\]
Combining this with \eqref{eq:otoc1-Cn-endpoint-weight}, we obtain
\begin{equation}
\label{eq:otoc1-Cn-Q-formula}
1-\E_{U_d}\mathrm{OTOC}^{(1)}_\rho(U_d)
=
\frac{16}{15}\,
e_{n/2}^{\mathsf T}Q^{d/2-1}e_1.
\end{equation}

\medskip
\noindent
\textbf{Step 3: analytic evaluation of the matrix element.} 

We now evaluate the matrix element in \eqref{eq:otoc1-Cn-Q-formula}. Set $f_s\coloneqq e_{n/2}^{\mathsf T}Q^s e_1$ for all $s\ge0$, and introduce its generating function $F_n(z)\coloneqq\sum_{s\ge0} f_s z^s$. For \(|z|\) sufficiently small, \((I-zQ)^{-1}=\sum_{s\ge0}z^sQ^s\). Therefore $F_n(z)=e_{n/2}^{\mathsf T}(I-zQ)^{-1}e_1$.

If \(n=2\), then \(Q=(1)\), and hence \(F_2(z)=1/(1-z)\). The formula
\eqref{eq:otoc1-Fn-r-form} below is then checked directly. Thus, we may assume \(n\ge4\).

Let \(A\coloneqq I-zQ\). Using the explicit form of \(Q\), we have
\[
A=
\begin{pmatrix}
1-\frac{9z}{25} & -\frac z{25} & 0 & 0 & \cdots & 0 & 0\\[1mm]
-\frac{16z}{25} & 1-\frac{8z}{25} & -\frac z{25} & 0 & \cdots & 0 & 0\\[1mm]
0 & -\frac{16z}{25} & 1-\frac{8z}{25} & -\frac z{25} & \cdots & 0 & 0\\
0 & 0 & -\frac{16z}{25} & 1-\frac{8z}{25} & \ddots & 0 & 0\\
\vdots & \vdots & \vdots & \ddots & \ddots & -\frac z{25} & 0\\[1mm]
0 & 0 & 0 & 0 & -\frac{16z}{25} & 1-\frac{8z}{25} & -\frac z{25}\\[1mm]
0 & 0 & 0 & 0 & 0 & -\frac{16z}{25} & 1-\frac{24z}{25}
\end{pmatrix},
\]
with the evident truncation when \(n=4\).

We first compute the numerator of \(F_n(z)=(A^{-1})_{n/2,1}\). Recall the
cofactor formula: if \(A\) is invertible, then $(A^{-1})_{i,j}=\frac{C_{j,i}(A)}{\det A}$, where \(C_{i,j}(A)\coloneqq(-1)^{i+j}\det A^{(i,j)}\), and \(A^{(i,j)}\) is
obtained from \(A\) by deleting row \(i\) and column \(j\). Hence $F_n(z)=\frac{C_{1,n/2}(A)}{\det A}$. Since \(A\) is tridiagonal, the cofactor \(C_{1,n/2}(A)\) is especially simple.
After deleting row \(1\) and column \(n/2\), the only nonzero product in the
determinant is the product along the path \(1\to2\to\cdots\to n/2\). The
corresponding entries of \(A=I-zQ\) are \(A_{a+1,a}=-16z/25\), for
\(a=1,\ldots,n/2-1\). The sign in the cofactor cancels the sign of this product,
and we get $C_{1,n/2}(A)=\left(\frac{16z}{25}\right)^{n/2-1}$. Thus
\begin{equation}
\label{eq:otoc1-Fn-det-form}
F_n(z)
=
\frac{(16z/25)^{n/2-1}}{\det(I-zQ)}.
\end{equation}

It remains to compute the denominator \(\det(I-zQ)\). We use a
recursion method to calculate determinants of tridiagonal matrices. Let
\(D_m(z)\) be the determinant of the upper-left \(m\times m\) block before the
final right boundary is imposed. Thus \(D_0(z)=1\) and
\(D_1(z)=1-9z/25\). For \(2\le m\le n/2-1\), we are in the bulk. Expanding the
\(m\times m\) determinant along the last row gives
\begin{equation}
\label{eq:otoc1-bulk-det-recursion}
D_m(z)
=
\left(1-\frac{8z}{25}\right)D_{m-1}(z)
-
\frac{16z^2}{625}D_{m-2}(z).
\end{equation}
The second term is the product of the two off-diagonal entries, $\left(-\frac z{25}\right)
\left(-\frac{16z}{25}\right)
=
\frac{16z^2}{625}$. We now solve this recursion. Introduce $z\coloneqq \frac{25r}{4(1+r)^2}$. Then $1-\frac{8z}{25}=\frac{1+r^2}{(1+r)^2}$, $\frac{16z^2}{625}=\frac{r^2}{(1+r)^4}$. Therefore \eqref{eq:otoc1-bulk-det-recursion} becomes $D_m
=
\frac{1+r^2}{(1+r)^2}D_{m-1}
-
\frac{r^2}{(1+r)^4}D_{m-2}$. The characteristic roots are \(1/(1+r)^2\) and \(r^2/(1+r)^2\). Hence the
solution has the form $D_m\!\left(\frac{25r}{4(1+r)^2}\right)
=
\frac{\alpha+\beta r^{2m}}{(1+r)^{2m}}$. The initial conditions determine \(\alpha,\beta\). Since \(D_0=1\), and $D_1\!\left(\frac{25r}{4(1+r)^2}\right)
=
1-\frac{9r}{4(1+r)^2}
=
\frac{4-r+4r^2}{4(1+r)^2}$, we get $\alpha+\beta=1$, $\alpha+\beta r^2=\frac{4-r+4r^2}{4}$. Solving, $\alpha=\frac{4-r}{4(1-r^2)}$, $\beta=\frac{r(1-4r)}{4(1-r^2)}$. Thus, for \(0\le m\le n/2-1\),
\begin{equation}
\label{eq:otoc1-leading-det-solution}
D_m\!\left(\frac{25r}{4(1+r)^2}\right)
=
\frac{
(4-r)+(1-4r)r^{2m+1}
}{
4(1-r)(1+r)^{2m+1}
}.
\end{equation}

We still have to impose the right boundary. The last diagonal entry of
\(I-zQ\) is \(1-24z/25\), not the bulk value \(1-8z/25\). Therefore the full
determinant is
\begin{equation}
\label{eq:otoc1-right-boundary-det}
\det(I-zQ)
=
\left(1-\frac{24z}{25}\right)D_{n/2-1}(z)
-
\frac{16z^2}{625}D_{n/2-2}(z).
\end{equation}
Under the same change of variables, \(1-24z/25=(1-4r+r^2)/(1+r)^2\).
Substituting \eqref{eq:otoc1-leading-det-solution} into
\eqref{eq:otoc1-right-boundary-det}, and putting the two terms over the common
denominator \(4(1-r)(1+r)^{n+1}\), the numerator becomes
\[
\begin{aligned}
&
(1-4r+r^2)\bigl[(4-r)+(1-4r)r^{n-1}\bigr]
-
r^2\bigl[(4-r)+(1-4r)r^{n-3}\bigr] 
=
(4-r)(1-4r)(1-r^n).
\end{aligned}
\]
Therefore
\begin{equation}
\label{eq:otoc1-full-det-solution}
\det\left(I-\frac{25r}{4(1+r)^2}Q\right)
=
\frac{
(4-r)(1-4r)(1-r^n)
}{
4(1-r)(1+r)^{n+1}
}.
\end{equation}

Combining \eqref{eq:otoc1-Fn-det-form} with
\eqref{eq:otoc1-full-det-solution}, and using \(16z/25=4r/(1+r)^2\), we obtain
\begin{equation}
\label{eq:otoc1-Fn-r-form}
F_n\!\left(\frac{25r}{4(1+r)^2}\right)
=
\frac{
4^{n/2}r^{n/2-1}(1-r)(1+r)^3
}{
(4-r)(1-4r)(1-r^n)
}.
\end{equation}

We now extract the coefficient corresponding to the physical depth \(d\). We
use \([z^s]F(z)\) for the coefficient of \(z^s\) in the Taylor expansion of
\(F\) around \(z=0\). Since \(F_n(z)=\sum_{s\ge0}f_s z^s\), we have
\(f_{d/2-1}=[z^{d/2-1}]F_n(z)\). Therefore $1-\E_{U_d}\mathrm{OTOC}^{(1)}_\rho(U_d)
=
\frac{16}{15}f_{d/2-1}
=
\frac{16}{15}[z^{d/2-1}]F_n(z)$.

To compute this coefficient after the change of variables, we use Cauchy's
coefficient formula:
\[
[z^s]F(z)
=
\frac{1}{2\pi i}
\oint \frac{F(z)}{z^{s+1}}\,dz,
\]
where the contour is a small positively oriented circle around the origin.
Applying this with \(s=d/2-1\), we get $f_{d/2-1}
=
\frac{1}{2\pi i}
\oint
\frac{F_n(z)}{z^{d/2}}\,dz$. Now use \(z=25r/[4(1+r)^2]\). A direct differentiation gives $\frac{dz}{z}
=
\frac{1-r}{r(1+r)}\,dr$. Hence $\frac{dz}{z^{d/2}}
=
\left(\frac{4(1+r)^2}{25r}\right)^{d/2-1}
\frac{1-r}{r(1+r)}\,dr$. Substituting \eqref{eq:otoc1-Fn-r-form} into Cauchy's formula gives
\[
\begin{aligned}
f_{d/2-1}
&=
\frac{1}{2\pi i}
\oint
\frac{
4^{n/2}r^{n/2-1}(1-r)(1+r)^3
}{
(4-r)(1-4r)(1-r^n)
}
\left(\frac{4(1+r)^2}{25r}\right)^{d/2-1}
\frac{1-r}{r(1+r)}\,dr
\\
&=
4^{n/2}
\left(\frac4{25}\right)^{d/2-1}
\frac{1}{2\pi i}
\oint
\frac{
r^{(n-d)/2-1}
(1-r)^2(1+r)^d
}{
(4-r)(1-4r)(1-r^n)
}\,dr.
\end{aligned}
\]
By Cauchy's formula again, the last integral extracts the coefficient of
\(r^{(d-n)/2}\) from the remaining analytic function. Therefore
\[
f_{d/2-1}
=
4^{n/2}
\left(\frac4{25}\right)^{d/2-1}
[r^{(d-n)/2}]
\frac{
(1-r)^2(1+r)^d
}{
(4-r)(1-4r)(1-r^n)
}.
\]
Multiplying by \(16/15\), and moving one factor \(4\) inside the coefficient, we
obtain
\begin{equation}
\label{eq:otoc1-Cn-coefficient}
1-\E_{U_d}\mathrm{OTOC}^{(1)}_\rho(U_d)
=
\frac{16}{15}\,
4^{n/2-1}
\left(\frac4{25}\right)^{d/2-1}
[r^{(d-n)/2}]
\frac{
4(1-r)^2(1+r)^d
}{
(4-r)(1-4r)(1-r^n)
}.
\end{equation}

This formula already proves the light-cone constraint. If \(d<n\), then
\((d-n)/2<0\). But the rational function in
\eqref{eq:otoc1-Cn-coefficient} is regular at \(r=0\), hence it has only
nonnegative powers of \(r\). Therefore the coefficient of \(r^{(d-n)/2}\) is
zero, and \(1-\E_{U_d}\mathrm{OTOC}^{(1)}_\rho(U_d)=0\).

Assume now that \(d\ge n\). We must compute the coefficient of
\(r^{(d-n)/2}\) in $\frac{
4(1-r)^2(1+r)^d
}{
(4-r)(1-4r)(1-r^n)
}.$ We first expand $\frac1{1-r^n}=\sum_{j\ge0}r^{nj}$. Thus the term \(r^{nj}\) contributes only if the remaining factors provide the
power \(r^{(d-n)/2-nj}\). Equivalently, this contribution is present only when $\frac{d-n}{2}-nj\ge0$,
i.e.~$0\le j\le
\left\lfloor\frac{d-n}{2n}\right\rfloor$. For such a \(j\), set \(a\coloneqq(d-n)/2-nj\). We need $[r^a]
\left[
\frac{4(1-r)^2}{(4-r)(1-4r)}
(1+r)^d
\right]$. The rational factor has the elementary expansion $\frac{4(1-r)^2}{(4-r)(1-4r)}
=
1+\frac35
\sum_{\ell\ge1}
\left(4^\ell-4^{-\ell}\right)r^\ell$. Indeed, $\frac{4(1-r)^2}{(4-r)(1-4r)}
=
1+\frac{9r}{(4-r)(1-4r)}
=
1+\frac35
\left(
\frac1{1-4r}-\frac1{1-r/4}
\right)$, and expanding the two geometric series gives the claimed formula. Using also \([r^m](1+r)^d=\binom dm\), we obtain 
\begin{equation}
\label{eq:otoc1-coefficient-before-binomial}
\begin{aligned}
&[r^a]
\left[
\frac{4(1-r)^2}{(4-r)(1-4r)}
(1+r)^d
\right] =
\binom da
+
\frac35
\sum_{m=0}^{a-1}
\left(4^{a-m}-4^{-(a-m)}\right)
\binom dm.
\end{aligned}
\end{equation}We now multiply \eqref{eq:otoc1-coefficient-before-binomial} by the prefactor
in \eqref{eq:otoc1-Cn-coefficient}. By definition,
\(\Pr(X_d=m)=\binom dm(1/5)^m(4/5)^{d-m}\). Using
\(a=(d-n)/2-nj\), we get
\begin{align*}
\frac{16}{15}
4^{n/2-1}
\left(\frac4{25}\right)^{d/2-1}
\binom da
&=
\frac53\,4^{-nj}\,
\Pr(X_d=a)\,,\\
\frac{16}{15}
4^{n/2-1}
\left(\frac4{25}\right)^{d/2-1}
\frac35\,4^{a-m}\binom dm
&=
4^{-nj}\,
\Pr(X_d=m),\\
\frac{16}{15}
4^{n/2-1}
\left(\frac4{25}\right)^{d/2-1}
\frac35\,4^{-(a-m)}\binom dm
&=
4^{n(j+1)}\,
\Pr(X_d=d-m).
\end{align*}

Therefore the contribution of this value of \(j\) is
\(4^{-nj}
\left\{
\Pr(X_d\le a-1)
+\frac53\,\Pr(X_d=a)
\right\}
-
4^{n(j+1)}
\Pr(X_d\ge d-a+1)\). Substituting back \(a=(d-n)/2-nj\) and summing over all allowed \(j\), we obtain
\[
\begin{aligned}
1-\E_{U_d}\mathrm{OTOC}^{(1)}_\rho(U_d)
=
\sum_{j=0}^{\left\lfloor\frac{d-n}{2n}\right\rfloor}
\Bigg[
&4^{-nj}
\left\{
\Pr\!\left(
X_d\le \frac{d-n}{2}-nj-1
\right)
+\frac53\,
\Pr\!\left(
X_d= \frac{d-n}{2}-nj
\right)
\right\}
\\
&\hspace{14mm}
-
4^{n(j+1)}
\Pr\!\left(
X_d\ge d-\left(\frac{d-n}{2}-nj\right)+1
\right)
\Bigg].
\end{aligned}
\]

Finally, this expanded expression is exactly the compact formula stated in the
theorem. Indeed, for \(r\ge0\), the quantity \(\Psi_d(r)\) in
\eqref{eq:Psi-d-def} can be rewritten as $\Psi_d(r)
=
\Pr(X_d\le r-1)
+
\frac53\,\Pr(X_d=r)
-
4^{d-2r}\Pr(X_d\ge d-r+1)$. Taking \(r=(d-n)/2-nj\) and multiplying by \(4^{-nj}\), the last prefactor
becomes \(4^{-nj}4^{d-2r}=4^{n(j+1)}\). Thus the \(j\)-th term in the expanded
formula is precisely \(4^{-nj}
\Psi_d\!\left(
\frac{d-n}{2}-nj
\right)\). Together with the already proved light-cone case \(d<n\), this proves $1-\E_{U_d}\mathrm{OTOC}^{(1)}_\rho(U_d)
=
\sum_{j=0}^{\left\lfloor\frac{d-n}{2n}\right\rfloor}
4^{-nj}
\Psi_d\!\left(
\frac{d-n}{2}-nj
\right)$. This is
equivalent to \eqref{eq:endpoint-otoc1-exact-formula}.
\end{proof}

\subsection{Gaussian approximation of averaged \(\mathrm{OTOC}^{(1)}\)}
\label{subsec:endpoint-otoc1-gaussian-front}
We now show that, for large \(n\), the averaged endpoint
\(\mathrm{OTOC}^{(1)}\) is well approximated by a Gaussian tail, in agreement
with the operator-spreading picture of
Refs.~\cite{keyserlingk2018operator,nahum2018operator}. The reason is already
visible in the exact formula \eqref{eq:endpoint-otoc1-exact-formula}: up to an
\(O(n^{-1/2})\) error, the averaged OTOC is the tail probability of a binomial
random variable \(X_d\sim\operatorname{Bin}(d,1/5)\) above the threshold
\((d-n)/2\). Thus the transition occurs when this threshold is comparable with
the typical value of \(X_d\), namely \(d/5\). Equating \(d/5\) with
\((d-n)/2\) gives the transition depth \(d=5n/3\). In the language of
Refs.~\cite{keyserlingk2018operator,nahum2018operator}, this says that the
right endpoint reaches distance \(n\) in time \(5n/3\), or equivalently that
the butterfly velocity is \(v_B=3/5\). Around this depth, the remaining scale is
the standard deviation of \(X_d\), which is of order \(\sqrt d\). The averaged
endpoint \(\mathrm{OTOC}^{(1)}\) is therefore governed asymptotically by the
central-limit theorem applied to this binomial variable, and its profile is a
Gaussian tail. The following lemma makes quantitative this Gaussian-tail approximation to the
diffusive front picture of Refs.~\cite{keyserlingk2018operator,nahum2018operator}.
\begin{lemma}[Gaussian approximation of averaged OTOC]
\label{lem:endpoint-otoc1-gaussian-front-d-dependent}
Assume that \(n\ge4\) and \(d\ge2\) are even. Then, for every density matrix
\(\rho\),
\begin{equation}
\label{eq:endpoint-otoc1-gaussian-front-d-dependent}
\left|
\E_{U_d}\mathrm{OTOC}^{(1)}_\rho(U_d)
-
\frac1{\sqrt{2\pi}}
\int_{\frac{d-5n/3}{\frac43\sqrt d}}^\infty e^{-s^2/2}\,ds
\right|
\le
\frac{5}{\sqrt n}.
\end{equation}
In particular, the error in approximating the averaged endpoint
\(\mathrm{OTOC}^{(1)}\) by its Gaussian tail is of order \(n^{-1/2}\),
uniformly in the depth \(d\).
\end{lemma}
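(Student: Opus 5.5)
The plan is to start from the exact formula of Lemma~\ref{thm:endpoint-otoc1-exact-formula}. I would show that, up to $O(n^{-1/2})$, the averaged endpoint $\mathrm{OTOC}^{(1)}$ equals the binomial tail $\Pr(X_d>(d-n)/2)$, and then apply a quantitative central limit theorem (Berry--Esseen) to $X_d\sim\operatorname{Bin}(d,1/5)$. The Gaussian argument comes out exactly as in the statement. The mean of $X_d$ is $d/5$ and its standard deviation is $\sigma=\tfrac25\sqrt d$. Hence
\[
\frac{(d-n)/2-d/5}{\tfrac25\sqrt d}=\frac{3d-5n}{4\sqrt d}=\frac{d-\tfrac53 n}{\tfrac43\sqrt d}.
\]
Since the formula is state-independent, no dependence on $\rho$ enters anywhere.

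\emph{Step 1: the short-depth regime $d<n$.} Here the exact average equals $1$. The Gaussian argument is at most $-(2n/3)/(\tfrac43\sqrt n)=-\sqrt n/2$, because the map $d\mapsto (d-5n/3)/\sqrt d$ is increasing. The standard Gaussian tail bound then puts the Gaussian integral within $e^{-n/8}\le 5/\sqrt n$ of $1$. So this case is immediate, and we may assume $d\ge n$, which gives $1/\sqrt d\le 1/\sqrt n$.

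\emph{Step 2: discard the wrap-around terms and simplify $\Psi_d$.} Since probabilities are at most $1$, each $|\Psi_d(r)|\le 1+\tfrac23+\tfrac1{15}$. The terms with $j\ge1$ therefore contribute at most $2\sum_{j\ge1}4^{-nj}=O(4^{-n})$, which is negligible. For the $j=0$ term, set $r=(d-n)/2$. Write $\Psi_d(r)=\Pr(X_d\le r)+E$, where
\[
|E|\le \bigl(\tfrac23+\tfrac1{15}\bigr)\max_m\Pr(X_d=m).
\]
It follows that $\E\,\mathrm{OTOC}^{(1)}=\Pr(X_d>r)+O(\max_m\Pr(X_d=m))+O(4^{-n})$.

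\emph{Step 3: Berry--Esseen.} For Bernoulli$(1/5)$ summands, the third absolute central moment is $pq(p^2+q^2)=\tfrac{4}{25}\cdot\tfrac{17}{25}$. Its ratio to $\sigma_1^3=(4/25)^{3/2}$ is $1.7$. With the constant $0.4748$, this gives a uniform CDF error $\epsilon_{\rm BE}\le 0.81/\sqrt d$. That bounds $|\Pr(X_d>r)-\phi(\cdot)|$. The point mass is bounded by the same tool: $\Pr(X_d=m)=F(m)-F(m-1)$, so it is at most $2\epsilon_{\rm BE}$ plus the Gaussian mass of an interval of length $1/\sigma$. That Gaussian mass is at most $(\sigma\sqrt{2\pi})^{-1}\approx 1/\sqrt d$. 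Summing the contributions gives roughly
\[
0.81+0.73\,(1.62+1)+o(1)\ \text{ times } d^{-1/2},
\]
which is below $5/\sqrt d\le 5/\sqrt n$.

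The main obstacle is bookkeeping rather than ideas. First, the explicit constant $5$ must hold uniformly for all even $n\ge4$ and all $d$, including small $d$ where asymptotics are not yet valid. Second, the Berry--Esseen error is naturally in $d^{-1/2}$ rather than $n^{-1/2}$. Step 1 handles the second issue. For the first, I would fall back on the trivial bound: both quantities lie in $[0,1]$, so the error is at most $1\le 5/\sqrt n$ whenever $n\le25$. Hence only $n>25$ needs the quantitative estimates, and there the $4^{-n}$ terms are clearly absorbed into the slack.
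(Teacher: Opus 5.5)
Your proposal is correct and follows the paper's proof essentially step for step: the $d<n$ case via the Gaussian tail $e^{-n/8}$, discarding the $j\ge1$ wrap-around terms using $|\Psi_d|<2$, bounding $|\Psi_d(a)-\Pr(X_d\le a)|$ by $\tfrac{11}{15}\sup_m\Pr(X_d=m)$, and applying Berry--Esseen with third-moment ratio $17/10$. There are two small differences. First, you bound the point mass by Berry--Esseen plus the Gaussian density (about $2.6/\sqrt d$), whereas the paper cites the sharper local bound $\sup_m\Pr(S=m)\le 1/\sigma$ for Bernoulli sums. Second, your $n\le 25$ fallback is unnecessary, since your explicit constants already give the bound for all even $n\ge4$; it is also slightly misstated, because the averaged OTOC converges to $-1/(4^n-1)<0$ rather than staying in $[0,1]$.
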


\begin{proof}
If \(d<n\), then
Lemma~\ref{thm:endpoint-otoc1-exact-formula} gives
\(\E_{U_d}\mathrm{OTOC}^{(1)}_\rho(U_d)=1\). Moreover
\((3d-5n)/(4\sqrt d)\le -\sqrt n/2\), since
\(d\mapsto(3d-5n)/(4\sqrt d)\) is increasing on \(d>0\). Hence $\left|
1-\frac1{\sqrt{2\pi}}
\int_{\frac{3d-5n}{4\sqrt d}}^\infty e^{-s^2/2}\,ds
\right|
=
\frac1{\sqrt{2\pi}}
\int_{-\infty}^{\frac{3d-5n}{4\sqrt d}}e^{-s^2/2}\,ds
\le e^{-n/8}
\le \frac5{\sqrt n}$. Thus the claim holds when \(d<n\).

Assume now that \(d\ge n\), and set \(a\coloneqq(d-n)/2\). By Theorem~\ref{thm:endpoint-otoc1-exact-formula}, $1-\E_{U_d}\mathrm{OTOC}^{(1)}_\rho(U_d)
=
\sum_{j=0}^{\lfloor a/n\rfloor}4^{-nj}\Psi_d(a-nj)$.
We compare this with
\(\Pr(X_d\le a)\). The \(j=0\) term satisfies
\(\Psi_d(a)-\Pr(X_d\le a)=\frac23\Pr(X_d=a)-\sum_{s=1}^{a}16^{-s}\Pr(X_d=a-s)\).
We use the standard local bound for sums of independent Bernoulli random
variables~\cite{BaillonCominettiVaisman2016}: if \(S\) has variance
\(\sigma^2\), then \(\sup_m\Pr(S=m)\le1/\sigma\). Since
\(\operatorname{Var}(X_d)=4d/25\), we have
\(\sup_m\Pr(X_d=m)\le5/(2\sqrt d)\). Therefore
\[
\left|\Psi_d(a)-\Pr(X_d\le a)\right|
\le
\left(\frac23+\sum_{s\ge1}16^{-s}\right)\sup_m\Pr(X_d=m)
=
\frac{11}{15}\sup_m\Pr(X_d=m)
\le
\frac{11}{6\sqrt d}
\le
\frac{11}{6\sqrt n}.
\]
The terms with $j\ge1$ are exponentially small: since
\(|\Psi_d(r)|\le1+\frac23+\sum_{s\ge1}16^{-s}<2\), we have $\left|
\sum_{j=1}^{\lfloor a/n\rfloor}4^{-nj}\Psi_d(a-nj)
\right|
\le
2\sum_{j\ge1}4^{-nj}
=
\frac{2}{4^n-1}
\le
\frac1{\sqrt n}$. Using the two bounds above, we obtain
\[
\left|
\E_{U_d}\mathrm{OTOC}^{(1)}_\rho(U_d)-\Pr(X_d>a)
\right|
=
\left|
\left[
1-\E_{U_d}\mathrm{OTOC}^{(1)}_\rho(U_d)
\right]
-
\Pr(X_d\le a)
\right|
\le
\frac{17}{6\sqrt n}.
\]

It remains to approximate \(\Pr(X_d>a)\). To do this, we use Berry--Esseen in its standard form. If
\(S_d=Y_1+\cdots+Y_d\), where the \(Y_i\)'s are i.i.d.~with mean \(\mu\),
variance \(\sigma^2\), and third moment \(\mathbb E|Y-\mu|^3\), then
\[
\sup_x
\left|
\Pr\!\left(
\frac{S_d-d\mu}{\sigma\sqrt d}\le x
\right)
-
\frac1{\sqrt{2\pi}}\int_{-\infty}^{x}e^{-s^2/2}\,ds
\right|
\le
\frac{C_{\rm BE}\,\mathbb E|Y-\mu|^3}{\sigma^3\sqrt d}.
\]
where \(C_{\rm BE}<1\)~\cite{Shevtsova2011}. In our case \(X_d=\sum_{i=1}^dY_i\), with
\(Y_i\sim\operatorname{Bernoulli}(1/5)\). Hence \(\mu=1/5\),
\(\sigma^2=4/25\), and
\(\mathbb E|Y-\mu|^3=\frac15(\frac45)^3+\frac45(\frac15)^3=\frac{68}{625}\), so
\(\mathbb E|Y-\mu|^3/\sigma^3=17/10\). Therefore, uniformly in \(x\),
\[
\left|
\Pr\!\left(
\frac{X_d-d/5}{(2/5)\sqrt d}\le x
\right)
-
\frac1{\sqrt{2\pi}}\int_{-\infty}^{x}e^{-s^2/2}\,ds
\right|
\le
\frac2{\sqrt d}.
\]
Since \(a=(d-n)/2\), the normalized threshold is
\((a-d/5)/((2/5)\sqrt d)=(3d-5n)/(4\sqrt d)\). Thus Berry--Esseen gives $\left|
\Pr(X_d\le a)
-
\frac1{\sqrt{2\pi}}
\int_{-\infty}^{\frac{3d-5n}{4\sqrt d}}e^{-s^2/2}\,ds
\right|
\le
\frac2{\sqrt d}$. Equivalently,  $\left|
\Pr(X_d>a)
-
\frac1{\sqrt{2\pi}}
\int_{\frac{3d-5n}{4\sqrt d}}^\infty e^{-s^2/2}\,ds
\right|
\le
\frac2{\sqrt d}
\le
\frac2{\sqrt n}$. Combining this with
\(\left|\E_{U_d}\mathrm{OTOC}^{(1)}_\rho(U_d)-\Pr(X_d>a)\right|
\le 17/(6\sqrt n)\), we get
\[
\left|
\E_{U_d}\mathrm{OTOC}^{(1)}_\rho(U_d)
-
\frac1{\sqrt{2\pi}}
\int_{\frac{3d-5n}{4\sqrt d}}^\infty e^{-s^2/2}\,ds
\right|
\le
\left(\frac{17}{6}+2\right)\frac1{\sqrt n}
\le
\frac5{\sqrt n}.
\]
This proves \eqref{eq:endpoint-otoc1-gaussian-front-d-dependent}.
\end{proof}

\subsection{Variance lower bound}

The goal of this section is to prove a lower bound on the circuit-to-circuit
fluctuations of \(\operatorname{OTOC}^{(1)}\) of the form
\[
\Var_{U_d}\!\left(\mathrm{OTOC}^{(1)}_\rho(U_d)\right)
\ge
\Omega(n^{-1/2})
\]
for every density matrix \(\rho\), uniformly throughout the depth window
\[
\left|d-\frac53 n\right|\le \sqrt n.
\]
The assumption $\left|d-\frac53 n\right|\le \sqrt n$ with constant \(1\) is only for simplicity. The same argument
works, with no conceptual change, under the more general assumption
\[
\left|d-\frac53 n\right|\le c\sqrt n
\]
for any fixed constant \(c>0\); in that case the conclusion becomes
\[
\Var_{U_d}\!\left(\mathrm{OTOC}^{(1)}_\rho(U_d)\right)
\ge
\Omega_c(n^{-1/2}),
\]
where the implicit constant may depend on \(c\). For simplicity, we take \(c=1\).

We now give an overview of the argument. To lower bound the variance, we look at
the fluctuation generated by one gate at a time. Fix an even-depth gate
\(z=(\ell,t)\), set this gate equal to a two-qubit unitary \(G_z\), and average
over all remaining gates. This gives the conditional mean
\[
\E_{\neq z}\mathrm{OTOC}^{(1)}_\rho(U_d(G_z)),
\]
where $U_d(G_z)$ denotes the circuit with where the gate at position $z$ is equal to $G_z$.

The first step is to compute this conditional mean exactly. The key point is
that, after averaging over all gates except the one at \(z\), the circuit before
and after \(z\) is again described by the endpoint Markov chain. Thus fixing
the gate \(G_z\) can affect the final OTOC only through three quantities, which
will be defined explicitly in Lemma~\ref{lem:otoc1-conditional-mean-even-gate}.

The first quantity is a past propagation factor, denoted
\(\mathsf P_{\ell,t}\). It depends only on the Markov chain and on the spacetime
position \(z=(\ell,t)\), and measures how much endpoint weight reaches the bond
\((\ell,\ell+1)\) at layer \(t\).

The second quantity is local and depends only on the fixed two-qubit gate
\(G_z\). We denote it by \(A(G_z)\). It represents the average squared Pauli weight
which, starting from an input supported on the left qubit of the gate, exits on
the right qubit. For a Haar-random gate, this local right-output weight has mean
\(4/5\). Thus the local signed excess created by replacing the Haar-random gate
with the fixed gate \(G_z\) is \(A(G_z)-4/5\).

The third quantity is a future propagation factor, denoted
\(\mathsf F_{\ell,t}\). Like \(\mathsf P_{\ell,t}\), it is defined in terms of
the Markov chain and depends only on the spacetime position of the gate. It
measures how much a small amount of endpoint weight shifted across the bond
\((\ell,\ell+1)\) can still affect the probability of reaching the final
endpoint \(n\) by depth \(d\).

With these definitions, the difference between the conditional mean and the
fully averaged OTOC has the form
\[
\text{past propagation}
\times
\text{local gate excess}
\times
\text{future propagation}.
\]
Lemma~\ref{lem:otoc1-conditional-mean-even-gate} makes this factorization exact:
\begin{equation}
\label{eq_cond_mean}
\E_{\neq z}\mathrm{OTOC}^{(1)}_\rho(U_d(G_z))
-
\E_{U_d}\mathrm{OTOC}^{(1)}_\rho(U_d)
=
-\frac{16}{15}\,
\mathsf P_{\ell,t}\mathsf F_{\ell,t}
\left(A(G_z)-\frac45\right).
\end{equation}
Taking the variance with respect the fixed gate $G_z$ at position $z$, one thus obtains
\begin{equation}
\label{eq_var_cicrc}
\Var_{G_z\sim\Haar(U(4))}
\!\left(
\E_{\neq z}\mathrm{OTOC}^{(1)}_\rho(U_d(G_z))
\right)
=
\frac{256}{225}\,
\sigma_A^2
\bigl(\mathsf P_{\ell,t}\mathsf F_{\ell,t}\bigr)^2,
\end{equation}
where $\sigma_A^2
\coloneqq
\Var_{G_z\sim\Haar(U(4))}(A(G_z))$ is a universal positive constant. Thus, a gate \(z=(\ell,t)\) gives a visible
contribution to the variance precisely when both propagation factors
\(\mathsf P_{\ell,t}\) and \(\mathsf F_{\ell,t}\) are not too small.

The third step is to choose the gates for which this happens. Set
\[
\Delta\coloneqq d-\frac53 n,
\]
and consider the spacetime set
\[
\mathcal S
\coloneqq
\left\{
(\ell,t):
\ell,t\ \mathrm{even},\quad
\frac n3\le \ell\le \frac{2n}{3},\quad
\left|t-\frac53\ell-\frac{\Delta}{2}\right|\le \sqrt n
\right\}.
\]
This is the region where the endpoint front has enough time to reach the gate,
and a perturbation created at the gate still has enough time to reach the final
endpoint. Lemma~\ref{lem:otoc1-PF-exact-front-window} proves that, uniformly
over \(z=(\ell,t)\in\mathcal S\),
\[
\mathsf P_{\ell,t}=\Omega(n^{-1/2}),
\qquad
\mathsf F_{\ell,t}=\Omega(n^{-1/2})\,.
\]
Together with \eqref{eq_var_cicrc}, this implies
\begin{equation}
\label{eq_otocotoc_var}
\Var_{G_z\sim\Haar(U(4))}
\!\left(
\E_{\neq z}\mathrm{OTOC}^{(1)}_\rho(U_d(G_z))
\right)
=
\Omega(n^{-2})
\end{equation}
for every gate \(z\in\mathcal S\).

The final step is to pass from the variance with respect to a single gate to the
variance with respect to the full circuit. We use the inequality
\[
\Var_{U_d}(\mathrm{OTOC}^{(1)}_\rho(U_d))
\ge
\sum_{z\in\mathcal S}
\Var_{G_z\sim\Haar(U(4))}
\!\left(
\E_{\neq z}\mathrm{OTOC}^{(1)}_\rho(U_d(G_z))
\right),
\]
which follows from the independence of the gates \(G_z\). Combining this
inequality with the single-gate variance contribution in
\eqref{eq_otocotoc_var}, each gate in \(\mathcal S\) contributes
\(\Omega(n^{-2})\), while \(|\mathcal S|=\Theta(n^{3/2})\). Hence the full
variance satisfies
\[
\Var_{U_d}(\mathrm{OTOC}^{(1)}_\rho(U_d))
\ge
\Theta(n^{3/2})\cdot\Omega(n^{-2})
=
\Omega(n^{-1/2}),
\]
which is the desired lower bound.

We now prove these steps in detail, starting with the exact conditional-mean
formula.

\begin{lemma}[Conditional mean after fixing one gate]
\label{lem:otoc1-conditional-mean-even-gate}
Let \(z=(\ell,t)\) be a physical gate acting on the bond
\((\ell,\ell+1)\) at layer \(t\), with \(\ell,t\) even,
\(2\le \ell\le n-2\), and \(2\le t\le d-2\). For \(G\in U(4)\), let
\(U_d(G)\) be the circuit in which the gate at \(z\) is fixed equal to \(G\),
while all other gates remain Haar-random. Then, for every density matrix
\(\rho\) and every \(G\in U(4)\),
\begin{equation}
\label{eq:otoc1-conditional-mean-even-gate}
\E_{\neq z}\mathrm{OTOC}^{(1)}_\rho(U_d(G))
=
\E_{U_d}\mathrm{OTOC}^{(1)}_\rho(U_d)
-
\frac{16}{15}\,
\mathsf P_{\ell,t}\mathsf F_{\ell,t}
\left(A(G)-\frac45\right),
\end{equation}
where \(\E_{\neq z}\) denotes expectation over all gates except the one at
\(z\), while \(\E_{U_d}\) denotes expectation over the full depth-\(d\)
Haar-random circuit, including the gate at \(z\). Here
\[
A(G)
\coloneqq
\frac13
\sum_{\alpha=1}^3
\sum_{\gamma=0}^3
\sum_{\delta=1}^3
\left|
\frac14
\Tr\!\left[
(\sigma_\gamma\otimes\sigma_\delta)
G^\dagger(\sigma_\alpha\otimes I)G
\right]
\right|^2 ,
\]
with \(\sigma_0=I,\sigma_1=X,\sigma_2=Y,\sigma_3=Z\). Thus \(A(G)\) is the
average, over inputs \(X\otimes I,Y\otimes I,Z\otimes I\), of the squared
Pauli weight which exits on the right qubit of the gate. Finally, with
\(a=\ell/2\),
\[
\mathsf P_{\ell,t}
\coloneqq
\frac45 e_a^{\mathsf T}Q^{\frac t2-1}e_1
-
\frac1{20}e_{a+1}^{\mathsf T}Q^{\frac t2-1}e_1,
\qquad
\mathsf F_{\ell,t}
\coloneqq
e_{n/2}^{\mathsf T}Q^{\frac d2-\frac t2-1}(e_{a+1}-e_a),
\]
where \(Q\) is the endpoint Markov matrix in \eqref{eq:otoc1-Q}, and \(e_a\)
denotes the \(a\)-th standard basis vector of \(\mathbb R^{n/2}\).
\end{lemma}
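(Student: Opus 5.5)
We will reuse the three-step structure of the proof of Lemma~\ref{thm:endpoint-otoc1-exact-formula}. The one change is that the Haar-averaged transfer step at the single bond \((\ell,\ell+1)\), layer \(t\), is replaced by the transfer induced by the fixed gate \(G\).

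First, fixing \(G\) breaks right-Pauli invariance only at an interior gate. The layer-1 gates are still Haar, so Lemma~\ref{lem:state-independence-higher-otoc} still applies and the conditional mean is state-independent. We may therefore take \(\rho=2^{-n}I\).

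Step~1 of the exact-formula proof goes through unchanged. The last gate touching site \(n\) is Haar and sits in layer \(d-1>t\), so the factor \(2/3\) and Eq.~\eqref{eq:otoc1-Cn-endpoint-weight} remain valid. This gives
\begin{align}
1-\E_{\neq z}\mathrm{OTOC}^{(1)}_\rho(U_d(G))=\tfrac43\,\E_{\neq z}\sum_{P:P_n\neq I}c_P^2 .
\end{align}
Averaging the squared Pauli weights over all Haar gates except \(z\) still produces a Markov process on Pauli strings, and hence on the right endpoint. This holds because each Haar gate maps any nonidentity input on its bond to the uniform distribution over the 15 nonidentity outputs. The only altered transition is at the gate \(z\).

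We then track the endpoint. Before layer \(t\), the cell chain \(Y\) is unchanged: after layer \(t-1\) (odd), the cell distribution is \(Q^{t/2-1}e_1\), with internal left/right weights \(1/5,4/5\).

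At the even layer \(t\), the fixed gate on \((\ell,\ell+1)=(2a,2a+1)\) matters only when the endpoint sits on site \(2a\) or \(2a+1\). Consider first the endpoint at \(2a\) in cell \(a\), which has probability \(\tfrac45\Pr(Y=a)\). The Pauli on site \(2a\) is uniform among \(X,Y,Z\), because the preceding odd gate on cell \(a\) is Haar. Hence the probability of moving right through \(G\) is exactly \(A(G)\) in place of \(4/5\).

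Now consider the endpoint at \(2a+1\), i.e.\ the left site of cell \(a+1\), which has probability \(\tfrac15\Pr(Y=a+1)\). The input to \(G\) is then any nonidentity Pauli with nontrivial right qubit. The weight that exits with a trivial right qubit, i.e.\ moves left, is \(1-A'(G)\). Here \(A'(G)\) is the corresponding right-output weight averaged over the 12 right-nontrivial inputs. Under Haar averaging its mean is \(4/5\), so the left-move probability has mean \(1/5\).

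The main obstacle is this step. We must show that the left-move deficit is tied to the right-move excess with the coefficient \(-\tfrac1{20}\) appearing in \(\mathsf P_{\ell,t}\). We would try to prove it by unitarity: the total weight is conserved, and the \(G\)-image of \(\{I\otimes\sigma\}\) together with that of \(\{\sigma\otimes I,\sigma\otimes\sigma'\}\) partitions the 15 outputs. Counting squared weights gives
\begin{align}
\sum_{\text{all }15\text{ inputs}}(\text{right-nontrivial output weight})=12 .
\end{align}
Combining this with \(\#\{\sigma_\alpha\otimes I\}=3\) yields a linear relation. From it, the probability for an endpoint at \(2a+1\) to move back to cell \(a\) equals \(1/5+\tfrac{3}{12}\cdot\tfrac15\,\cdot(\dots)\), a constant multiple of \(A(G)-4/5\). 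We would then check that the constant is \(1/4\). Together with the factor \(1/5\) from the site occupation, this reproduces \(-\tfrac1{20}\), so the pre-gate factor becomes exactly \(\mathsf P_{\ell,t}\). Some care is needed here: the endpoint site's Pauli is only conditionally uniform given the endpoint location, so the relation should be verified with the conditioned input distribution rather than the full uniform one.

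After layer \(t\), all gates are Haar again. Both perturbations shift probability mass \(\pm\mathsf P_{\ell,t}(A(G)-\tfrac45)\) from cell \(a\) to cell \(a+1\). This is a signed vector proportional to \(e_{a+1}-e_a\), and it is then propagated by \(Q^{d/2-t/2-1}\) up to the last odd layer. The final readout of \(Y=n/2\) carries the same factor \(\tfrac45\) as before, which combines with \(\tfrac43\) into \(\tfrac{16}{15}\). Subtracting the fully averaged formula \eqref{eq:otoc1-Cn-Q-formula}, which corresponds to \(A=4/5\), gives Eq.~\eqref{eq:otoc1-conditional-mean-even-gate}. The overall sign is negative because more endpoint weight at \(n\) lowers the OTOC.
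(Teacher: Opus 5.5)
Your route is the paper's route. You reduce to \(\rho=2^{-n}I\) by state independence and reuse the \(\frac43\) endpoint-weight reduction conditionally. You write the effect of \(G\) as a signed transfer \(\eta_{\ell,t}(G)(e_{a+1}-e_a)\), propagated by \(Q^{d/2-t/2-1}\) with readout factor \(\frac45\). For the right entry you use the doubly stochastic identity \(3A(G)+12B(G)=12\), i.e.\ \(B(G)=1-\frac14A(G)\). So the left-move probability from \(2a+1\) is \(\frac14A(G)=\frac15+\frac14\bigl(A(G)-\frac45\bigr)\); your displayed expression has a stray factor \(\frac15\).

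The caveat you leave open is, however, a real gap. It is precisely the step the paper only asserts, namely that \(p_{\rm R}\) is ``uniformly averaged over these 12 local Pauli inputs''.

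\emph{Why it matters.} Suppose that, given the endpoint at \(2a+1\), site \(2a\) carried \(I\) with probability \(q_0\neq\frac14\). Then the right-entry deviation would acquire an extra term \((q_0-\frac14)(r_0(G)-r_1(G))\). Here \(r_0,r_1\) are the right-output weights averaged over inputs \(I\otimes\sigma_\beta\) and \(\sigma_\alpha\otimes\sigma_\beta\) (\(\alpha,\beta\neq0\)); this term is not a function of \(A(G)\) (e.g.\ \(r_0-r_1=-1\) for SWAP). Knowing that the odd gate on cell \(a\) is Haar does not settle it. That only gives probability \(\frac15\) for \(I\) on site \(2a\) when cell \(a\) is occupied, and cell \(a\) can be empty behind the endpoint.

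\emph{The missing ingredient} is the domain-wall form of the Haar-averaged second moment. Write \(\mathbb S_i=\frac12\sum_{\sigma}\sigma_i\otimes\sigma_i\) for the two-copy swap on site \(i\), and \(\mathbb S_{[1,x]}=\bigotimes_{i\le x}\mathbb S_i\). After layer \(1\) (and after every later Haar layer), \(\E[(U^\dagger Z_1U)^{\otimes 2}]\) is a linear combination of the identity and the operators \(\mathbb S_{[1,x]}\otimes I_{(x,n]}\). This form is preserved because a Haar gate fixes \(II\) and \(\mathbb S\mathbb S\) and sends both \(\mathbb S I\) and \(I\mathbb S\) to \(\frac25(II+\mathbb S\mathbb S)\), so only a single wall ever appears. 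Since \(\mathbb S_{[1,x]}=2^{-x}\sum_{P\in\mathcal P_{[1,x]}}P\otimes P\), the averaged weight \(\E c_P^2\) depends only on the right endpoint of \(P\). Hence, conditional on the endpoint at \(e\), sites \(1,\dots,e-1\) are i.i.d.\ uniform on \(\{I,X,Y,Z\}\) and site \(e\) is uniform on \(\{X,Y,Z\}\). Thus cell \(a\) is empty with probability exactly \(\frac1{16}\), and site \(2a\) is \(I\) with probability \(\frac1{16}+\frac{15}{16}\cdot\frac15=\frac14\). The right-entry input is therefore uniform over the 12 Paulis, which is exactly what \(B=1-\frac14A\) requires.

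Two minor points:
\begin{itemize}
\item With \(U_d=L_1\cdots L_d\), right-Pauli invariance comes from \(L_{d-1}L_d\), which are Haar because \(t\le d-2\). It does not come from layer \(1\), which sits next to \(Z_1\). Note that \(L_d\) alone misses sites \(1\) and \(n\).
\item Your claim that the squared weights still evolve as a Markov chain also uses two facts. Layer \(t-1\) leaves no Pauli coherences entering \(G\), and layer \(t+1\) erases the coherences that \(G\) creates before they can matter.
\end{itemize}
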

\begin{proof}
We compare the conditional mean with the fully Haar-averaged value. Define
\(W_n^G(d)\coloneqq
\sum_{P:\,P_n\neq I}\E_{\neq z}[(c_P(U_d^\dagger Z_1 U_d))^2]\). This is the conditional averaged endpoint weight at site \(n\), when the gate
at \(z=(\ell,t)\) is fixed equal to \(G\). Let \(W_n^{\rm Haar}(d)\) denote the
same endpoint weight when the gate at \(z\) is also Haar-random.

By the same state-independence and endpoint-weight reduction used to prove \eqref{eq:otoc1-Cn-endpoint-weight}, applied conditionally on the fixed gate \(G\), both OTOCs are obtained from the corresponding endpoint weights by
\[
\E_{\neq z}\mathrm{OTOC}^{(1)}_\rho(U_d(G))
=
1-\frac43 W_n^G(d),
\qquad
\E_{U_d}\mathrm{OTOC}^{(1)}_\rho(U_d)
=
1-\frac43 W_n^{\rm Haar}(d).
\]
Indeed, the last odd Haar layer touching site \(n\) is independent of the fixed
gate \(G\). Hence, conditional on being nontrivial on site \(n\), the local
Pauli direction at site \(n\) is still uniform among \(X,Y,Z\). For the fully
Haar-random circuit, the endpoint cell distribution after the last odd layer is
\(Q^{d/2-1}e_1\), and conditional on being in the last cell
\(\mathcal D_{n/2}\), the endpoint is on site \(n\) with probability \(4/5\).
Thus \(W_n^{\rm Haar}(d)=\frac45 e_{n/2}^{\mathsf T}Q^{d/2-1}e_1\).

We now isolate the effect of fixing the gate at \(z\):
\begin{equation}
\label{eq:otoc1-fixed-gate-otoc-gap}
\E_{\neq z}\mathrm{OTOC}^{(1)}_\rho(U_d(G))
=
\E_{U_d}\mathrm{OTOC}^{(1)}_\rho(U_d)
-\frac43\left(W_n^G(d)-W_n^{\rm Haar}(d)\right).
\end{equation}
The rest of the proof computes \(W_n^G(d)-W_n^{\rm Haar}(d)\).

Write \(a=\ell/2\). Since \(t\) is even, the fixed gate acts on
\((\ell,\ell+1)=(2a,2a+1)\), which separates the two cells
\(\mathcal D_a=\{2a-1,2a\}\) and
\(\mathcal D_{a+1}=\{2a+1,2a+2\}\). Immediately before layer \(t\), the endpoint
cell distribution is \(Q^{t/2-1}e_1\).

We compare two evolutions from depth \(t\) onward. In the first one, the gate
at \(z\) is fixed equal to \(G\). In the second one, the same gate is
Haar-random. All other gates are averaged in the same way in the two
evolutions. Hence the two evolutions can differ only through the local effect
produced at the gate \(z\).

Let \(\delta v_{\ell,t}(G)\in\mathbb R^{n/2}\) be the signed perturbation of the
cell distribution created by replacing the Haar-averaged gate at \(z\) with the
fixed gate \(G\):
\[
\delta v_{\ell,t}(G)
=
\left[
\text{cell distribution after the fixed gate \(G\)}
\right]
-
\left[
\text{cell distribution after a Haar-averaged gate at \(z\)}
\right].
\]
Once this perturbation is created, it evolves under the usual endpoint Markov
chain. From after the fixed even layer to the last odd layer, there are
\(d/2-t/2-1\) effective applications of \(Q\). Finally, conditional on being in
the last cell \(\mathcal D_{n/2}\), the endpoint is on site \(n\) with
probability \(4/5\). Therefore
\begin{equation}
\label{eq:otoc1-W-gap-from-cell-perturbation}
W_n^G(d)-W_n^{\rm Haar}(d)
=
\frac45\,
e_{n/2}^{\mathsf T}
Q^{\frac d2-\frac t2-1}
\delta v_{\ell,t}(G).
\end{equation}

It remains to compute \(\delta v_{\ell,t}(G)\). Since the gate
\((2a,2a+1)\) separates \(\mathcal D_a\) from \(\mathcal D_{a+1}\), changing
the probability of exiting on the right qubit transfers endpoint weight between
cell \(a\) and cell \(a+1\). More explicitly, let \(p_{\rm right}^{G}\) be the
averaged probability that the endpoint exits the gate \((2a,2a+1)\) on the
right qubit when the gate is fixed equal to \(G\), and let
\(p_{\rm right}^{\rm Haar}\) be the same probability when the gate at \(z\) is
Haar-averaged. Define $\eta_{\ell,t}(G)
\coloneqq
p_{\rm right}^{G}-p_{\rm right}^{\rm Haar}$. Thus \(\eta_{\ell,t}(G)\) is the excess right-output probability produced by
the fixed gate \(G\), compared with the Haar-averaged local rule.

If the endpoint exits on the right qubit of the bond \((2a,2a+1)\), it is
placed in the next cell \(\mathcal D_{a+1}\). If it does not exit on the right
qubit, it remains in the cell \(\mathcal D_a\). Therefore increasing the
right-output probability by \(\eta_{\ell,t}(G)\) adds mass
\(\eta_{\ell,t}(G)\) to cell \(a+1\) and removes the same mass from cell \(a\).
Hence $\delta v_{\ell,t}(G)
=
\eta_{\ell,t}(G)e_{a+1}
-
\eta_{\ell,t}(G)e_a
=
\eta_{\ell,t}(G)(e_{a+1}-e_a)$. Combining this with \eqref{eq:otoc1-W-gap-from-cell-perturbation}, we get
\begin{equation}
\label{eq:otoc1-W-gap-from-eta}
W_n^G(d)-W_n^{\rm Haar}(d)
=
\frac45\,
\eta_{\ell,t}(G)\,
e_{n/2}^{\mathsf T}
Q^{\frac d2-\frac t2-1}
(e_{a+1}-e_a).
\end{equation}

We now compute the local scalar \(\eta_{\ell,t}(G)\). There are two possible
ways in which the endpoint can enter the fixed gate.

First, the endpoint can enter from the left. Conditional on being in
\(\mathcal D_a\), it is on the right site \(2a\) with probability \(4/5\).
Therefore the left-input weight is \(\frac45 e_a^{\mathsf T}Q^{t/2-1}e_1\).
For this input, the local Pauli is uniform over
\(X\otimes I,Y\otimes I,Z\otimes I\). The right-output weight is \(A(G)\),
while its Haar average is \(4/5\). Hence the deviation from the Haar-averaged
local rule is \(A(G)-4/5\). Therefore the left-input contribution to
\(\eta_{\ell,t}(G)\) is \[\frac45 e_a^{\mathsf T}Q^{\frac t2-1}e_1
\left(A(G)-\frac45\right)\,.\]

Second, the endpoint can enter the gate from the right. This means that,
immediately before the fixed gate on \((2a,2a+1)\), the endpoint is on the site
\(2a+1\). Since \(2a+1\) is the left site of the cell
\(\mathcal D_{a+1}=\{2a+1,2a+2\}\), and since the endpoint is on the left site
of a cell with probability \(1/5\), the total weight entering the gate from the
right is \(p_{\rm R}\coloneqq
\frac15 e_{a+1}^{\mathsf T}Q^{\frac t2-1}e_1\). We now identify the
corresponding local Pauli inputs. Entering from the right means that the Pauli
on the right qubit of the bond \((2a,2a+1)\) is nontrivial. Hence the local
input belongs to the sector $\mathcal R
\coloneqq
\left\{
\sigma_\alpha\otimes\sigma_\beta:
\alpha\in\{0,1,2,3\},\ \beta\in\{1,2,3\}
\right\}$. This sector has \(12\) elements and consists exactly of the two-qubit Paulis
which are nontrivial on the right qubit. After averaging over the Haar gates
before depth \(t\), the right-input weight \(p_{\rm R}\) is uniformly averaged
over these \(12\) local Pauli inputs.

Recall that \(A(G)\) is the average right-output weight when the input is
uniformly distributed over \(X\otimes I,Y\otimes I,Z\otimes I\). For a right
input, define analogously
\[
B(G)
\coloneqq
\frac1{12}
\sum_{\alpha=0}^3
\sum_{\beta=1}^3
\sum_{\gamma=0}^3
\sum_{\delta=1}^3
\left|
\frac14
\Tr\!\left[
(\sigma_\gamma\otimes\sigma_\delta)
G^\dagger(\sigma_\alpha\otimes\sigma_\beta)G
\right]
\right|^2 .
\]
Thus \(B(G)\) is the average right-output weight when the input is uniformly
distributed over the \(12\) Paulis in \(\mathcal R\). We now express \(B(G)\) in terms of \(A(G)\). Set
\[
M_{\gamma,\delta;\alpha,\beta}(G)
\coloneqq
\left|
\frac14
\Tr\!\left[
(\sigma_\gamma\otimes\sigma_\delta)
G^\dagger(\sigma_\alpha\otimes\sigma_\beta)G
\right]
\right|^2 ,
\]
for \((\alpha,\beta)\neq(0,0)\) and \((\gamma,\delta)\neq(0,0)\). Since
conjugation by \(G\) is an orthogonal transformation of the \(15\)-dimensional
real vector space spanned by the nonidentity two-qubit Paulis, the matrix
\(M_{\gamma,\delta;\alpha,\beta}(G)\) is doubly stochastic. In particular, for
each fixed output Pauli \((\sigma_\gamma\otimes\sigma_\delta)\neq I\otimes I\),
\[
\sum_{\substack{\alpha,\beta=0\\(\alpha,\beta)\neq(0,0)}}^3
M_{\gamma,\delta;\alpha,\beta}(G)
=
1.
\]
Summing this identity over the output Paulis which are nontrivial on the right
qubit gives
\[
\sum_{\gamma=0}^3
\sum_{\delta=1}^3
\sum_{\substack{\alpha,\beta=0\\(\alpha,\beta)\neq(0,0)}}^3
\left|
\frac14
\Tr\!\left[
(\sigma_\gamma\otimes\sigma_\delta)
G^\dagger(\sigma_\alpha\otimes\sigma_\beta)G
\right]
\right|^2
=
\sum_{\gamma=0}^3\sum_{\delta=1}^3 1
=
12.
\]
The input sum splits into two disjoint parts:
\[
\{(\alpha,\beta):(\alpha,\beta)\neq(0,0)\}
=
\{(1,0),(2,0),(3,0)\}
\sqcup
\{(\alpha,\beta):\alpha=0,1,2,3,\ \beta=1,2,3\}.
\]
The first part contributes \(3A(G)\), and the second part contributes
\(12B(G)\). Therefore \(3A(G)+12B(G)=12\), and so
\(B(G)=1-\frac14A(G)\). Hence the deviation from the Haar value \(4/5\) for a
right input is \(B(G)-\frac45
=
1-\frac14A(G)-\frac45
=
-\frac14(A(G)-\frac45)\). Using the right-input weight \(p_{\rm R}\coloneqq
\frac15 e_{a+1}^{\mathsf T}Q^{\frac t2-1}e_1\), the
right-input contribution to \(\eta_{\ell,t}(G)\) is therefore
\[
-\frac1{20}e_{a+1}^{\mathsf T}Q^{\frac t2-1}e_1
\left(A(G)-\frac45\right).
\]

Combining the left- and right-input contributions, we find
\[
\eta_{\ell,t}(G)
=
\left[
\frac45 e_a^{\mathsf T}Q^{\frac t2-1}e_1
-
\frac1{20}e_{a+1}^{\mathsf T}Q^{\frac t2-1}e_1
\right]
\left(A(G)-\frac45\right).
\]
By the definition of \(\mathsf P_{\ell,t}\), this is $\eta_{\ell,t}(G)
=
\mathsf P_{\ell,t}
\left(A(G)-\frac45\right)$. Substituting this into
\eqref{eq:otoc1-W-gap-from-eta}, and using the definition of
\(\mathsf F_{\ell,t}\), gives $W_n^G(d)-W_n^{\rm Haar}(d)
=
\frac45\,
\mathsf P_{\ell,t}\mathsf F_{\ell,t}
\left(A(G)-\frac45\right)$. Finally, substituting this into \eqref{eq:otoc1-fixed-gate-otoc-gap}, it proves \eqref{eq:otoc1-conditional-mean-even-gate}.
\end{proof}
The conditional-mean formula immediately gives a formula for the variance of the conditional mean. After
all gates except the one at \(z\) have been averaged, the remaining dependence on \(G\) is an affine function of the single scalar \(A(G)\).
\begin{lemma}[Variance of the conditional mean]
\label{lem:otoc1-conditional-mean-variance}
Let \(z=(\ell,t)\) be a physical gate acting on the bond
\((\ell,\ell+1)\) at layer \(t\), with \(\ell,t\) even,
\(2\le \ell\le n-2\), and \(2\le t\le d-2\). For \(G\in U(4)\), let
\(U_d(G)\) be the circuit in which the gate at \(z\) is fixed equal to \(G\),
while all other gates remain Haar-random. Then, for every density matrix
\(\rho\),
\begin{equation}
\label{eq:otoc1-conditional-mean-variance-exact}
\operatorname{Var}_{G\sim\Haar(U(4))}
\!\left(
\E_{\neq z}\mathrm{OTOC}^{(1)}_\rho(U_d(G))
\right)
=
\frac{256}{225}\,
\bigl(\mathsf P_{\ell,t}\mathsf F_{\ell,t}\bigr)^2
\operatorname{Var}_{G\sim\Haar(U(4))}(A(G)).
\end{equation}
Moreover,
\(\operatorname{Var}_{G\sim\Haar(U(4))}(A(G))\) is a strictly positive
numerical constant.
\end{lemma}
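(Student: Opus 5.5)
The variance identity follows directly from Lemma~\ref{lem:otoc1-conditional-mean-even-gate}. There, the conditional mean is an affine function of the single scalar $A(G)$. The additive constant $\E_{U_d}\mathrm{OTOC}^{(1)}_\rho(U_d)$ and the coefficient $-\tfrac{16}{15}\mathsf P_{\ell,t}\mathsf F_{\ell,t}$ depend only on the Markov matrix $Q$, the position $z$ and the depth. They do not depend on $G$, and by state independence they do not depend on $\rho$ either. Since $\Var(c+bY)=b^2\Var(Y)$ for real constants $b,c$, we obtain Eq.~\eqref{eq:otoc1-conditional-mean-variance-exact} with prefactor $(16/15)^2=256/225$. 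The conditional mean is real because $A(G)$ is real, so the complex-variance convention of Eq.~\eqref{eq:complex-variance-convention} causes no issue.

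It remains to show $\sigma_A^2\coloneqq\Var_{G\sim\Haar(U(4))}(A(G))>0$. Because $A$ is a continuous (polynomial) function on $U(4)$, its Haar variance vanishes only if $A$ is constant on all of $U(4)$. The Haar average of $A$ is $4/5$: each nonidentity input is spread uniformly over the $15$ nonidentity Paulis, and $12$ of them are nontrivial on the right qubit. So it suffices to exhibit a single $G$ with $A(G)\neq 4/5$.

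\begin{itemize}
\item Take $G=I$. For inputs $\sigma_\alpha\otimes I$ the output equals the input, which is trivial on the right qubit, so $A(I)=0$.
\item As an alternative, take $G=\mathrm{SWAP}$. It maps $\sigma_\alpha\otimes I$ to $I\otimes\sigma_\alpha$, so $A(\mathrm{SWAP})=1$.
\end{itemize}

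Either example gives $A\not\equiv 4/5$. Continuity of $A$ then means $A$ differs from its mean on an open set of positive Haar measure, which yields $\sigma_A^2>0$.

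If one wants an explicit constant, $\sigma_A^2$ could be computed via the fourth-moment Weingarten calculus. The lemma only requires positivity, though, so I would not do this. The only point needing care is the mean of $A$ being $4/5$, since that is what makes the identity gate a valid witness. This follows from the same Haar rule used in Step~2 of Lemma~\ref{thm:endpoint-otoc1-exact-formula}, so there is no real obstacle.
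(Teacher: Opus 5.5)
Your proposal is correct and matches the paper's proof: the affine dependence on $A(G)$ from Lemma~\ref{lem:otoc1-conditional-mean-even-gate} gives the prefactor $(16/15)^2=256/225$, and positivity follows from the continuity of $A$, the full support of Haar measure, and the witnesses $A(I)=0$ and $A(\mathrm{SWAP})=1$. The one small difference is that the paper compares these two values directly to show $A$ is nonconstant, so it never needs the Haar mean $4/5$ that you single out as the point requiring care.
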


\begin{proof}
By the conditional-mean formula
\eqref{eq:otoc1-conditional-mean-even-gate}, for every fixed \(G\),
\[
\E_{\neq z}\mathrm{OTOC}^{(1)}_\rho(U_d(G))
=
\E_{U_d}\mathrm{OTOC}^{(1)}_\rho(U_d)
-
\frac{16}{15}\,
\mathsf P_{\ell,t}\mathsf F_{\ell,t}
\left(A(G)-\frac45\right).
\]
The first term is independent of \(G\), and
\(\mathsf P_{\ell,t}\mathsf F_{\ell,t}\) depends only on the spacetime position
\(z=(\ell,t)\). Therefore, taking the variance over
\(G\sim\Haar(U(4))\),
\[
\operatorname{Var}_{G\sim\Haar(U(4))}
\!\left(
\E_{\neq z}\mathrm{OTOC}^{(1)}_\rho(U_d(G))
\right)
=
\left(\frac{16}{15}\right)^2
\bigl(\mathsf P_{\ell,t}\mathsf F_{\ell,t}\bigr)^2
\operatorname{Var}_{G\sim\Haar(U(4))}
\!\left(A(G)-\frac45\right).
\]
Subtracting a constant does not change the variance, so
\[
\operatorname{Var}_{G\sim\Haar(U(4))}
\!\left(A(G)-\frac45\right)
=
\operatorname{Var}_{G\sim\Haar(U(4))}(A(G)).
\]
This proves \eqref{eq:otoc1-conditional-mean-variance-exact}.

It remains only to note that this local variance is strictly positive. The
function \(A\) is continuous on \(U(4)\) and is not constant: indeed \(A(I)=0\),
whereas \(A(\mathrm{SWAP})=1\). Since Haar measure has full support on
\(U(4)\), a nonconstant continuous function has strictly positive Haar
variance.
\end{proof}

Before continuing, we record a simple lower bound on binomial probabilities that
will be used below.
\begin{lemma}[Crude local binomial lower bound]
\label{lem:crude-local-binomial-lower}
Let \(M\ge1\), \(0\le r\le M\), and
\(b_M(r)\coloneqq \binom Mr(1/5)^r(4/5)^{M-r}\). Then
\begin{equation}
\label{eq:crude-local-binomial-lower}
b_M(r)
\ge
\frac1{10\sqrt{M+1}}
\exp\!\left[
-10\,\frac{\left(r-\frac M5\right)^2}{M+1}
\right].
\end{equation}
\end{lemma}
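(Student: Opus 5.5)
We plan to prove Lemma~\ref{lem:crude-local-binomial-lower} by a Stirling/entropy argument, with the boundary values of \(r\) handled by hand. Write \(q\coloneqq r/M\) and \(p\coloneqq 1/5\).

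For \(1\le r\le M-1\), we invoke the classical lower bound on binomial coefficients (MacWilliams--Sloane, derived from Robbins' form of Stirling's formula):
\[
\binom Mr\ge\sqrt{\frac{M}{8r(M-r)}}\;e^{M h(q)},
\]
where \(h\) is the natural-log binary entropy. Multiplying by \(p^r(1-p)^{M-r}\) gives
\[
b_M(r)\ge\sqrt{\frac{M}{8r(M-r)}}\;e^{-M D(q\|p)},
\]
with \(D\) the binary Kullback--Leibler divergence.

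For the prefactor, \(r(M-r)\le M^2/4\), so it is at least \(1/\sqrt{2M}\), which exceeds \(1/(10\sqrt{M+1})\).

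For the exponent, we use \(D(q\|p)\le\chi^2(q\|p)=(q-p)^2/(p(1-p))\). This follows from \(\log x\le x-1\) applied inside the expectation defining \(D\). With \(p(1-p)=4/25\) it gives
\[
M D\le\frac{25}{4}\cdot\frac{(r-M/5)^2}{M}.
\]
This is at most \(10\,(r-M/5)^2/(M+1)\) as soon as \(25(M+1)\le 40M\), i.e. \(M\ge 2\), which holds automatically whenever \(1\le r\le M-1\).

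The endpoint cases \(r=0\) and \(r=M\) are where the entropy bound degenerates, and we check them directly.
\begin{itemize}
\item For \(r=0\), we have \(b_M(0)=(4/5)^M=e^{-0.2231\ldots M}\). The claimed bound is \(e^{-0.4M^2/(M+1)}/(10\sqrt{M+1})\). Since \(0.4M^2/(M+1)\ge 0.2M\), it suffices to show \(0.0232M\le\log(10\sqrt{M+1})\). This holds for all \(M\ge1\), because the right-hand side is at least \(\log 10\) and grows like \(\tfrac12\log M\); an elementary check covers the linear-versus-log comparison over the relevant range of \(M\).
\item For \(r=M\), we have \(b_M(M)=5^{-M}=e^{-1.609\ldots M}\). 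The exponent in the claim is \(6.4M^2/(M+1)\ge 3.2M\), so the bound holds with room to spare.
\end{itemize}

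The main obstacle we anticipate is the \(r=0\) endpoint. There the exponent constant \(10\) leaves only a thin linear margin, so we will verify it explicitly, for instance by checking monotonicity of \(M\mapsto \log(10\sqrt{M+1})+0.4M^2/(M+1)-0.2231M\). If that check turns out to be tight, we fall back on the crude split \(0.4M^2/(M+1)\ge 0.4M-0.4\), which yields a margin of \(0.177M-0.4+\log 10>0\) for all \(M\ge1\).
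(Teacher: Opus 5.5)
Your interior-$r$ argument is the paper's argument. You use a Robbins/Stirling lower bound on $\binom Mr$ that produces $e^{-M D(r/M\,\|\,1/5)}$; you cite the packaged MacWilliams--Sloane form with prefactor $1/\sqrt{2M}$, while the paper derives $1/(3\sqrt M)$ directly. You then apply $\log x\le x-1$ to get $MD\le\frac{25}{4}(r-M/5)^2/M$, and finish with $\frac{25}{4M}\le\frac{10}{M+1}$ for $M\ge2$. Your explicit endpoint checks fill in what the paper calls ``immediate,'' and the $r=M$ check is fine.

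One correction is needed: your primary $r=0$ reduction is false. After weakening $0.4M^2/(M+1)$ to $0.2M$, you need $0.0232M\le\log(10\sqrt{M+1})$. This fails for every $M\ge 216$; for example, at $M=1000$ the left side is $23.2$ and the right side is about $5.8$. Your justification that the right side ``grows like $\tfrac12\log M$'' argues against you, since a linear function eventually beats a logarithm.

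The weakening discards exactly the linear margin $0.4-\log\frac54>0$ that makes this case true. So your fallback is not optional; it is the actual proof of the $r=0$ case. Using $0.4M^2/(M+1)\ge 0.4M-0.4$, you get $(0.4-\log\frac54)M-0.4+\log(10\sqrt{M+1})>0$ for all $M\ge1$. Alternatively, your monotonicity check works: the derivative is at least $0.3-\log\frac54>0$ for $M\ge1$, and the function is positive at $M=1$. With that substitution, the proof is complete.
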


\begin{proof}
We use Robbins' explicit version of Stirling's formula~\cite{Robbins1955}:
\(m!=\sqrt{2\pi}\,m^{m+1/2}e^{-m}e^{\rho_m}\), with
\((12m+1)^{-1}<\rho_m<(12m)^{-1}\). The endpoint cases \(r=0\) and \(r=M\)
are immediate from the exponential term in
\eqref{eq:crude-local-binomial-lower}, after weakening constants. We assume
therefore that \(1\le r\le M-1\), and set \(\theta=r/M\). Applying Robbins'
formula to \(M!\), \(r!\), and \((M-r)!\), we get
\[
b_M(r)
\ge
\frac{e^{-1/6}}{\sqrt{2\pi M\theta(1-\theta)}}\,
\exp[-M f(\theta)],
\qquad
f(\theta)\coloneqq
\theta\log(5\theta)
+
(1-\theta)\log\!\left(\frac{5(1-\theta)}4\right).
\]
Since \(\theta(1-\theta)\le1/4\), the prefactor is at least \(1/(3\sqrt M)\).
Moreover, using \(\log x\le x-1\), we have
\(f(\theta)\le\theta(5\theta-1)+(1-\theta)(5(1-\theta)/4-1)
=\frac{25}{4}(\theta-\frac15)^2\). Thus
\[
M f(\theta)
\le
\frac{25}{4}\frac{\left(r-\frac M5\right)^2}{M}
\le
10\,\frac{\left(r-\frac M5\right)^2}{M+1},
\]
where the last inequality holds for \(M\ge2\). The case \(M=1\) is checked
directly. Hence
\[
b_M(r)
\ge
\frac1{3\sqrt M}
\exp\!\left[
-10\,\frac{\left(r-\frac M5\right)^2}{M+1}
\right]
\ge
\frac1{10\sqrt{M+1}}
\exp\!\left[
-10\,\frac{\left(r-\frac M5\right)^2}{M+1}
\right],
\]
which proves the claim.
\end{proof}
In the next lemma, we identify a macroscopic region of gates of the circuit for which the variance of the conditional mean is
provably at least of order \(1/n^2\). 
\begin{lemma}[Lower bound on the variance of the conditional mean]
\label{lem:otoc1-PF-exact-front-window}
Assume that \(n\) is even and sufficiently large. Let $d$ even such that \(|\Delta|\le\sqrt n\), where \(\Delta\coloneqq d-\frac53 n\). Let
\[
\mathcal S
\coloneqq
\left\{
(\ell,t):
\ell,t\ \mathrm{even},\quad
\frac n3\le \ell\le \frac{2n}{3},\quad
\left|t-\frac53\ell-\frac{\Delta}{2}\right|\le \sqrt n
\right\}.
\]
Then,
for every \(z=(\ell,t)\in\mathcal S\), the factors \(\mathsf P_{\ell,t}\) and
\(\mathsf F_{\ell,t}\) defined in
Lemma~\ref{lem:otoc1-conditional-mean-even-gate} satisfy
\begin{align}
\label{eq:otoc1-P-binomial-front-window}
\mathsf P_{\ell,t}
&=
\psi_{t-1}\!\left(\frac{t-\ell}{2}\right),\\
\label{eq:otoc1-F-binomial-front-window}
\mathsf F_{\ell,t}
&=
\frac54
\psi_{d-t-1}\!\left(\frac{d-t-(n-\ell)}{2}\right),
\end{align}
where \(\psi_M(r)\coloneqq 4^{M-r}5^{-M}\left[\binom Mr-\binom M{r-1}\right]\).
Moreover, for every density matrix \(\rho\), uniformly for
\(z=(\ell,t)\in\mathcal S\), it holds that
\[
\operatorname{Var}_{G\sim\Haar(U(4))}
\!\left(
\E_{\neq z}\mathrm{OTOC}^{(1)}_\rho(U_d(G))
\right)
=
\Omega(n^{-2}) .
\]
\end{lemma}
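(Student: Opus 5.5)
The proof has two parts. First we derive closed binomial formulas for the matrix elements of $Q^s$ that appear in $\mathsf P_{\ell,t}$ and $\mathsf F_{\ell,t}$. Then we bound these formulas from below on $\mathcal S$ using Lemma~\ref{lem:crude-local-binomial-lower}. The variance claim then follows from Lemma~\ref{lem:otoc1-conditional-mean-variance}, since $\Var_G(A(G))$ is a positive constant.

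For the exact formulas we reuse the generating-function machinery of Lemma~\ref{thm:endpoint-otoc1-exact-formula}. In the bulk, $Q$ acts as $\frac1{25}(16\,\text{shift right}+8+\text{shift left})$. Under the substitution $z=25r/[4(1+r)^2]$ this factorizes as $(4+x)(4+x^{-1})/25$ in a shift variable $x$. A single effective step is therefore the composition of two Bernoulli-type moves, which is why $Q^s$ produces binomials $\binom{2s}{m}(1/5)^m(4/5)^{2s-m}$ with $M=t-1$ or $M=d-t-1$ trials (up to the boundary-step offset). The left boundary is reflecting with a modified diagonal entry $9/25$. By the method of images it produces the difference term $-\binom M{r-1}$. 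The right boundary sits at distance $\Theta(n)$ from every cell $a$ with $\ell\le 2n/3$ reached by time $t\lesssim 5n/3\cdot(2/3)$, so it does not contribute to $e_a^{\mathsf T}Q^{t/2-1}e_1$. The past factor is then assembled as
\[
\tfrac45 e_a^{\mathsf T}Q^{t/2-1}e_1-\tfrac1{20}e_{a+1}^{\mathsf T}Q^{t/2-1}e_1 .
\]
Using Pascal's rule, this combination collapses to the single ballot-type expression $\psi_{t-1}((t-\ell)/2)$. For $\mathsf F_{\ell,t}$ the difference $e_{n/2}^{\mathsf T}Q^{s}(e_{a+1}-e_a)$ is a discrete derivative of the absorption-type profile at the right boundary. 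Here the right boundary ($24/25$ on the diagonal) plays the role that the left boundary played for $\mathsf P$, with an image term, and one obtains $\frac54\psi_{d-t-1}((d-t-(n-\ell))/2)$. I would verify both identities most safely by checking the recursion: $\psi$ satisfies the $Q$-recursion in the bulk together with the correct boundary condition and initial data. This is cleaner than pushing through the contour integral. Wrap-around image terms (as in the $j\ge1$ terms of Lemma~\ref{thm:endpoint-otoc1-exact-formula}) are suppressed by $4^{-n}$ on $\mathcal S$ and can be excluded there exactly, because the light cone forbids them.

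For the lower bound, write $\psi_M(r)=b_M(r)\bigl[1-\frac{r}{M-r+1}\cdot 4\bigr]$. This uses $\binom M{r-1}/\binom Mr=r/(M-r+1)$, and the power of $4$ is absorbed correctly by writing $\psi_M(r)=b_M(r)-4\cdot$(next term), so $\psi_M(r)=b_M(r)\bigl(1-\frac{4r}{M-r+1}\bigr)$ with a sign to be fixed. On $\mathcal S$ we have $t\approx\frac53\ell+\Delta/2$, hence $r=(t-\ell)/2\approx t/5$. This is the center of $\mathrm{Bin}(t-1,1/5)$ up to $O(\sqrt n)$, so Lemma~\ref{lem:crude-local-binomial-lower} gives $b_M(r)=\Omega(n^{-1/2})$. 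Since $\ell\ge n/3$ we have $M=\Theta(n)$. The bracket, however, is of order $|r-M/5|/M=O(n^{-1/2})$ near the center, and that would only give $n^{-1}$.

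This is the main obstacle. I expect the correct reading to be that the bracket is a ratio bounded away from $0$, namely
\[
1-\frac{r}{4(M-r+1)}\approx 1-\frac{1/5}{4\cdot 4/5}=\frac{15}{16},
\]
with the $4$ landing in the denominator. I would therefore recheck the normalization of $\psi$ against the $-\frac1{20}$ coefficient. If that is right, $\psi=\Omega(n^{-1/2})$ uniformly on $\mathcal S$. The same argument applies to $\mathsf F$ with $M=d-t-1$ and $r=(d-t-(n-\ell))/2$. Its centering uses $d-t\approx\frac53(n-\ell)+\Delta/2$, which is exactly the second constraint defining $\mathcal S$ combined with $d=\frac53n+\Delta$. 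The Gaussian exponent stays $O(1)$ because $|\Delta|,|t-\frac53\ell-\Delta/2|\le\sqrt n$.

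Multiplying the two bounds gives $(\mathsf P\mathsf F)^2=\Omega(n^{-2})$, and Lemma~\ref{lem:otoc1-conditional-mean-variance} then yields the claim.
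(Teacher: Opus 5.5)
\textbf{Gap in the lower bound.} The step that carries your lower bound contains an algebra slip, and you leave it unresolved. Both binomials in $\psi_M(r)$ carry the same prefactor $4^{M-r}5^{-M}$, and $4^{M-r}5^{-M}\binom Mr=b_M(r)$. Hence
\[
\psi_M(r)=b_M(r)\Bigl(1-\frac{r}{M-r+1}\Bigr)=b_M(r)\,\frac{M+1-2r}{M-r+1}.
\]
There is no factor $4$ anywhere. It is not in the numerator, where, as you note yourself, it would make the bracket $O(n^{-1/2})$ and destroy the bound. It is not in the denominator either, so your guessed value $15/16$ is also wrong. Near the binomial center $r\approx M/5$ the bracket is $\approx 1-\tfrac14=\tfrac34$. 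Concretely:
\begin{itemize}
\item for $\mathsf P_{\ell,t}$ (with $M=t-1$, $r=(t-\ell)/2$) it equals exactly $2\ell/(t+\ell)$;
\item for $\mathsf F_{\ell,t}$ (with $M=d-t-1$, $r=(d-t-(n-\ell))/2$) it equals exactly $2(n-\ell)/(d-t+n-\ell)$.
\end{itemize}
Both are $\Theta(1)$ on $\mathcal S$, because $t=\tfrac53\ell+O(\sqrt n)$, $d-t=\tfrac53(n-\ell)+O(\sqrt n)$, and $n/3\le\ell\le 2n/3$. The ballot difference $\binom Mr-\binom M{r-1}$ is small relative to $\binom Mr$ only near $r=(M+1)/2$, which lies $\Theta(M)$ away from $M/5$, so the ``obstacle'' you identify does not exist. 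With this identity, Lemma~\ref{lem:crude-local-binomial-lower} gives $\mathsf P_{\ell,t},\mathsf F_{\ell,t}=\Omega(n^{-1/2})$ exactly as in the paper's proof. As written, however, your proposal does not establish this.

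\textbf{The exact formulas.} Verifying them through the recursion, i.e.\ a method-of-images argument, is a legitimate and more elementary alternative to the paper's route. The paper instead uses the cofactor formula for the tridiagonal resolvent, explicit leading and trailing minors, and coefficient extraction under $\zeta=25r/[4(1+r)^2]$. However, the boundary mechanism you describe is not the right one.
\begin{itemize}
\item The raw entries $p_s(a)=e_a^{\mathsf T}Q^se_1$ satisfy the bulk recursion with the weighted reflecting condition $p_s(0)=p_s(1)/16$. They are not single-image expressions: in the paper's resolvent they carry a factor $1/(4-r)$, which is a geometric series of images.
\item It is the combination $g_s(a)=\tfrac45p_s(a)-\tfrac1{20}p_s(a+1)$ that obeys the bulk recursion with the Dirichlet condition $g_s(0)=0$ and initial data $\tfrac45\delta_{a,1}$. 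These are matched by $\psi_{2s+1}(s+1-a)$, with $t=2s+2$ and $\ell=2a$.
\item Symmetrically, taking the difference $e_{a+1}-e_a$ converts the sticky right boundary into a Dirichlet condition for $\mathsf F$.
\end{itemize}
You must also show that the far boundary cannot be felt. For $\mathsf P$ this is the round-trip count $t+\ell<2n$, which uses $\ell\le 2n/3$. For $\mathsf F$ it is $d-t<n+\ell$, which uses $\ell\ge n/3$, and your proposal does not address this case. The $j\ge1$ wrap-around terms are not the only ones that need to be excluded.
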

\begin{proof}
Recall from Lemma~\ref{lem:otoc1-conditional-mean-even-gate} that, with
\(a=\ell/2\),
\[
\mathsf P_{\ell,t}
=
\frac45 e_a^{\mathsf T}Q^{\frac t2-1}e_1
-
\frac1{20}e_{a+1}^{\mathsf T}Q^{\frac t2-1}e_1,
\qquad
\mathsf F_{\ell,t}
=
e_{n/2}^{\mathsf T}Q^{\frac d2-\frac t2-1}(e_{a+1}-e_a).
\]
The proof has two parts. First, we compute these two quantities exactly by the
same method used in the proof of
Lemma~\ref{thm:endpoint-otoc1-exact-formula}. This gives
\eqref{eq:otoc1-P-binomial-front-window} and \eqref{eq:otoc1-F-binomial-front-window}. Second, we use the definition of
\(\mathcal S\) to show that \(\mathsf P_{\ell,t}=\Omega(n^{-1/2})\) and
\(\mathsf F_{\ell,t}=\Omega(n^{-1/2})\).

As in the proof of
Lemma~\ref{thm:endpoint-otoc1-exact-formula}, let  \(Q\) be the endpoint Markov
matrix in \eqref{eq:otoc1-Q}. By the expansion $(I-\zeta Q)^{-1}=\sum_{s\ge0}\zeta^sQ^s$, we can write the two sensitivity factors $\mathsf P_{\ell,t}$ and $\mathsf F_{\ell,t}$ in terms of the following coefficients:
\begin{align}
\label{eq:otoc1-P-as-resolvent-coefficient}
\mathsf P_{\ell,t}
&=
[\zeta^{\frac t2-1}]
\left[
\frac45 e_a^{\mathsf T}(I-\zeta Q)^{-1}e_1
-
\frac1{20}e_{a+1}^{\mathsf T}(I-\zeta Q)^{-1}e_1
\right],\\
\label{eq:otoc1-F-as-resolvent-coefficient}
\mathsf F_{\ell,t}
&=
[\zeta^{\frac d2-\frac t2-1}]
\left[
e_{n/2}^{\mathsf T}(I-\zeta Q)^{-1}(e_{a+1}-e_a)
\right].
\end{align}
Thus the computation of \(\mathsf P_{\ell,t}\) and \(\mathsf F_{\ell,t}\) is reduced to two steps: first compute the matrix entries appearing in
\eqref{eq:otoc1-P-as-resolvent-coefficient} and
\eqref{eq:otoc1-F-as-resolvent-coefficient}, and then extract the indicated
coefficients.

As in the proof of Lemma~\ref{thm:endpoint-otoc1-exact-formula}, set
\(\zeta=25r/[4(1+r)^2]\). We shall use the determinant identities derived
there. First, by \eqref{eq:otoc1-full-det-solution},
\begin{equation}
\label{eq:full-det-recalled-sensitivity-front-window}
\det(I-\zeta Q)
=
\frac{
(4-r)(1-4r)(1-r^n)
}{
4(1-r)(1+r)^{n+1}
}.
\end{equation}
Second, by \eqref{eq:otoc1-leading-det-solution}, if \(D_m(\zeta)\) denotes the
upper-left \(m\times m\) leading minor of \(I-\zeta Q\), then
\begin{equation}
\label{eq:leading-minor-recalled-sensitivity-front-window}
D_m(\zeta)
=
\frac{
(4-r)+(1-4r)r^{2m+1}
}{
4(1-r)(1+r)^{2m+1}
},
\qquad
0\le m\le n/2-1.
\end{equation}

We now compute the two resolvent entries in
\eqref{eq:otoc1-P-as-resolvent-coefficient} and
\eqref{eq:otoc1-F-as-resolvent-coefficient}. We start with
\(e_m^{\mathsf T}(I-\zeta Q)^{-1}e_1\), for \(1\le m\le n/2\). Using \eqref{eq:otoc1-Q}, the matrix
\(I-\zeta Q\) is tridiagonal, with lower and upper
off-diagonal entries $\left(I-\zeta Q\right)_{j+1,j}=-\frac{16\zeta}{25}$ and $
\left(I-\zeta Q\right)_{j,j+1}=-\frac{\zeta}{25}$. By the cofactor formula,
\[
e_m^{\mathsf T}\left(I-\zeta Q\right)^{-1}e_1
=
(\left(I-\zeta Q\right)^{-1})_{m,1}
=
\frac{
(-1)^{1+m}\det \left(I-\zeta Q\right)^{(1,m)}
}{
\det\left(I-\zeta Q\right)
},
\]
where \(\left(I-\zeta Q\right)^{(1,m)}\) is obtained by deleting row \(1\) and column
\(m\) from $\left(I-\zeta Q\right)$. Since \(\left(I-\zeta Q\right)\) is tridiagonal, this minor has only one nonzero path
connecting the first column to the deleted \(m\)-th column. This path uses the
lower off-diagonal entries
\(\left(I-\zeta Q\right)_{2,1},\left(I-\zeta Q\right)_{3,2},\ldots,
\left(I-\zeta Q\right)_{m,m-1}\), each equal to \(-16\zeta/25\). Hence the path
product is \((-16\zeta/25)^{m-1}\). The cofactor sign cancels this sign. The
remaining independent block is the lower-right block on the indices
\(m+1,m+2,\ldots,n/2\). Denote its determinant by \(E_m(\zeta)\), with the
convention \(E_{n/2}(\zeta)=1\). Therefore
\begin{equation}
\label{eq:cofactor-left-entry-sensitivity-front-window}
e_m^{\mathsf T}\left(I-\zeta Q\right)^{-1}e_1
=
\frac{
\left(\frac{16\zeta}{25}\right)^{m-1}
E_m(\zeta)
}{
\det\left(I-\zeta Q\right)
}.
\end{equation}

We now compute \(E_m(\zeta)\). This determinant satisfies the same bulk
tridiagonal recursion as the leading minors \(D_m(\zeta)\) in the exact
endpoint computation; see \eqref{eq:otoc1-bulk-det-recursion}. The only
difference is that the recursion is now solved from the right boundary. Thus
\[
E_m
=
\left(1-\frac{8\zeta}{25}\right)E_{m+1}
-
\frac{16\zeta^2}{625}E_{m+2}.
\]
The right-boundary conditions are \(E_{n/2}(\zeta)=1\) and
\(E_{{n/2}-1}(\zeta)=1-24\zeta/25\). With the same change of variables
\(\zeta=25r/[4(1+r)^2]\), the solution is
\begin{equation}
\label{eq:trailing-minor-sensitivity-front-window}
E_m(\zeta)
=
\frac{
(1-4r)+(4-r)r^{n-2m+1}
}{
(1-r)(1+r)^{n-2m+1}
}\,,
\end{equation}
as it can be verified by substitution.

Since \(16\zeta/25=4r/(1+r)^2\), substituting
\eqref{eq:trailing-minor-sensitivity-front-window} and
\eqref{eq:full-det-recalled-sensitivity-front-window} into
\eqref{eq:cofactor-left-entry-sensitivity-front-window} gives
\begin{align}
\label{eq:left-resolvent-entry-sensitivity-front-window}
e_m^{\mathsf T}\left(I-\zeta Q\right)^{-1}e_1
&=
\frac{
\left(\frac{4r}{(1+r)^2}\right)^{m-1}
\frac{(1-4r)+(4-r)r^{n-2m+1}}
{(1-r)(1+r)^{n-2m+1}}
}{
\frac{(4-r)(1-4r)(1-r^n)}
{4(1-r)(1+r)^{n+1}}
} =
\frac{
4^m r^{m-1}(1+r)^2
\left[
(1-4r)+(4-r)r^{n-2m+1}
\right]
}{
(4-r)(1-4r)(1-r^n)
}.
\end{align}

We also need \(e_{n/2}^{\mathsf T}\left(I-\zeta Q\right)^{-1}e_m\), where
\(1\le m\le {n/2}\). By the same cofactor formula,
\begin{equation}
\label{eq:cofactor-right-entry-sensitivity-front-window}
e_{n/2}^{\mathsf T}\left(I-\zeta Q\right)^{-1}e_m
=
\frac{
\left(\frac{16\zeta}{25}\right)^{{n/2}-m}
D_{m-1}(\zeta)
}{
\det\left(I-\zeta Q\right)
}.
\end{equation}
Substituting \eqref{eq:leading-minor-recalled-sensitivity-front-window} and
\eqref{eq:full-det-recalled-sensitivity-front-window}, and using again
\(16\zeta/25=4r/(1+r)^2\), we obtain
\begin{equation}
\label{eq:right-resolvent-entry-sensitivity-front-window}
e_{n/2}^{\mathsf T}\left(I-\zeta Q\right)^{-1}e_m
=
\frac{
4^{{n/2}-m}r^{{n/2}-m}(1+r)^2
\left[
(4-r)+(1-4r)r^{2m-1}
\right]
}{
(4-r)(1-4r)(1-r^n)
}.
\end{equation}
We now specialize the two identities in \eqref{eq:left-resolvent-entry-sensitivity-front-window} and \eqref{eq:right-resolvent-entry-sensitivity-front-window} to the sensitivity factors $\mathsf P_{\ell,t}$ and $\mathsf F_{\ell,t}$ in \eqref{eq:otoc1-P-as-resolvent-coefficient} and \eqref{eq:otoc1-F-as-resolvent-coefficient}. From
\eqref{eq:left-resolvent-entry-sensitivity-front-window} and \eqref{eq:right-resolvent-entry-sensitivity-front-window}, using \(2a=\ell\), we get
\begin{align}
\label{eq:resolvent-P-sensitivity-front-window}
\frac45 e_a^{\mathsf T}(I-\zeta Q)^{-1}e_1
-
\frac1{20}e_{a+1}^{\mathsf T}(I-\zeta Q)^{-1}e_1
&=
\frac15
\frac{
4^a r^{a-1}(1+r)^2(1-r^{n-\ell})
}{
1-r^n
},\\
\label{eq:resolvent-F-sensitivity-front-window}
e_{n/2}^{\mathsf T}(I-\zeta Q)^{-1}(e_{a+1}-e_a)&=
\frac{
4^{\frac{n-\ell}{2}-1}
r^{\frac{n-\ell}{2}-1}
(1+r)^2(1-r^\ell)
}{
1-r^n
}.
\end{align}

We can now compute $\mathsf P_{\ell,t}$ and $\mathsf F_{\ell,t}$ by extracting the coefficients as in \eqref{eq:otoc1-P-as-resolvent-coefficient} and \eqref{eq:otoc1-F-as-resolvent-coefficient}. Since \(\zeta=25r/[4(1+r)^2]\), Cauchy's coefficient formula,
\([z^s]H(z)=(2\pi i)^{-1}\oint H(z)z^{-s}\,dz/z\), gives the following
simple rule: if \(H(\zeta(r))=C r^q(1+r)^2K(r)\), with \(K\) regular at
\(r=0\), then
\begin{equation}
\label{eq:coefficient-extraction-rule-sensitivity-front-window}
[\zeta^s]H(\zeta)
=
C\left(\frac4{25}\right)^s
\left[
r^{s-q}
\right]
(1-r)(1+r)^{2s+1}K(r).
\end{equation}
Indeed, \(\zeta^{-s}=(4/25)^s r^{-s}(1+r)^{2s}\), while \(d\zeta/\zeta=(1-r)[r(1+r)]^{-1}dr\). Thus the power \(r^q\) in \(H\) shifts
the coefficient to be extracted from \(r^s\) to \(r^{s-q}\), and the remaining
factors are exactly those in
\eqref{eq:coefficient-extraction-rule-sensitivity-front-window}.

We first calculate the factor $\mathsf P_{\ell,t}$. By
\eqref{eq:otoc1-P-as-resolvent-coefficient} and
\eqref{eq:resolvent-P-sensitivity-front-window}, the coefficient-extraction
rule gives
\begin{equation}
\label{eq:P-coeff-r-form-front-window}
\mathsf P_{\ell,t}
=
\frac15\,4^{\ell/2}
\left(\frac4{25}\right)^{\frac t2-1}
\left[
r^{\frac{t-\ell}{2}}
\right]
(1-r)(1+r)^{t-1}
\frac{1-r^{n-\ell}}{1-r^n}.
\end{equation}
In the region \(\mathcal S\), the fraction \((1-r^{n-\ell})/(1-r^n)\) may be
replaced by its leading term \(1\). Indeed, $\frac{1-r^{n-\ell}}{1-r^n}
=
\sum_{j\ge0}r^{jn}
-
\sum_{j\ge0}r^{n-\ell+jn}$. The target degree in \eqref{eq:P-coeff-r-form-front-window} is
\((t-\ell)/2\). Since \(d<2n\) and \(t\le d\), this target degree is smaller
than \(n\). Hence all terms \(r^{jn}\) with \(j\ge1\) have degree too large.
The terms \(r^{n-\ell+jn}\) also do not contribute: for \(j\ge1\) their
degree is too large, while for \(j=0\) the required coefficient has degree
\((t-\ell)/2-(n-\ell)=(t+\ell-2n)/2\le -n/9+O(\sqrt n)<0\). Therefore only the
leading term \(1\) contributes, and
\[
\mathsf P_{\ell,t}
=
\frac15\,4^{\ell/2}
\left(\frac4{25}\right)^{\frac t2-1}
\left[
r^{\frac{t-\ell}{2}}
\right]
(1-r)(1+r)^{t-1}.
\]
By the definition of \(\psi_M\), for every integer \(u\),
\([r^u](1-r)(1+r)^{t-1}
=\binom{t-1}{u}-\binom{t-1}{u-1}
=5^{t-1}4^{-(t-1-u)}\psi_{t-1}(u)\). Applying this with
\(u=(t-\ell)/2\), the scalar prefactor cancels exactly, and we get $\mathsf P_{\ell,t}
=
\psi_{t-1}\!\left(\frac{t-\ell}{2}\right)$, which proves \eqref{eq:otoc1-P-binomial-front-window}.

The factor $\mathsf F_{\ell,t}$ is treated in the same way. By
\eqref{eq:otoc1-F-as-resolvent-coefficient} and
\eqref{eq:resolvent-F-sensitivity-front-window},
\begin{equation}
\label{eq:F-coeff-r-form-front-window}
\mathsf F_{\ell,t}
=
4^{\frac{n-\ell}{2}-1}
\left(\frac4{25}\right)^{\frac d2-\frac t2-1}
\left[
r^{\frac{d-t-(n-\ell)}{2}}
\right]
(1-r)(1+r)^{d-t-1}
\frac{1-r^\ell}{1-r^n}.
\end{equation}
Again, in the region \(\mathcal S\), the fraction \((1-r^\ell)/(1-r^n)\) may
be replaced by its leading term \(1\). Indeed, $\frac{1-r^\ell}{1-r^n}
=
\sum_{j\ge0}r^{jn}
-
\sum_{j\ge0}r^{\ell+jn}$.
The target degree in \eqref{eq:F-coeff-r-form-front-window} is
\((d-t-(n-\ell))/2<n\), so all terms \(r^{jn}\) with \(j\ge1\) have degree too
large. The terms \(r^{\ell+jn}\) also do not contribute: for \(j\ge1\)
their degree is too large, while for \(j=0\) the required coefficient has
degree
\((d-t-(n-\ell))/2-\ell=(d-t-(n+\ell))/2\le -n/9+O(\sqrt n)<0\). Therefore only the leading term \(1\) contributes, and
\[
\mathsf F_{\ell,t}
=
4^{\frac{n-\ell}{2}-1}
\left(\frac4{25}\right)^{\frac d2-\frac t2-1}
\left[
r^{\frac{d-t-(n-\ell)}{2}}
\right]
(1-r)(1+r)^{d-t-1}.
\]
By the definition of \(\psi_M\), for every integer \(u\),
\([r^u](1-r)(1+r)^{d-t-1}
=\binom{d-t-1}{u}-\binom{d-t-1}{u-1}
=5^{d-t-1}4^{-(d-t-1-u)}\psi_{d-t-1}(u)\). Applying this with
\(u=(d-t-(n-\ell))/2\), the scalar prefactor becomes \(5/4\), and hence $\mathsf F_{\ell,t}
=
\frac54
\psi_{d-t-1}\!\left(\frac{d-t-(n-\ell)}{2}\right)$, which proves \eqref{eq:otoc1-F-binomial-front-window} .

We now prove the lower bounds on $\mathsf P_{\ell,t}$ and $\mathsf F_{\ell,t}$. Write
\(t=\frac53\ell+\frac{\Delta}{2}+\varepsilon\), with
\(|\varepsilon|\le\sqrt n\). Then
\(d-t=\frac53(n-\ell)+\frac{\Delta}{2}-\varepsilon\).

Set \(b_M(r)\coloneqq\binom Mr(1/5)^r(4/5)^{M-r}\). Since
\(\binom M{r-1}/\binom Mr=r/(M-r+1)\), for \(0\le r\le M\) we have $\psi_M(r)=b_M(r)\frac{M+1-2r}{M-r+1}$. By \eqref{eq:otoc1-P-binomial-front-window}, $\mathsf P_{\ell,t}
=
b_{t-1}\!\left(\frac{t-\ell}{2}\right)\frac{2\ell}{t+\ell}$. Also \((t-\ell)/2-(t-1)/5=(3t-5\ell+2)/10\). Hence
Lemma~\ref{lem:crude-local-binomial-lower} gives
\[
\mathsf P_{\ell,t}
\ge
\frac{\ell}{5(t+\ell)\sqrt t}
\exp\!\left[-\frac{(3t-5\ell+2)^2}{10t}\right].
\]
Now \(3t-5\ell+2=\frac32\Delta+3\varepsilon+2=O(\sqrt n)\), while
\(t=\Theta(n)\) and \(\ell/(t+\ell)=\Theta(1)\). Therefore
\(\mathsf P_{\ell,t}=\Omega(n^{-1/2})\), uniformly on \((\ell,t)\in\mathcal S\).

Similarly, by \eqref{eq:otoc1-F-binomial-front-window}, $\mathsf F_{\ell,t}
=
\frac54\,
b_{d-t-1}\!\left(\frac{d-t-(n-\ell)}{2}\right)
\frac{2(n-\ell)}{d-t+n-\ell}$. Also $\frac{d-t-(n-\ell)}{2}-\frac{d-t-1}{5}
=
\frac{3d-3t-5n+5\ell+2}{10}$. Thus Lemma~\ref{lem:crude-local-binomial-lower} gives
\[
\mathsf F_{\ell,t}
\ge
\frac{n-\ell}{4(d-t+n-\ell)\sqrt{d-t}}
\exp\!\left[-\frac{(3d-3t-5n+5\ell+2)^2}{10(d-t)}\right].
\]
Since \(3d-3t-5n+5\ell+2=\frac32\Delta-3\varepsilon+2=O(\sqrt n)\), while
\(d-t=\Theta(n)\) and \((n-\ell)/(d-t+n-\ell)=\Theta(1)\), we get
\(\mathsf F_{\ell,t}=\Omega(n^{-1/2})\), uniformly on \((\ell,t)\in\mathcal S\).

Combining the two bounds, we obtain
\[
\mathsf P_{\ell,t}\mathsf F_{\ell,t}
=
\Omega(n^{-1}),
\]
uniformly for \(z=(\ell,t)\in\mathcal S\).

We now use Lemma~\ref{lem:otoc1-conditional-mean-variance}. It gives, for every
density matrix \(\rho\),
\[
\operatorname{Var}_{G\sim\Haar(U(4))}
\!\left(
\E_{\neq z}\mathrm{OTOC}^{(1)}_\rho(U_d(G))
\right)
=
\frac{256}{225}\,
\bigl(\mathsf P_{\ell,t}\mathsf F_{\ell,t}\bigr)^2
\operatorname{Var}_{G\sim\Haar(U(4))}(A(G)).
\]
The last factor is a strictly positive numerical constant, independent of
\(n,d,\ell,t\), and \(\rho\). Therefore
\[
\operatorname{Var}_{G\sim\Haar(U(4))}
\!\left(
\E_{\neq z}\mathrm{OTOC}^{(1)}_\rho(U_d(G))
\right)
=
\Omega(n^{-2}),
\]
uniformly on \(\mathcal S\). This proves the variance lower bound and completes
the proof.
\end{proof}

We are now ready to pass from the one-gate estimates to the full
circuit-to-circuit variance.

\begin{theorem}[Variance lower bound in the front window]
\label{thm:endpoint-otoc1-variance-front-window}
Let \(n,d\) be even, with \(n\) sufficiently large, and assume
\[
\left|d-\frac53 n\right|\le \sqrt n.
\]
Then, for every density matrix \(\rho\),
\[
\Var_{U_d}\!\left(\mathrm{OTOC}^{(1)}_\rho(U_d)\right)
=
\Omega(n^{-1/2}).
\]
The implicit constant is independent of \(n,d,\rho\).
\end{theorem}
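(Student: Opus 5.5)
The plan is to combine Lemma~\ref{lem:otoc1-PF-exact-front-window}, which gives an \(\Omega(n^{-2})\) single-gate conditional-mean variance for every gate in \(\mathcal S\), with an Efron--Stein/Hoeffding (ANOVA) lower bound that adds these contributions over the independent gates.

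\textbf{Step 1: orthogonal decomposition.} Write \(F\coloneqq\mathrm{OTOC}^{(1)}_\rho(U_d)\), viewed as a square-integrable function of the independent gates \((G_w)_w\). For each gate \(z\), define \(g_z(G_z)\coloneqq \E_{\neq z}F - \E F\), which is a mean-zero function of \(G_z\) alone. For \(z\neq z'\), independence of \(G_z\) and \(G_{z'}\) gives \(\E[g_z\overline{g_{z'}}]=\E g_z\,\E\overline{g_{z'}}=0\). Moreover, conditioning on \(G_z\) yields \(\E[(F-\E F)\overline{g_z}]=\E[|g_z|^2]\).

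\textbf{Step 2: Bessel's inequality.} The functions \(\{g_z\}\) are orthogonal in \(L^2\). Projecting \(F-\E F\) onto their span therefore gives
\[
\Var(F)\ \ge\ \sum_{z}\E|g_z|^2=\sum_z \Var_{G_z}\!\bigl(\E_{\neq z}F\bigr).
\]
This holds with the complex variance convention as well. Dropping every gate outside \(\mathcal S\), all of whose terms are nonnegative, we get \(\Var(F)\ge\sum_{z\in\mathcal S}\Var_{G_z}(\E_{\neq z}F)\).

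\textbf{Step 3: counting.} I must check two things. First, every \((\ell,t)\in\mathcal S\) is a genuine gate satisfying the hypotheses of Lemma~\ref{lem:otoc1-conditional-mean-even-gate}, namely \(2\le\ell\le n-2\) and \(2\le t\le d-2\). This holds because \(\ell\in[n/3,2n/3]\) and \(t\approx \tfrac53\ell+\Delta/2\pm\sqrt n\in[\,5n/9-O(\sqrt n),\,10n/9+O(\sqrt n)\,]\), which lies strictly inside \([2,d-2]\) since \(d\approx 5n/3\). Second, I need \(|\mathcal S|=\Omega(n^{3/2})\). There are \(\Theta(n)\) even values of \(\ell\). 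For each, the constraint \(|t-\tfrac53\ell-\Delta/2|\le\sqrt n\) is an interval of length \(2\sqrt n\), which contains at least \(\sqrt n-1\) even integers.

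\textbf{Step 4: conclusion.} Multiplying the \(\Omega(n^{-2})\) bound of Lemma~\ref{lem:otoc1-PF-exact-front-window}, which is uniform in \(z\in\mathcal S\) and in \(\rho\), by \(|\mathcal S|=\Omega(n^{3/2})\) gives \(\Var(F)=\Omega(n^{-1/2})\), with constants independent of \(n,d,\rho\).

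\textbf{Main obstacle.} The only subtle point is the orthogonality and projection argument in Steps 1–2, which is standard. The rest is bookkeeping to confirm that \(\mathcal S\) lies in the admissible index range of the earlier lemmas.
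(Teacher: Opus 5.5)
Your proposal is correct and matches the paper's proof essentially step for step. The paper likewise defines the single-gate projections $Y_z=\E[X-\E X\mid G_z]$, uses independence of the gates for orthogonality, and expands $\E|X-\E X-\sum_{z\in\mathcal S}Y_z|^2\ge 0$ (your Bessel inequality) to get $\Var(X)\ge\sum_{z\in\mathcal S}\E|Y_z|^2$; it then combines $|\mathcal S|=\Theta(n^{3/2})$ with the uniform $\Omega(n^{-2})$ bound of Lemma~\ref{lem:otoc1-PF-exact-front-window}, and your explicit check that every gate in $\mathcal S$ satisfies $2\le\ell\le n-2$, $2\le t\le d-2$ supplies a detail the paper only asserts.
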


\begin{proof}
Set
\[
\Delta\coloneqq d-\frac53 n,
\]
and let \(\mathcal S\) be the set of gates from
Lemma~\ref{lem:otoc1-PF-exact-front-window}, namely
\[
\mathcal S
\coloneqq
\left\{
(\ell,t):
\ell,t\ \mathrm{even},\quad
\frac n3\le \ell\le \frac{2n}{3},\quad
\left|t-\frac53\ell-\frac{\Delta}{2}\right|\le \sqrt n
\right\}.
\]
For \(n\) sufficiently large, every \(z=(\ell,t)\in\mathcal S\) is a physical
even-depth gate satisfying the assumptions of
Lemma~\ref{lem:otoc1-conditional-mean-even-gate}. There are \(\Theta(n)\)
allowed even values of \(\ell\), and, for each such \(\ell\), there are
\(\Theta(\sqrt n)\) allowed even values of \(t\). Hence
\begin{equation}
\label{eq:otoc1-size-S-front-window}
|\mathcal S|=\Theta(n^{3/2}).
\end{equation}

Let $X\coloneqq \mathrm{OTOC}^{(1)}_\rho(U_d)$. For each \(z\in\mathcal S\), denote by \(G_z\) the Haar gate at \(z\), and set $Y_z \coloneqq \E[X-\E X\mid G_z]$. Then \(Y_z\) depends only on the single gate \(G_z\), and \(\E Y_z=0\). If
\(z\neq z'\), then \(G_z\) and \(G_{z'}\) are independent, so
\begin{equation}
\label{eq:otoc1-one-gate-projections-orthogonal}
\E[\overline{Y_z}Y_{z'}]
=
\E[\overline{Y_z}]\,\E[Y_{z'}]
=
0.
\end{equation}
Moreover, by the defining property of conditional expectation,
\begin{equation}
\label{eq:otoc1-one-gate-projection-inner-product}
\E\!\left[\overline{X-\E X}\,Y_z\right]
=
\E\!\left[
\E[\overline{X-\E X}\mid G_z]\,Y_z
\right]
=
\E|Y_z|^2.
\end{equation}

Using \eqref{eq:otoc1-one-gate-projections-orthogonal} and
\eqref{eq:otoc1-one-gate-projection-inner-product}, we obtain
\[
\begin{aligned}
0
&\le
\E\left|
X-\E X-\sum_{z\in\mathcal S}Y_z
\right|^2 \\
&=
\Var(X)
-
2\operatorname{Re}
\sum_{z\in\mathcal S}
\E\!\left[\overline{X-\E X}\,Y_z\right]
+
\sum_{z,z'\in\mathcal S}
\E[\overline{Y_z}Y_{z'}] \\
&=
\Var(X)
-
\sum_{z\in\mathcal S}\E|Y_z|^2.
\end{aligned}
\]
Therefore
\begin{equation}
\label{eq:otoc1-total-variance-from-one-gate-conditional-means}
\Var_{U_d}(X)
\ge
\sum_{z\in\mathcal S}
\E|Y_z|^2.
\end{equation}

For a fixed \(z\), we have
\[
\E[X\mid G_z]
=
\E_{\neq z}\mathrm{OTOC}^{(1)}_\rho(U_d(G_z)).
\]
and thus
\[
\E|Y_z|^2
=
\Var_{G_z\sim\Haar(U(4))}
\!\left(
\E_{\neq z}\mathrm{OTOC}^{(1)}_\rho(U_d(G_z))
\right)\,.
\]
Consequently, \eqref{eq:otoc1-total-variance-from-one-gate-conditional-means} implies that
\[
    \Var_{U_d}(X)
\ge
\sum_{z\in\mathcal S}\Var_{G_z\sim\Haar(U(4))}
\!\left(
\E_{\neq z}\mathrm{OTOC}^{(1)}_\rho(U_d(G_z))
\right)\,. 
\]
By Lemma~\ref{lem:otoc1-PF-exact-front-window}, uniformly for
\(z\in\mathcal S\) and for every density matrix \(\rho\),
\begin{equation}
\label{eq:otoc1-one-gate-variance-front-window-lower}
\Var_{G_z\sim\Haar(U(4))}
\!\left(
\E_{\neq z}\mathrm{OTOC}^{(1)}_\rho(U_d(G_z))
\right)
=
\Omega(n^{-2}).
\end{equation}
Hence, combining 
\eqref{eq:otoc1-one-gate-variance-front-window-lower} and
\eqref{eq:otoc1-size-S-front-window}, we get
\[
\Var_{U_d}\!\left(\mathrm{OTOC}^{(1)}_\rho(U_d)\right)
\ge
|\mathcal S|\,\Omega(n^{-2})
=
\Theta(n^{3/2})\,\Omega(n^{-2})
=
\Omega(n^{-1/2}).
\]
This proves the claim.
\end{proof}

\subsection{Which gates are responsible for the OTOC fluctuations?}

This section identifies which gates matter, and by how much, for the
fluctuations of \(\mathrm{OTOC}^{(1)}\). This also explains the eye-shaped
region of gates shown in Fig.~\ref{fig:otoc1-gate-sensitivity-lens} in the main
text.

In the previous subsection, the one-gate variance identity
\eqref{eq:otoc1-conditional-mean-variance-exact} showed that the
conditional-mean variance generated by a single gate is
\begin{equation}
\label{eq:otoc1-eye-profile-variance-factorization}
\operatorname{Var}_{G\sim\Haar(U(4))}
\!\left(
\E_{\neq z}\mathrm{OTOC}^{(1)}_\rho(U_d(G))
\right)
=
c_A
\bigl(\mathsf P_{\ell,t}\mathsf F_{\ell,t}\bigr)^2,
\end{equation}
where
\[
c_A
\coloneqq
\frac{256}{225}\,
\operatorname{Var}_{G\sim\Haar(U(4))}(A(G))
\]
is the strictly positive universal constant from
\eqref{eq:otoc1-conditional-mean-variance-exact}. Consequently, to understand which
gates \(z=(\ell,t)\) matter for the fluctuations of the OTOC, it suffices to
estimate the product \(\mathsf P_{\ell,t}\mathsf F_{\ell,t}\).

By estimating the two propagation factors \(\mathsf P_{\ell,t}\) and
\(\mathsf F_{\ell,t}\), we can identify the region of gates whose contribution
to the OTOC fluctuations is inverse-polynomial. The intuition is that both
propagation factors must be appreciable: the endpoint front starting from qubit
\(1\) must have reached the gate by time \(t\), and a perturbation created at
the gate must still be able to reach qubit \(n\) by depth \(d\). The overlap of
these two requirements is the eye-shaped region shown in
Fig.~\ref{fig:otoc1-gate-sensitivity-lens}.

For simplicity, let us consider the critical depth \(d=5n/3\). At this depth,
the geometry becomes particularly clean: the proposition below describes how the
conditional-mean variance of a gate \(z=(\ell,t)\) decays with a Gaussian
profile as the gate moves away from the ballistic line \(t=\frac53\ell\) inside
the light cone.

Let us first spell out the relevant light cone in this case. A gate at position
\(z=(\ell,t)\) can influence the endpoint OTOC only if two deterministic
reachability conditions hold. First, the endpoint starting from \(Z_1\) must be
able to reach the bond \((\ell,\ell+1)\) by time \(t\), which gives
\(\ell\le t\). Second, after the gate acts, there must be enough time for a
perturbation at that bond to reach the final endpoint \(n\), which gives
\(n-\ell\le d-t\). At the critical depth \(d=5n/3\), the second condition is
equivalent to \(t\le \ell+\frac23 n\).

The light cone in the following proposition is exactly the set of gates satisfying
these two constraints. Inside this light cone, the proposition shows that the
conditional-mean variance is largest near the ballistic line
\(t=\frac53\ell\), and decays with a Gaussian-type penalty away from this line.
Outside the light cone, the conditional mean does not depend on the value of the
gate, and hence the variance with respect to that gate is zero. This estimate
is the quantitative origin of the eye-shaped region in
Fig.~\ref{fig:otoc1-gate-sensitivity-lens}.

\begin{proposition}[Eye-shaped profile of the conditional-mean variance]
\label{lem:otoc1-conditional-mean-variance-light-cone}
Assume \(n\in6\mathbb N\) and set \(d=5n/3\). Define the even-depth light cone by
\begin{equation}
\label{eq:otoc1-eye-light-cone-even}
\mathcal L_{\rm even}
\coloneqq
\left\{
(\ell,t):
\ell,t\ {\rm even},\quad
2\le \ell\le n-2,\quad
2\le t\le d-2,\quad
\ell\le t\le \ell+\frac23 n
\right\}.
\end{equation}
For \(z=(\ell,t)\in\mathcal L_{\rm even}\), set
\begin{equation}
\label{eq:otoc1-eye-R-def}
R_{\ell,t}
\coloneqq
\left|t-\frac53\ell\right|
\left(
\frac1t+\frac1{d-t}
\right)^{1/2}.
\end{equation}
Let \(c_A\) be the positive universal constant in
\eqref{eq:otoc1-eye-profile-variance-factorization}. Then there exists a
universal constant \(C\ge1\) such that, for every
\(z=(\ell,t)\in\mathcal L_{\rm even}\) and every density matrix \(\rho\), the
conditional-mean variance satisfies
\begin{equation}
\label{eq:otoc1-conditional-variance-eye-two-sided}
\frac{c_A}{C\,n^2}\,
e^{-C R_{\ell,t}^2}
\le
\operatorname{Var}_{G\sim\Haar(U(4))}
\!\left(
\E_{\neq z}\mathrm{OTOC}^{(1)}_\rho(U_d(G))
\right)
\le
C c_A\,e^{-R_{\ell,t}^2/C}.
\end{equation}
Thus, up to polynomial factors in \(n\), the size of the conditional-mean
variance is governed by the Gaussian penalty \(e^{-R_{\ell,t}^2}\). In particular, for every fixed \(K>0\), if \(R_{\ell,t}^2\le K\), then
\[
\operatorname{Var}_{G\sim\Haar(U(4))}
\!\left(
\E_{\neq z}\mathrm{OTOC}^{(1)}_\rho(U_d(G))
\right)
=
\Omega_K(n^{-2}).
\]
More generally, if \(R_{\ell,t}=O(\sqrt{\log n})\), then the lower bound in
\eqref{eq:otoc1-conditional-variance-eye-two-sided} is still
inverse-polynomial in \(n\). On the other hand, if
\(R_{\ell,t}=\omega(\sqrt{\log n})\), the upper bound is superpolynomially
small, and if \(R_{\ell,t}=\Omega(\sqrt n)\), it is exponentially small in
\(n\).

Geometrically, the condition \(R_{\ell,t}^2\le K\) says that the gate lies
within distance
\[
O_K\!\left[
\left(\frac1t+\frac1{d-t}\right)^{-1/2}
\right]
=
O_K\!\left(
\sqrt{\frac{t(d-t)}{d}}
\right)
\]
from the ballistic line \(t=\frac53\ell\). This width is of order \(\sqrt n\)
in the middle of the circuit and shrinks near the initial and final depths,
giving the eye-shaped region in Fig.~\ref{fig:otoc1-gate-sensitivity-lens}.
\end{proposition}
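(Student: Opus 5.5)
By \eqref{eq:otoc1-eye-profile-variance-factorization}, the conditional-mean variance equals \(c_A(\mathsf P_{\ell,t}\mathsf F_{\ell,t})^2\). The proof therefore reduces to two-sided estimates on \(\mathsf P_{\ell,t}\) and \(\mathsf F_{\ell,t}\) over the whole light cone \(\mathcal L_{\rm even}\).

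\emph{Exact binomial form.} First I would extend the exact formulas of Lemma~\ref{lem:otoc1-PF-exact-front-window}. The coefficient-extraction identities \eqref{eq:P-coeff-r-form-front-window} and \eqref{eq:F-coeff-r-form-front-window} hold for every even gate; only the truncation of \((1-r^{n-\ell})/(1-r^n)\) and \((1-r^\ell)/(1-r^n)\) used the region \(\mathcal S\). At \(d=5n/3\) the target degrees are \((t-\ell)/2\le n/3\) and \((d-t-n+\ell)/2\le n/3\). Hence all terms \(r^{jn}\) with \(j\ge1\) drop out. The correction terms \(r^{n-\ell}\) and \(r^{\ell}\) contribute only when \(t+\ell\ge 2n\) (resp.\ \(d-t\ge n+\ell\)). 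Where they do contribute, they give a second binomial-type term \(\psi\) evaluated at a shifted argument, a reflected image term. I would bound that term by the same upper estimate as the main term, or show it has the favorable sign; it only matters near the corners of the light cone, where everything is already exponentially small or the lower bound is trivially polynomial. Thus, up to these harmless corrections,
\[
\mathsf P_{\ell,t}=b_{t-1}(u)\tfrac{2\ell}{t+\ell},\qquad \mathsf F_{\ell,t}=\tfrac54 b_{d-t-1}(u')\tfrac{2(n-\ell)}{d-t+n-\ell},
\]
with \(u=(t-\ell)/2\) and \(u'=(d-t-n+\ell)/2\). These use the identity \(\psi_M(r)=b_M(r)(M+1-2r)/(M-r+1)\). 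The light-cone constraints \(\ell\le t\le \ell+2n/3\) are exactly \(0\le u\le t-1\) and \(0\le u'\le d-t-1\).

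\emph{Lower bound.} I would apply Lemma~\ref{lem:crude-local-binomial-lower} to both binomials. The deviations are \(u-(t-1)/5=(3t-5\ell+2)/10\) and \(u'-(d-t-1)/5=(-(3t-5\ell)+2)/10\), the latter because \(3d=5n\). This gives
\[
\mathsf P\mathsf F\gtrsim \frac{\ell(n-\ell)}{(t+\ell)(d-t+n-\ell)\sqrt{t(d-t)}}\exp\!\Big[-c(3t-5\ell)^2\big(\tfrac1t+\tfrac1{d-t}\big)\Big]-\text{(const)}.
\]
The exponent is \(\asymp R_{\ell,t}^2\). The prefactor is at least \(1/\mathrm{poly}(n)\), since all lengths lie between \(2\) and \(n\). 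Squaring this yields a polynomial lower bound of the form \(n^{-C}e^{-CR^2}\). This is enough to give the \(\Omega_K(n^{-2})\) claim when \(R^2\le K\): there \(\ell\) and \(t\) are comparable, so the rational prefactors are \(\Theta(1)\), and \(\sqrt{t(d-t)}\le d\). Matching the stated \(n^{-2}\) normalization uniformly will require a careful look at the small-\(t\) region. Near the corners I expect to absorb the losses into \(C\) via \(R^2\gtrsim\) the distance-squared over the shorter side.

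\emph{Upper bound.} I would use a Chernoff/Hoeffding-type bound on single binomial probabilities, \(b_M(r)\le \exp[-2M(r/M-1/5)^2]\le\exp[-(r-M/5)^2/(2M)]\), together with \(|\psi_M|\le b_M+b_M(r-1)\). Multiplying the two bounds gives \(\exp[-c(3t-5\ell)^2(1/t+1/(d-t))]\), since \(e^{-x^2/a}e^{-x^2/b}=e^{-x^2(1/a+1/b)}\). This is \(e^{-R^2/C}\), and both \(\mathsf P\) and \(\mathsf F\) are \(O(1)\).

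The main obstacle is the uniform control of the boundary and image terms for \(t+\ell\ge2n\) near the late-time corner, and of small \(t\) or \(d-t\), where the Stirling lower bound degenerates. I would handle these by direct casework, using the crude bounds \(b_M(r)\ge 5^{-M}\) at the extreme arguments, which still decay like \(e^{-CR^2}\) because \(R^2\asymp M\) there.
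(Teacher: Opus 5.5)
\textbf{Overall.} Your outline follows the paper's route: the factorization $V_z=c_A(\mathsf P_{\ell,t}\mathsf F_{\ell,t})^2$, the two-term $\psi$ formulas, Lemma~\ref{lem:crude-local-binomial-lower} for the leading terms, and Gaussian tails for the upper bound. Your upper bound is fine. The Hoeffding point-mass bound is a valid substitute for the paper's Robbins upper estimate. The image terms are harmless there, because their arguments lie at least as far from the mean as the leading ones.

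\textbf{Main gap: image terms in the lower bound.} Each image term enters with a minus sign and is positive ($\psi_M(r')>0$ because $r'<(M+1)/2$), so its sign is unfavorable for the lower bound. It is also not confined to regions where everything is exponentially small.
\begin{itemize}
\item The $\mathsf P_{\ell,t}$ image term is nonzero whenever $t+\ell\ge 2n$. Near the ballistic line this is the whole range $\ell\ge 3n/4$; symmetrically, the $\mathsf F_{\ell,t}$ image term is nonzero for $\ell\le n/4$.
\item At the corner $n-\ell=O(1)$ one has $R_{\ell,t}=O(1)$, both terms are of order $n^{-1/2}$, and their ratio is a fixed constant of order $4^{-(n-\ell)}$.
\end{itemize}
A subtracted correction, as in your displayed lower bound, is harmless only if it is shown to be a fixed fraction of the leading term. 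Comparing an upper estimate of one term with a Robbins lower estimate of the other does not give this, because the constants and the $1/\sqrt M$ factors do not match. The bound $b_M\ge 5^{-M}$ is useless there, since $M=t-1\ge n$.

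The missing step is a multiplicative comparison. Take $M=t-1$, $r=(t-\ell)/2$, $L=n-\ell$. The factor $4^{-L}$ exactly compensates the shift in $4^{M-r}$, so the image-to-leading ratio is at most
$\frac{M-r+1}{M+1-2r}\binom{M}{r-L}/\binom{M}{r}\le \frac{t+\ell}{2\ell}\left(\frac{t-\ell}{t+\ell}\right)^{L}$.
The image term is nonzero only if $\ell\ge 2n/3$, which forces $t\le \ell+2n/3\le 2\ell$. Hence the base is at most $1/3$ and the ratio at most $\frac32\cdot 3^{-L}\le\frac16$. This gives $\frac12\psi_{t-1}(\frac{t-\ell}{2})\le\mathsf P_{\ell,t}\le\psi_{t-1}(\frac{t-\ell}{2})$, and the mirror argument gives the analogous two-sided bound for $\mathsf F_{\ell,t}$.

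\textbf{Second gap: the $n^{-2}$ prefactor is asserted, not proved.} Set $B_{\ell,t}=\frac{\ell(n-\ell)}{(t+\ell)(d-t+n-\ell)\sqrt{t(d-t)}}$. The trivial bound $B_{\ell,t}\ge c\,n^{-2}$ yields only $n^{-4}e^{-CR_{\ell,t}^2}$ after squaring. Your idea of paying for the prefactor with the Gaussian penalty is right, but the loss is not at small $t$ or $d-t$. Lemma~\ref{lem:crude-local-binomial-lower} is uniform in $M\ge1$, so no $5^{-M}$ casework is needed. The loss occurs where $\ell\ll t$ or $n-\ell\ll d-t$.

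The paper handles it as follows. Write $a=\ell$, $b=n-\ell$, $\delta=3t-5\ell$, so $t=(5a+\delta)/3$ and $B_{\ell,t}=\frac{27ab}{(8a+\delta)(8b-\delta)\sqrt{(5a+\delta)(5b-\delta)}}$.
\begin{itemize}
\item At $\delta=0$ this equals $\frac{27}{320\sqrt{ab}}\ge\frac{1}{Cn}$.
\item For $\delta>0$ it loses at most a factor $(1+\delta/(5a))^{-3/2}$. Since $t,\ell$ are even, $\delta\ge 2$, hence $\delta/t\ge 1/a$ and $(1+\delta/(5a))^{-3/2}\ge e^{-3\delta^2/(10t)}$.
\item The case $\delta<0$ is symmetric in $(b,d-t)$.
\end{itemize}
Because $\delta^2(1/t+1/(d-t))=9R_{\ell,t}^2$, this gives $B_{\ell,t}\ge (Cn)^{-1}e^{-CR_{\ell,t}^2}$. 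Combined with the two-sided bounds on $\mathsf P_{\ell,t}$ and $\mathsf F_{\ell,t}$, this yields the stated lower bound.
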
 
\begin{proof}
Let \(C\) denote a universal constant, allowed to change from line to line. Write
\begin{equation}
\label{eq:otoc1-eye-Vz-def}
V_z\coloneqq
\operatorname{Var}_{G\sim\Haar(U(4))}
\!\left(
\E_{\neq z}\mathrm{OTOC}^{(1)}_\rho(U_d(G))
\right).
\end{equation}
By the variance identity \eqref{eq:otoc1-eye-profile-variance-factorization},
\(V_z=c_A(\mathsf P_{\ell,t}\mathsf F_{\ell,t})^2\). Thus the problem is to
estimate the two factors \(\mathsf P_{\ell,t}\) and \(\mathsf F_{\ell,t}\).

We first rewrite these two factors in a form where the estimates are transparent.
Start with \(\mathsf P_{\ell,t}\). In the coefficient formula
\eqref{eq:P-coeff-r-form-front-window}, expand
\[
\frac{1-y^{n-\ell}}{1-y^n}
=
\sum_{j\ge0}y^{jn}
-
\sum_{j\ge0}y^{n-\ell+jn}.
\]
The coefficient extracted there has degree \((t-\ell)/2\). Since
\(z\in\mathcal L_{\rm even}\) as defined in \eqref{eq:otoc1-eye-light-cone-even}, \(0\le (t-\ell)/2\le n/3<n\). Hence no term
with \(j\ge1\) can contribute. Only the two \(j=0\) terms remain. Using the
same coefficient identity as in the derivation of
\eqref{eq:otoc1-P-binomial-front-window}, we get
\begin{equation}
\label{eq:otoc1-eye-P-two-term}
\mathsf P_{\ell,t}
=
\psi_{t-1}\!\left(\frac{t-\ell}{2}\right)
-
4^{-(n-\ell)}
\psi_{t-1}\!\left(\frac{t+\ell-2n}{2}\right).
\end{equation}
Here and below
\begin{equation}
\label{eq:otoc1-eye-psi-def}
\psi_M(r)
\coloneqq
4^{M-r}5^{-M}
\left[
\binom Mr-\binom M{r-1}
\right],
\end{equation}
with binomial coefficients outside \(0\le r\le M\) set to zero.

The same argument gives the corresponding expression for \(\mathsf F_{\ell,t}\).
Indeed, in \eqref{eq:F-coeff-r-form-front-window} the extracted degree is
\((d-t-(n-\ell))/2\). Since \(d=5n/3\), the light-cone condition
\(t\le \ell+2n/3\) in \eqref{eq:otoc1-eye-light-cone-even} gives \(d-t\ge n-\ell\), so this degree is nonnegative.
Also \(t\ge\ell\), again by \eqref{eq:otoc1-eye-light-cone-even}, gives
\[
\frac{d-t-(n-\ell)}2
\le
\frac{d-n}{2}
=
\frac n3
<
n.
\]
Expanding \((1-y^\ell)/(1-y^n)=\sum_{j\ge0}y^{jn}-\sum_{j\ge0}y^{\ell+jn}\),
again only the \(j=0\) terms can contribute. By the coefficient extraction in
\eqref{eq:otoc1-F-binomial-front-window},
\begin{equation}
\label{eq:otoc1-eye-F-two-term}
\mathsf F_{\ell,t}
=
\frac54
\psi_{d-t-1}\!\left(\frac{d-t-(n-\ell)}{2}\right)
-
\frac54\,4^{-\ell}
\psi_{d-t-1}\!\left(\frac{d-t-(n+\ell)}{2}\right).
\end{equation}

We now check, using the two-term formulas \eqref{eq:otoc1-eye-P-two-term} and \eqref{eq:otoc1-eye-F-two-term}, that in both cases the second term is at most a fixed fraction
of the first. We use the elementary inequality
\begin{equation}
\label{eq:otoc1-eye-binomial-ratio}
\frac{\binom M{r-L}}{\binom Mr}
=
\prod_{j=0}^{L-1}
\frac{r-j}{M-r+j+1}
\le
\left(\frac r{M-r+1}\right)^L,
\qquad 0\le L\le r.
\end{equation}
For \(\mathsf P_{\ell,t}\), set \(M=t-1\), \(r=(t-\ell)/2\), and
\(L=n-\ell\). If the second term is zero there is nothing to prove. Otherwise
\(r-L\ge0\), hence \(r\ge L\). Since \(r=(t-\ell)/2\le n/3\), this implies
\(n-\ell\le n/3\), i.e.~\(\ell\ge 2n/3\). Therefore
\((t-\ell)/(t+\ell)\le 1/3\). Using \eqref{eq:otoc1-eye-binomial-ratio} and the
explicit formula for \(\psi_M\) in \eqref{eq:otoc1-eye-psi-def}, we obtain
\[
4^{-(n-\ell)}
\frac{
\psi_{t-1}\!\left(\frac{t+\ell-2n}{2}\right)
}{
\psi_{t-1}\!\left(\frac{t-\ell}{2}\right)
}
\le
\left(\frac{t-\ell}{t+\ell}\right)^{n-\ell}
\frac{(2n-\ell)(t+\ell)}
{2\ell\left(n+\frac{t-\ell}{2}\right)}.
\]
In the present regime, \(t+\ell=2\ell+(t-\ell)\le3\ell\), while
\(2n-\ell\le2n\) and \(n+(t-\ell)/2\ge n\). Thus the last factor is at most
\(3\). Since \(n-\ell\ge2\), the whole ratio is at most
\(3(1/3)^{n-\ell}\le1/3<1/2\). Hence
\begin{equation}
\label{eq:otoc1-eye-P-leading-comparison}
\frac12\,
\psi_{t-1}\!\left(\frac{t-\ell}{2}\right)
\le
\mathsf P_{\ell,t}
\le
\psi_{t-1}\!\left(\frac{t-\ell}{2}\right).
\end{equation}

For \(\mathsf F_{\ell,t}\) the argument is symmetric. If the second term is
nonzero, then \(d-t\ge n+\ell\). Since \(d=5n/3\), this gives
\(t+\ell\le2n/3\), and in particular \(\ell\le n/3\). Applying the same
binomial-ratio estimate \eqref{eq:otoc1-eye-binomial-ratio} with \(M=d-t-1\),
\(r=(d-t-(n-\ell))/2\), and \(L=\ell\), gives
\[
4^{-\ell}
\frac{
\psi_{d-t-1}\!\left(\frac{d-t-(n+\ell)}{2}\right)
}{
\psi_{d-t-1}\!\left(\frac{d-t-(n-\ell)}{2}\right)
}
\le
\left(
\frac{d-t-(n-\ell)}{d-t+n-\ell}
\right)^\ell
\frac{(n+\ell)(d-t+n-\ell)}
{(n-\ell)(d-t+n+\ell)}.
\]
Here \(d-t-(n-\ell)\le d-n=2n/3\), because \(t\ge\ell\), while
\(d-t+n-\ell\ge2n\), because \(d-t\ge n+\ell\). Thus the base is at most
\(1/3\). The remaining factor is at most
\((n+\ell)/(n-\ell)\le2\), since \(\ell\le n/3\). As \(\ell\ge2\), the ratio is
at most \(2(1/3)^\ell\le2/9<1/2\). Therefore
\begin{equation}
\label{eq:otoc1-eye-F-leading-comparison}
\frac58\,
\psi_{d-t-1}\!\left(\frac{d-t-(n-\ell)}{2}\right)
\le
\mathsf F_{\ell,t}
\le
\frac54\,
\psi_{d-t-1}\!\left(\frac{d-t-(n-\ell)}{2}\right).
\end{equation}

It remains to estimate the two leading \(\psi\)-terms. Let
\(b_M(r)=\binom Mr(1/5)^r(4/5)^{M-r}\). Since
\(\binom M{r-1}/\binom Mr=r/(M-r+1)\), the definition \eqref{eq:otoc1-eye-psi-def} gives
\begin{equation}
\label{eq:otoc1-eye-psi-binomial-rewrite}
\psi_M(r)
=
b_M(r)\frac{M+1-2r}{M-r+1}.
\end{equation}
Thus
\begin{equation}
\label{eq:otoc1-eye-leading-psi-values}
\psi_{t-1}\!\left(\frac{t-\ell}{2}\right)
=
b_{t-1}\!\left(\frac{t-\ell}{2}\right)
\frac{2\ell}{t+\ell},
\qquad
\psi_{d-t-1}\!\left(\frac{d-t-(n-\ell)}{2}\right)
=
b_{d-t-1}\!\left(\frac{d-t-(n-\ell)}{2}\right)
\frac{2(n-\ell)}{d-t+n-\ell}.
\end{equation}

We use the local binomial lower bound from
Lemma~\ref{lem:crude-local-binomial-lower}. The matching upper bound follows
from the same Robbins estimate used in the proof of that lemma: for
\(0\le r\le M\),
\begin{equation}
\label{eq:otoc1-eye-local-binomial-two-sided}
\frac1C\frac1{\sqrt{M+1}}
\exp\!\left[
-C\frac{(r-M/5)^2}{M+1}
\right]
\le
b_M(r)
\le
\frac C{\sqrt{M+1}}
\exp\!\left[
-\frac1C\frac{(r-M/5)^2}{M+1}
\right].
\end{equation}
For the past term,
\[
\frac{t-\ell}{2}-\frac{t-1}{5}
=
\frac{3t-5\ell+2}{10},
\]
and for the future term,
\[
\frac{d-t-(n-\ell)}2-\frac{d-t-1}{5}
=
\frac{5\ell-3t+2}{10}.
\]
Using \eqref{eq:otoc1-eye-P-leading-comparison}, \eqref{eq:otoc1-eye-F-leading-comparison}, \eqref{eq:otoc1-eye-leading-psi-values}, and the two-sided binomial estimate \eqref{eq:otoc1-eye-local-binomial-two-sided}, the shifts by \(2\) only change universal constants: since \(t,d-t\ge2\), one
can use \((x+2)^2\le2x^2+8\) and
\((x+2)^2\ge x^2/2-8\). Writing \(\delta\coloneqq 3t-5\ell\), we get
\begin{equation}
\label{eq:otoc1-eye-P-gaussian-bounds}
\frac1C
\frac{\ell}{(t+\ell)\sqrt t}
e^{-C\delta^2/t}
\le
\mathsf P_{\ell,t}
\le
C
\frac{\ell}{(t+\ell)\sqrt t}
e^{-\delta^2/(Ct)},
\end{equation}
and
\begin{equation}
\label{eq:otoc1-eye-F-gaussian-bounds}
\frac1C
\frac{n-\ell}{(d-t+n-\ell)\sqrt{d-t}}
e^{-C\delta^2/(d-t)}
\le
\mathsf F_{\ell,t}
\le
C
\frac{n-\ell}{(d-t+n-\ell)\sqrt{d-t}}
e^{-\delta^2/(C(d-t))}.
\end{equation}

Multiplying the two upper bounds in \eqref{eq:otoc1-eye-P-gaussian-bounds} and \eqref{eq:otoc1-eye-F-gaussian-bounds} gives
\begin{equation}
\label{eq:otoc1-eye-PF-upper-with-B}
\mathsf P_{\ell,t}\mathsf F_{\ell,t}
\le
C B_{\ell,t}
\exp\!\left[
-\frac1C\delta^2
\left(\frac1t+\frac1{d-t}\right)
\right],
\end{equation}
where
\begin{equation}
\label{eq:otoc1-eye-B-def}
B_{\ell,t}
\coloneqq
\frac{\ell(n-\ell)}
{(t+\ell)(d-t+n-\ell)\sqrt{t(d-t)}}.
\end{equation}
Since \(B_{\ell,t}\le1\), \(\delta=3(t-\frac53\ell)\), and \(R_{\ell,t}\) was defined in
\eqref{eq:otoc1-eye-R-def}, \eqref{eq:otoc1-eye-PF-upper-with-B} gives
\(\mathsf P_{\ell,t}\mathsf F_{\ell,t}\le
C\exp[-R_{\ell,t}^2/C]\). Squaring and using
\eqref{eq:otoc1-eye-profile-variance-factorization}, equivalently \(V_z=c_A(\mathsf P_{\ell,t}\mathsf F_{\ell,t})^2\) by \eqref{eq:otoc1-eye-Vz-def}, proves the upper bound in
\eqref{eq:otoc1-conditional-variance-eye-two-sided}.

For the lower bound, the lower sides of \eqref{eq:otoc1-eye-P-gaussian-bounds} and \eqref{eq:otoc1-eye-F-gaussian-bounds} give
\begin{equation}
\label{eq:otoc1-eye-PF-lower-with-B}
\mathsf P_{\ell,t}\mathsf F_{\ell,t}
\ge
\frac1C
B_{\ell,t}
\exp\!\left[
-C\delta^2
\left(\frac1t+\frac1{d-t}\right)
\right].
\end{equation}
We now show that \(B_{\ell,t}\) is at least \(n^{-1}\), up to the same
Gaussian penalty. Put \(a=\ell\) and \(b=n-\ell\). Then \(a,b\ge2\),
\(t=(5a+\delta)/3\), and \(d-t=(5b-\delta)/3\). Hence
\begin{equation}
\label{eq:otoc1-eye-B-exact-ab}
B_{\ell,t}
=
\frac{27ab}
{(8a+\delta)(8b-\delta)
\sqrt{(5a+\delta)(5b-\delta)}}.
\end{equation}
When \(\delta=0\), this equals \(27/(320\sqrt{ab})\), which is at least
\(1/(Cn)\), because \(ab\le n^2/4\).

Assume now \(\delta>0\). The factors \(8b-\delta\) and \(5b-\delta\) are no
larger than their values at \(\delta=0\), so they can only make
\(B_{\ell,t}\) larger. Therefore
\begin{equation}
\label{eq:otoc1-eye-B-lower-positive-shift}
B_{\ell,t}
\ge
\frac{27}{320\sqrt{ab}}
\left(1+\frac{\delta}{8a}\right)^{-1}
\left(1+\frac{\delta}{5a}\right)^{-1/2}
\ge
\frac{1}{Cn}
\left(1+\frac{\delta}{5a}\right)^{-3/2}.
\end{equation}
Since \(t\) and \(\ell\) are even, either \(\delta=0\) or \(|\delta|\ge2\). Thus, in
the present case, \(\delta\ge2\). Also
\begin{equation}
\label{eq:otoc1-eye-local-shift-over-t-lower}
\frac{\delta}{t}
=
\frac{3\delta}{5a+\delta}
\ge
\frac1a,
\end{equation}
because \(\delta(3a-1)\ge5a\) for \(a\ge2\) and \(\delta\ge2\). Hence
\(\delta/(5a)\le \delta^2/(5t)\). Using \(\log(1+x)\le x\), we obtain
\begin{equation}
\label{eq:otoc1-eye-positive-shift-factor}
\left(1+\frac{\delta}{5a}\right)^{-3/2}
\ge
\exp\!\left[
-C\delta^2
\left(\frac1t+\frac1{d-t}\right)
\right].
\end{equation}
Thus \(B_{\ell,t}\ge C^{-1}n^{-1}
\exp[-C\delta^2(1/t+1/(d-t))]\) when \(\delta>0\).

The case \(\delta<0\) is the same, with \(b=n-\ell\) and \(d-t\) in place of
\(a=\ell\) and \(t\). Writing \(E=-\delta\ge2\), one has
\(E/(d-t)=3E/(5b+E)\ge1/b\), hence
\(E/(5b)\le E^2/(5(d-t))\), and the same argument gives
\begin{equation}
\label{eq:otoc1-eye-B-lower-negative-shift}
B_{\ell,t}
\ge
\frac1{Cn}
\exp\!\left[
-C\delta^2
\left(\frac1t+\frac1{d-t}\right)
\right].
\end{equation}
Combining the three cases and using the definition \eqref{eq:otoc1-eye-R-def}, namely
\(\delta^2(1/t+1/(d-t))=9R_{\ell,t}^2\), we have
\begin{equation}
\label{eq:otoc1-eye-B-final-lower}
B_{\ell,t}\ge \frac1{Cn}e^{-C R_{\ell,t}^2}.
\end{equation}
Combining \eqref{eq:otoc1-eye-PF-lower-with-B} with \eqref{eq:otoc1-eye-B-final-lower},
we get \(\mathsf P_{\ell,t}\mathsf F_{\ell,t}\ge C^{-1}n^{-1}
e^{-C R_{\ell,t}^2}\). Squaring and using
\eqref{eq:otoc1-eye-profile-variance-factorization} proves the lower bound in
\eqref{eq:otoc1-conditional-variance-eye-two-sided}.

The stated consequences now follow directly from \eqref{eq:otoc1-conditional-variance-eye-two-sided}. If \(R_{\ell,t}^2\le K\), then the
lower bound gives \(V_z\ge c_K n^{-2}\), i.e.~\(V_z=\Omega_K(n^{-2})\). If
\(R_{\ell,t}=O(\sqrt{\log n})\), the same lower bound is inverse-polynomial in
\(n\). Conversely, the upper bound gives \(V_z\le Cc_A e^{-R_{\ell,t}^2/C}\);
therefore \(V_z=e^{-\omega(\log n)}\) when
\(R_{\ell,t}=\omega(\sqrt{\log n})\), and \(V_z=e^{-\Omega(n)}\) when
\(R_{\ell,t}=\Omega(\sqrt n)\).

Finally, by the definition \eqref{eq:otoc1-eye-R-def} of \(R_{\ell,t}\), the condition
\(R_{\ell,t}^2\le K\) is equivalent to
\begin{equation}
\label{eq:otoc1-eye-tube-width}
\left|t-\frac53\ell\right|
\le
\sqrt K
\left(\frac1t+\frac1{d-t}\right)^{-1/2}
=
\sqrt K\,\sqrt{\frac{t(d-t)}{d}}.
\end{equation}
This is exactly the claimed eye-shaped tube around the ballistic line
\(t=\frac53\ell\).
\end{proof}

\newpage
\section{Classical simulation of the infinite-temperature endpoint
\texorpdfstring{\(\mathrm{OTOC}^{(1)}\)}{OTOC(1)}}
\label{sec:eye-reduction-classical-simulation}

The goal of this section is to provide a classical simulation algorithm with subexponential runtime to estimate the infinite-temperature endpoint OTOC of a given one-dimensional Haar brickwork circuit at the critical depth \(d=5n/3\).
The algorithm is presented in Table~\ref{tab:endpoint-otoc1-explicit-simulation};
its accuracy and runtime are proved in
Theorem~\ref{thm:endpoint-sim-otoc1-runtime}. The idea is to keep the gates
inside a slightly enlarged eye-shaped region depicted in Fig.~\ref{fig:otoc1-gate-sensitivity-lens}(a) of the main text and average over the gates outside it. For the proof, we first control the error of this replacement, and then explain how to estimate the resulting conditional mean.

We use the circuit convention of Section~\ref{subsec:endpoint-otoc1-setup},
with \(n\in6\mathbb N\) and \(U_d=L_1\cdots L_d\). All physical gates are independent and
Haar-random in \(U(4)\). Write the OTOC as 
\[F_\infty(U_d)\coloneqq2^{-n}\Tr[(U_d^\dagger Z_1U_dZ_n)^2].\]
A gate \(z=(\ell,t)\) acts on \((\ell,\ell+1)\) in layer \(t\).
By Lemma~\ref{lem:otoc-causal-reductions}, only gates in the causal overlap
\[
    \mathcal R_d=\{(\ell,t):\ell\le t,\ n-\ell\le d-t\}
\]
can affect \(F_\infty\).
For \(R\ge1\), the set of gates we retain is
\begin{equation}
\label{eq:endpoint-sim-eye-def}
    \mathcal E_R=\{(\ell,t)\in\mathcal R_d:t<d,
                 |t-5\ell/3|\le R\sigma_{\ell,t}\},
    \qquad
    \sigma_{\ell,t}^2=
    \begin{cases}
        t,&t\le5\ell/3,\\
        d-t,&t>5\ell/3.
    \end{cases}
\end{equation}
Let \(W_R\) be the largest number of retained gates in any layer. Since \(\sigma_{\ell,t}\le\sqrt d\), their bonds satisfy
\(|\ell-3t/5|\le3R\sqrt d/5\), so \(W_R=O(R\sqrt n)\). The conditional mean of the OTOC is
\[
\widehat F_R\coloneqq\mathbb E[F_\infty(U_d)\mid U_{\mathcal E_R}],
\]
where the expectation is over the gates outside \(\mathcal E_R\), while the gates in \(\mathcal E_R\) are kept fixed.

To show that \(F_\infty(U_d)\) is well approximated by the conditional mean
\(\widehat F_R\), we first bound how much the OTOC can change when a single
gate is changed. Intuitively, this change must be small if either the butterfly
is unlikely to have reached that gate or the gate is unlikely to affect the
probe before the end of the circuit.

To quantify this, let \(p(u,v)\) denote the probability that the
Haar-averaged Pauli string obtained by evolving \(Z_1\) touches the bond
\((v,v+1)\) just before layer \(u\). We will bound this probability below
using the endpoint Markov chain introduced in
Section~\ref{subsec:endpoint-otoc1-exact-formula}.

\begin{lemma}[Effect of changing one gate]
\label{lem:endpoint-sim-local-gate}
Consider a gate \(z=(\ell,t)\), with \(t<d\), and replace it by either
\(G_1\) or \(G_2\), with \(G_1,G_2\in U(4)\). Then
\begin{equation}
\label{eq:endpoint-sim-local-change}
    \mathbb E_{\ne z}
    \left|
        F_\infty(G_1)-F_\infty(G_2)
    \right|
    \le
    4\min\left\{
        \sqrt{p(t,\ell)},
        \sqrt{p(d-t,n-\ell)}
    \right\}.
\end{equation}
Here \(F_\infty(G_i)\) denotes the OTOC of the circuit in which the gate
at \(z\) is fixed to \(G_i\), and the expectation is over all other Haar
gates.
\end{lemma}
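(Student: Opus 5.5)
Our plan is to prove the two bounds inside the minimum separately, each by the same cut of the circuit at the gate \(z\), using the past for one and the future for the other. Write \(U_d=U_{>}\,W\,U_{<}\) as in the proof of Lemma~\ref{lem:otoc-causal-reductions}, with \(W\) the gate at \(z\). Set \(B_z\coloneqq U_{>}^\dagger Z_1U_{>}\), the butterfly evolved through the layers after \(z\) in the Heisenberg picture (adjacent to \(Z_1\)), and \(M_z\coloneqq U_{<}Z_nU_{<}^\dagger\). By cyclicity,
\[
F_\infty(G)=2^{-n}\Tr\bigl[(G^\dagger B_zG\,M_z)^2\bigr].
\]
Hence
\[
F_\infty(G_1)-F_\infty(G_2)=2^{-n}\Tr\bigl[(A_1M_zA_1M_z)-(A_2M_zA_2M_z)\bigr],
\qquad A_i\coloneqq G_i^\dagger B_zG_i .
\]
Expand \(B_z=\sum_P c_PP\) in Pauli strings, with \(\sum_Pc_P^2=1\). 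Split it as \(B_z=B^{\rm far}+B^{\rm hit}\), where \(B^{\rm hit}\) collects the strings acting nontrivially on the bond \((\ell,\ell+1)\). The part \(B^{\rm far}\) commutes with \(G_i\), so \(A_i=B^{\rm far}+G_i^\dagger B^{\rm hit}G_i\). Telescoping the difference of the two products and using Hölder (normalized Hilbert--Schmidt norm times operator norm, all operators other than the perturbation being unitaries of norm one) gives
\[
\bigl|F_\infty(G_1)-F_\infty(G_2)\bigr|\le 2\,\bigl\|A_1-A_2\bigr\|_{2,\rm norm}\le 4\,\|B^{\rm hit}\|_{2,\rm norm}=4\Bigl(\sum_{P\ \rm hits\ bond}c_P^2\Bigr)^{1/2}.
\]
Here the normalization is \(\|X\|_{2,\rm norm}^2=2^{-n}\Tr X^\dagger X\).

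Next we take \(\E_{\ne z}\) and apply Jensen (\(\E\sqrt{\cdot}\le\sqrt{\E\cdot}\)). The expected hitting weight \(\E\sum_{P\ \rm hits}c_P^2\) is exactly the Haar-averaged squared Pauli weight of the evolved \(Z_1\) touching bond \((\ell,\ell+1)\) just before layer \(t\). Here we should check the time direction against the paper's convention: layers adjacent to \(Z_1\) come first, so this is the quantity \(p(t,\ell)\). The gates on the \(B_z\) side are independent of \(G_i\) and of the \(M_z\)-side gates, so the averaging is legitimate, and the Markov-chain identification is the one used in Section~\ref{subsec:endpoint-otoc1-exact-formula}. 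This gives the bound \(4\sqrt{p(t,\ell)}\).

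For the second bound we repeat the argument with the roles of \(Z_1\) and \(Z_n\) exchanged. We rewrite the trace so that the gate is conjugating \(M_z\): \(\Tr[(G^\dagger B_zG M_z)^2]=\Tr[(B_z\,GM_zG^\dagger)^2]\). Then \(M_z\), which is \(Z_n\) evolved through the \(d-t\) layers on the other side of \(z\), plays the role of the butterfly. By the reflection symmetry of the open brickwork (site \(j\mapsto n+1-j\), time reversed, with Haar invariance under inversion and the parity bookkeeping for \(n,d\) even), the Haar-averaged weight of \(M_z\) on the bond is \(p(d-t,n-\ell)\). Taking the minimum of the two bounds finishes the proof.

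The main obstacle is this symmetry step. We have to verify that the reflected, time-reversed brickwork has the same bond and layer parities, so that the probability really is \(p(d-t,n-\ell)\) rather than a shifted index. We also have to confirm that the Jensen and Hölder constants combine to exactly \(4\). If the parity does not match cleanly, we would instead define the second quantity directly via the reversed endpoint chain and check that it coincides.
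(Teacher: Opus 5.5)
Your proposal is correct and follows essentially the same route as the paper. You split $B_z$ into the part trivial on the bond and the part touching it, bound $|F_\infty(G_1)-F_\infty(G_2)|\le 2\|A_1-A_2\|_{2,\mathrm{av}}\le 4\sqrt{w_z}$ by Cauchy--Schwarz/H\"older, apply Jensen to get $4\sqrt{p(t,\ell)}$, and repeat from the probe side by reflection (your labels $U_>$, $U_<$ are swapped relative to the paper's $U_{<z}$, $U_{>z}$, but the content is identical).

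The parity step you flag closes cleanly. Since $d$ is even, $L_d$ does not touch site $n$, so $M_z$ is effectively evolved through the $d-t-1$ layers $L_{d-1},\dots,L_{t+1}$, starting with the boundary gate on $(n-1,n)$. Under $j\mapsto n+1-j$ this is exactly the forward process just before layer $d-t$ on bond $(n-\ell,n-\ell+1)$, and $(n-\ell,d-t)$ has the same parity as $(\ell,t)$; the paper simply asserts this detail.
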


\begin{proof}
Fix all gates except the one at \(z\), and let \(e=(\ell,\ell+1)\) be
its bond. We write the circuit around this gate as
\[
    U_d=U_{<z}G_zU_{>z},
\]
and define the butterfly immediately before the gate and the probe evolved
backwards to the same location by
\[
    B_z=U_{<z}^\dagger Z_1U_{<z},
    \qquad
    M_z=U_{>z}Z_nU_{>z}^\dagger.
\]
With the normalized trace \(\tau(X)=2^{-n}\Tr(X)\), the OTOC with
\(G_z=G_i\) is
\[
    F_\infty(G_i)
    =
    \tau\!\left[
        (G_i^\dagger B_zG_iM_z)^2
    \right].
\]

We first bound the change using the butterfly. Decompose \(B_z\) into the
Pauli strings that are trivial on \(e\) and those that touch \(e\):
\[
    B_z=B_z^{(0)}+B_z^{(e)}.
\]
Let
\[
    w_z\coloneqq
    \|B_z^{(e)}\|_{2,\mathrm{av}}^2,
    \qquad
    \|X\|_{2,\mathrm{av}}^2
    \coloneqq\tau(X^\dagger X).
\]
Since \(B_z^{(0)}\) is identity on \(e\), it commutes with both \(G_1\)
and \(G_2\). Therefore, setting
\[
    A_i=G_i^\dagger B_zG_i,
    \qquad
    D=A_1-A_2,
\]
the part \(B_z^{(0)}\) cancels, and
\[
    \|D\|_{2,\mathrm{av}}
    \le
    \|G_1^\dagger B_z^{(e)}G_1\|_{2,\mathrm{av}}
    +
    \|G_2^\dagger B_z^{(e)}G_2\|_{2,\mathrm{av}}
    =
    2\sqrt{w_z}.
\]

Using
\[
    F_\infty(G_i)=\tau(A_iM_zA_iM_z),
\]
we obtain
\[
\begin{aligned}
    F_\infty(G_1)-F_\infty(G_2)
    &=
    \tau(DM_zA_1M_z)
    +
    \tau(A_2M_zDM_z).
\end{aligned}
\]
Since \(A_i\) and \(M_z\) are unitaries, Cauchy--Schwarz gives
\[
    |F_\infty(G_1)-F_\infty(G_2)|
    \le
    2\|D\|_{2,\mathrm{av}}
    \le
    4\sqrt{w_z}.
\]
Averaging over all gates except \(z\), and using Jensen's inequality,
\[
    \mathbb E_{\ne z}
    |F_\infty(G_1)-F_\infty(G_2)|
    \le
    4\sqrt{\mathbb E_{\ne z}w_z}.
\]
By definition of the Haar-averaged Pauli process,
\(\mathbb E_{\ne z}w_z=p(t,\ell)\). Hence
\[
    \mathbb E_{\ne z}
    |F_\infty(G_1)-F_\infty(G_2)|
    \le
    4\sqrt{p(t,\ell)}.
\]

The same argument can be applied from the probe side, using \(M_z\)
instead of \(B_z\). Reflecting the chain and evolving backwards from
\(Z_n\) gives the same endpoint process, with time \(d-t\) and distance
\(n-\ell\). Therefore
\[
    \mathbb E_{\ne z}
    |F_\infty(G_1)-F_\infty(G_2)|
    \le
    4\sqrt{p(d-t,n-\ell)}.
\]
Taking the smaller of the two bounds proves
Eq.~\eqref{eq:endpoint-sim-local-change}.
\end{proof}

The previous lemma reduces the problem to bounding the probability \(p(u,v)\), namely the probability that the Haar-averaged Pauli string obtained by evolving \(Z_1\) has reached the bond \((v,v+1)\) just before layer \(u\). We now use the endpoint Markov chain introduced in Section~\ref{subsec:endpoint-otoc1-exact-formula} to show that \(p(u,v)\) decays rapidly when the bond lies ahead of the propagating front.
\begin{lemma}[Tail bound for the endpoint process]
\label{lem:endpoint-sim-touching-tail}
For every physical bond \((v,v+1)\) acted on in layer \(u\ge1\),
\begin{equation}
\label{eq:endpoint-sim-touching-tail}
    p(u,v)
    \le
    \frac52
    \exp\!\left[
        -\frac{(5v-3u)_+^2}{200u}
    \right],
    \qquad
    x_+\coloneqq\max\{x,0\}.
\end{equation}
\end{lemma}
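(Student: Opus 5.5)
Here $p(u,v)$ is the probability that the right endpoint of the Haar-averaged Pauli string has reached site $v$ or beyond just before layer $u$. Touching the bond $(v,v+1)$ requires the endpoint to be at site $\ge v$, so we may work with the endpoint process directly.

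The plan is to dominate the right endpoint of the Haar-averaged Pauli string by a simple process and apply a Chernoff bound. By the Haar rule recorded in Step 2 of the proof of Lemma~\ref{thm:endpoint-otoc1-exact-formula}, the endpoint does the following at each layer. If it sits on the left qubit of an active bond, it moves right with probability $4/5$. If it sits on the right qubit, it moves left with probability $1/5$. Otherwise it is idle. Thus after $u-1$ layers the endpoint position is at most $1+\sum_{i<u}\xi_i$, where $\xi_i\in\{-1,0,+1\}$ are its increments. Open boundaries only suppress motion, so dropping them gives an upper bound.

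The cleanest route uses the brickwork structure directly. In a pair of consecutive layers the endpoint can advance at most two sites, but only by two successful right moves. A single layer either moves it right (probability $4/5$, only if it sits on a left qubit) or not.

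The first step is to couple the endpoint with a random walk $S_u$ whose steps are $+1$ with probability $4/5$ and $-1$ with probability $1/5$. The drift of this walk is $3/5$, matching $v_B=3/5$, which is why $5v-3u$ appears in the bound. The needed claim is that $\Pr(\text{endpoint}\ge v)\le \Pr(S_{u-1}\ge v-1-c)$ for an $O(1)$ constant $c$ handling parity and the initial position. I would prove this by monotone coupling. Each layer, the endpoint either moves right with probability $4/5$ or left with probability $1/5$, or, when it is on the "wrong" side of the active bond, it moves left or stays with probabilities $1/5$ and $4/5$. Parity bookkeeping matters here. In the brickwork, an endpoint that stays put on the right qubit of bond $(2a-1,2a)$ then faces the next layer's bond $(2a,2a+1)$ from the left. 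So the endpoint can be described as a walk in which every layer is a $+1$/$-1$/$0$ step, with the right-step probability at most $4/5$.

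A cleaner alternative is to use the cell chain $Y_m$ with transition matrix $Q$ from~\eqref{eq:otoc1-Q}. It has steps $+1,0,-1$ with probabilities $16/25,8/25,1/25$ in the bulk, and at the left boundary the boundary mass folds into "stay", which is dominated by the bulk walk. Its mean step is $3/5$ cells per two layers, i.e.\ $3/5$ sites per layer. I would work with this chain, with $m\approx u/2$ steps and target cell $\approx v/2$.

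The second step is the Chernoff/Hoeffding bound. The bulk increments $\zeta_j\in\{-1,0,1\}$ have mean $3/5$ and range $2$, so Hoeffding gives
\[
\Pr\Bigl(\textstyle\sum_{j\le m}\zeta_j\ge \tfrac{3}{5}m+x\Bigr)\le e^{-2x^2/(4m)}=e^{-x^2/(2m)} .
\]
Take $m=\lfloor u/2\rfloor$ and $x\approx v/2-3m/5-O(1)\ge(5v-3u)/10-O(1)$. This gives the exponent $-(5v-3u-O(1))^2/(100u)$. The stated constant $200$ and the prefactor $5/2$ are generous enough to absorb the $O(1)$ shifts, using $(y-c)^2\ge y^2/2-c^2$ together with $e^{c^2/(200u)}$ bounded for $u\ge1$. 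The first odd layer fixing $Y_1=1$ and the final odd/even layer (where the endpoint can move at most one more site) shift $v$ and $u$ by constants only.

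The main obstacle is the domination step: making rigorous that the boundary-folded chain $Q$ is stochastically dominated by the free bulk walk, and handling the parity/offset constants so that the final constants $5/2$ and $200$ actually come out. For domination I would use the standard monotone coupling. The chain $Q$ is monotone, since its columns are stochastically ordered. At the left boundary it moves left with probability $0$ instead of $1/25$, which could make it larger, not smaller. This needs care: the boundary chain is not dominated by the free walk. I would fix this by instead bounding by the reflected walk $\max(1,\cdot)$. Equivalently, the event of reaching cell $\ge v/2$ for the reflected chain is bounded by $\max_{k\le m}$ of free-walk excursions started from the last visit to cell 1. I would then apply Hoeffding to each suffix sum $\sum_{j=k+1}^m\zeta_j$ and sum the $m$ terms. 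The sum of Gaussian tails over $k$ is dominated by a geometric series because the drift $3/5$ makes shorter suffixes harder. This is where the slack in $200$ and $5/2$ will be spent.
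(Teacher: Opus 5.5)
Your proposal has a genuine gap in the step that handles the reflecting left boundary.

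\textbf{What is right.} The reduction to the cell chain $Y_m$, the Hoeffding estimate for the bulk increments $\zeta_j\in\{+1,0,-1\}$ (probabilities $16/25,8/25,1/25$), and the parity bookkeeping are the right ingredients. You also correctly noticed that the left boundary breaks direct domination by the free walk: from cell $1$ the chain stays with probability $9/25$. Lindley's formula for the one-sided reflected walk, $\widetilde Y_{m+1}-1=\max_{0\le k\le m}\sum_{j=k+1}^{m}\zeta_j$, is also fine.

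\textbf{Where it fails.} Union-bounding Hoeffding over the $m+1$ suffixes does not give a constant prefactor. Let $x\coloneqq a-1-\tfrac35 m$ be the deviation needed at full length. The suffix of length $m-k$ contributes $\exp[-(x+\tfrac35k)^2/(2(m-k))]$, and consecutive terms have ratio roughly $e^{-3x/(5m)}$. Now take $x=s\sqrt m$ with $s$ fixed, i.e.\ $5v-3u=\Theta(\sqrt u)$; this is exactly where the claimed bound is nontrivial but of order one. There the ratio is $1-\Theta(m^{-1/2})$, and the sum is at least of order $\sqrt m\,e^{-s^2/2}/s$. The target in the same variables is $\tfrac52 e^{-s^2/4}$. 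The factor-two slack between your exponent $(5v-3u)^2/(100u)$ and the stated $(5v-3u)^2/(200u)$ is worth only $e^{-s^2/4}=\Theta(1)$ here, and you had already spent it on the $O(1)$ offsets via $(y-c)^2\ge y^2/2-c^2$. So this route yields a prefactor growing like $\sqrt u$, not $5/2$. Sharper per-suffix tails would not help: the suffix events share almost all increments, and the union bound loses a polynomial factor that no constant slack absorbs. A $\sqrt u$-weaker bound would still suffice in Lemma~\ref{lem:endpoint-sim-eye-error} after enlarging $R$, but it does not prove the lemma as stated.

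\textbf{The missing idea.} Reflection costs only a constant factor, because a positively drifting walk rarely revisits the boundary. In your last-exit decomposition, the suffix starting at time $k$ must be weighted by $\Pr(\widetilde Y_k=1)$. For $a\ge2$ this gives $\Pr(\widetilde Y_{m+1}\ge a)\le\sum_{k}\Pr(\widetilde Y_k=1)\Pr(S_{m+1-k}\ge a-1)$, with $S_L$ the free walk.
\begin{itemize}
\item The weights sum to the expected number of visits to cell $1$, which is $1/(1-\tfrac25)=\tfrac53$. The return probability from cell $1$ is $\tfrac9{25}+\tfrac{16}{25}\cdot\tfrac1{16}$, since the free walk ever drops by one with probability $\tfrac1{16}$, the small root of $16\phi^2-17\phi+1=0$.
\item Each Hoeffding tail is at most the full-length one, because $(A-\tfrac35L)^2/L$ decreases in $L$ for $L\le\tfrac53A$, where $A\coloneqq a-1$.
\end{itemize}
Hence $\Pr(\widetilde Y_{m+1}\ge a)\le\tfrac53\exp[-(a-1-\tfrac35m)^2/(2m)]$, and $Y_{m+1}\le\widetilde Y_{m+1}$ by monotone coupling. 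With your offsets ($c\le6$) this fits under $\tfrac52$. The series becomes summable because of the geometric decay of boundary occupation, not the shape of the tails.

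\textbf{The paper's route.} The paper reaches the same conclusion with a supersolution. The function $h_s(a)\coloneqq\tfrac54\Pr[\operatorname{Bin}(2s+1,1/5)\le s+1-a]$ satisfies $h_{s+1}(a)=\tfrac{16}{25}h_s(a-1)+\tfrac8{25}h_s(a)+\tfrac1{25}h_s(a+1)$ exactly (split off two Bernoulli trials). The tail $q_s(a)=\Pr(Y_{s+1}\ge a)$ obeys the same relation with $\le$ for $a\ge2$. Moreover $h_0\ge q_0$ and $h_s(1)\ge1=q_s(1)$, so induction gives $q_s\le h_s$. The prefactor $\tfrac54$, together with the extra trial, is precisely what absorbs the reflecting boundary; Hoeffding is then applied to the binomial.
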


\begin{proof}
The idea is that if the Pauli string touches the bond \((v,v+1)\), then its right endpoint must have propagated at least up to that bond. We can therefore use directly the endpoint Markov chain of
Section~\ref{subsec:endpoint-otoc1-exact-formula}.

Recall that \(Y_m\) is the cell containing the right endpoint immediately
after the \(m\)-th odd layer, and that its transition matrix is \(Q\) in
Eq.~\eqref{eq:otoc1-Q}. Define the probability that the endpoint has reached
cell \(a\) or farther by
\[
    q_s(a)
    \coloneqq
    \Pr(Y_{s+1}\ge a)
    =
    \sum_{b=a}^{n/2}e_b^{\mathsf T}Q^se_1 .
\]
A convenient bound on this tail is
\begin{equation}
\label{eq:endpoint-sim-Q-tail}
    q_s(a)
    \le
    \frac54
    \Pr\!\left[
        \operatorname{Bin}(2s+1,1/5)\le s+1-a
    \right].
\end{equation}
To prove it, let us denote the right-hand-side as
\[
    h_s(a)
    \coloneqq
    \frac54
    \Pr\!\left[
        X_s\le s+1-a
    \right],
    \qquad
    X_s\sim\operatorname{Bin}(2s+1,1/5).
\]
From the transition matrix \(Q\) in Eq.~\eqref{eq:otoc1-Q}, the endpoint
tail satisfies
\[
    q_{s+1}(a)
    \le
    \frac{16}{25}q_s(a-1)
    +
    \frac8{25}q_s(a)
    +
    \frac1{25}q_s(a+1).
\]
We now show that \(h_s(a)\) satisfies the same relation with equality.
Indeed, \(X_{s+1}\) has the same distribution as
\(X_s+B_1+B_2\), where \(B_1,B_2\) are independent
Bernoulli\((1/5)\) variables. Since
\[
    \Pr(B_1+B_2=0)=\frac{16}{25},\qquad
    \Pr(B_1+B_2=1)=\frac8{25},\qquad
    \Pr(B_1+B_2=2)=\frac1{25},
\]
conditioning on \(B_1+B_2\) gives
\[
    h_{s+1}(a)
    =
    \frac{16}{25}h_s(a-1)
    +
    \frac8{25}h_s(a)
    +
    \frac1{25}h_s(a+1).
\]
Thus, if \(q_s(a)\le h_s(a)\), the same inequality also holds after one
more step.

At \(s=0\), we have \(q_0(a)=h_0(a)\) for every \(a\). At the right boundary,
\(q_s(n/2+1)=0\le h_s(n/2+1)\). At the left boundary,
\(q_s(1)=1\), while it is easy to check that
\(h_s(1)=\frac54\Pr[\operatorname{Bin}(2s+1,1/5)\le s]\ge1\).
Therefore Eq.~\eqref{eq:endpoint-sim-Q-tail} follows by induction.

We now translate this endpoint bound into a bound on \(p(u,v)\).
If \(u\) is even, then \(v\) is even, and touching \((v,v+1)\) requires
the endpoint to have reached cell \(v/2\). Hence
\[
    p(u,v)\le q_{u/2-1}(v/2).
\]
If \(u\) is odd, then \(v\) is odd. The even layer immediately before
\(u\) can move the endpoint by at most one cell, so
\[
    p(u,v)\le q_{(u-3)/2}((v-1)/2).
\]
Applying Eq.~\eqref{eq:endpoint-sim-Q-tail} in the two cases gives, for
\(u\ge3\) and \(v\ge2\),
\begin{equation}
\label{eq:endpoint-sim-touching-binomial}
    p(u,v)
    \le
    \frac54
    \Pr\!\left[
        \operatorname{Bin}(u-2,1/5)
        \le
        \frac{u-v}{2}
    \right].
\end{equation}
For even \(u\), we have only weakened the bound by dropping one
Bernoulli trial. The remaining boundary cases are immediate.

Finally, let \(X\sim\operatorname{Bin}(u-2,1/5)\). The distance between
its mean and the threshold above is
\[
    \mathbb EX-\frac{u-v}{2}
    =
    \frac{5v-3u-4}{10}.
\]
If \(5v-3u\ge8\), Hoeffding's inequality therefore gives
\[
    p(u,v)
    \le
    \frac54
    \exp\!\left[
        -\frac{(5v-3u)^2}{200u}
    \right].
\]
If \(5v-3u<8\), the bound in
Eq.~\eqref{eq:endpoint-sim-touching-tail} is larger than one and is
therefore trivial. This proves the lemma.
\end{proof}
Equation~\eqref{eq:endpoint-sim-touching-tail} shows that a gate is unlikely
to be reached by the butterfly if it lies sufficiently far ahead of the
ballistic front. The same bound applies from the probe side: evolving
backwards from \(Z_n\), and reflecting the chain from right to left, gives
the same endpoint process with \(u=d-t\) and \(v=n-\ell\). Hence
\(p(d-t,n-\ell)\) controls the probability that a gate at \((\ell,t)\)
can still influence the probe by the final depth \(d\).

Combining these two bounds with
Lemma~\ref{lem:endpoint-sim-local-gate} shows that the influence of gates
outside the eye is exponentially small.\begin{lemma}[Error from averaging outside the eye]
\label{lem:endpoint-sim-eye-error}
For every \(R\ge1\),
\begin{equation}
\label{eq:endpoint-sim-eye-expectation}
    \mathbb E\left|F_\infty(U_d)-\widehat F_R\right|
    \le\frac{100}{3}n^2e^{-R^2/100}.
\end{equation}
The expectation is over the full Haar circuit, including the gates inside
the eye.
\end{lemma}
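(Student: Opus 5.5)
We plan to control $\mathbb E|F_\infty(U_d)-\widehat F_R|$ by a telescoping (Doob-martingale / hybrid) argument over the gates outside the eye. Enumerate the gates of $\mathcal R_d\setminus\mathcal E_R$ (gates outside $\mathcal R_d$ do not affect $F_\infty$ by Lemma~\ref{lem:otoc-causal-reductions}, and the final layer $t=d$ likewise cannot matter for the trace OTOC by the same causal reduction, or is handled by the same bound) as $z_1,\dots,z_N$, with $N\le n d\le 2n^2$. Define $F^{(0)}\coloneqq F_\infty(U_d)$ and $F^{(k)}\coloneqq \mathbb E[F_\infty\mid U_{\mathcal E_R},G_{z_{k+1}},\dots,G_{z_N}]$, i.e.\ the first $k$ outside gates averaged out, so $F^{(N)}=\widehat F_R$. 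By the triangle inequality, $\mathbb E|F^{(0)}-F^{(N)}|\le\sum_k\mathbb E|F^{(k-1)}-F^{(k)}|$.

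For each step, $F^{(k-1)}-F^{(k)}$ is the difference between having gate $z_k$ fixed to its sampled value $G$ versus averaged over an independent copy $G'$, with the other already-averaged gates averaged in both. By Jensen, $|F^{(k-1)}-F^{(k)}|\le \mathbb E_{G'}\,\mathbb E_{z_1..z_{k-1}}|F_\infty(G)-F_\infty(G')|$, and taking the full expectation we obtain $\mathbb E|F^{(k-1)}-F^{(k)}|\le \sup_{G_1,G_2}\mathbb E_{\neq z_k}|F_\infty(G_1)-F_\infty(G_2)|$ (more precisely an average over $G_1,G_2$ of the quantity in Lemma~\ref{lem:endpoint-sim-local-gate}, which is bounded uniformly). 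Lemma~\ref{lem:endpoint-sim-local-gate} then gives the bound $4\min\{\sqrt{p(t,\ell)},\sqrt{p(d-t,n-\ell)}\}$ for $z_k=(\ell,t)$.

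It remains to show each outside gate has this bounded by something like $\tfrac{10}{3}\cdot\ldots e^{-R^2/100}$, using Lemma~\ref{lem:endpoint-sim-touching-tail}. For $(\ell,t)\notin\mathcal E_R$ with $t\le 5\ell/3$ we have $5\ell/3-t>R\sqrt t$, so $5\ell-3t>3R\sqrt t$, and the tail bound gives $p(t,\ell)\le\frac52 e^{-9R^2t/(200t)}=\frac52e^{-9R^2/200}$, whence $\sqrt{p}\le\sqrt{5/2}\,e^{-9R^2/400}\le \sqrt{5/2}\,e^{-R^2/100}$ since $9/400>1/100\cdot 2$? — we need $e^{-9R^2/400}\le e^{-R^2/100}$, which holds as $9/400>4/400$. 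For $t>5\ell/3$, use the probe-side bound with $u=d-t$, $v=n-\ell$: since $d=5n/3$, $5v-3u=3t-5\ell>3R\sqrt{d-t}$, giving the same estimate. Summing over at most $2n^2$ gates yields $\mathbb E|F_\infty-\widehat F_R|\le 2n^2\cdot4\sqrt{5/2}\,e^{-R^2/100}$, and $8\sqrt{5/2}\approx12.6\le 100/3$, within the stated constant (the slack also absorbs counting gates more carefully, e.g.\ $|\mathcal R_d|\le nd/2$, and the ranges $u\ge1$).

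The main obstacle is the hybrid step: making sure that, after some gates have been averaged out, the single-gate change is still controlled by Lemma~\ref{lem:endpoint-sim-local-gate}, whose statement averages over \emph{all} other gates. Jensen handles this, since $|\mathbb E_{S}(\cdot)|\le\mathbb E_S|\cdot|$ and the outer expectation then restores the full average over the remaining gates, but the conditioning order must be set up so each increment is exactly a fixed-versus-resampled comparison of one gate. A minor check is the boundary gates (small $t$ or $d-t$, and $t=d$), where the tail lemma's hypotheses ($u\ge1$) must be verified or the gate shown irrelevant.
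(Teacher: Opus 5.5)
Your proposal is correct and follows essentially the paper's argument: a one-gate-at-a-time hybrid over the gates outside the eye, with each increment bounded by Lemma~\ref{lem:endpoint-sim-local-gate} and Lemma~\ref{lem:endpoint-sim-touching-tail}, using the butterfly side for \(t\le 5\ell/3\) and the probe side otherwise. The only differences are cosmetic. You write the hybrid as a Doob-martingale telescoping with Jensen, where the paper swaps in an independent Haar copy \(X'\) one gate at a time. Your bookkeeping also gives the slightly sharper constant \(8\sqrt{5/2}\), which lies within the stated \(100/3\).
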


\begin{proof}
We first bound the influence of a single gate \(z=(\ell,t)\in\mathcal R_d\).
Suppose that \(t\le5\ell/3\). In this case the gate lies ahead of the
butterfly front, so we use the first term in
Lemma~\ref{lem:endpoint-sim-local-gate}. Together with
Lemma~\ref{lem:endpoint-sim-touching-tail}, this gives
\[
    \mathbb E_{\ne z}|F_\infty(G_1)-F_\infty(G_2)|
    \le4\sqrt{\frac52}\exp\!\left[-\frac{(5\ell-3t)^2}{400t}\right].
\]
Since \(5\ell-3t=-3(t-5\ell/3)\) and
\(\sigma_{\ell,t}^2=t\), we obtain
\[
    \mathbb E_{\ne z}|F_\infty(G_1)-F_\infty(G_2)|
    \le4\sqrt{\frac52}\exp\!\left[-\frac{9(t-5\ell/3)^2}{400\sigma_{\ell,t}^2}\right].
\]

If \(t>5\ell/3\), we instead use the second term in
Lemma~\ref{lem:endpoint-sim-local-gate}, which controls whether the gate can
still influence the probe. Applying
Lemma~\ref{lem:endpoint-sim-touching-tail} to \(p(d-t,n-\ell)\), and using
\(3d=5n\), gives the same bound with
\(\sigma_{\ell,t}^2=d-t\). Thus, in both cases,
\begin{equation}
\label{eq:endpoint-sim-one-gate-bound}
    \mathbb E_{\ne z}|F_\infty(G_1)-F_\infty(G_2)|
    \le40\exp\!\left[-\frac{(t-5\ell/3)^2}{100\sigma_{\ell,t}^2}\right].
\end{equation}
Gates outside \(\mathcal R_d\) have no effect on the OTOC by
Lemma~\ref{lem:otoc-causal-reductions}.

Now consider a gate outside the retained region \(\mathcal E_R\).
If it lies in \(\mathcal R_d\), then by definition of \(\mathcal E_R\),
\[
    |t-5\ell/3|>R\sigma_{\ell,t}.
\]
Equation~\eqref{eq:endpoint-sim-one-gate-bound} therefore shows that changing
this gate changes the OTOC, on average, by at most
\(40e^{-R^2/100}\). Gates outside \(\mathcal R_d\) contribute zero.

It remains to average all these gates simultaneously. Let \(V\) denote the
gates in \(\mathcal E_R\), let \(X\) denote the remaining gates, and let
\(X'\) be an independent Haar copy of \(X\). By definition,
\[
    \widehat F_R(V)=\mathbb E_{X'}F_\infty(V,X').
\]
Hence, by Jensen's inequality,
\[
    \mathbb E|F_\infty-\widehat F_R|
    \le\mathbb E_{V,X,X'}|F_\infty(V,X)-F_\infty(V,X')|.
\]

We replace the gates of \(X\) by the corresponding gates of \(X'\) one at
a time. By the triangle inequality, the total change is at most the sum of
the changes produced by the individual replacements. Each replacement has
expected size at most \(40e^{-R^2/100}\) by
Eq.~\eqref{eq:endpoint-sim-one-gate-bound}. Therefore
\[
    \mathbb E|F_\infty-\widehat F_R|
    \le40|\mathcal G_d|e^{-R^2/100},
\]
where \(\mathcal G_d\) is the set of physical gates in the circuit. Since
\[
    |\mathcal G_d|\le\frac{nd}{2}=\frac{5n^2}{6},
\]
we conclude that
\[
    \mathbb E|F_\infty(U_d)-\widehat F_R|
    \le\frac{100}{3}n^2e^{-R^2/100}.
\]
\end{proof}

We are left with the task of estimating the conditional mean \(\widehat F_R\).
As in the proof of Lemma~\ref{thm:endpoint-otoc1-exact-formula}, expand the
evolved butterfly in the Pauli basis as
\[
    U_d^\dagger Z_1U_d=\sum_P c_U(P)P.
\]
Pauli orthogonality then gives
\[
    F_\infty(U_d)=\sum_P |c_U(P)|^2 f(P),
\]
where \(f(P)=1\) if \(P_n\in\{I,Z\}\) and \(f(P)=-1\) if
\(P_n\in\{X,Y\}\). Therefore
\begin{equation}
\label{eq:endpoint-sim-conditional-distribution}
    \widehat F_R=\sum_P \overline\mu_R(P)f(P),
    \qquad
    \overline\mu_R(P)\coloneqq
    \mathbb E[|c_U(P)|^2\mid U_{\mathcal E_R}].
\end{equation}
Since \(\sum_P|c_U(P)|^2=1\), \(\overline\mu_R\) is a probability
distribution. Thus, to estimate \(\widehat F_R\), it is enough to sample a
Pauli string \(P\) from the distribution \(\overline\mu_R\) and average the
corresponding values \(f(P)\). The key point is that the gates outside the
eye \(\mathcal E_R\) are Haar averaged and can therefore be treated through
the classical Pauli transition rule, while the fixed gates inside the eye
must still be evolved coherently. The following lemma shows that one can
sample \(\overline\mu_R\) while keeping coherent amplitudes on only
\(O(W_R)\) qubits at any time, where \(W_R\) denotes the maximum number of
retained gates in the eye in any single layer.

\begin{lemma}[Sampling the conditional mean]
\label{lem:endpoint-sim-conditional-sampler}
For any fixed realization of the eye gates, one can draw an exact sample
\(P\sim\overline\mu_R\), with \(\overline\mu_R\) defined in
Eq.~\eqref{eq:endpoint-sim-conditional-distribution}, using
\(n^2 2^{O(W_R)}\) arithmetic operations.
\end{lemma}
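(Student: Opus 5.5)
Our plan is to sample \(P\sim\overline\mu_R\) sequentially, layer by layer in the Heisenberg direction. We would rewrite \(\overline\mu_R\) as the output law of a hybrid process: a classical Pauli Markov chain outside the eye, coupled to a coherent superoperator evolution inside it. Concretely, \(|c_U(P)|^2\) is the diagonal of the doubled (superoperator) vector \(\lvert U^\dagger Z_1U\rangle\!\rangle\otimes\overline{\lvert U^\dagger Z_1U\rangle\!\rangle}\) in the Pauli basis. Averaging a Haar gate \(g\) acts on this doubled vector as the second-moment channel \(\E_g\,\mathcal C_g\otimes\overline{\mathcal C_g}\). By the Haar rule used in Step 2 of the proof of Lemma~\ref{thm:endpoint-otoc1-exact-formula}, this channel maps any local operator to a diagonal Pauli state on the bond: \(I\otimes I\) stays fixed, and every nonidentity weight is spread uniformly over the \(15\) nonidentity Paulis. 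Off-diagonal (coherence) terms on that bond are killed. Hence every averaged gate fully decoheres its bond, and coherence survives only on sites whose most recent gate was a retained eye gate.

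The sampler therefore carries a hybrid state: a classical Pauli string on the decohered sites, together with a coherent operator on the set \(C_t\) of sites last touched by an eye gate. We would process gates in Heisenberg order. \emph{(i)} At a Haar-averaged gate, we compute the marginal Pauli weight distribution on its bond, which is the diagonal of the coherent part conditioned on the current classical outcome, sample a local Pauli, and then apply the uniform-over-15 rule. This is a standard sequential-sampling (chain rule) step. Since the averaging channel destroys coherence on the bond, sampling the bond's diagonal and continuing is exact. \emph{(ii)} At an eye gate, we conjugate the coherent operator by \(G\). If one leg of the bond is classical, we first re-embed it as the corresponding Pauli, which is a product operator and needs no extra memory. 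At the end, the remaining coherent block is measured in the Pauli basis and we output \(P\). Exactness would follow by induction over gates: the joint law of the sampled classical labels, weighted by the conditional norm of the coherent block, reproduces the marginals of \(\E[|c_U(P)|^2\mid U_{\mathcal E_R}]\). Normalization holds because the total weight \(\sum_P|c_U(P)|^2=1\) is preserved.

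For the cost, we must show \(|C_t|=O(W_R)\) at every time. The eye is a contiguous band in each layer with width \(O(W_R)\), and its boundaries move by \(O(1)\) per layer. A site stays coherent only until the next non-eye gate touches it, and in brickwork every site is touched within two layers. So \(C_t\) is contained in the eye bonds of the last two layers, giving \(|C_t|\le 4W_R+O(1)\). The coherent operator then lives in dimension \(4^{|C_t|}=2^{O(W_R)}\), and each gate update or marginal costs \(2^{O(W_R)}\). With \(O(nd)=O(n^2)\) gates, the total cost is \(n^2 2^{O(W_R)}\).

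The main obstacle is step \emph{(i)}. The sampled classical label must be conditionally independent of the still-coherent part given what has been sampled so far; otherwise a sampled marginal would not factor correctly. We would handle this by representing the state at each step as a mixture over classical outcomes of unnormalized coherent operators. The Haar averaging acts on a bond disjoint from \(C_t\) after decoherence, so it factorizes against the coherent block. We would verify this factorization carefully where an eye gate and an averaged gate share a site across consecutive layers, which is exactly where the band boundary sits.
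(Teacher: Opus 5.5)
Your proposal follows essentially the same route as the paper's proof: a vector of Pauli amplitudes on the coherent sites, updated exactly by the Pauli transfer matrix at eye gates; a sample-then-redistribute step with the uniform-over-15 kernel $K$ at Haar-averaged gates; exactness by induction on the conditional second-moment matrix; and an $O(W_R)$ bound on the coherent width (the paper gets $2W_R+4$, and your $4W_R+O(1)$ serves equally well). The factorization you flag in step (i) needs no separate boundary analysis, because the identity $\mathbb E_G[T^G_{b,a}\overline{T^G_{b',a'}}]=\delta_{a,a'}\delta_{b,b'}K(a\to b)$ kills coherences between distinct input labels on the bond however that bond is correlated with the rest of the coherent block; it therefore suffices to sample $a$ from $q(a)$ and collapse the coherent vector to $\psi(a,\cdot)/\sqrt{q(a)}$, noting that the input label is not independent of the remainder, whereas the output label $b$ is, given $a$.
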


\begin{proof}
Fix the gates in the eye \(\mathcal E_R\). The remaining randomness in
\(\overline\mu_R\) comes only from the Haar gates outside the eye.
Introduce the Pauli coefficient vector
\[
    |c_U\rangle\coloneqq\sum_{P\in\mathcal P_n}c_U(P)|P\rangle
\]
and its conditional second-moment matrix
\begin{equation}
\label{eq:endpoint-sim-second-moment-matrix}
    \rho_R\coloneqq
    \mathbb E\!\left[
        |c_U\rangle\langle c_U|
        \,\middle|\,
        U_{\mathcal E_R}
    \right].
\end{equation}
Its diagonal is exactly the distribution that we want to sample:
\begin{equation}
\label{eq:endpoint-sim-second-moment-diagonal}
    \langle P|\rho_R|P\rangle
    =
    \mathbb E\!\left[
        |c_U(P)|^2
        \,\middle|\,
        U_{\mathcal E_R}
    \right]
    =
    \overline\mu_R(P).
\end{equation}
Thus, by Eq.~\eqref{eq:endpoint-sim-second-moment-diagonal}, it is enough to
sample from the diagonal of \(\rho_R\). The algorithm never constructs
\(\rho_R\) explicitly.

We now describe the sampler. At every stage, it stores a normalized vector
\(\psi\) of Pauli amplitudes on a set of coherent qubits, while every other
qubit has a definite classical Pauli label. Initially all labels are fixed
by the Pauli string \(Z_1\), and \(\psi=1\) on the empty set. Within each
layer, we process the gates outside the eye first and the eye gates
afterwards; this is allowed because gates in the same layer act on disjoint
bonds.

Consider a gate acting on two qubits. Before processing it, any classical
Pauli labels on these two qubits are included in \(\psi\) as basis vectors.
We can then write
\begin{equation}
\label{eq:endpoint-sim-psi-decomposition}
    |\psi\rangle
    =
    \sum_{a,\alpha}
    \psi(a,\alpha)|a\rangle|\alpha\rangle,
\end{equation}
where \(a\in\mathcal P_2\) labels the two qubits of the gate and
\(\alpha\) labels the remaining coherent qubits. We process the gate as
follows.
\begin{itemize}
    \item If the gate \(G\) lies inside the eye, it is fixed. We update the
    Pauli amplitudes exactly according to
    \begin{equation}
    \label{eq:endpoint-sim-fixed-gate-update}
        \psi'(b,\alpha)
        =
        \sum_a T^G_{b,a}\psi(a,\alpha),
        \qquad
        T^G_{b,a}
        =
        \frac14\Tr(bG^\dagger aG).
    \end{equation}
    The two qubits remain part of the coherent vector.

    \item If the gate lies outside the eye, it is Haar averaged. Define
    \begin{equation}
    \label{eq:endpoint-sim-input-distribution}
        q(a)\coloneqq\sum_\alpha|\psi(a,\alpha)|^2.
    \end{equation}
    Since \(\psi\) is normalized, \(q\) is a probability distribution on
    \(\mathcal P_2\). Sample \(a\) according to \(q(a)\), and retain the
    normalized conditional vector
    \begin{equation}
    \label{eq:endpoint-sim-conditional-vector}
        \phi_a(\alpha)
        \coloneqq
        \frac{\psi(a,\alpha)}{\sqrt{q(a)}}.
    \end{equation}
    Next sample an output Pauli \(b\) according to
    \begin{equation}
    \label{eq:endpoint-sim-Pauli-transition}
        K(a\to b)
        =
        \begin{cases}
            1, & a=b=II,\\
            1/15, & a\neq II,\ b\neq II,\\
            0, & \text{otherwise}.
        \end{cases}
    \end{equation}
    Store the two single-qubit labels of \(b\) classically and remove the
    two gate qubits from the coherent vector, which is now \(\phi_a\).
\end{itemize}
After the last gate, we sample the Pauli labels of the qubits that remain
in \(\psi\) according to \(|\psi|^2\), and combine them with the classical
labels already stored. This produces a full Pauli string \(P\).

We now prove that the resulting \(P\) is distributed exactly according to
\(\overline\mu_R\). The invariant is that, after every processed gate, the
average outer product of the random full Pauli coefficient vector represented
by \(\psi\) and the classical labels is exactly the second-moment matrix
obtained by Haar averaging the outside gates processed so far. At the end
of the circuit, this matrix is precisely \(\rho_R\) in
Eq.~\eqref{eq:endpoint-sim-second-moment-matrix}. The invariant is true
initially, since the coefficient vector is the deterministic Pauli string
\(Z_1\).

A fixed eye gate preserves the invariant immediately, since its Pauli
transfer matrix \(T^G\) is applied exactly according to
Eq.~\eqref{eq:endpoint-sim-fixed-gate-update}. Consider instead a gate
outside the eye. Its Haar-averaged Pauli transfer matrix satisfies
\begin{equation}
\label{eq:endpoint-sim-Haar-second-moment}
    \mathbb E_G\!\left[
        T^G_{b,a}\overline{T^G_{b',a'}}
    \right]
    =
    \delta_{a,a'}\delta_{b,b'}K(a\to b),
\end{equation}
with \(K\) defined in Eq.~\eqref{eq:endpoint-sim-Pauli-transition}.
Hence Haar averaging this gate transforms the coefficient products as
\begin{equation}
\label{eq:endpoint-sim-Haar-coefficient-update}
    \mathbb E_G\!\left[
        \psi'_G(b,\alpha)
        \overline{\psi'_G(b',\beta)}
    \right]
    =
    \delta_{b,b'}
    \sum_a K(a\to b)
    \psi(a,\alpha)\overline{\psi(a,\beta)}.
\end{equation}
On the other hand, averaging over the two samples \(a\) and \(b\) produced
by the algorithm gives
\begin{align}
\label{eq:endpoint-sim-sampling-coefficient-update}
    &\delta_{b,b'}\sum_a
    q(a)K(a\to b)
    \phi_a(\alpha)\overline{\phi_a(\beta)}=
    \delta_{b,b'}\sum_a
    K(a\to b)
    \psi(a,\alpha)\overline{\psi(a,\beta)},
\end{align}
where, by Eqs.~\eqref{eq:endpoint-sim-input-distribution} and
\eqref{eq:endpoint-sim-conditional-vector},
\begin{equation}
\label{eq:endpoint-sim-conditional-cancellation}
    q(a)\phi_a(\alpha)\overline{\phi_a(\beta)}
    =
    \psi(a,\alpha)\overline{\psi(a,\beta)}.
\end{equation}
Comparing Eqs.~\eqref{eq:endpoint-sim-Haar-coefficient-update} and
\eqref{eq:endpoint-sim-sampling-coefficient-update}, we see that the
sampling update reproduces exactly the Haar-averaged second-moment update.
In particular, it preserves not only the diagonal probabilities but also
the off-diagonal coefficient products on the remaining coherent qubits,
which are needed by subsequent eye gates.

Induction over all gates therefore shows that, after the full circuit, the
average outer product of the coefficient vector represented by the sampler
is exactly the conditional second-moment matrix \(\rho_R\) in
Eq.~\eqref{eq:endpoint-sim-second-moment-matrix}. Consequently, after the
final sampling step,
\begin{equation}
\label{eq:endpoint-sim-final-sampling-distribution}
    \Pr(P)
    =
    \langle P|\rho_R|P\rangle
    =
    \overline\mu_R(P),
\end{equation}
where the last equality follows from
Eq.~\eqref{eq:endpoint-sim-second-moment-diagonal}. Thus the algorithm draws
an exact sample from \(\overline\mu_R\).

It remains to bound the cost. At the end of each layer, a qubit can remain
in \(\psi\) only if it belongs to an eye gate in that layer or if it is an
idle boundary qubit. Hence \(\psi\) contains at most \(2W_R+2\) qubits.
During the outside-gate updates of the next layer, each gate can temporarily
add at most two qubits, which are removed again after the local Pauli labels
have been sampled. Therefore the number of coherent qubits never exceeds
\(2W_R+4=O(W_R)\).

If \(\psi\) contains \(m=O(W_R)\) qubits, it has \(4^m\) Pauli
coefficients. Computing all sixteen probabilities \(q(a)\) in
Eq.~\eqref{eq:endpoint-sim-input-distribution} requires \(O(4^m)\)
operations, since each coefficient contributes to exactly one of these
sums. Selecting and normalizing the sampled branch according to
Eq.~\eqref{eq:endpoint-sim-conditional-vector} has the same cost up to a
constant factor. A fixed eye gate also costs \(O(4^m)\) operations, since
the \(16\times16\) Pauli transfer matrix in
Eq.~\eqref{eq:endpoint-sim-fixed-gate-update} acts only on the two local
Pauli indices. Thus every gate can be processed using \(2^{O(W_R)}\)
arithmetic operations. Since the circuit contains \(O(n^2)\) gates, one
exact sample from \(\overline\mu_R\) requires $n^2 2^{O(W_R)}$ arithmetic operations.
\end{proof}
With these ingredients in place, we can now state the full classical
algorithm. Lemma~\ref{lem:endpoint-sim-eye-error} shows that averaging the
gates outside \(\mathcal E_R\) introduces only a small error, while
Lemma~\ref{lem:endpoint-sim-conditional-sampler} gives an efficient way to
sample from the resulting conditional distribution. The parameter \(R\)
controls the first error, and the number of samples \(N\) controls the
second.

\begin{table}[!htbp]
\centering
\begin{minipage}{0.94\linewidth}
\small
\hrule
\smallskip
\textbf{Classical estimation of the endpoint OTOC}
\smallskip
\hrule
\medskip
\textbf{Input:} the realized circuit \(U_d\), the eye parameter \(R\), and
the number of samples \(N\), chosen as in
Theorem~\ref{thm:endpoint-sim-otoc1-runtime}.
\begin{enumerate}[leftmargin=1.6em,itemsep=0.4em]
\item Construct the eye \(\mathcal E_R\). Keep the realized gates inside
\(\mathcal E_R\) fixed and Haar-average the gates outside it.
\item Using Lemma~\ref{lem:endpoint-sim-conditional-sampler}, independently
sample \(N\) Pauli strings \(P^{(1)},\ldots,P^{(N)}\) from
\(\overline\mu_R\). For each sample, assign the sign
\[
    f(P^{(j)})=
    \begin{cases}
        +1, & P^{(j)}_n\in\{I,Z\},\\
        -1, & P^{(j)}_n\in\{X,Y\}.
    \end{cases}
\]
\item Return
\[
    \widetilde F\coloneqq
    \frac1N\sum_{j=1}^N f(P^{(j)}).
\]
\end{enumerate}
\hrule
\end{minipage}
\caption{Classical estimator for the infinite-temperature endpoint OTOC.}
\label{tab:endpoint-otoc1-explicit-simulation}
\end{table}

\begin{theorem}[Classical simulation of the endpoint OTOC]
\label{thm:endpoint-sim-otoc1-runtime}
For \(\varepsilon,\delta\in(0,1)\), choose
\[
    R=10\sqrt{\log\frac{400n^2}{3\varepsilon\delta}},
    \qquad
    N=\left\lceil\frac8{\varepsilon^2}\log\frac4\delta\right\rceil.
\]
Then the algorithm in
Table~\ref{tab:endpoint-otoc1-explicit-simulation} satisfies
\[
    \Pr\!\left[
        |\widetilde F-F_\infty(U_d)|\le\varepsilon
    \right]
    \ge1-\delta,
\]
where the probability is over both the Haar-random circuit and the
randomness of the algorithm. Its runtime is
\[
    O\!\left(
        n^2
        2^{O\left(\sqrt{n\log(n/(\varepsilon\delta))}\right)}
        \frac{\log(2/\delta)}{\varepsilon^2}
    \right)
\]
arithmetic operations, assuming sampling from explicitly computed finite
distributions. In particular, for inverse-polynomial \(\varepsilon\) and
\(\delta\), the runtime is
\[
    \operatorname{poly}(n)\,
    2^{O(\sqrt{n\log n})}.
\]
\end{theorem}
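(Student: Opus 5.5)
The plan is to split the total error into two parts: the error from replacing the OTOC by its conditional mean, and the Monte Carlo error from estimating that conditional mean by sampling. We then bound each part by \(\varepsilon/2\) with failure probability at most \(\delta/2\). The runtime follows from Lemma~\ref{lem:endpoint-sim-conditional-sampler} together with the width estimate \(W_R=O(R\sqrt n)\).

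\textbf{Averaging error.} By the triangle inequality,
\[
|\widetilde F-F_\infty(U_d)|\le|F_\infty(U_d)-\widehat F_R|+|\widetilde F-\widehat F_R|.
\]
Lemma~\ref{lem:endpoint-sim-eye-error} gives
\(\mathbb E|F_\infty-\widehat F_R|\le\frac{100}{3}n^2e^{-R^2/100}\). With the stated choice of \(R\),
\[
e^{-R^2/100}=\frac{3\varepsilon\delta}{400n^2},
\]
so the expectation is at most \(\varepsilon\delta/4\). Markov's inequality then gives
\[
\Pr\bigl[|F_\infty-\widehat F_R|>\varepsilon/2\bigr]\le\delta/2.
\]
Here the probability is over the full Haar circuit. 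This is exactly the randomness the theorem allows.

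\textbf{Sampling error.} Condition on the eye gates \(U_{\mathcal E_R}\). By Lemma~\ref{lem:endpoint-sim-conditional-sampler}, the samples \(P^{(j)}\) are i.i.d.\ from \(\overline\mu_R\). Each \(f(P^{(j)})\in\{\pm1\}\) has mean \(\widehat F_R\) by Eq.~\eqref{eq:endpoint-sim-conditional-distribution}. Hoeffding's inequality for variables of range \(2\) gives
\[
\Pr\bigl[|\widetilde F-\widehat F_R|>\varepsilon/2\bigm|U_{\mathcal E_R}\bigr]\le 2e^{-N\varepsilon^2/8}\le\delta/2,
\]
using \(N\ge\frac{8}{\varepsilon^2}\log\frac4\delta\). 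Averaging over the eye gates keeps this bound. A union bound over the two failure events yields total success probability at least \(1-\delta\).

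\textbf{Runtime.} Constructing \(\mathcal E_R\) costs \(O(n^2)\). Each sample costs \(n^22^{O(W_R)}\) arithmetic operations. We have \(W_R=O(R\sqrt n)\) and \(R=O(\sqrt{\log(n/(\varepsilon\delta))})\), so
\[
2^{O(W_R)}=2^{O(\sqrt{n\log(n/(\varepsilon\delta))})}.
\]
Multiplying by \(N=O(\varepsilon^{-2}\log(2/\delta))\) gives the stated bound. For inverse-polynomial \(\varepsilon,\delta\), the logarithm is \(O(\log n)\), and the runtime becomes \(\operatorname{poly}(n)2^{O(\sqrt{n\log n})}\).

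\textbf{Main obstacle.} The only delicate point is the bookkeeping of constants in the choice of \(R\). It must give exactly \(\varepsilon\delta/4\) in the Markov step, and this should be checked carefully. Everything else is a direct assembly of the preceding lemmas.
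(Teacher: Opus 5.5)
Your proposal is correct and follows the paper's proof essentially step for step. You use the same triangle-inequality split, Markov's inequality on Lemma~\ref{lem:endpoint-sim-eye-error} with $e^{-R^2/100}=3\varepsilon\delta/(400n^2)$ (giving $\frac{100}{3}n^2e^{-R^2/100}=\varepsilon\delta/4$ and hence $\delta/2$), conditional Hoeffding with $2e^{-N\varepsilon^2/8}\le\delta/2$, a union bound, and the runtime from Lemma~\ref{lem:endpoint-sim-conditional-sampler} with $W_R=O(R\sqrt n)$, and your constant bookkeeping checks out.
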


\begin{proof}
There are two sources of error. First, replacing the original OTOC by the
conditional mean \(\widehat F_R\) introduces the error controlled by
Lemma~\ref{lem:endpoint-sim-eye-error}. Second, we estimate
\(\widehat F_R\) from finitely many samples.

Conditioned on the eye gates, Lemma~\ref{lem:endpoint-sim-conditional-sampler}
produces independent Pauli strings with distribution \(\overline\mu_R\).
By Eq.~\eqref{eq:endpoint-sim-conditional-distribution}, the corresponding
signs have mean
\[
    \mathbb E[f(P)\mid U_{\mathcal E_R}]
    =
    \widehat F_R
\]
and take values in \(\{-1,+1\}\). Hoeffding's inequality therefore gives
\[
    \Pr\!\left[
        |\widetilde F-\widehat F_R|>\frac{\varepsilon}{2}
        \,\middle|\,
        U_{\mathcal E_R}
    \right]
    \le
    2e^{-N\varepsilon^2/8}
    \le
    \frac{\delta}{2}.
\]

For the error caused by averaging the gates outside the eye,
Lemma~\ref{lem:endpoint-sim-eye-error} and Markov's inequality give
\[
    \Pr\!\left[
        |F_\infty(U_d)-\widehat F_R|>\frac{\varepsilon}{2}
    \right]
    \le
    \frac{200n^2}{3\varepsilon}e^{-R^2/100}
    =
    \frac{\delta}{2}.
\]
With probability at least \(1-\delta\), both errors are therefore at most
\(\varepsilon/2\). The triangle inequality then gives
\[
    |\widetilde F-F_\infty(U_d)|
    \le
    |\widetilde F-\widehat F_R|
    +
    |\widehat F_R-F_\infty(U_d)|
    \le
    \varepsilon.
\]

It remains to bound the runtime. By
Lemma~\ref{lem:endpoint-sim-conditional-sampler}, one sample costs
\(n^2 2^{O(W_R)}\) arithmetic operations. Since
\(W_R=O(R\sqrt n)\), our choice of \(R\) gives
\[
    W_R
    =
    O\!\left(
        \sqrt{n\log\frac{n}{\varepsilon\delta}}
    \right).
\]
Multiplying the cost of one sample by
\(N=O(\varepsilon^{-2}\log(2/\delta))\) gives the stated runtime.
\end{proof}
\end{document}